\newtheorem{lemma}{Lemma}[section]
\newtheorem{proposition}[lemma]{Proposition}
\newtheorem{corollary}[lemma]{Corollary}
\newtheorem{theorem}[lemma]{Theorem}
\newtheorem{assumption}[lemma]{Standing Assumption}
\theoremstyle{definition}
\newtheorem{definition}[lemma]{Definition}
\theoremstyle{definition}
\newtheorem{example}[lemma]{Example}
\newtheorem{remark}[lemma]{Remark}
\theoremstyle{definition}
\newcommand\GEX{{\Gamma\!\sb{\rm ex}}}
\title {The exocenter and type decomposition of a generalized pseudoeffect algebra}
\author{David J. Foulis, Sylvia Pulmannov\'a and Elena Vincekov\'a}
\thanks{The second and third authors were supported by  by ERDF OP R \& D metaQUTE
ITMS 26240120022, grant VEGA 2/0059/12 and by Science and Technology Assistance
Agency under the contract no. APVV-0178-11}
\begin{document}

\address{Department of Mathematics an Statistics, Univ. of Massachusetts, Amherst, MA, USA;
\v Stef\'anikova 49, 814 73 Bratislava, Slovakia}
\email{foulis@math.umass.edu; pulmann@mat.savba.sk, vincek@mat.savba.sk}
\keywords{pseudoeffect algebra, generalized pseudoeffect algebra,  center, exocenter,
central orthocompleteness, type determining set, type decomposition}
\subjclass{Primary 81P10, 08A55, Secondary 03G12}
\maketitle
\markboth{Foulis, D., Pulmannov\'a, S., Vincekov\'a, E.}{The exocenter and type
decompositions for GPEAs }
\date{}

\begin{abstract}  We extend to a generalized pseudoeffect algebra (GPEA) the
notion of the exocenter of a generalized effect algebra (GEA) and show that
elements of the exocenter are in one-to-one correspondence with direct
decompositions of the GPEA; thus the exocenter is a generalization of the center
of a pseudoeffect algebra (PEA). The exocenter forms a boolean algebra and the
central elements of the GPEA correspond to elements of a sublattice of the exocenter
which forms a generalized boolean algebra. We extend to GPEAs the notion of central
orthocompleteness, prove that the exocenter of a centrally orthocomplete GPEA (COGPEA)
is a complete boolean algebra and show that the sublattice corresponding to the center
is a complete boolean subalgebra. We also show that in a COGPEA, every element admits
an exocentral cover and that the family of all exocentral covers, the so-called exocentral
cover system, has the properties of a hull system on a generalized effect algebra. We extend
the notion of type determining (TD) sets, originally introduced for effect algebras and
then extended to GEAs and PEAs, to GPEAs, and prove a type-decomposition theorem, analogous
to the type decomposition of von Neumann algebras.

\end{abstract}

\maketitle

\section{Introduction} \label{sc:Intro} 

\noindent Our purpose in this article is to define and study extensions to generalized pseudoeffect algebras of the notions of the center, central orthocompleteness, central
cover, type determining sets and type decompositions for an effect algebra, resp. for
a pseudoeffect algebra (see \cite{FPType, COEA, ExoCen, CenGEA, TDPA, GFP}).

Effect algebras (EAs) \cite{FandB} were originally introduced as a basis for the
representation of quantum measurements \cite{BLM}, especially  those that involve
fuzziness or unsharpness. Special kinds of effect algebras include orthoalgebras,
MV-algebras, Heyting MV-algebras, orthomodular posets, orthomodular lattices, and
boolean algebras. An account of the axiomatic approach to quantum mechanics employing
EAs can be found in \cite{DvPuTrends}.

Several authors have studied or employed algebraic structures that, roughly speaking,
are EAs ``without a largest element." These studies go back to M.H. Stone's work
\cite{Stone} on generalized boolean algebras; later M.F. Janowitz \cite{Jan}
extended Stone's work to generalized orthomodular lattices. More recent
developments along these lines include \cite{FandB, HedPu, KR, KCh, MI, PV07,
Zdenka99, W}.

The notion of a (possibly) non-commutative effect algebra, called a pseudoeffect
algebra, was introduced and studied in  \cite{DV1, DV2, D}. Whereas a prototypic example
of an effect algebra is the order interval from $0$ to a positive element in a partially
ordered abelian group, an analogous interval in a partially ordered non-commutative group
is a prototype of a pseudoeffect algebra. Pseudoeffect algebras ``without a largest element",
called generalized pseudoeffect algebras, also have been studied in the literature
\cite{DvVepo, DvVegen, PVext, XieLi}.

The classic decomposition of a von Neumann algebra as a direct sum of subalgebras of types
I, II and III  \cite{MvN}, which plays an important role in the theory of von Neumann
algebras, is reflected by a direct sum decomposition of the complete orthomodular lattice
(OML) of its projections. The type-decomposition for a von Neumann algebra is dependent
on the von Neumann-Murray dimension theory, and likewise the early type-decomposition theorems
for OMLs were based on the dimension theories of L. Loomis  \cite{L} and of S. Maeda \cite{M}.
Decompositions of complete OMLs into direct summands with various special properties were
obtained in \cite{CChM, K, R} without explicitly employing lattice dimension theory.
More recent and considerably more general results on type-decompositions based on dimension
theory can be found in \cite{GW}. Dimension theory for effect algebras was developed in
\cite{HandD}.

As a continuation of the aforementioned work, the theory of so called type determining
sets was introduced and applied, first to obtain direct decompositions for centrally
orthocomplete effect algebras \cite{FPType, COEA}, and later for centrally orthocomplete pseudoeffect algebras \cite{TDPA}. While direct decompositions of effect algebras and pseudoeffect algebras are completely described by their central elements \cite{D, GFP},
for the generalized structures without a top element, we need to replace the center by
the so called exocenter, which is composed of special endomorphisms, resp. ideals \cite{ExoCen, Je00}.

The present paper is organized as follows. In Section \ref{sc:GPEAs}, we introduce
basic definitions and facts concerning generalized pseudoeffect algebras (GPEAs). In
Section \ref{sc:ExoCenter} we introduce the notion of the exocenter of a GPEA and
study its properties. Section \ref{sc:CenterGPEA} is devoted to central elements in
a GPEA and relations between the center and the exocenter. The notion of central orthocompleteness is extended to GPEAs in Section \ref{sc:CO} where it is shown that
the center of a centrally orthocomplete GPEA (COGPEA) is a complete boolean algebra.
In Section \ref{sc:ExoCenCover} we introduce the exocentral cover, which extends the
notion of a central cover for an EA. In Section \ref{sc:TDsets}, we develop the theory of type determining sets for GPEAs and show some examples. Finally, in Section
\ref{sc:TypeDecomp}, we develop the theory of type decompositions of COGPEAs into
direct summands of various types. We note that COGPEAs are, up to now, the most
general algebraic structures for which the theory of type determining sets has
been applied to obtain direct decompositions.

\section{Generalized pseudoeffect algebras} \label{sc:GPEAs} 

\noindent We abbreviate `if and only if' as `iff' and the notation $:=$ means
`equals by definition'.

\begin{definition}\label{def:gpea}
A \emph{generalized pseudoeffect algebra} (GPEA) is a partial algebraic structure
$(E,\oplus,0)$, where $\oplus$ is a partial binary operation on $E$ called the
\emph{orthosummation}, $0$ is a constant in $E$ called the \emph{zero element},
and the following conditions hold for all $a,b,c\in E$:
\begin{enumerate}
\item[ ]
\begin{enumerate}
 \item[(GPEA1)] (\emph{associativity}) $(a\oplus b)$ and $(a\oplus b)\oplus c$
 exist iff $b\oplus c$ and $a\oplus (b\oplus c)$ exist and in this case
 $(a\oplus b)\oplus c=a\oplus (b\oplus c)$.
 \item[(GPEA2)] (\emph{conjugacy}) If $a\oplus b$ exists, then there are
 elements $d,e\in E$ such that $a\oplus b=d\oplus a=b\oplus e$.
 \item[(GPEA3)] (\emph{cancellation}) If $a\oplus b=a\oplus c$, or $b
 \oplus a=c\oplus a$, then $b=c$.
 \item[(GPEA4)] (\emph{positivity}) If $a\oplus b=0$, then $a=b=0$.
 \item[(GPEA5)] (\emph{zero element}) $a\oplus 0$ and $0\oplus a$ always
 exist and are both equal to $a$.
\end{enumerate}
   \end{enumerate}
\end{definition}

As a consequence of (GPEA3), the elements $d$ and $e$ in (GPEA2) are
uniquely determined by $a$ and $b$. Following the usual convention,
we often refer to a GPEA $(E,\oplus,0)$ simply as $E$.

If $E$ and $F$ are GPEAs, then a mapping $\phi\colon E\to F$ is a
\emph{GPEA-morphism} iff, for all $a,b\in E$, if $a\oplus b$ exists
in $E$, then $\phi(a)\oplus\phi(b)$ exists in $F$ and $\phi(a\oplus b)
=\phi(a)\oplus \phi(b)$. If $\phi\colon E\to F$ is a bijective
GPEA-morphism and $\phi\sp{-1}\colon F\to E$ is also a GPEA-morphism,
then $\phi$ is a \emph{GPEA-isomorphism}.

\begin{assumption}
In what follows, $(E,\oplus,0)$ is a generalized pseudoeffect algebra.
In general, lower case Latin letters $a,b,c,...,x,y,z$, with or without
subscripts, will denote elements of $E$. If we write an equation
involving an orthosum, e.g. $x\oplus y=z$, we tacitly assume its existence.
\end{assumption}

\begin{definition} \label{df:leqetc}
The relation $\leq$ is defined on the GPEA $E$ by
\[
a\leq b \mbox{ iff } a\oplus x=b \mbox{ for some } x\in E
\]
or equivalently (in view of (GPEA2)), by
\[
a\leq b \mbox{ iff } y\oplus a=b \mbox{ for some } y\in E.
\]
If $a\leq b$, then by (GPEA3) the elements $x$ and $y$ such
that $a\oplus x=y\oplus a=b$ are uniquely determined by $a$
and $b$, and we define the (left and right) differences
\[
a/b:=x \mbox{ and } b\backslash a:=y.
\]
In the event that $a\leq b$ and $a/b$ coincides with $b\backslash a$,
we also define
\[
b\ominus a:=a/b=b\backslash a.
\]
We say that elements $p$ and $q$ in $E$ are \emph{orthogonal}, in
symbols $p\perp q$, iff $p\oplus q$ and $q\oplus p$ both exist and
are equal. The GPEA $E$ is \emph{commutative} iff $p\perp q$ holds
whenever $p\oplus q$ is defined.
\end{definition}

Evidently, if either $a/b$ or $b\backslash a$ exists, then both
exist and $a\leq b$; conversely, if $a\leq b$, then both $a/b$ and
$b\backslash a$ exist and $b=a\oplus(a/b)=(b\backslash a)\oplus a$.
Also, if $b\ominus a$ exists, then $a,b\ominus a\leq b$,
$a\perp(b\ominus a)$ and $a\oplus(b\ominus a)=(b\ominus a)\oplus a
=b$. We note that a commutative GPEA is the same thing as a
\emph{generalized effect algebra} \cite{Zdenka99}.

The GPEA $E$ is partially ordered by $\leq$ and $0$ is the smallest
element in $E$. The cancellation laws in (GPA3) are easily extended
to $\leq$ as follows:
\[
\mbox{If }a\oplus b\leq a\oplus c,\mbox{or if }b\oplus a\leq c\oplus a,
 \mbox{ then }b\leq c.
\]
An existing supremum (resp. infimum) in the partially ordered set (poset)
$E$ of elements $a$ and $b$ is denoted by $a\vee b$ (resp. by $a\wedge b$).
We say that $a$ and $b$ are \emph{disjoint} iff $a\wedge b=0$. We note that
a GPEA-morphism preserves inequalities and corresponding left and right
differences.

An important example of a GPEA (\cite{DvVepo}, Example 2.3) is a subset of
the positive cone in a partially ordered group (po-group). Let $(G,+,0,
\leq)$ be a po-group with $G^+:=\{g\in G: 0\leq g\}$. Let $G^0$ be a
nonempty subset of $G^+$ such that for all $a,b\in G^0$, if $b\leq a$
then $-a+b,\ b-a\in G^0$. Then $(G^0,\oplus,0)$, where $\oplus$ is the
group addition restricted to those pairs of elements whose sum is
again in $G^0$, is a GPEA whose partial order coincides with the group
partial order restricted to $G^0$.

\begin{lemma} \label{le:SlashProps}
Let $a,b,c,d\in E$ with $a\leq b$. Then{\rm:}
\begin{enumerate}
\item $b\backslash a,\  a/b\leq b$ and $(b\backslash a)/b=b\backslash(a/b)=a$.
\item $d\leq a/b\Leftrightarrow a\oplus d\leq b\Leftrightarrow d\leq b$ and
 $a\leq b\backslash d$.
\item If $b\oplus d$ exists, then $a/(b\oplus d)=(a/b)\oplus d$,
 $a\oplus d$ exists, and $a\oplus d\leq b\oplus d$. Also, if $d\oplus b$
 exists, then $(d\oplus b)\backslash a=d\oplus(b\backslash a)$, $d\oplus a$
 exists, and $d\oplus a\leq b\oplus a$.
\item If $a\leq b\leq c$, then $a/c=a/b\oplus b/c$ and $c\backslash a=c
 \backslash b\oplus b\backslash a$.
\end{enumerate}
\end{lemma}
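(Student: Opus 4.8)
The plan is to reduce every assertion to algebraic manipulation of the two canonical decompositions of $b$ furnished by the differences, namely $b=a\oplus(a/b)=(b\backslash a)\oplus a$, together with the analogous decompositions coming from any other instance of $\leq$ in sight. The only axioms doing real work are the associativity law (GPEA1)---which I must use in its full strength, remembering that it asserts both the existence and the equality of the two reassociated orthosums---the cancellation law (GPEA3) together with its extension to $\leq$ recorded just above, and conjugacy (GPEA2), which lets me trade a left-difference representation for a right-difference one. Throughout, each $\backslash$-statement is the mirror image of the corresponding $/$-statement, so I would prove the left-difference version in detail and invoke this left--right duality for the other.

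For (i), the equation $b=a\oplus(a/b)$ already exhibits $a/b$ via $w\oplus(a/b)=b$ with $w=a$, so $a/b\leq b$, and symmetrically $b\backslash a\leq b$; the two identities are then immediate from the definitions of the differences, since applying the left-difference recipe to $(b\backslash a)\oplus a=b$ gives $(b\backslash a)/b=a$, and applying the right-difference recipe to $a\oplus(a/b)=b$ gives $b\backslash(a/b)=a$. Part (iv) is equally direct: substituting $b=a\oplus(a/b)$ into $c=b\oplus(b/c)$ and reassociating by (GPEA1) yields $c=a\oplus\bigl((a/b)\oplus(b/c)\bigr)$, whence $(a/b)\oplus(b/c)$ exists and equals $a/c$; the dual substitution $b=(b\backslash a)\oplus a$ into $c=(c\backslash b)\oplus b$ gives the $\backslash$-identity.

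Part (ii) is a chain of equivalences. For $d\leq a/b\Leftrightarrow a\oplus d\leq b$: if $d\oplus z=a/b$ then $b=a\oplus(a/b)=a\oplus(d\oplus z)=(a\oplus d)\oplus z$ by (GPEA1), so $a\oplus d\leq b$; conversely, if $(a\oplus d)\oplus w=b$ then (GPEA1) yields $a\oplus(d\oplus w)=b=a\oplus(a/b)$ and (GPEA3) forces $d\oplus w=a/b$, i.e.\ $d\leq a/b$. For $a\oplus d\leq b\Leftrightarrow(d\leq b$ and $a\leq b\backslash d)$: the forward direction notes $d\leq a\oplus d\leq b$, whence $b\backslash d$ exists and $a\oplus d\leq b=(b\backslash d)\oplus d$, so the right-cancellation extension of (GPEA3) gives $a\leq b\backslash d$; for the converse I would pass, via (GPEA2), to the right-difference form $t\oplus a=b\backslash d$ and compute $b=(b\backslash d)\oplus d=(t\oplus a)\oplus d=t\oplus(a\oplus d)$, so $a\oplus d\leq b$. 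This use of conjugacy to obtain the right-hand representation is the one slightly non-mechanical move in the part.

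For (iii), the difference identity again comes from substitution and reassociation: $b\oplus d=\bigl(a\oplus(a/b)\bigr)\oplus d=a\oplus\bigl((a/b)\oplus d\bigr)$ by (GPEA1), so $a\leq b\oplus d$ and $a/(b\oplus d)=(a/b)\oplus d$. The genuinely delicate point here is the bare \emph{existence} of $a\oplus d$, which is not handed to us and must be extracted from the existence half of (GPEA1): writing $b\oplus d=\bigl((b\backslash a)\oplus a\bigr)\oplus d$ and applying (GPEA1) with the blocks $b\backslash a,\,a,\,d$ shows simultaneously that $a\oplus d$ exists and that $(b\backslash a)\oplus(a\oplus d)=b\oplus d$, the latter giving $a\oplus d\leq b\oplus d$. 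I expect this existence extraction---choosing the grouping so that (GPEA1) fires in the needed direction---to be the main obstacle, since it is the only place where one cannot simply rearrange sums already known to exist. The ``Also'' clause is the left--right dual, obtained from $d\oplus b$ by the symmetric computation.
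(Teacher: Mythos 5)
Your proposal is correct and follows essentially the same route as the paper: both arguments rest on the canonical decompositions $b=a\oplus(a/b)=(b\backslash a)\oplus a$, reassociation via (GPEA1) (including its existence half, used exactly as you do to extract the existence of $a\oplus d$ in (iii)), and cancellation (GPEA3) together with its extension to $\leq$. The only difference is that you spell out the converse implications in (ii), which the paper dismisses as ``straightforward,'' and your use of (GPEA2) there is legitimate since it underlies the equivalence of the two defining forms of $\leq$.
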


\begin{proof}
(i) As $b=b\backslash a\oplus a$, we get $(b\backslash a)/b=a$, and $b=a
\oplus a/b$ implies $b\backslash (a/b)=a$.

(ii) If $d\leq a/b$, then $\exists x\in E$ with $d\oplus x=a/b$, so $(a\oplus d)
\oplus x=a\oplus(d\oplus x)=b$, and therefore $a\oplus d\leq b$. If $a\oplus d
\leq b$, then $\exists y\in E$, $y\oplus(a\oplus d)=(y\oplus a)\oplus d=b$,
whence $d\leq b$, $y\oplus a=b\backslash d$, and $a\leq b\backslash d$. Thus
$d\leq a/b\Rightarrow a\oplus d\leq b\Rightarrow d\leq b\text{\ and\ }a\leq b
\backslash d$. Proofs of the converse implications are straightforward.

(ii) Assume that $b\oplus d$ exists. Then $a\leq b\leq b\oplus d$ and $a
\oplus a/(b\oplus d)=b\oplus d=(a\oplus a/b)\oplus d=a\oplus((a/b)\oplus d)$,
whence $a/(b\oplus d)=(a/b)\oplus d$ by cancellation. Also, as $a\leq b$, we
have $b\backslash a\oplus a=b$, whence $b\oplus d=(b\backslash a\oplus a)
\oplus d=b\backslash a\oplus(a\oplus d)$, whence $a\oplus d$ exists and
$a\oplus d\leq b\oplus d$. The remaining assertion is proved analogously.

(iv) As $a\leq b\leq c$, we have $a\oplus(a/b\oplus b/c)=(a\oplus a/b)
\oplus b/c=b\oplus b/c=c=a\oplus a/c$, whence $a/b\oplus b/c=a/c$ by
cancellation. The second equality is proved similarly.
\end{proof}

\begin{lemma} \label{le:oplusdist}
Let $e\in E$, and let $(f\sb{i})\sb{i\in I}$ be a family of elements of $E$
such that the supremum $f:=\bigvee\sb{i\in I}f\sb{i}$ exists in $E$. Suppose
that $e\oplus f$ {\rm(}resp. $f\oplus e${\rm)} exists. Then $e\oplus f\sb{i}$
{\rm(}resp. $f\sb{i}\oplus e${\rm)} exists for all $i\in I$, the supremum
$\bigvee\sb{i\in I}(e\oplus f\sb{i})$ {\rm(}resp. the supremum $\bigvee\sb
{i\in I}(f\sb{i}\oplus e)${\rm)} exists in $E$, and $e\oplus f=\bigvee\sb{i
\in I}(e\oplus f\sb{i})$ {\rm(}resp. $f\oplus e=\bigvee\sb{i\in I}(f\sb{i}
\oplus e)${\rm)}.
\end{lemma}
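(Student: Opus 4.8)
The plan is to prove the left version $e\oplus f=\bigvee_{i\in I}(e\oplus f\sb{i})$ in full; the parenthetical right version $f\oplus e=\bigvee_{i\in I}(f\sb{i}\oplus e)$ then follows by the mirror-image argument, using the right differences $\backslash$ in place of the left differences $/$ and the right-hand halves of Lemma \ref{le:SlashProps} and of the extended cancellation law. So I would reduce everything to the left case and not rewrite the symmetric details.

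First I would settle existence and the upper-bound property. For each $i$ we have $f\sb{i}\leq f$, so by definition of the difference $f=f\sb{i}\oplus(f\sb{i}/f)$. Since $e\oplus f$ exists, associativity (GPEA1) gives $e\oplus f=e\oplus(f\sb{i}\oplus(f\sb{i}/f))=(e\oplus f\sb{i})\oplus(f\sb{i}/f)$; this one computation simultaneously shows that $e\oplus f\sb{i}$ exists and that $e\oplus f\sb{i}\leq e\oplus f$. Hence $e\oplus f$ is an upper bound of the family $(e\oplus f\sb{i})\sb{i\in I}$, which also disposes of the existence claim.

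The crux is to show that $e\oplus f$ is the \emph{least} upper bound. Let $u$ be any upper bound, so $e\oplus f\sb{i}\leq u$ for every $i$. Since $e\leq e\oplus f\sb{i}\leq u$, the left difference $e/u$ exists and $e\oplus(e/u)=u$. Then $e\oplus f\sb{i}\leq u=e\oplus(e/u)$, and the extended left-cancellation law yields $f\sb{i}\leq e/u$ for all $i$. As $f=\bigvee\sb{i\in I}f\sb{i}$ is by hypothesis the least upper bound of the $f\sb{i}$, we conclude $f\leq e/u$. Finally, Lemma \ref{le:SlashProps}(ii), applied with $a:=e$, $b:=u$, $d:=f$ (legitimate since $e\leq u$), converts $f\leq e/u$ into $e\oplus f\leq u$. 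Thus $e\oplus f$ is dominated by every upper bound, so $\bigvee\sb{i\in I}(e\oplus f\sb{i})$ exists and equals $e\oplus f$.

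I do not expect a genuine obstacle: the conceptual engine is that, for fixed $e$, the maps $d\mapsto e\oplus d$ and $b\mapsto e/b$ form a Galois adjunction via the equivalence $e\oplus d\leq b\Leftrightarrow d\leq e/b$ of Lemma \ref{le:SlashProps}(ii), and a lower adjoint automatically sends existing suprema to suprema. The only point demanding care is bookkeeping: tracking which cancellation law and which of $/$, $\backslash$ to invoke on the left versus the right, and checking that each orthosum written down (notably $e\oplus f\sb{i}$ and $e\oplus(e/u)=u$) is known to exist before any appeal to cancellation.
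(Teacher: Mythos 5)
Your proof is correct and follows essentially the same route as the paper's: establish that $e\oplus f$ is an upper bound of $(e\oplus f\sb{i})\sb{i\in I}$ via the differences $f\sb{i}/f$ and associativity (the paper cites Lemma \ref{le:SlashProps}(iii) for this), then show it is the least upper bound by reducing $e\oplus f\sb{i}\leq b$ to $f\sb{i}\leq e/b$, passing to the supremum, and invoking Lemma \ref{le:SlashProps}(ii). The only cosmetic difference is that you obtain $f\sb{i}\leq e/u$ from the extended cancellation law, while the paper unfolds the definition of $e/b$ directly; both are valid.
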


\begin{proof}
We prove the lemma under the hypothesis that $e\oplus f$ exists. The proof
under the alternative hypothesis is similar. For each $i\in I$, we have
$f\sb{i}\leq f$, and therefore $e\oplus f\sb{i}$ exists and $e\oplus f
\sb{i}\leq e\oplus f$ (Lemma \ref{le:SlashProps} (iii)). Suppose that
$e\oplus f\sb{i}\leq b\in E$ for all $i\in I$, i.e., there exists $x\sb{i}$
with $b=(e\oplus f\sb{i})\oplus x\sb{i}=e\oplus (f\sb{i}\oplus x\sb{i})$.
Then $e\leq b$ and $f\sb{i}\leq f\sb{i}\oplus x\sb{i}=e/b$ for all $i\in I$,
whence $f\leq e/b$, and it follows from Lemma \ref{le:SlashProps} (ii) that $e
\oplus f\leq b$, proving that $e\oplus f=\bigvee\sb{i\in I}(e\oplus f\sb{i})$.
\end{proof}

By (GPEA1), we may omit parentheses in expressions such as $a\oplus b\oplus c$.
By recursion, the partial operation $\oplus$ can be extended to finite
sequences $e_1,e_2,\ldots,e_n$ as follows: The orthosum $e_1\oplus\cdots
\oplus e_n$ exists iff the elements $f:=e_1\oplus e_2\oplus\cdots\oplus
e_{n-1}$ and $f\oplus e_n$ both exist, and then $e_1\oplus\cdots\oplus e_n
:=f\oplus e_n$. In general, the orthosum may depend on the order of its
orthosummands.

In a similar way, by recursion, we also define \emph{orthogonality} and
the corresponding \emph{orthosum} for a finite sequence of elements in
$E$, and it turns out that the orthosum does not depend on the order of
the orthosummands. Therefore, in the obvious way, we define orthogonality and
the corresponding orthosum for finite families in $E$. (We understand that
the empty family in $E$ is orthogonal and that its orthosum is $0$.) The
notion of orthogonality and the orthosum for arbitrary families is defined
as follows: A family $(e_i)_{i\in I}$ in $E$ is said to be \emph{orthogonal}
iff every finite subfamily $(e_i)_{i\in F}$ ($I\supseteq F$ is finite) is
orthogonal in $E$. The family $(e_i)_{i\in I}$ is \emph{orthosummable} with
\emph{orthosum} $\oplus_{i\in I} e_i$ iff it is orthogonal and the supremum
$\bigvee_{F\subseteq I}(\oplus_{i\in F} e_i)$ over all finite subsets $F$ of $I$
exists in $E$, in which case $\oplus_{i\in I}e_i:=\bigvee_{F\subseteq I}
(\oplus_{i\in F} e_i)$.

\begin{lemma}\label{le:veeopluswedge}
Let $e,f\in E$. If $e\perp f$ and $e\vee f$ exists in $E$, then $e\wedge f$
exists in $E$, $(e\vee f)\perp(e\wedge f)$, and $e\oplus f=(e\vee f)
\oplus(e\wedge f)$.
\end{lemma}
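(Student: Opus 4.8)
The plan is to work inside the order interval $[0,s]$, where $s:=e\oplus f=f\oplus e$ is the common value guaranteed by $e\perp f$, and to exploit the fact that the difference operations furnish an order-reversing bijection of this interval onto itself. First I would record that $e\le s$ and $f\le s$ (since $e\oplus f=s$ and $f\oplus e=s$), and that consequently $v:=e\vee f$ exists with $v\le s$, because $s$ is an upper bound of $\{e,f\}$ and $v$ is the least one. A short bookkeeping step identifies the four relevant differences: from $e\oplus f=s$ and $f\oplus e=s$ together with the uniqueness of differences, one gets $e/s=f$, $s\backslash e=f$, $f/s=e$, and $s\backslash f=e$; thus \emph{both} difference maps interchange $e$ and $f$.

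Next I would show that $\lambda\colon a\mapsto a/s$ is an order-reversing bijection of $[0,s]$ onto itself. Lemma \ref{le:SlashProps}(i) (with $b=s$) gives $a/s\le s$ together with $(s\backslash a)/s=s\backslash(a/s)=a$, so $\lambda$ is a bijection of $[0,s]$ with inverse $\mu\colon a\mapsto s\backslash a$. That $\lambda$ is order-reversing follows from Lemma \ref{le:SlashProps}(ii): if $a_1\le a_2\le s$ and $d:=a_2/s$, then $a_2\le s\backslash d$, hence $a_1\le s\backslash d$, hence $d\le a_1/s$, i.e. $a_2/s\le a_1/s$. Being the inverse of an order-reversing bijection, $\mu$ is order-reversing as well. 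An order-reversing bijection converts an existing supremum into the corresponding infimum; and since $e,f\le s$, the suprema and infima of $\{e,f\}$ computed in $E$ agree with those computed in the order ideal $[0,s]$ (every lower bound of $e,f$ in $E$, and every upper bound that is $\le s$, already lies in $[0,s]$), so no information is lost by passing to the interval.

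Applying these facts to $v=e\vee f$ then finishes the argument. Since $\lambda$ carries the join to the meet, $\lambda(v)=\lambda(e)\wedge\lambda(f)=f\wedge e=e\wedge f$; in particular $e\wedge f$ exists and equals $v/s$, so $(e\vee f)\oplus(e\wedge f)=s$. Symmetrically, $\mu(v)=\mu(e)\wedge\mu(f)=e\wedge f=s\backslash v$, giving $(e\wedge f)\oplus(e\vee f)=s$. Combining the two, $(e\vee f)\oplus(e\wedge f)=(e\wedge f)\oplus(e\vee f)=s=e\oplus f$, which is exactly the assertion that $(e\vee f)\perp(e\wedge f)$ and that $e\oplus f=(e\vee f)\oplus(e\wedge f)$. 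The step I expect to be the genuine obstacle is precisely this last one: because $E$ is non-commutative, establishing $(e\vee f)\oplus(e\wedge f)=s$ by itself does \emph{not} yield orthogonality, and one cannot simply invoke conjugacy (GPEA2) to recover the reversed orthosum, since the conjugate element it produces need not be $e\vee f$. Running the two difference maps $\lambda$ and $\mu$ in parallel is what supplies both orders of the orthosum, and keeping straight which difference delivers which order is the only real care the proof demands.
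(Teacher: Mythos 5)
Your route is correct in outline and genuinely different in organization from the paper's. The paper verifies directly that $(e\vee f)/(e\oplus f)$ has the universal property of $e\wedge f$: it is a common lower bound of $e$ and $f$ by the telescoping identity of Lemma \ref{le:SlashProps}(iv), and it dominates every lower bound $d$ via Lemma \ref{le:SlashProps}(iii) and (ii); it then repeats the computation for $(e\oplus f)\backslash(e\vee f)$ to obtain the reversed orthosum --- exactly the non-commutativity issue you flag at the end. You instead extract the structural content of that computation: with $s:=e\oplus f=f\oplus e$, the maps $\lambda(a):=a/s$ and $\mu(a):=s\backslash a$ are mutually inverse order anti-automorphisms of $E[0,s]$ which interchange $e$ and $f$, hence must carry the join of $\{e,f\}$ to the meet, and running both maps supplies both orders of the orthosum. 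What your version buys is a reusable duality statement for the interval PEA $E[0,s]$ of Proposition \ref{pr:interval} and a transparent reason why both difference operations are needed; what the paper's version buys is brevity and independence from any order-theoretic machinery --- which is precisely where your write-up has its one weak point.

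That weak point: ``being the inverse of an order-reversing bijection, $\mu$ is order-reversing as well'' is not a valid principle for posets (a bijection of a two-element antichain onto a two-element chain is order-reversing, but its inverse is not), and likewise a merely order-reversing bijection need not carry suprema to infima; for that you need the full anti-isomorphism property, i.e., order-reversal of \emph{both} $\lambda$ and $\mu$. So the step is load-bearing: it is what turns a lower bound $w\leq e,f$ into the upper bound $\mu(w)$ of $\{e,f\}$, whence $v\leq\mu(w)$ and $w\leq\lambda(v)$. Fortunately the claim is true and follows from the same Lemma \ref{le:SlashProps}(ii) you already use, read as an antitone Galois connection: for $x,y\in E[0,s]$ one has $x\leq y/s$ iff $y\leq s\backslash x$. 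Combined with $\lambda\circ\mu=\mu\circ\lambda=\mathrm{id}$ on $E[0,s]$, this yields order-reversal of both maps at once (for instance $\mu(a_2)\leq\mu(a_1)$ iff $a_1\leq\lambda(\mu(a_2))=a_2$). With that substitution in place of the quoted sentence, your argument is complete and correct.
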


\begin{proof}
As $e\perp f$, we have $e\oplus f=f\oplus e$. Evidently $e,f\leq e\oplus f$,
so $e\leq e\vee f\leq e\oplus f$, and by Lemma \ref{le:SlashProps} (iv),
$e/(e\vee f)\oplus(e\vee f)/(e\oplus f)=e/(e\oplus f)=f$, whence $(e\vee f)
/(e\oplus f)\leq f$. Likewise, $(e\vee f)/(e\oplus f)\leq e$. Suppose that
$d\leq e,f$. By Lemma \ref{le:SlashProps} (iii), $f\leq f\oplus(e\backslash d)
=(f\oplus e)\backslash d=(e\oplus f)\backslash d$. Likewise, $e\leq(e\oplus f)
\backslash d$, and we have $e\vee f\leq(e\oplus f)\backslash d$; hence by
Lemma \ref{le:SlashProps} (ii), $d\leq (e\vee f)/(e\oplus f)$. This proves
that $(e\vee f)/(e\oplus f)=e\wedge f$, from which we obtain $e\oplus f=
(e\vee f)\oplus (e\wedge f)$. Similarly, by considering $(e\oplus f)
\backslash(e\vee f)$, which is again under $e$ and $f$, and arguing that
$(e\oplus f)\backslash(e\vee f)=e\wedge f$, we find that $e\oplus f=
(e\wedge f)\oplus (e\vee f)$.
\end{proof}

\begin{definition}\label{def:pea}
A \emph{pseudoeffect algebra} (PEA) is a partial algebraic structure $(E,\oplus,0,1)$,
where $\oplus$ is a partial operation and $0$ and $1$ are constants, and the following hold:
\begin{enumerate}
\item[ ]
\begin{enumerate}
\item[(PEA1)] $a\oplus b$ and $(a\oplus b)\oplus c$ exist iff $b\oplus c$ and $a\oplus
 (b\oplus c)$ exist, and in this case $(a\oplus b)\oplus c=a\oplus (b\oplus c)$.
\item[(PEA2)] There is exactly one $d\in E$ and exactly one $e\in E$ such that $a
 \oplus d=e\oplus a=1$.
\item[(PEA3)] If $a\oplus b$ exists, there are elements $d,e\in E$  such that
 $a\oplus b=d\oplus a=b\oplus e$.
\item[(PEA4)] If $1\oplus a$ or $a\oplus 1$ exists, then $a=0$
\end{enumerate}
\end{enumerate}
\end{definition}

The partial ordering for a PEA is defined in the same way as the
partial ordering for a GPEA. It is easy to see, that a PEA is the
same thing as a GPEA with a greatest element. We claim the following
statement from (\cite{DvVepo}, Proposition 2.7):

\begin{proposition} \label{pr:interval}
Let $(E,\oplus,0)$ be a GPEA and let $u\in E$. Then $(E[0,u],\oplus\sb{u},0,u)$ is
a PEA, where $E[0,u]:=\{ a\in E:a\leq u\}$ and where $a\oplus\sb{u}b$ is defined
for $a,b\in E[0,u]$ iff $a\oplus b$ exists in $E$ and $a\oplus b\leq u$, in which
case $a\oplus\sb{u}b:=a\oplus b$.
\end{proposition}

\begin{definition} \label{df:Ideal}
An \emph{ideal} of the GPEA $E$ is a nonempty subset $I\subseteq E$ such
that:
\begin{enumerate}
\item[(I1)] If $a\in I$, $b\in E$, and $b\leq a$, then $b\in I$.
\item[(I2)] If $a,b\in I$ and $a\oplus b$ exists, then $a\oplus b\in I$.
\end{enumerate}
If $I$ is an ideal in $E$, then $I$ is said to be \emph{normal} iff,
\begin{enumerate}
\item[(N)] whenever $a,x,y\in E$ and $a\oplus x=y\oplus a$, then
 $x\in I\Leftrightarrow y\in I$.
\end{enumerate}
\end{definition}

\begin{definition} \label{df:CentralIdeal}
We say, that an ideal $S$ in the GPEA $E$ is \emph{central}, or equivalently,
that it is a \emph{direct summand} of $E$, iff there is an ideal $S'$ in $E$
such that
\begin{enumerate}
\item[(1)] $a\in S, b\in S'\Rightarrow a\perp b$, and
\item[(2)] every $a\in E$ can be uniquely written a an
 orthosum $a=a_1\oplus a_2$ with ``coordinates" $a_1\in S$
 and $a_2\in S'$.
\end{enumerate}
We write $E=S\oplus S'$ iff (1) and (2) hold.
\end{definition}

If $E=S\oplus S'$, then $S'$ is also a central ideal (direct summand)
in $E$, $S'$ is uniquely determined by $S$ (cf. the proof of \cite
[Lemma 4.3]{CenGEA}), and all GPEA calculations on $E$ can be conducted
``coordinatewise" in the obvious sense. If $E=S\oplus S'$, we refer to
$S$ and $S'$ as \emph{complementary direct summands} of $E$.

\begin{proposition}  \label{pr:DirSumNormal}
Any central ideal {\rm(}direct summand{\rm)} of a GPEA $E$ is normal.
\end{proposition}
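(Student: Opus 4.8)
The plan is to exploit the direct decomposition $E=S\oplus S'$ to compute the coordinates of both $a\oplus x$ and $y\oplus a$, and then to read off normality purely from the $S'$-coordinate by invoking the cancellation law (GPEA3). So first I would fix $a,x,y\in E$ with $a\oplus x=y\oplus a$ and write their unique coordinate decompositions $a=a_1\oplus a_2$, $x=x_1\oplus x_2$, $y=y_1\oplus y_2$, where first coordinates lie in $S$ and second coordinates in $S'$. The basic remark is that membership in $S$ is detected by the second coordinate: $x\in S$ iff $x_2=0$, since if $x\in S$ then $x=x\oplus 0$ is a decomposition with $x\in S$ and $0\in S'$, so uniqueness forces $x_2=0$, while conversely $x_2=0$ gives $x=x_1\in S$; the same holds for $y$.

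The heart of the argument is to show $a\oplus x=(a_1\oplus x_1)\oplus(a_2\oplus x_2)$ with $a_1\oplus x_1\in S$ and $a_2\oplus x_2\in S'$, and symmetrically $y\oplus a=(y_1\oplus a_1)\oplus(y_2\oplus a_2)$. For the first identity I would start from $a\oplus x=(a_1\oplus a_2)\oplus(x_1\oplus x_2)$ and commute the cross term $a_2\in S'$ past $x_1\in S$: property (1) of a central ideal yields $a_2\perp x_1$, hence $a_2\oplus x_1=x_1\oplus a_2$, and repeated application of associativity (GPEA1) rearranges the orthosum into $(a_1\oplus x_1)\oplus(a_2\oplus x_2)$, the same bookkeeping confirming that each indicated partial sum exists. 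That $a_1\oplus x_1\in S$ and $a_2\oplus x_2\in S'$ follows from (I2). Since coordinate decompositions are unique, comparing the $S'$-coordinates of the two sides of $a\oplus x=y\oplus a$ gives the single equation $a_2\oplus x_2=y_2\oplus a_2$.

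Finally I would extract condition (N) from this equation. If $x\in S$, i.e. $x_2=0$, then $a_2=y_2\oplus a_2=y_2\oplus a_2$ and, writing $a_2=0\oplus a_2$ via (GPEA5), right cancellation (GPEA3) forces $y_2=0$, so $y\in S$. The converse is symmetric: if $y_2=0$ then $a_2\oplus x_2=a_2=a_2\oplus 0$, and left cancellation gives $x_2=0$, so $x\in S$. Thus $x\in S\Leftrightarrow y\in S$, which is exactly (N), and $S$ is normal.

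The hard part will be the middle step, namely justifying the coordinatewise rearrangement of $a\oplus x$ rigorously — tracking the existence of each intermediate orthosum as the cross terms $a_2$ and $x_1$ are commuted using associativity and the orthogonality $S\perp S'$. This is precisely the ``all GPEA calculations can be conducted coordinatewise'' remark following the definition of a direct summand, and it is where axioms (GPEA1)–(GPEA5) genuinely enter; once it is established, the cancellation argument delivering (N) is immediate.
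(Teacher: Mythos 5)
Your proof is correct, but it takes a more elaborate, genuinely different route than the paper's. The paper decomposes only $a=a_1\oplus a_2$: assuming $x\in S$, it commutes $x$ past $a_2$ (using property (1) of Definition \ref{df:CentralIdeal}), cancels $a_2$ on the right to get $a_1\oplus x=y\oplus a_1$, and then concludes directly from the ideal axioms that $y\leq a_1\oplus x\in S$ by (I2), hence $y\in S$ by (I1); the converse is symmetric. You instead decompose all three of $a,x,y$, rearrange both sides into coordinate form, and invoke the \emph{uniqueness} clause (2) of the definition to reduce everything to the single $S'$-coordinate equation $a_2\oplus x_2=y_2\oplus a_2$, from which both directions of (N) fall out at once by cancellation against $0$. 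What your approach buys is symmetry (no ``by a similar argument'' needed) and an explicit instance of the ``coordinatewise calculation'' principle the paper only asserts informally after Definition \ref{df:CentralIdeal}; what it costs is the associativity bookkeeping you rightly flag as the hard part. That step does go through: on each side only one cross term needs commuting ($x_1\perp a_2$, resp.\ $y_2\perp a_1$, both from property (1)), and repeated application of (GPEA1) — each use of which transfers existence along with the regrouping — yields $a\oplus x=(a_1\oplus x_1)\oplus(a_2\oplus x_2)$ and $y\oplus a=(y_1\oplus a_1)\oplus(y_2\oplus a_2)$, with the parts in $S$ and $S'$ respectively by (I2). So your plan is sound as stated; the paper's proof is simply the leaner version that avoids decomposing $x$ and $y$ altogether.
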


\begin{proof}
Let $S$ be a central ideal of $E$ with $S'$ as its complementary direct
summand, and assume that $a,x,y\in E$ with $a\oplus x=y\oplus a$. We can
write $a$ uniquely as $a=a\sb{1}\oplus a\sb{2}$ with $a\sb{1}\in S$ and
$a\sb{2}\in S'$. Then $a\sb{1}\oplus a\sb{2}\oplus x=y\oplus a\sb{1}
\oplus a\sb{2}$. Suppose that $x\in S$. Then, as $a\sb{2}\in S'$, we
have $x\perp a\sb{2}$, so $a\sb{2}\oplus x=x\oplus a\sb{2}$, whence
$a\sb{1}\oplus x\oplus a\sb{2}=y\oplus a\sb{1}\oplus a\sb{2}$, and by
cancellation $a\sb{1}\oplus x=y\oplus a\sb{1}$. Therefore, $y\leq a\sb{1}
\oplus x\in S$, and it follows that $y\in S$. By a similar argument, if
$y\in S$, then $x\in S$.
\end{proof}

The notion that $E$ is a direct sum $E=S\oplus S'$ of two central ideals
is extended to finitely many direct summands $E=S\sb{1}\oplus S\sb{2}
\oplus\cdots\oplus S\sb{n}$ in the obvious way, each $S\sb{i}$, $i=
1,2,...,n$, being a central ideal (direct summand) in $E$ with complementary
direct summand $(S\sb{i})'=S\sb{1}\oplus\cdots S\sb{i-1}\oplus S\sb{i+1}
\cdots\oplus S\sb{n}$.

\section{The exocenter of a GPEA} \label{sc:ExoCenter} 

\begin{definition} \label{df:ExoCen}
The exocenter of the GPEA $E$, denoted by $\GEX(E)$, is the set of all mappings
$\pi: E\rightarrow E$ such that for all $e,f\in E$ the following hold:
\begin{enumerate}
\item[ ]
\begin{enumerate}
\item[(EXC1)] $\pi: E\rightarrow E$ is a PGEA-endomorphism of $E$, that is: if
 $e\oplus f$ exists, then $\pi e\oplus \pi f$ exists and $\pi(e\oplus f)=\pi e
 \oplus\pi f$.
\item[(EXC2)] $\pi$ is idempotent (i.e., $\pi (\pi e)=\pi e$).
\item[(EXC3)] $\pi$ is decreasing (i.e., $\pi e\leq e$).
\item[(EXC4)] $\pi$ satisfies the following orthogonality condition: if $\pi e=e$
 and $\pi f=0$, then $e\perp f$ (i.e., $e\oplus f=f\oplus e$).
\end{enumerate}
\end{enumerate}
If $\pi\in \GEX(E)$ and $e\in E$, then as $\pi e\leq e$ by (EXC3), we can (and
do) define $\pi\,'e:=(\pi e)/e$ for all $e\in E$.
\end{definition}

\begin{lemma} \label{le:piprime}
If $\pi\in\GEX(E)$ and $e\in E$, then $\pi\,'e=(\pi e)/e=e\backslash(\pi e)
=e\ominus\pi e$ and $\pi e\perp\pi\,' e\text{\ with\ }\pi e\oplus\pi\,' e
=\pi\,'e\oplus\pi e=e.$
\end{lemma}

\begin{proof}
Let $\pi\in\GEX(E)$ and $e\in E$. As $\pi e\leq e$, both $\pi\,'e=(\pi e)
/e$ and $e\backslash(\pi e)$ are defined, and with $x:=(\pi e)/e$
and $y:=e\backslash(\pi e)$, we have $\pi e\oplus x=e=y\oplus\pi e$. We
apply the mapping $\pi$ and obtain $\pi e\oplus\pi x=\pi e=\pi y\oplus
\pi e$; hence $\pi x=\pi y=0$ and by (EXC4), $\pi e\oplus x=x\oplus\pi e
=e$ and also $\pi e\oplus y=y\oplus\pi e=e$. Therefore by cancellation,
$\pi\,'e=(\pi e)/e=x=y=e\backslash(\pi e)=e\ominus\pi e$, and $\pi e
\perp\pi\,'e$ with $\pi e\oplus\pi\,'e=\pi\,'e\oplus\pi e=e.$
\end{proof}

\begin{theorem}\label{th:EXCprop}
If $\pi\in \GEX(E)$, then for all $e,f\in E$ the following hold{\rm:}
\begin{enumerate}
\item $\pi(\pi\,'e)=\pi\,'(\pi e)=0$.
\item $\pi\,'\in \GEX(E)$ and $(\pi\,')'=\pi$.
\item If $e\leq \pi f$, then $e=\pi e$.
\item If $e\leq f$, then $\pi e=e\wedge\pi f$.
\item $\pi(E):=\{ \pi e: e\in E\} =\{ e\in E: e=\pi e\}$ is an ideal
 in $E$.
\item $\pi(E)$ is sup/inf-closed in $E$ {\rm(}i.e., $\pi(E)$ is closed under
 the formation of existing suprema and infima in $E$ of nonempty families in
 $\pi(E)${\rm)}.
\item If $e\in \pi (E)$ and $f\in \pi\,'(E)$, then $e\perp f, e\oplus f=
 e\vee f$ and $e\wedge f=0$.
\item For each element $e\in E$ there are uniquely determined elements
 $e_1\in \pi (E), e_2\in \pi\,'(E)$ such that $e=e\sb{1}\oplus e\sb{2}$; in fact,
 $e\sb{1}=\pi e$ and $e\sb{2}=\pi\,'e$.
\item If $e=e_1\oplus e_2, f=f_1\oplus f_2$, where $e_1,f_1\in \pi (E)$,
 $e_2,f_2\in \pi\,'(E)$, then $e\oplus f$ exists iff both $e_1\oplus f_1$ and
 $e_2\oplus f_2$ exist.
\item $\pi\,'(E)=\{ f\in E: f\wedge e=0,\,\forall e\in \pi (E)\}$.
\end{enumerate}
\end{theorem}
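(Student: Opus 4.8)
The backbone of every part is the decomposition $e=\pi e\oplus\pi\,'e=\pi\,'e\oplus\pi e$ furnished by Lemma \ref{le:piprime}, together with the commutativity that (EXC4) forces between the ``range'' $\pi(E)$ and the ``kernel'' $\pi\,'(E)$. First I would dispatch (i): applying the endomorphism $\pi$ to $e=\pi e\oplus\pi\,'e$ and using idempotence gives $\pi e=\pi e\oplus\pi(\pi\,'e)$, so $\pi(\pi\,'e)=0$ by cancellation (GPEA3); and $\pi\,'(\pi e)=(\pi(\pi e))/(\pi e)=(\pi e)/(\pi e)=0$ directly. Next (iii): if $e\le\pi f$ then $\pi\,'e\le e\le\pi f$ (using $\pi\,'e\le e$ from Lemma \ref{le:piprime}), so write $\pi f=\pi\,'e\oplus z$; applying $\pi$ and using (i) gives $\pi f=\pi z\le z\le\pi f$, whence $z=\pi f$ and cancellation yields $\pi\,'e=0$, i.e. $e=\pi e$.

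Part (iii) is exactly the downward closure of $\pi(E)$, so (v) follows once I note $\pi(E)=\{e:e=\pi e\}$ (idempotence) and that $\pi(a\oplus b)=a\oplus b$ for $a,b\in\pi(E)$ gives (I2). The infimum half of (vi) and all of (iv) are then immediate: for (iv), $\pi e$ is a lower bound of $e,\pi f$, and any $d\le e,\pi f$ satisfies $d=\pi d\le\pi e$ by (iii); the supremum half of (vi) follows since $e_i=\pi e_i\le\pi s$ forces $s=\bigvee_i e_i\le\pi s\le s$ for $s=\bigvee_i e_i$.

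The crux is (ii), that $\pi\,'\in\GEX(E)$, and I expect the endomorphism axiom (EXC1) for $\pi\,'$ to be the only real obstacle. Here I would first record the commutativity: for any $e,f$, since $\pi(\pi e)=\pi e$ and $\pi(\pi\,'f)=0$ by (i), (EXC4) gives $\pi e\perp\pi\,'f$. To verify (EXC1) for $\pi\,'$, suppose $e\oplus f$ exists; then the four-fold orthosum $\pi e\oplus\pi\,'e\oplus\pi f\oplus\pi\,'f$ exists, and by (GPEA1) its consecutive sub-orthosum $\pi\,'e\oplus\pi f$ exists; commuting this pair regroups the sum as $(\pi e\oplus\pi f)\oplus(\pi\,'e\oplus\pi\,'f)$, so both inner sums exist. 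Since $\pi(e\oplus f)=\pi e\oplus\pi f$, cancelling it off $e\oplus f=\pi(e\oplus f)\oplus\pi\,'(e\oplus f)$ identifies $\pi\,'(e\oplus f)=\pi\,'e\oplus\pi\,'f$. Properties (EXC2), (EXC3) for $\pi\,'$ come from (i) and Lemma \ref{le:piprime}, while (EXC4) for $\pi\,'$ translates, via $\pi\,'x=x\Leftrightarrow\pi x=0$, back to (EXC4) for $\pi$; and $(\pi\,')'=\pi$ reads off $\pi\,'e\oplus\pi e=e$. The delicate point throughout is justifying the existence of the intermediate orthosums purely from associativity.

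Finally the decomposition statements. For (viii), existence is Lemma \ref{le:piprime}, and uniqueness follows by applying $\pi$ to $e=e_1\oplus e_2$ (so $e_1=\pi e$) and cancelling (so $e_2=\pi\,'e$). For (vii), $e\in\pi(E)$, $f\in\pi\,'(E)$ give $\pi e=e$, $\pi f=0$, so $e\perp f$ by (EXC4); $e\wedge f=0$ because any $d\le e,f$ lies in both $\pi(E)$ (by (iii)) and $\pi\,'(E)$ (downward closure, using (ii) and (v) applied to $\pi\,'$), forcing $d=\pi d=0$; and $e\oplus f=e\vee f$ because any upper bound $b$ satisfies $e\le\pi b$, $f\le\pi\,'b$, and the commutativity lets me re-assemble $b$ as $(e\oplus f)\oplus(\cdots)$, so $e\oplus f\le b$. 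Part (ix) is then easy: ``$\Rightarrow$'' is just applying the endomorphisms $\pi,\pi\,'$, and ``$\Leftarrow$'' reuses the regrouping of (ii) together with the orthogonality from (vii). For (x), ``$\subseteq$'' is the $e\wedge f=0$ clause of (vii), and ``$\supseteq$'' follows by taking $e=\pi f\in\pi(E)$: then $f\wedge\pi f=\pi f=0$, so $f=\pi\,'f\in\pi\,'(E)$.
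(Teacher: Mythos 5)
Your proof is correct -- every part checks out, including the delicate regrouping in (ii), which is indeed licensed by repeated applications of (GPEA1) -- but it takes a genuinely different route through the two hardest points. First, the ordering: the paper proves (ii) before (iii), and its proof of (iii) actually \emph{depends} on (ii), since it argues $\pi\,'e\leq\pi\,'(\pi f)=0$, which needs $\pi\,'$ to be order-preserving, i.e., a GPEA-endomorphism. Your proof of (iii) (write $\pi f=\pi\,'e\oplus z$, apply $\pi$, use idempotence and cancellation) is independent of (ii), which lets you establish the ideal structure (v), (iv), (vi) before tackling the endomorphism property -- a cleaner dependency graph. Second, the crux (EXC1) for $\pi\,'$: the paper works in the difference calculus, starting from $\pi\,'(e\oplus f)=(e\oplus f)\backslash(\pi e\oplus\pi f)$ and massaging it with Lemma \ref{le:SlashProps} (iii) and the (EXC4)-commutation $\pi e\perp\pi\,'(e\oplus f)$ until $\pi\,'e\oplus\pi\,'f$ appears; you instead expand $e\oplus f$ into the four-fold orthosum $\pi e\oplus\pi\,'e\oplus\pi f\oplus\pi\,'f$, commute the middle pair $\pi\,'e\perp\pi f$ by (EXC4), regroup to $(\pi e\oplus\pi f)\oplus(\pi\,'e\oplus\pi\,'f)$, and cancel $\pi(e\oplus f)$. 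Both arguments hinge on the same commutation relation; the paper's version is terser because Lemma \ref{le:SlashProps} absorbs the bookkeeping, while yours is more transparent about what is really happening but shifts the burden onto existence of intermediate orthosums -- the point you rightly flag, and which does go through by applying (GPEA1) a few times. Your streamlined treatments of the later parts are also correct and slightly simpler than the paper's: you get $e\wedge f=0$ in (vii) directly from $\pi(E)\cap\pi\,'(E)=\{0\}$ rather than via Lemma \ref{le:veeopluswedge}, you get the forward direction of (ix) by just applying $\pi$ and $\pi\,'$ rather than by commuting a four-fold sum, and you get the reverse inclusion in (x) by specializing the hypothesis to $e=\pi f$ rather than decomposing $f$.
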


\begin{proof}
(i) $\pi(\pi\,'e)=\pi(e\backslash\pi e)=\pi e\backslash\pi\pi e=\pi e\backslash
\pi e=0$ and $\pi\,'(\pi e)=\pi e\backslash\pi\pi e=0$ too.

(ii) By Lemma \ref{le:SlashProps} (i), $(\pi\,')' e=e\backslash\pi\,'e=e
\backslash(\pi e/e)=\pi e$. To prove that $\pi\,'$ is a GPEA-endomorphism
of $E$, suppose that $e\oplus f$ exists. Then by (EXC1) $\pi\,'(e\oplus f)
=(e\oplus f)\backslash(\pi e\oplus \pi f)$, whence $\pi\,'(e\oplus f)
\oplus\pi e\oplus\pi f=e\oplus f$ and so by Lemma \ref{le:SlashProps} (iii),
$\pi\,'(e\oplus f)\oplus\pi e=(e\oplus f)\backslash\pi f=e\oplus(f\backslash
\pi f)=e\oplus\pi\,'f$. As $\pi\pi e =\pi e$ and by (i), $\pi\pi\,'(e\oplus f)
=0$, we have $e\perp\pi\,'f$ by (EXC4), whence $\pi e\oplus\pi\,'(e\oplus f)=
\pi\,'(e\oplus f)\oplus\pi e=e\oplus\pi\,'f$, i.e., $\pi\,'(e\oplus f)=
\pi e/(e\oplus\pi\,'f)$, and a second application of Lemma \ref{le:SlashProps}
(iii) yields $\pi\,'(e\oplus f)=(\pi e/e)\oplus\pi\,'f=\pi\,'e\oplus\pi\,'f$.
Thus, $\pi\,'$ satisfies (EXC1). Moreover, by (i), $\pi\,'(\pi\,'e)=\pi\,'
(e\backslash\pi e)=\pi\,'e\backslash\pi\,'\pi e=\pi\,'e$, whence $\pi\,'$
satisfies (EXC2). Obviously, (EXC3) holds for $\pi\,'$. Finally to prove
that $\pi\,'$ satisfies (EXC4), suppose that $\pi\,'e=e$ and $\pi\,'f
=0$. Then $\pi e=0$ because $\pi\,'e=e\backslash\pi e=e$ and $\pi f=f$ because
$\pi\,'f=f\backslash\pi f=0$. Therefore, since $\pi$ satisfies (EXC4), we have
$e\perp f$, and $\pi\,'$ also satisfies (EXC4). Therefore, $\pi\,'\in\GEX(E)$.

(iii) If $e\leq \pi f$, then $e\backslash\pi e=\pi\,'e\leq \pi\,'\pi f=0$,
whence $e=\pi e$.

(iv) Suppose that $e\leq f$. Then $\pi e\leq e$ and $\pi e\leq \pi f$. Suppose
that $d\leq e,\pi f$. Since $d\leq\pi f$, (iii) implies that $d=\pi d\leq
\pi e$, so $\pi e=e\wedge\pi f$.

(v) If $e=\pi e$, then $e\in\pi(E)$. Vice versa, if $e\in\pi(E)$, then $e=
\pi f$ for some $f\in E$, so $\pi e=\pi\pi f=\pi f=e$, and we have $\pi(E)=
\{e\in E: e=\pi e\}$.

(vi) Assume that $(e_i)_{i\in I}\subseteq\pi(E)$ and $e=\bigvee_{i\in I} e_i$
exists in $E$. As $e_i\leq e$, we have $e_i=\pi e_i\leq\pi e$ for all $i\in I$,
whence $e\leq\pi e$. But also $\pi e\leq e$ and thus $\pi e=e\in\pi(E)$. Since
$\pi(E)$ is an ideal, it is automatically closed under the formation of
existing infima in $E$ of nonempty families in $\pi(E)$.

(vii) Let $e\in\pi(E)$ and $f\in\pi\,'(E)$. Then $e=\pi e$, and $\pi f=\pi
\pi\,'f=0$, whence by (EXC4) $e\perp f$. Clearly, $e,f\leq e\oplus f$. If now
$e,f\leq d\in E$, then $e=\pi e\leq\pi d$ and $f=\pi\,'f\leq\pi\,'d$, thus
$e\oplus f\leq\pi d\oplus\pi\,'d=d$, whence $e\oplus f=e\vee f$. Finally,
by Lemma \ref{le:veeopluswedge}, $e\wedge f=0$.

(viii) Obviously, $e=\pi e\oplus\pi\,'e$, $\pi e\in\pi(E)$ and $\pi\,'e\in
\pi\,'(E)$. Suppose $e=e_1\oplus e_2$ with $e_1\in\pi(E)$, $e_2\in\pi\,'(E)$.
Then $e_1=\pi e_1$, $e_2=\pi\,'e_2$, $\pi e=\pi e_1\oplus\pi e_2=e_1$, and
$\pi\,'e=\pi\,'e_1\oplus\pi\,'e_2=e_2$.

(ix) Suppose $e=e_1\oplus e_2$ and $f=f_1\oplus f_2$, with $e_1,f_1\in\pi(E)$,
$e_2,f_2\in\pi\,'(E)$. If $e_1\oplus f_1$ and $e_2\oplus f_2$ both exist, then $e_1
\oplus f_1\in\pi(E)$ and $e_2\oplus f_2\in\pi\,'(E)$ by (v). Then by (vii) $(e_1
\oplus f_1)\perp(e_2\oplus f_2)$, so $(e_1\oplus f_1)\oplus(e_2\oplus f_2)$
exists and equals $e_1\oplus(f_1\oplus e_2)\oplus f_2=e_1\oplus(e_2\oplus f_1)
\oplus f_2=(e_1\oplus e_2)\oplus(f_1\oplus f_2)=e\oplus f$. If, on the other hand,
$e\oplus f$ exists, then $e_1\oplus e_2\oplus f_1\oplus f_2$ exists and equals $e_1
\oplus f_1\oplus e_2\oplus f_2$, which implies that $e_1\oplus f_1$ and $e_2\oplus f_2$
both exist.

(x) Assume that $f\wedge e=0$ for all $e\in\pi(E)$. As $f=f_1\oplus f_2$ with $f_1\in
\pi(E)$, $f_2\in\pi\,'(E)$, we have $f_1=f\wedge f_1=0$, whence $f=f_2\in\pi\,'(E)$. The
converse follows from (vii).
\end{proof}

\begin{lemma}\label{le:circ}
Let $\xi,\pi\in\GEX (E)$. Then{\rm:}
\begin{itemize}
\item[(i)] $\xi\circ\pi=\pi\circ\xi\in\GEX (E)$.
\item[(ii)] $\xi=\xi\circ\pi\,\Leftrightarrow\,\xi e\leq\pi e,\,\forall\, e\in E\,
 \Leftrightarrow\,\xi (E)\subseteq\pi (E)$.
\end{itemize}
\end{lemma}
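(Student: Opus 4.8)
The plan is to handle the two parts in turn, extracting commutativity first since it powers everything else.

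For (i), the key observation is that Theorem \ref{th:EXCprop}(iv) turns each exocentral map into a meet. Applying part (iv) to $\xi$ with the inequality $\pi e\leq e$ gives $\xi(\pi e)=\pi e\wedge\xi e$, and symmetrically $\pi(\xi e)=\xi e\wedge\pi e$; since meets are commutative this yields $\xi\circ\pi=\pi\circ\xi$ at once. Write $\sigma:=\xi\circ\pi=\pi\circ\xi$. Conditions (EXC1) and (EXC3) are immediate (a composite of endomorphisms is an endomorphism, and $\sigma e=\xi(\pi e)\leq\pi e\leq e$), and (EXC2) follows formally from commutativity together with idempotence of the factors, since $\sigma\circ\sigma=\xi\pi\xi\pi=\xi\xi\pi\pi=\xi\pi=\sigma$.

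The real work is verifying (EXC4) for $\sigma$. First I would unpack the hypotheses: $\sigma e=e$ means $\xi e\wedge\pi e=e$, which (as both meetands lie below $e$) forces $\pi e=e=\xi e$, so $e\in\pi(E)\cap\xi(E)$; and $\sigma f=0$ means $\xi f\wedge\pi f=0$. The goal $e\perp f$ I would reach by decomposing $f=\pi f\oplus\pi\,'f$ (Theorem \ref{th:EXCprop}(viii)) and showing $e$ is orthogonal to each piece. Since $e\in\pi(E)$ and $\pi\,'f\in\pi\,'(E)$, Theorem \ref{th:EXCprop}(vii) gives $e\perp\pi\,'f$ directly. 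For the other piece, part (iv) applied to $\xi$ gives $\xi(\pi f)=\pi f\wedge\xi f=0$, and then (EXC4) for $\xi$ (using $e\in\xi(E)$) yields $e\perp\pi f$. Finally I would recombine: $X:=e\oplus\pi f=\pi f\oplus e$ lies in the ideal $\pi(E)$, so Theorem \ref{th:EXCprop}(vii) gives $X\perp\pi\,'f$; associativity (GPEA1) together with $\pi f\oplus\pi\,'f=\pi\,'f\oplus\pi f=f$ then converts $X\oplus\pi\,'f=\pi\,'f\oplus X$ into $e\oplus f=f\oplus e$, i.e.\ $e\perp f$. This recombination step, threading orthogonality through the two coordinates, is the main obstacle and the place where care with associativity and existence is needed.

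For (ii) I would prove the cycle (A)$\Rightarrow$(B)$\Rightarrow$(C)$\Rightarrow$(A), where (A) is $\xi=\xi\circ\pi$, (B) is $\xi e\leq\pi e$ for all $e$, and (C) is $\xi(E)\subseteq\pi(E)$. (A)$\Rightarrow$(B) is just $\xi e=\xi(\pi e)\leq\pi e$ by (EXC3). For (B)$\Rightarrow$(C), any $g\in\xi(E)$ satisfies $g=\xi g\leq\pi g\leq g$, forcing $\pi g=g$. For (C)$\Rightarrow$(A), $\xi e\in\pi(E)$ gives $\pi(\xi e)=\xi e$; comparing with the meet formula $\pi(\xi e)=\xi e\wedge\pi e$ (Theorem \ref{th:EXCprop}(iv)) yields $\xi e\leq\pi e$, and feeding this back into $\xi(\pi e)=\pi e\wedge\xi e$ gives $\xi(\pi e)=\xi e$, i.e.\ $\xi\circ\pi=\xi$. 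Part (ii) is routine once the meet formula is in hand.
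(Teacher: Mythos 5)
Your proof is correct and takes essentially the same route as the paper's: commutativity of $\xi\circ\pi$ is extracted first from Theorem \ref{th:EXCprop} (you via the meet formula in part (iv), the paper via part (iii) plus order-preservation), (EXC1)--(EXC3) are checked formally, and (EXC4) is verified by the identical decomposition $f=\pi f\oplus\pi\,'f$, the same appeal to (EXC4) for $\xi$ to get $e\perp\pi f$, and the same associativity recombination (your use of Theorem \ref{th:EXCprop} (vii) for $(e\oplus\pi f)\perp\pi\,'f$ just packages the paper's direct application of (EXC4) for $\pi$). Part (ii) differs only cosmetically, proving a cycle of implications instead of two pairwise equivalences.
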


\begin{proof}
(i) Since $\xi(\pi e)\leq\pi e$, part (iii) of Theorem \ref{th:EXCprop} yields
$\xi(\pi e)=\pi(\xi(\pi e))$. Also, since $\pi e\leq e$ and both, $\pi$ and
$\xi$ are order-preserving mappings, it follows that $\xi(\pi e)=\pi(\xi(\pi e))
\leq\pi(\xi e)$. By symmetry $\pi(\xi e)\leq\xi(\pi e)$, which gives $\xi\circ
\pi=\pi\circ\xi$.

Obviously, $\xi\circ\pi$ is a GPEA-endomorphism. Furthermore, $(\xi\circ\pi)
\circ(\xi\circ\pi)=\xi\circ\pi\circ\pi\circ\xi=\xi\circ\pi\circ\xi=\xi\circ
\xi\circ\pi=\xi\circ\pi$, whence $\xi\circ\pi$ is idempotent. Moreover,
$(\xi\circ\pi)e=\xi(\pi e)\leq\pi e\leq e$, so (EXC3) holds. Finally, suppose
that $e,f\in E$ with $e=\xi(\pi e)$ and $\xi(\pi f)=0$. Then $e=\pi(\xi e)$, so
$e=\xi e=\pi e$. We put $d:=\pi f$, so that $\xi d=0$, $d\leq f$, $d=\pi d=
\pi f$, and $\pi\,'f=(\pi f)/f=f\backslash\pi f=d/f=f\backslash d$. Therefore,
$\pi\,'f\oplus d=d\oplus\pi\,'f=f$. As $e=\xi e$ and $\xi d=0$, (EXC4) implies
that $e\perp d$, i.e., $e\oplus d=d\oplus e$. Also, $\pi(e\oplus d)=\pi e
\oplus\pi d=e\oplus d$, and it follows from $\pi(\pi\,'f)=0$ and (EXC4) that
$(e\oplus d)\perp\pi\,'f$. Consequently,
\[
e\oplus f=e\oplus d\oplus\pi\,'f=\pi\,'f\oplus e\oplus d=\pi\,'f\oplus d
\oplus e=f\oplus e,
\]
proving that $\xi\circ\pi$ satisfies (EXC4).

(ii) If $\xi=\xi\circ\pi$, then $\xi e=\xi(\pi e)\leq\pi e$ for all $e\in E$.
Conversely, if $\xi e\leq\pi e$ for all $e\in E$, then $\xi e=\xi(\xi e)\leq
\xi(\pi e)$. Also, as $\xi(\pi e)\leq\xi e$ always holds, $\xi e=\xi(\pi e)$
for all $e\in E$, which means that $\xi=\xi\circ\pi$. Now if $\xi e\leq\pi e$
for all $e\in E$, then if $e\in\xi(E)$, we get $e=\xi e\leq\pi e$, whence
$\pi e=e\in\pi(E)$. Conversely, if $\xi (E)\subseteq\pi(E)$, then every $\xi e
\in\pi(E)$, thus by (i), $\xi e=\pi(\xi e)=\xi(\pi e)\leq\pi e$.
\end{proof}

\begin{theorem}\label{th:boolalg}
Let $\pi,\xi\in\GEX (E)$ and let $e\in E$. Then $\GEX(E)$ is partially ordered
by $\xi\leq\pi\Leftrightarrow\,\xi=\xi\circ\pi\,\Leftrightarrow\,\xi e\leq\pi e,
\,\forall e\in E\,\Leftrightarrow\,\xi (E)\subseteq\pi (E)$, with $0$ {\rm(}the
zero mapping{\rm)} as the smallest element and $1$ {\rm(}the identity
mapping{\rm)} as the largest element. Moreover, $\GEX(E)$  is a boolean algebra
with $\pi\mapsto\pi\,'$ as the boolean complementation, with $\pi\wedge\xi=
\pi\circ\xi=\xi\circ\pi$, and with $\pi\vee\xi=(\pi\,'\circ\xi\,')'$.
\end{theorem}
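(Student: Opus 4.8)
The plan is to build up the boolean structure in four stages, taking for granted the three equivalent descriptions of $\xi\le\pi$ furnished by Lemma \ref{le:circ}(ii). First I would dispose of the order-theoretic bookkeeping. Reflexivity and transitivity of $\le$ are immediate from the characterization $\xi\le\pi\Leftrightarrow\xi(E)\subseteq\pi(E)$; for antisymmetry I would use the form $\xi\le\pi\Leftrightarrow\xi=\xi\circ\pi$ together with the commutativity $\xi\circ\pi=\pi\circ\xi$ of Lemma \ref{le:circ}(i), so that $\xi\le\pi$ and $\pi\le\xi$ give $\xi=\xi\circ\pi=\pi\circ\xi=\pi$. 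A direct check against (EXC1)--(EXC4) shows that the zero map and the identity map lie in $\GEX(E)$, and since $\{0\}=0(E)\subseteq\pi(E)\subseteq E=1(E)$ they are the least and greatest elements.

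Next I would identify the meet. By Lemma \ref{le:circ}(i), $\pi\circ\xi=\xi\circ\pi\in\GEX(E)$, and $(\pi\circ\xi)(E)\subseteq\pi(E)$ together with $(\xi\circ\pi)(E)\subseteq\xi(E)$ shows it is a lower bound of $\pi$ and $\xi$. If $\rho\le\pi,\xi$, then $\rho=\rho\circ\pi=\rho\circ\xi$, whence $\rho=\rho\circ\xi\circ\pi=\rho\circ(\pi\circ\xi)$, i.e. $\rho\le\pi\circ\xi$; thus $\pi\wedge\xi=\pi\circ\xi=\xi\circ\pi$. For the join and complement I would first record that $\pi\mapsto\pi\,'$ is an order-reversing involution: $(\pi\,')'=\pi$ is Theorem \ref{th:EXCprop}(ii), and the description $\pi\,'(E)=\{f:f\wedge e=0\ \forall e\in\pi(E)\}$ from Theorem \ref{th:EXCprop}(x) makes $\xi\le\pi\Rightarrow\pi\,'\le\xi\,'$ transparent. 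In any poset carrying an order-reversing involution in which all binary meets exist, all binary joins exist and obey the De Morgan law; hence $\pi\vee\xi=(\pi\,'\circ\xi\,')'$ and $\GEX(E)$ is a bounded lattice. That $\pi\,'$ is a complement then follows from $\pi\wedge\pi\,'=\pi\circ\pi\,'=0$ (Theorem \ref{th:EXCprop}(i)) and, dually, $\pi\vee\pi\,'=(\pi\,'\circ\pi)'=0\,'=1$.

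The one substantive point, and the step I expect to be the main obstacle, is distributivity; my plan is to prove it pointwise. Using Theorem \ref{th:EXCprop}(iv) with $\xi e\le e$ one gets $(\pi\wedge\xi)e=\pi(\xi e)=\xi e\wedge\pi e$, and feeding this through the De Morgan formula (and Lemma \ref{le:piprime}) gives $(\pi\vee\xi)e=e\ominus(\pi\,'e\wedge\xi\,'e)=\pi e\vee\xi e$. Now fix $e$. Since $\pi a=\pi e\wedge a$ for every $a\le e$ by Theorem \ref{th:EXCprop}(iv), the map $\pi$ restricts to the endomorphism $a\mapsto(\pi e)\wedge a$ of the PEA $E[0,e]$ of Proposition \ref{pr:interval}, so each $\pi e$ is a central element of $E[0,e]$; the central elements of a PEA form a boolean algebra \cite{D, GFP}, inside which $\pi\,'e=e\ominus\pi e$ is the complement of $\pi e$ and the meets and joins just computed coincide with those taken in $E$.

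Because two members of $\GEX(E)$ coincide as soon as they agree on every $e\in E$, the identity $\pi e\wedge(\xi e\vee\rho e)=(\pi e\wedge\xi e)\vee(\pi e\wedge\rho e)$, valid in each boolean center of $E[0,e]$, lifts to $\pi\wedge(\xi\vee\rho)=(\pi\wedge\xi)\vee(\pi\wedge\rho)$ in $\GEX(E)$. Thus $\GEX(E)$ is a complemented distributive lattice, i.e. a boolean algebra, with the asserted operations. The delicate part will be verifying cleanly that $\pi\mapsto\pi e$ really takes values in the center of $E[0,e]$ and that meets and joins are computed there exactly as in $E$; once that is in place, everything else is formal manipulation with Lemmas \ref{le:SlashProps}, \ref{le:veeopluswedge}, \ref{le:circ} and Theorem \ref{th:EXCprop}.
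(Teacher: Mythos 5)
Your proposal reaches the theorem by a genuinely different route than the paper. The two arguments share the groundwork: Lemma \ref{le:circ} gives the partial order, the bounds $0$ and $1$, and the meet $\pi\wedge\xi=\pi\circ\xi$. From there the paper is very economical: it shows $\pi\wedge\xi=0\Leftrightarrow\pi\circ\xi\,'=\pi\Leftrightarrow\pi\leq\xi\,'$, so that $\xi\mapsto\xi\,'$ is a pseudocomplementation on the meet-semilattice $\GEX(E)$ under which every element is closed (since $(\xi\,')'=\xi$), and then invokes \cite[Theorem 4, p.\ 49]{GrGLT} (the Glivenko--Frink theorem on pseudocomplemented semilattices) to obtain booleanness, the complementation, and the join formula in one stroke. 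You instead manufacture the join from the order-reversing involution (using Theorem \ref{th:EXCprop}(ii) and (x)) and prove distributivity pointwise, by localizing at each interval PEA $E[0,e]$ and importing the external fact that the center of a PEA is a boolean algebra \cite{D, GFP}. Your route is longer and less self-contained, but it yields as by-products the pointwise identities $(\pi\wedge\xi)e=\pi e\wedge\xi e$ and $(\pi\vee\xi)e=\pi e\vee\xi e$, which the paper must prove separately afterwards (Theorem \ref{th:finitepointwisesup/inf}).

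The one step that genuinely needs repair is the one you flagged. From $\pi a=a\wedge\pi e$ for all $a\leq e$ (Theorem \ref{th:EXCprop}(iv)) you infer that ``$\pi$ restricts to the endomorphism $a\mapsto(\pi e)\wedge a$ of $E[0,e]$, so $\pi e$ is central in $E[0,e]$.'' Being an endomorphism of that form is not, by itself, a criterion for centrality in a PEA; what is actually needed is that $\pi|\sb{E[0,e]}$ satisfies all of (EXC1)--(EXC4) \emph{inside} $E[0,e]$, so that $E[0,e]=E[0,\pi e]\oplus E[0,\pi\,'e]$ is a direct sum decomposition and centrality of $\pi e$ follows from the PEA case of Theorem \ref{th:centr} (equivalently, from Theorem \ref{th:ceprop}(viii) applied to $E[0,e]$). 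This is exactly the paper's later Lemma \ref{le:nova}, but you may not quote it here: its proof in the paper goes through Lemma \ref{le:DisjointPiXi}, whose proof invokes Theorem \ref{th:boolalg} itself, so that shortcut is circular. The circle can be avoided: the only nontrivial condition for the restriction is (EXC4), and if $a,b\leq e$ with $\pi a=a$ and $\pi b=0$, then $a\in\pi(E)$ and $b=\pi\,'b\in\pi\,'(E)$, so Theorem \ref{th:EXCprop}(vii) gives $a\oplus b=b\oplus a=a\vee b\leq e$, which is precisely what (EXC4) in $E[0,e]$ requires. With that repair (and the observations that $\pi\,'e=e\ominus\pi e$ is the complement of $\pi e$ in $\Gamma(E[0,e])$, by Lemma \ref{le:piprime}, and that meets of elements of $E[0,e]$ computed in $E$ and in $E[0,e]$ agree), your De Morgan computation and the pointwise lifting of distributivity go through, so the proposal is correct after this step is filled in.
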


\begin{proof}
Let $\pi,\xi\in\GEX (E)$. By Lemma \ref{le:circ}, $\leq$ is a partial order on
$\GEX(E)$ and $0\leq\pi\leq 1$ holds for every $\pi\in\GEX(E)$. Clearly, $\pi
\circ\xi$ is the infimum $\pi\wedge\xi$ of $\pi$ and $\xi$ in $\GEX (E)$. We
also have $\pi\wedge\xi=0$ iff $\pi(\xi e)=0$ for every $e\in E$, which is
equivalent to $\pi(e\backslash\xi e)=\pi e,\,\forall\, e\in E$. But this means
that $\pi(\xi\,'e)=\pi e,\,\forall\, e\in E$, that is $\pi\circ\xi\,'=\pi$, which
holds iff $\pi\leq\xi\,'$. So by \cite[Theorem 4, p. 49]{GrGLT}, $\GEX (E)$ is a
boolean algebra, $\pi\,'$ is the complement of $\pi$ in $\GEX(E)$, and $\pi
\vee\xi=(\pi\,'\circ\xi\,')'$.
\end{proof}

\begin{lemma}\label{le:DisjointPiXi}
Let $\pi,\xi\in\GEX (E)$ with $\pi\wedge\xi=0$ and let $e,f\in E$. Then{\rm:}
\begin{itemize}
\item[(i)] If $e\in\pi(E), f\in\xi(E)$, then $e\perp f$ and $e\oplus f\in
 (\pi\vee\xi)(E)$, $e\oplus f=e\vee f$ and $e\wedge f=0$.
\item[(ii)] $\pi e\perp\xi e, (\pi\vee\xi)e=\pi e\vee\xi e=\pi e\oplus\xi e$
 and $\pi e\wedge\xi e=0$.
\end{itemize}
\end{lemma}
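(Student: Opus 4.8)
The plan is to derive both parts from Theorem \ref{th:EXCprop}(vii) together with the boolean structure of $\GEX(E)$ established in Theorem \ref{th:boolalg}, the only genuinely new work being the elementwise formula $(\pi\vee\xi)e=\pi e\oplus\xi e$ in part (ii).

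For part (i), I would first record that the hypothesis $\pi\wedge\xi=0$ is equivalent (as shown in the proof of Theorem \ref{th:boolalg}) to $\pi\leq\xi\,'$, hence to $\pi(E)\subseteq\xi\,'(E)$. Thus $e\in\pi(E)\subseteq\xi\,'(E)$ while $f\in\xi(E)$, and applying Theorem \ref{th:EXCprop}(vii) to the exocentral element $\xi$ (with $f$ playing the role of the element of $\xi(E)$ and $e$ that of the element of $\xi\,'(E)$) yields $f\perp e$, $f\oplus e=f\vee e$ and $f\wedge e=0$; by the symmetry of $\perp$, $\vee$ and $\wedge$ this is precisely $e\perp f$, $e\oplus f=e\vee f$ and $e\wedge f=0$. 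To place $e\oplus f$ inside $(\pi\vee\xi)(E)$, I would invoke $\pi,\xi\leq\pi\vee\xi$, so that $\pi(E),\xi(E)\subseteq(\pi\vee\xi)(E)$ by the ordering characterization in Theorem \ref{th:boolalg}; since $e\oplus f$ exists and $(\pi\vee\xi)(E)$ is an ideal (Theorem \ref{th:EXCprop}(v)), closure under existing orthosums gives $e\oplus f\in(\pi\vee\xi)(E)$.

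For part (ii), I would apply part (i) to the elements $\pi e\in\pi(E)$ and $\xi e\in\xi(E)$, which immediately gives $\pi e\perp\xi e$, $\pi e\oplus\xi e=\pi e\vee\xi e\in(\pi\vee\xi)(E)$ and $\pi e\wedge\xi e=0$. It then remains to prove $(\pi\vee\xi)e=\pi e\oplus\xi e$. Writing $\rho:=\pi\vee\xi$, $a:=\pi e$, $b:=\xi e$ and $c:=\rho e$, I note that $a,b\leq c$ (since $\pi,\xi\leq\rho$), whence $a\oplus b=a\vee b\leq c$; setting $d:=(a\oplus b)/c$ so that $a\oplus b\oplus d=c$, the goal becomes $d=0$.

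The crux, and the step I expect to be the main obstacle, is forcing this residue $d$ to vanish, since $\rho$ is given only indirectly through the complementation formula $\rho=(\pi\,'\circ\xi\,')'$ and so cannot be evaluated on $c$ by inspection. I would handle it as follows. Applying $\pi$ to $a\oplus b\oplus d=c$ and using $\pi a=a$, $\pi b=(\pi\circ\xi)e=0$, and $\pi c=(\pi\circ\rho)e=\pi e=a$ (the last because $\pi\leq\rho$ gives $\pi=\pi\circ\rho$), cancellation (GPEA3) yields $\pi d=0$; a symmetric computation gives $\xi d=0$. From $\pi d=0$ one gets $\pi\,'d=(\pi d)/d=d$ (cf.\ Lemma \ref{le:piprime}), so $d\in\pi\,'(E)$, and likewise $d\in\xi\,'(E)$; then, using $\rho\,'=\pi\,'\circ\xi\,'$, one computes $\rho\,'d=\pi\,'(\xi\,'d)=\pi\,'d=d$, so $d\in\rho\,'(E)$. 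On the other hand, $d\leq c=\rho e$ forces $d=\rho d\in\rho(E)$ by Theorem \ref{th:EXCprop}(iii). Finally, applying Theorem \ref{th:EXCprop}(vii) to $\rho$ with the single element $d$ lying in both $\rho(E)$ and $\rho\,'(E)$ gives $d=d\wedge d=0$, so $c=a\oplus b$, which together with part (i) yields $(\pi\vee\xi)e=\pi e\vee\xi e=\pi e\oplus\xi e$ and completes the proof.
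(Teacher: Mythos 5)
Your proof is correct. Part (i) is essentially the paper's own argument: the paper also reduces to Theorem \ref{th:EXCprop}(vii) by showing that one of the two elements lies in the complementary summand (it shows $f\in\pi\,'(E)$, you show $e\in\xi\,'(E)$ --- symmetric variants of the same step); your only departure there is to get $e\oplus f\in(\pi\vee\xi)(E)$ from the ideal property (I2) of $(\pi\vee\xi)(E)$ together with $\pi(E),\xi(E)\subseteq(\pi\vee\xi)(E)$, where the paper instead computes $(\pi\vee\xi)(e\oplus f)=e\oplus f$ directly; both are valid. Where you genuinely diverge is the key identity $(\pi\vee\xi)e=\pi e\oplus\xi e$ in part (ii). The paper argues by building a decomposition of $e$ itself: from $\pi\leq\xi\,'$ it gets $\pi e=\pi(\xi\,'e)$, hence $e=\xi e\oplus\xi\,'e=\xi e\oplus\pi e\oplus(\pi\,'\circ\xi\,')e$, and then cancels the common summand $(\pi\,'\circ\xi\,')e$ against $e=(\pi\,'\circ\xi\,')'e\oplus(\pi\,'\circ\xi\,')e$. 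You instead work underneath $\rho e$ (with $\rho:=\pi\vee\xi$): you introduce the residue $d=(\pi e\oplus\xi e)/\rho e$, annihilate it by applying $\pi$ and $\xi$ and cancelling (using $\pi\circ\xi=0$ and $\pi\circ\rho=\pi$, $\xi\circ\rho=\xi$), conclude $d\in\pi\,'(E)\cap\xi\,'(E)=\rho\,'(E)$ via $\rho\,'=\pi\,'\circ\xi\,'$, and force $d\in\rho(E)$ from $d\leq\rho e$ by Theorem \ref{th:EXCprop}(iii), so that $d\in\rho(E)\cap\rho\,'(E)=\{0\}$. The two arguments rest on the same two pillars --- the complementation formula $\pi\vee\xi=(\pi\,'\circ\xi\,')'$ from Theorem \ref{th:boolalg} and cancellation (GPEA3) --- so they are close cousins, but yours is structured differently: the paper's is shorter and purely computational, while yours isolates a reusable principle (any element fixed by $\rho$ and annihilated by both $\pi$ and $\xi$ must vanish, since $\rho(E)\cap\rho\,'(E)=\{0\}$), at the cost of a slightly longer chain of verifications, each of which checks out against Lemma \ref{le:circ}, Lemma \ref{le:piprime}, and Theorem \ref{th:EXCprop}.
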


\begin{proof}
(i) By the hypotheses $e=\pi e$ and $f=\xi f$. As $\pi f=\pi(\xi f)=0$ (by
Theorem \ref{th:boolalg}), we get $\pi\,'f=f\backslash\pi f=f$. Therefore,
$f\in\pi\,'(E)$, and by Theorem \ref{th:EXCprop} (vii), $e\perp f$, $e\oplus f
=e\vee f$ and $e\wedge f=0$. Also $e=\pi e\leq(\pi\vee\xi)e\leq e$, whence
$(\pi\vee\xi)e=e$. Likewise, $(\pi\vee\xi)f=f$, whence $e\oplus f=(\pi\vee
\xi)(e\oplus f)\in(\pi\vee\xi)(E)$.

(ii) We need only replace $e$ by $\pi e$ and $f$ by $\xi e$ in (i) to obtain
$\pi e\perp\xi e$, $\pi e\oplus\xi e=\pi e\vee\xi e$ and $\pi e\wedge\xi e=0$.
As $\pi\wedge\xi=0$ in the boolean algebra $\GEX(E)$, we have $\pi\leq\xi\,'$,
whence $\pi e=(\pi\wedge\xi\,')e=(\pi\circ\xi\,')e=\pi(\xi\,'e)$. Thus,
combining the equalities $\xi e\oplus \xi\,'e=e$ and $\pi e\oplus(\pi\,'
\circ\xi\,')e=\pi(\xi\,'e)\oplus\pi\,'(\xi\,'e)=\xi\,'e$, we obtain $\xi e
\oplus\pi e\oplus(\pi\,'\circ\xi\,')e=e$. Therefore, as $(\pi\,'\circ
\xi\,')'e\oplus(\pi\,'\circ\xi\,')e=e$, we infer by cancellation that
$(\pi\vee\xi)e=(\pi\,'\circ\xi\,')'e=\pi e\oplus\xi e=\pi e\vee\xi e$.
\end{proof}

\begin{theorem}\label{th:FinitePwiseDisjointPi}
Let $\pi_1,\pi_2,\ldots ,\pi_n$ be pairwise disjoint elements of the boolean
algebra $\GEX(E)$ and let $e\in E, e_i\in\pi_i(E)$ for $i=1,2,\ldots ,n$.
Then{\rm:}
\begin{itemize}
\item[(i)] $(e_i)_{i=1,2,\ldots ,n}$ is an orthogonal sequence in $E$ and
 $\oplus_{i=1}^n e_i=\bigvee_{i=1}^n e_i$.
\item[(ii)] $(\pi_ie)_{i=1}^n$ is an orthogonal sequence in $E$ and $(\pi_1\vee
 \pi_2\vee\dots\vee\pi_n)e=\oplus_{i=1}^n\pi_i e=\bigvee_{i=1}^n\pi_i e$.
\end{itemize}
\end{theorem}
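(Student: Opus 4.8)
The plan is to prove both parts by induction on $n$, using the two-element case Lemma~\ref{le:DisjointPiXi} as the engine of the inductive step. For part (i) I would actually establish a slightly stronger statement: in addition to $(e_i)_{i=1}^n$ being orthogonal with $\oplus_{i=1}^n e_i=\bigvee_{i=1}^n e_i$, the orthosum lies in $(\pi_1\vee\cdots\vee\pi_n)(E)$. This extra clause is precisely what makes the induction close, since the partial orthosum of the first $n-1$ terms must be fed back into Lemma~\ref{le:DisjointPiXi} at the next stage. I would open with a purely boolean preliminary: since $\GEX(E)$ is a boolean (hence distributive) lattice by Theorem~\ref{th:boolalg}, pairwise disjointness $\pi_i\wedge\pi_j=0$ ($i\neq j$) gives, for $\sigma:=\pi_1\vee\cdots\vee\pi_{n-1}$, that $\sigma\wedge\pi_n=\bigvee_{i<n}(\pi_i\wedge\pi_n)=0$, so $\sigma$ and $\pi_n$ are again disjoint. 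This is what lets me split off the last term.

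For part (i) the base case $n=1$ is immediate. For the step, apply the strengthened inductive hypothesis to $\pi_1,\ldots,\pi_{n-1}$ and $e_1,\ldots,e_{n-1}$: the sequence is orthogonal, $s:=\oplus_{i=1}^{n-1}e_i=\bigvee_{i=1}^{n-1}e_i$, and $s\in\sigma(E)$. Since $\sigma\wedge\pi_n=0$, $s\in\sigma(E)$, and $e_n\in\pi_n(E)$, Lemma~\ref{le:DisjointPiXi}(i) yields $s\perp e_n$, $s\oplus e_n=s\vee e_n$, and $s\oplus e_n\in(\sigma\vee\pi_n)(E)=(\pi_1\vee\cdots\vee\pi_n)(E)$. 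The relation $s\perp e_n$, together with the recursive definition of orthogonality for a finite sequence, shows $(e_1,\ldots,e_n)$ is orthogonal with $\oplus_{i=1}^n e_i=s\oplus e_n$; and by associativity of suprema $s\vee e_n=(\bigvee_{i<n}e_i)\vee e_n=\bigvee_{i=1}^n e_i$. Hence $\oplus_{i=1}^n e_i=\bigvee_{i=1}^n e_i\in(\pi_1\vee\cdots\vee\pi_n)(E)$, closing the induction.

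For part (ii), set $e_i:=\pi_i e$; by idempotence (EXC2) and Theorem~\ref{th:EXCprop}(v) we have $\pi_i e\in\pi_i(E)$, so part (i) immediately gives orthogonality of $(\pi_i e)_{i=1}^n$ together with $\oplus_{i=1}^n\pi_i e=\bigvee_{i=1}^n\pi_i e$. It then remains only to identify this common value with $(\pi_1\vee\cdots\vee\pi_n)e$, which I would prove by a parallel induction: with $\sigma=\pi_1\vee\cdots\vee\pi_{n-1}$ and $\sigma\wedge\pi_n=0$, the inductive hypothesis gives $\sigma e=\oplus_{i=1}^{n-1}\pi_i e$, and Lemma~\ref{le:DisjointPiXi}(ii) applied to $\sigma,\pi_n$ yields $(\pi_1\vee\cdots\vee\pi_n)e=(\sigma\vee\pi_n)e=\sigma e\oplus\pi_n e=\oplus_{i=1}^n\pi_i e$.

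The main thing to watch, and really the only subtlety, is the bookkeeping around the recursive definition of orthogonality for a finite sequence: one must confirm that the clause ``the first $n-1$ terms are orthogonal with orthosum $s$, and $s\perp e_n$'' is exactly what certifies orthogonality of the full $n$-term sequence (and order-independence of its orthosum). Beyond that, everything is a routine transfer of the two-element facts in Lemma~\ref{le:DisjointPiXi} through the induction, the one genuinely necessary design decision being to carry the membership $\oplus_{i=1}^n e_i\in(\pi_1\vee\cdots\vee\pi_n)(E)$ as part of the inductive hypothesis in (i), without which the inductive step cannot be set up.
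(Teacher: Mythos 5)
Your proof is correct and follows essentially the same route as the paper, which disposes of $n=1$ trivially, cites Lemma~\ref{le:DisjointPiXi} for $n=2$, and leaves the rest to a ``straightforward induction''; your write-up simply supplies that induction in detail. Your key design decision---carrying the membership $\oplus_{i=1}^{n}e_i\in(\pi_1\vee\cdots\vee\pi_n)(E)$ through the inductive hypothesis so that Lemma~\ref{le:DisjointPiXi}(i) can be applied to $\sigma=\pi_1\vee\cdots\vee\pi_{n-1}$ and $\pi_n$---is exactly the device the paper itself uses when it writes out the analogous induction explicitly in the proof of Lemma~\ref{le:co}(i), there phrased as $\pi(\oplus_{i=1}^{n-1}e_i)=\oplus_{i=1}^{n-1}e_i$ for $\pi:=\bigvee_{i=1}^{n-1}\pi_i$.
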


\begin{proof}
For $n=1$ the assertions hold trivially, and the results for $n=2$ are
consequences  of Lemma \ref{le:DisjointPiXi}. The results for an arbitrary
$n\in\mathbb{N}$ then follow from a straightforward induction argument.
\end{proof}

\begin{theorem} \label{th:finitepointwisesup/inf}
Let $\pi_1,\pi_2,\ldots ,\pi_n\in\GEX(E)$, $e\in E$. Then{\rm:}
\begin{itemize}
\item[(i)] $(\pi_1\wedge\pi_2\wedge\dots\wedge\pi_n)e=\pi_1 e\wedge\pi_2 e
 \wedge\dots\wedge\pi_n e$.
\item[(ii)] $(\pi_1\vee\pi_2\vee\dots\vee\pi_n)e=\pi_1 e\vee\pi_2 e\vee\dots
 \vee\pi_n e$.
\end{itemize}
\end{theorem}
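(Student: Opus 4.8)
The plan is to prove both identities first in the case $n=2$ and then bootstrap to arbitrary $n$ by induction, using only the elementary poset fact that if $\bigvee\{a_1,\dots,a_{n-1}\}$ exists and the join of that element with $a_n$ exists, then $\bigvee\{a_1,\dots,a_n\}$ exists and equals the latter (and dually for meets). Part (i) is the routine half; part (ii) is where a genuine idea is needed.

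For (i) with $n=2$, recall from Theorem~\ref{th:boolalg} that $\pi\wedge\xi=\pi\circ\xi$, so the assertion reads $\pi(\xi e)=\pi e\wedge\xi e$. First I would verify that $\pi(\xi e)$ is a lower bound of $\pi e$ and $\xi e$: indeed $\pi(\xi e)\leq\xi e$ by (EXC3), and since $\xi e\leq e$ and $\pi$ is order-preserving, $\pi(\xi e)\leq\pi e$. For the greatest-lower-bound property, take any $d\leq\pi e,\xi e$; from $d\leq\pi e$ and Theorem~\ref{th:EXCprop}(iii) we get $d=\pi d$, and applying the order-preserving map $\pi$ to $d\leq\xi e$ gives $d=\pi d\leq\pi(\xi e)$. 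Hence $\pi e\wedge\xi e$ exists and equals $\pi(\xi e)=(\pi\wedge\xi)e$.

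For (ii) with $n=2$, the obstacle is that Lemma~\ref{le:DisjointPiXi}(ii) computes a pointwise join only for \emph{disjoint} exocentral elements, while $\pi$ and $\xi$ need not be disjoint. The device is to disjointify inside the boolean algebra $\GEX(E)$: set $\eta:=\xi\wedge\pi\,'$. A short boolean computation gives $\pi\wedge\eta=\xi\wedge(\pi\wedge\pi\,')=0$ and $\pi\vee\eta=(\pi\vee\xi)\wedge(\pi\vee\pi\,')=\pi\vee\xi$, using $\pi\wedge\pi\,'=0$, $\pi\vee\pi\,'=1$, and distributivity of $\GEX(E)$. Then Lemma~\ref{le:DisjointPiXi}(ii), applied to the disjoint pair $\pi,\eta$, yields $(\pi\vee\xi)e=(\pi\vee\eta)e=\pi e\vee\eta e=\pi e\oplus\eta e$. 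It remains to replace $\eta e$ by $\xi e$ in the join: since $\eta\leq\xi$ we have $\eta e\leq\xi e$, and $\xi e\leq(\pi\vee\xi)e$ because $\xi\leq\pi\vee\xi$ (Theorem~\ref{th:boolalg}); thus $(\pi\vee\xi)e$ is an upper bound of $\pi e$ and $\xi e$, while every upper bound of $\pi e,\xi e$ is a fortiori an upper bound of $\pi e,\eta e$ and therefore dominates their join $(\pi\vee\xi)e$. Hence $\pi e\vee\xi e$ exists and equals $(\pi\vee\xi)e$.

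Finally comes the induction. Writing $\sigma:=\pi_1\vee\dots\vee\pi_{n-1}$ (respectively the corresponding meet), the $n=2$ case gives $(\sigma\vee\pi_n)e=\sigma e\vee\pi_n e$, the induction hypothesis evaluates $\sigma e$ as $\pi_1 e\vee\dots\vee\pi_{n-1}e$, and the poset associativity of joins collapses the two-step join into $\pi_1 e\vee\dots\vee\pi_n e$; the meet case is identical. The one point deserving care throughout is existence: none of the suprema or infima on the right-hand side is assumed a priori, but each is produced as the value of the corresponding exocentral operation applied to $e$, so I would phrase every step as ``the indicated sup/inf exists and equals $\dots$''. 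The main hurdle, as noted, is the $n=2$ case of (ii), where the disjointification $\eta=\xi\wedge\pi\,'$ is exactly what lets Lemma~\ref{le:DisjointPiXi} do the work.
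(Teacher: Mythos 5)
Your proof is correct, and part (i) coincides with the paper's own argument: the same lower-bound check followed by the observation that any $d\leq\pi e,\xi e$ satisfies $d=\pi d$ by Theorem \ref{th:EXCprop} (iii) and hence $d\leq\pi(\xi e)=(\pi\wedge\xi)e$, with the same induction to general $n$. In part (ii) you and the paper use the same underlying device---disjointification inside the boolean algebra $\GEX(E)$ so that the pointwise join formula for disjoint exocentral maps can be applied---but the execution differs. The paper writes $\pi\vee\xi$ as the symmetric three-piece disjoint join $(\pi\wedge\xi)\vee(\pi\wedge\xi\,')\vee(\pi\,'\wedge\xi)$, evaluates $(\pi\vee\xi)e$ as the join of the three components via Theorem \ref{th:FinitePwiseDisjointPi}, and then recognizes $\pi e=(\pi\wedge\xi)e\vee(\pi\wedge\xi\,')e$ and $\xi e=(\pi\wedge\xi)e\vee(\pi\,'\wedge\xi)e$, so the conclusion drops out by associativity of suprema. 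You instead use the leaner, asymmetric two-piece decomposition $\pi\vee\xi=\pi\vee(\xi\wedge\pi\,')$, which requires only the two-element disjoint case (Lemma \ref{le:DisjointPiXi} (ii)) rather than Theorem \ref{th:FinitePwiseDisjointPi}, at the cost of an extra but easy upper-bound argument to trade $\eta e$ for $\xi e$ inside the join; that step is sound, since $\eta\leq\xi$ gives $\eta e\leq\xi e$, while $\pi,\xi\leq\pi\vee\xi$ makes $(\pi\vee\xi)e$ an upper bound of $\{\pi e,\xi e\}$. The trade-off: the paper's route is purely computational once the three-piece decomposition is written down, whereas yours invokes less machinery and makes the existence of the supremum $\pi e\vee\xi e$ fully explicit, a point the paper leaves implicit in its appeal to associativity of joins.
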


\begin{proof}
We will prove the assertions for $n=2$ and the general cases will then follow
by induction.

(i) Obviously, $(\pi\wedge\xi)e\leq\pi e,\xi e$. Suppose now that $f\leq
\pi e,\xi e$. Then $f=\pi f=\xi f$ by Theorem \ref{th:EXCprop} (iii) and
therefore $f=(\pi\circ\xi)f\leq(\pi\circ\xi)\xi e=(\pi\circ\xi\circ\xi)e=
(\pi\circ\xi)e=(\pi\wedge\xi)e$.

(ii) Working in the boolean algebra $\GEX(E)$, we can write $\pi\vee\xi$ as a
pairwise disjoint supremum:
$$\pi\vee\xi=(\pi\wedge\xi)\vee(\pi\wedge\xi\,')\vee(\pi\,'\wedge\xi).$$
Then we use Theorem \ref{th:FinitePwiseDisjointPi} to get
$$(\pi\vee\xi)e=(\pi\wedge\xi)e\vee(\pi\wedge\xi\,')e\vee(\pi\,'\wedge\xi)e$$
where $\pi e=(\pi\wedge\xi)e\vee(\pi\wedge\xi\,')e$ and $\xi e=(\pi\wedge\xi)e
\vee(\pi\,'\wedge\xi)e$. Therefore $(\pi\vee\xi)e=\pi e\vee\xi e$.
\end{proof}

As is easily confirmed, a cartesian product of GPEAs, with the obvious pointwise
operations and relations, is again a GPEA.

\begin{theorem}\label{th:finitecartesianprod}
Let $\pi_1,\pi_2,\ldots ,\pi_n$ be pairwise disjoint elements of $\GEX(E)$ such
that $\pi_1\vee\pi_2\vee\ldots\vee\pi_n=1$ and let $X$ be the cartesian product
of $\pi_i(E)$ for $i=1,2\ldots ,n$. Then for $(e_1,e_2,\ldots ,e_n)\in X$, the
sequence $(e_i)_{i=1}^n$ is orthogonal in $E$ and $\oplus_{i=1}^n e_i=
\bigvee_{i=1}^n e_i$. Moreover, $\Phi: X\rightarrow E$ defined by $\Phi
(e_1,e_2,\ldots ,e_n):=e_1\oplus e_2\oplus\ldots\oplus e_n$, is a
GPEA-isomorphism and for every $e\in E$, $\Phi^{-1}e=(\pi_1 e,\pi_2 e,\ldots,
\pi_n e)\in X$.
\end{theorem}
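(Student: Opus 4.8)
The plan is to exhibit an explicit two-sided inverse for $\Phi$ and then to check that both $\Phi$ and that inverse are GPEA-morphisms. First I would record the ambient facts: each $\pi_i(E)$ is an ideal of $E$ by Theorem \ref{th:EXCprop}(v), and a routine verification (downward closure supplies $0$ and, together with closure under orthosummation and under differences, yields the remaining axioms) shows that $\pi_i(E)$ is itself a GPEA under the restriction of $\oplus$; hence the cartesian product $X$ is a GPEA with the pointwise operations, in which $(e_i)\oplus(f_i)$ exists exactly when every $e_i\oplus f_i$ exists in $E$. The orthogonality of $(e_i)_{i=1}^n$ and the identity $\oplus_{i=1}^n e_i=\bigvee_{i=1}^n e_i$ for $(e_1,\ldots,e_n)\in X$ are immediate from Theorem \ref{th:FinitePwiseDisjointPi}(i), since the $\pi_i$ are pairwise disjoint; in particular $\Phi$ is well defined.

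Next I would introduce $\Psi\colon E\to X$ by $\Psi(e):=(\pi_1 e,\ldots,\pi_n e)$, which lands in $X$ because $\pi_i e\in\pi_i(E)$, and observe that $\Psi$ is a GPEA-morphism: if $e\oplus f$ exists then each $\pi_i e\oplus\pi_i f=\pi_i(e\oplus f)$ exists by (EXC1), so $\Psi(e)\oplus\Psi(f)$ exists in $X$ and equals $\Psi(e\oplus f)$. The two composites are then evaluated from the earlier results. For $\Phi\circ\Psi$ I would use Theorem \ref{th:FinitePwiseDisjointPi}(ii) and $\pi_1\vee\cdots\vee\pi_n=1$ to get $\Phi(\Psi e)=\pi_1 e\oplus\cdots\oplus\pi_n e=(\pi_1\vee\cdots\vee\pi_n)e=e$. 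For $\Psi\circ\Phi$ the key computation is $\pi_j(e_1\oplus\cdots\oplus e_n)=\pi_j e_1\oplus\cdots\oplus\pi_j e_n$, where the off-diagonal terms vanish: for $i\neq j$ we have $\pi_j e_i=(\pi_j\circ\pi_i)e_i=(\pi_j\wedge\pi_i)e_i=0$ by Theorem \ref{th:boolalg} since $\pi_i\wedge\pi_j=0$, while $\pi_j e_j=e_j$; thus $\Psi(\Phi(e_1,\ldots,e_n))=(e_1,\ldots,e_n)$. Consequently $\Phi$ is a bijection with $\Phi^{-1}=\Psi$, so $\Phi^{-1}$ is a morphism and $\Phi^{-1}e=(\pi_1 e,\ldots,\pi_n e)$.

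It remains to prove that $\Phi$ itself is a morphism, which I expect to be the only real obstacle: the fact that $\Phi^{-1}$ is a morphism does not by itself force $\Phi(e)\oplus\Phi(f)$ to exist in $E$ whenever the coordinatewise orthosum exists in $X$, and a naive approach would require rearranging a mixed orthosum $e_1\oplus f_1\oplus\cdots\oplus e_n\oplus f_n$, in which $e_i$ and $f_i$ need not commute. I would avoid that by an argument through $\Psi$. Suppose $(e_i)\oplus(f_i)$ exists in $X$, so each $g_i:=e_i\oplus f_i$ exists and lies in $\pi_i(E)$, and set $a:=\Phi((e_i))=\bigvee_i e_i$, $b:=\Phi((f_i))=\bigvee_i f_i$, $c:=\Phi((g_i))=\bigvee_i g_i$. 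Since $e_i\leq g_i\leq c$ for each $i$, we get $a\leq c$, whence the difference $a/c$ exists with $a\oplus(a/c)=c$. Because $\pi_j$ is a morphism it preserves right differences, and using $\pi_j a=e_j$, $\pi_j c=g_j=e_j\oplus f_j$ from the previous paragraph I obtain $\pi_j(a/c)=(\pi_j a)/(\pi_j c)=e_j/(e_j\oplus f_j)=f_j=\pi_j b$ for every $j$; hence $\Psi(a/c)=\Psi(b)$, and injectivity of $\Psi$ gives $a/c=b$. Therefore $\Phi((e_i))\oplus\Phi((f_i))=a\oplus b=c=\Phi((e_i)\oplus(f_i))$, so $\Phi$ is a morphism and thus a GPEA-isomorphism with the stated inverse, completing the proof.
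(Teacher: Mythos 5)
Your proposal is correct, and its skeleton (define $\Psi(e):=(\pi_1e,\ldots,\pi_ne)$, check $\Psi$ is a morphism via (EXC1), compute $\Phi\circ\Psi=\mathrm{id}_E$ from Theorem \ref{th:FinitePwiseDisjointPi}(ii), and $\Psi\circ\Phi=\mathrm{id}_X$ from $\pi_i e_j=(\pi_i\wedge\pi_j)e_j$) coincides with the paper's. Where you genuinely diverge is at the step you correctly single out as the only real obstacle: showing that $\Phi$ itself preserves orthosums. The paper handles this head-on: starting from the existing orthosum $\oplus_{i=1}^n(e_i\oplus f_i)$, it applies Theorem \ref{th:EXCprop}(ix) with induction and then commutes the mixed terms, using the fact (Lemma \ref{le:DisjointPiXi}(i)) that $f_1,\ldots,f_{n-1}$ all lie in summands disjoint from $\pi_n(E)$ and hence are orthogonal to $e_n$, to rearrange the sum into $(\oplus_{i=1}^n e_i)\oplus(\oplus_{i=1}^n f_i)$. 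Your argument instead sidesteps all rearrangement: from $a\leq c$ you form the difference $a/c$, use the fact that the morphisms $\pi_j$ preserve right differences together with the already-established coordinate identities $\pi_j a=e_j$, $\pi_j c=e_j\oplus f_j$, $\pi_j b=f_j$ to get $\Psi(a/c)=\Psi(b)$, and conclude $a/c=b$ from injectivity of $\Psi$ (which follows from $\Phi\circ\Psi=\mathrm{id}_E$ without needing $\Phi$ to be a morphism, so there is no circularity). The paper's computation buys an explicit rearrangement identity for mixed orthosums across disjoint summands, which is reusable in its own right; your route is leaner, avoids induction and any appeal to non-commutativity management, and exploits the bijection structure you have already set up. Both are sound; yours is a legitimately different and arguably cleaner treatment of the crux.
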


\begin{proof}
The first part has already been proved in Theorem
\ref{th:FinitePwiseDisjointPi}. To prove that $\Phi$ is a GPEA-morphism,
let $(e_1,e_2,\ldots ,e_n), (f_1,f_2,\ldots ,f_n)\in X$ and let $e_i
\oplus f_i$ exist for all $i=1,2,\ldots ,n$. Then $(e_1\oplus f_1,e_2
\oplus f_2,\ldots ,e_n\oplus f_n)\in X$ and so $(e_i\oplus f_i)_{i=1}^n$
is an orthogonal sequence. Using Theorem \ref{th:EXCprop} (ix) and
induction, we get $\oplus_{i=1}^n(e_i\oplus f_i)=\oplus_{i=1}^{n-1}(e_i
\oplus f_i)\oplus e_n\oplus f_n=(\oplus_{i=1}^{n-1} e_i)\oplus(\oplus_{i=1}
^{n-1} f_i)\oplus e_n\oplus f_n$. But, since $f_i$ for $i=1,2\ldots ,n-1$ are
all orthogonal to $e_n$, we have $(\oplus_{i=1}^{n-1} e_i)\oplus(\oplus_{i=1}
^{n-1} f_i)\oplus e_n\oplus f_n=(\oplus_{i=1}^{n-1} e_i)\oplus e_n\oplus
(\oplus_{i=1}^{n-1} f_i)\oplus f_n=(\oplus_{i=1}^n e_i)\oplus (\oplus_{i=1}
^n f_i)$, whence $\Phi: X\rightarrow E$ is a GPEA-morphism. Define $\Psi:
E\to X$ by $\Psi(e):=(\pi_1 e,\pi_2 e,\ldots ,\pi_n e)$ for all $e
\in E$. Then $\Psi$ is also a GPEA-morphism and by Theorem
\ref{th:FinitePwiseDisjointPi} (ii), $\Phi\circ\Psi$ is the identity on $E$.
Now consider $\pi_i e_j$ for $i,j=1,2,\ldots ,n$. We have $\pi_i e_j=\pi_i
(\pi_j e_j)=(\pi_i\wedge\pi_j)e_j$. Thus $\pi_i e_j=0$ for $i\not=j$ and
$\pi_i e_j=e_j$ for $i=j$ and so $\Psi\circ\Phi$ is the identity on $X$.
Consequently $\Psi=\Phi^{-1}$ and $\Phi$ is a GPEA-isomorphism.
\end{proof}

According to the previous theorem, we may consider $E$ as a direct sum
$E=\pi_1(E)\oplus\pi_2(E)\oplus\dots\oplus\pi_n(E)$ whenever $\pi_i$ are
pairwise disjoint elements of $\GEX(E)$ and $\bigvee_{i=1}^n\pi_i=1$. In
particular, $E=\pi(E)\oplus\pi\,'(E)$ for every $\pi$ in the boolean algebra
$\GEX(E)$.

\begin{theorem} \label{th:CentId=piE}
If $S\subseteq E$, then the following statements are equivalent:
\begin{itemize}
\item[(i)] $S$ is a central ideal {\rm(}direct summand{\rm)} of $E$.
\item[(ii)] There exists $\pi\in\GEX(E)$ such that $S=\pi(E)$.
\end{itemize}
\end{theorem}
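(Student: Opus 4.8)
The plan is to prove the two implications separately, exploiting the fact that essentially all of the work for the direction (ii) $\Rightarrow$ (i) has already been assembled in Theorem \ref{th:EXCprop}. Given $\pi \in \GEX(E)$ and $S := \pi(E)$, part (v) of that theorem tells us $S$ is an ideal, part (viii) gives the unique decomposition $e = \pi e \oplus \pi\,' e$ of each $e \in E$ into an $S$-coordinate $\pi e \in \pi(E)$ and an $S'$-coordinate $\pi\,' e \in \pi\,'(E)$, and part (vii) supplies the orthogonality $e \perp f$ for $e \in \pi(E)$, $f \in \pi\,'(E)$. Since $\pi\,' \in \GEX(E)$ by Theorem \ref{th:EXCprop}(ii), the set $S' := \pi\,'(E)$ is also an ideal, so taking $S'$ as the complement verifies conditions (1) and (2) of Definition \ref{df:CentralIdeal} verbatim; hence $S = \pi(E)$ is a central ideal (direct summand) with complement $\pi\,'(E)$.

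For the converse, (i) $\Rightarrow$ (ii), I would construct $\pi$ directly from the direct-sum structure. Let $S$ be a central ideal with complementary summand $S'$. By condition (2) of Definition \ref{df:CentralIdeal}, every $e \in E$ has a unique expression $e = e_1 \oplus e_2$ with $e_1 \in S$ and $e_2 \in S'$; define $\pi e := e_1$, the $S$-coordinate. Uniqueness makes $\pi$ well defined. Axiom (EXC3) is immediate since $e_1 \leq e_1 \oplus e_2 = e$. For (EXC2), note that $s \in S$ has decomposition $s = s \oplus 0$ with $s \in S$ and $0 \in S'$, so $\pi s = s$; applying this to $s = \pi e = e_1$ gives $\pi(\pi e) = \pi e$. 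The same computation shows $\pi(E) \subseteq S$ and that every $s \in S$ lies in $\pi(E)$, so $\pi(E) = S$. For (EXC4), if $\pi e = e$ then the $S'$-coordinate of $e$ vanishes and $e \in S$, while $\pi f = 0$ forces $f \in S'$; condition (1) of Definition \ref{df:CentralIdeal} then yields $e \perp f$.

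The real content is verifying the endomorphism axiom (EXC1), and this is where the noncommutativity of $E$ must be handled with care. Suppose $e \oplus f$ exists, and write $e = e_1 \oplus e_2$, $f = f_1 \oplus f_2$ with $e_1, f_1 \in S$ and $e_2, f_2 \in S'$. Because $e_2 \in S'$ and $f_1 \in S$, condition (1) gives $e_2 \perp f_1$, i.e. $e_2 \oplus f_1 = f_1 \oplus e_2$. The key step is to rearrange the existing four-fold orthosum
\[
e \oplus f = e_1 \oplus e_2 \oplus f_1 \oplus f_2 = e_1 \oplus f_1 \oplus e_2 \oplus f_2,
\]
where the interchange of $e_2$ and $f_1$ is justified by the orthogonality just noted, and repeated use of associativity (GPEA1) guarantees that the regrouped sub-orthosums $e_1 \oplus f_1$ and $e_2 \oplus f_2$ actually exist. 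Then $e_1 \oplus f_1 \in S$ and $e_2 \oplus f_2 \in S'$ by the ideal closure property (I2) of $S$ and $S'$ respectively, so $(e_1 \oplus f_1) \oplus (e_2 \oplus f_2)$ is the unique $S$/$S'$-decomposition of $e \oplus f$. Consequently $\pi(e \oplus f) = e_1 \oplus f_1 = \pi e \oplus \pi f$, establishing (EXC1). The main obstacle, then, is purely the bookkeeping of existence and reassociation in (GPEA1) needed to legitimize the displayed rearrangement; once that is in place, $\pi \in \GEX(E)$ and $\pi(E) = S$, completing the proof.
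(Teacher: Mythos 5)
Your proposal is correct and takes essentially the same route as the paper: for (i) $\Rightarrow$ (ii) the paper likewise defines $\pi$ as the $S$-coordinate projection and checks (EXC1)--(EXC4), and for (ii) $\Rightarrow$ (i) it simply invokes the decomposition $E=\pi(E)\oplus\pi\,'(E)$ furnished by Theorem \ref{th:EXCprop}. The only difference is one of detail: the paper dismisses (EXC1) and (EXC2) as ``trivial'' (leaning on the remark after Definition \ref{df:CentralIdeal} that calculations in a direct sum may be done coordinatewise), whereas you make the key step explicit --- swapping $e_2$ and $f_1$ via condition (1) and regrouping with (GPEA1) --- and that bookkeeping does go through exactly as you indicate, mirroring the argument used for part (ix) of Theorem \ref{th:EXCprop}.
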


\begin{proof}
Assume that $E=S\oplus S'$. We define for $e\in E$: $\pi e:=s$ where $e=
s\oplus t$, $s\in S$, $t\in S'$. Then $\pi\in\GEX(E)$. Indeed, (EXC1) and
(EXC2) hold trivially and since $s\leq s\oplus t$, (EXC3) also holds. If
$e,f\in E$ are such that $\pi e=e$ and $\pi f=0$, then $e\in S$ and $f\in S'$,
thus $e\perp f$ and so (EXC4) holds too. If, on the other hand, $\pi\in
\GEX(E)$ and $S=\pi(E)$, then $E=S\oplus\pi\,'(E)$ and so $S$ is a central
ideal.
\end{proof}

\begin{corollary} \label{co:piEnormal}
If $\pi\in\GEX(E)$, then $\pi(E)$ is a normal ideal in $E$.
\end{corollary}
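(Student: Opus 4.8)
The plan is to obtain this as an immediate consequence of two results already established, rather than verifying condition (N) from scratch. The key observation is that $\pi(E)$ has already been identified, in Theorem \ref{th:CentId=piE}, as a central ideal: applying the implication (ii)$\Rightarrow$(i) of that theorem to the given $\pi\in\GEX(E)$ shows at once that $S:=\pi(E)$ is a central ideal (direct summand) of $E$, with complementary summand $\pi\,'(E)$ furnished by the decomposition $E=\pi(E)\oplus\pi\,'(E)$ noted after Theorem \ref{th:finitecartesianprod} (equivalently, by Theorem \ref{th:EXCprop}(viii)).

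Once $\pi(E)$ is recognized as a central ideal, normality is handed to us by Proposition \ref{pr:DirSumNormal}, which asserts precisely that any central ideal (direct summand) of a GPEA is normal. First I would invoke Theorem \ref{th:CentId=piE} to pass from the endomorphism $\pi$ to the structural statement ``$\pi(E)$ is a direct summand,'' and then I would cite Proposition \ref{pr:DirSumNormal} to conclude that this direct summand satisfies (N). This chains the two theorems together with no intervening computation.

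I expect there to be no genuine obstacle here, since the substantive work was already absorbed into Theorem \ref{th:CentId=piE} and Proposition \ref{pr:DirSumNormal}. Should one prefer a self-contained argument, the alternative route is to verify (N) directly: given $a,x,y\in E$ with $a\oplus x=y\oplus a$ and $x\in\pi(E)$, apply $\pi$ to the equation and use idempotence together with the orthogonality condition (EXC4) to push $y$ into $\pi(E)$, and symmetrically. That direct verification is the only place where any real manipulation would arise, and it is precisely the computation that Proposition \ref{pr:DirSumNormal} already performs in the guise of the splitting $a=a_1\oplus a_2$; invoking the proposition therefore lets us avoid repeating it.
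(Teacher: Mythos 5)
Your proposal is correct and follows exactly the paper's own proof: invoke Theorem \ref{th:CentId=piE} to identify $\pi(E)$ as a central ideal (direct summand), then apply Proposition \ref{pr:DirSumNormal} to conclude it is normal. The extra remarks about the complementary summand $\pi\,'(E)$ and the possible direct verification of (N) are fine but not needed.
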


\begin{proof}
By Theorem \ref{th:CentId=piE}, $\pi(E)$ is a central ideal in $E$,
and by Proposition \ref{pr:DirSumNormal}, every central ideal in $E$
is normal.
\end{proof}

\begin{corollary}\label{co:CIposet}
Let us partially order the set $C$ of all central ideals {\rm(}direct
summands{\rm)} of $E$ by inclusion. Then there is an order isomorphism
between $\GEX(E)$ and $C$ given by: $\pi\leftrightarrow S$ iff $\pi(E)=S$.
Moreover, if $\pi(E)=S$, then $\pi\,'(E)$ is the direct summand $S'$ of $E$
that is complementary to $S$.
\end{corollary}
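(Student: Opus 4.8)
The plan is to read off both the bijection and the order-preservation directly from the two structural results already established, namely Theorem \ref{th:CentId=piE} and the order characterization in Theorem \ref{th:boolalg}; there is essentially no fresh computation to do. First I would introduce the candidate correspondence $\Theta\colon\GEX(E)\to C$ by $\Theta(\pi):=\pi(E)$. That $\Theta$ is well defined, i.e.\ that each $\pi(E)$ really is a central ideal, is exactly the implication (ii)$\Rightarrow$(i) of Theorem \ref{th:CentId=piE}, while the reverse implication (i)$\Rightarrow$(ii) says precisely that every central ideal $S$ has the form $\pi(E)$ for some $\pi\in\GEX(E)$, so $\Theta$ is surjective.

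For the order-isomorphism claim I would invoke the chain of equivalences in Theorem \ref{th:boolalg}, whose last link reads $\xi\leq\pi\Leftrightarrow\xi(E)\subseteq\pi(E)$. This single biconditional does all the remaining work: it shows at once that $\Theta$ is order-preserving and order-reflecting, and it also yields injectivity, since $\pi(E)=\xi(E)$ forces both $\pi\leq\xi$ and $\xi\leq\pi$, hence $\pi=\xi$ by antisymmetry of the partial order on $\GEX(E)$. Thus $\Theta$ is a bijection satisfying $\pi\leq\xi$ in $\GEX(E)$ iff $\Theta(\pi)\subseteq\Theta(\xi)$ in $C$, which is exactly the asserted order isomorphism $\pi\leftrightarrow S$ determined by $\pi(E)=S$.

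For the final (``moreover'') assertion I would use the direct-sum identity recorded just after Theorem \ref{th:finitecartesianprod}, namely $E=\pi(E)\oplus\pi\,'(E)$, valid for every $\pi\in\GEX(E)$. This exhibits $\pi\,'(E)$ as \emph{a} complementary direct summand of $S=\pi(E)$. Since, by the remark following Definition \ref{df:CentralIdeal}, the complementary direct summand of a given central ideal is uniquely determined by it, $\pi\,'(E)$ must coincide with the summand $S'$ complementary to $S$, as claimed.

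The point worth flagging is that there is no genuine obstacle here: the corollary is really a repackaging of earlier results, and the only place where one must be slightly careful is to notice that the ``iff'' in Theorem \ref{th:boolalg} delivers \emph{both} directions of the order correspondence (and hence injectivity) simultaneously, rather than merely order-preservation. Everything else is formal.
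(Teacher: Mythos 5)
Your proof is correct and matches the paper's intent: the corollary is stated there without proof precisely because it is an immediate repackaging of Theorem~\ref{th:CentId=piE} (well-definedness and surjectivity of $\pi\mapsto\pi(E)$), the equivalence $\xi\leq\pi\Leftrightarrow\xi(E)\subseteq\pi(E)$ from Theorem~\ref{th:boolalg} (order isomorphism and injectivity), and the identity $E=\pi(E)\oplus\pi\,'(E)$ together with the uniqueness of complementary summands noted after Definition~\ref{df:CentralIdeal}. Nothing is missing.
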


\begin{theorem}\label{th:PtwisePi}
Let $\pi\in\GEX(E)$ and let $(e_i)_{i\in I}$ be a family of elements in $E$.
Then{\rm:}
\begin{itemize}
\item[(i)] If $\bigvee_{i\in I} e_i$ exists in $E$, then so does $\bigvee_
 {i\in I} \pi e_i$ and $\pi(\bigvee_{i\in I} e_i)=\bigvee_{i\in I} \pi e_i$.
\item[(ii)] If $I\not=\emptyset$ and $\bigwedge_{i\in I} e_i$ exists in $E$,
 then so does $\bigwedge_{i\in I} \pi e_i$ and $\pi(\bigwedge_{i\in I} e_i)=
 \bigwedge_{i\in I} \pi e_i$.
\item[(iii)] If $(e_i)_{i\in I}$ is orthosummable, then so is $(\pi e_i)
 _{i\in I}$ and $\pi(\oplus_{i\in I} e_i)=\oplus_{i\in I}\pi e_i$.
\end{itemize}
\end{theorem}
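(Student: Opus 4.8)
The plan is to prove (i) first and then obtain (ii) by a dual (in fact easier) argument and (iii) by reducing it to (i). Throughout I will exploit the direct decomposition $E=\pi(E)\oplus\pi\,'(E)$ and the fact that $\pi$, being a decreasing GPEA-endomorphism, is order-preserving.

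For (i), write $e:=\bigvee_{i\in I}e_i$. Since $\pi$ preserves order, $\pi e_i\leq\pi e$ for each $i$, so $\pi e$ is an upper bound of $(\pi e_i)_{i\in I}$; the whole content is that it is the \emph{least} one. Let $b$ be any upper bound. First I would reduce to the case $b\in\pi(E)$: since each $\pi e_i\in\pi(E)$ satisfies $\pi e_i=\pi(\pi e_i)\leq\pi b$ and $\pi b\leq b$, it suffices to bound $\pi e$ by $\pi b$. So assume $b\in\pi(E)$. Using Theorem \ref{th:EXCprop}(vii) I form $b\oplus\pi\,'e=b\vee\pi\,'e$ (the two summands lie in complementary direct summands, hence are orthogonal). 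Writing $e_i=\pi e_i\oplus\pi\,'e_i$ and using $\pi e_i\leq b$ and $\pi\,'e_i\leq\pi\,'e$ together with the commutativity of factors drawn from the two complementary summands, I would verify $e_i\leq b\oplus\pi\,'e$ for every $i$, whence $e\leq b\oplus\pi\,'e$. Finally, since $e=\pi\,'e\oplus\pi e$ and $b\oplus\pi\,'e=\pi\,'e\oplus b$, the extended cancellation law for $\leq$ yields $\pi e\leq b$. This establishes $\pi e=\bigvee_{i\in I}\pi e_i$.

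Part (ii) is the dual and is shorter. With $e:=\bigwedge_{i\in I}e_i$, monotonicity of $\pi$ makes $\pi e$ a lower bound of $(\pi e_i)_{i\in I}$. If $d\leq\pi e_i$ for all $i$, then $d\leq\pi e_i\leq e_i$ forces $d\leq e$, while $d\leq\pi e_i=\pi(e_i)$ forces $d=\pi d\in\pi(E)$ by Theorem \ref{th:EXCprop}(iii). Applying Theorem \ref{th:EXCprop}(iv) to $d\leq e$ then gives $d=\pi d=d\wedge\pi e$, i.e.\ $d\leq\pi e$; hence $\pi e=\bigwedge_{i\in I}\pi e_i$. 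Here nonemptiness of $I$ is exactly what guarantees that $e$ is a genuine meet.

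For (iii), set $s:=\oplus_{i\in I}e_i=\bigvee_{F}\bigl(\oplus_{i\in F}e_i\bigr)$, the supremum over finite $F\subseteq I$. Because $\pi$ preserves finite orthosums and the relation $\perp$ (if $a\perp b$ then $\pi a\oplus\pi b=\pi(a\oplus b)=\pi(b\oplus a)=\pi b\oplus\pi a$), each finite subfamily $(\pi e_i)_{i\in F}$ is orthogonal and $\oplus_{i\in F}\pi e_i=\pi(\oplus_{i\in F}e_i)$. Applying part (i) to the family $\bigl(\oplus_{i\in F}e_i\bigr)_{F}$ shows that $\bigvee_{F}\pi(\oplus_{i\in F}e_i)=\pi s$ exists, which is precisely the statement that $(\pi e_i)_{i\in I}$ is orthosummable with $\oplus_{i\in I}\pi e_i=\pi s$. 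The main obstacle is the least-upper-bound step in (i): one must transport the inequalities $\pi e_i\leq b$ and $\pi\,'e_i\leq\pi\,'e$ through the orthosum $e_i=\pi e_i\oplus\pi\,'e_i$ and then cancel $\pi\,'e$, which demands care about the order of orthosummands and systematic use of the fact that elements of $\pi(E)$ commute with those of $\pi\,'(E)$.
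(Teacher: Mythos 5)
Your proposal is correct and takes essentially the same route as the paper's own proof: in (i) both arguments reduce the least-upper-bound step to the decomposition $e_i=\pi e_i\oplus\pi\,'e_i=\pi e_i\vee\pi\,'e_i\leq b\vee\pi\,'e=b\oplus\pi\,'e$ via parts (vii)--(viii) of Theorem \ref{th:EXCprop}, the only (harmless) difference being that you remove the unwanted term $\pi\,'e$ by the extended cancellation law for $\leq$, whereas the paper applies $\pi$ to the inequality $e\leq\pi f\oplus\pi\,'e$ and uses $\pi(\pi\,'e)=0$. Your parts (ii) and (iii) coincide with the paper's proofs, with (iii) even slightly more careful in noting explicitly that $\pi$ preserves $\perp$ and hence the orthogonality of finite subfamilies.
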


\begin{proof}
(i) Put $e:=\bigvee_{i\in I} e_i$. As $e_i\leq e$, we also have $\pi e_i
\leq\pi e$ for all $i\in I$. Now suppose that $\pi e_i\leq f$ for all $i\in I$.
Then $\forall i\in I$: $\pi e_i=\pi (\pi e_i)\leq\pi f$. But we also have
$\pi\,' e_i\leq\pi\,' e$ for all $i\in I$. So by (vii) and (viii) in Theorem
\ref{th:EXCprop}, $e_i=\pi e_i\oplus\pi\,' e_i=\pi e_i\vee\pi\,' e_i\leq
\pi f\vee\pi\,' e=\pi f\oplus\pi\,' e$ for all $i\in I$. Thus $e\leq\pi f
\oplus\pi\,' e$ so $\pi e\leq\pi f\oplus\pi (\pi\,' e)=\pi f\leq f$. Hence
$\pi e=\bigvee_{i\in I} \pi e_i$.

(ii) Put $e:=\bigwedge_{i\in I} e_i$. As $e\leq e_i$, we have $\pi e
\leq\pi e_i$ for all $i\in I$. Suppose $f\in E$ with $f\leq\pi e_i$ for
all $i\in I$. As $I\not=\emptyset$, Theorem \ref{th:EXCprop} (iii) implies
that $f=\pi f$. Because $\pi e_i\leq e_i$, we have $f\leq e_i$ for all
$i\in I$. Therefore $f\leq e$ and $\pi f=f\leq \pi e$.

(iii) For any finite subset $F$ of $I$, as $\pi$ is a GPEA-endomorphism,
$\pi(\oplus_{i\in F}e_i)=\oplus_{i\in F}\pi e_i$. As $\oplus_{i\in I}
\pi e_i=\bigvee_{F}\oplus_{i\in F}\pi e_i=\bigvee_{F}\pi(\oplus_{i\in F}
e_i)=\pi\bigvee_{F} (\oplus_{i\in F} e_i)=\pi(\bigvee_{i\in I} e_i)$, the
desired result follows from (i).
\end{proof}

\section{The center of a GPEA} \label{sc:CenterGPEA} 

\begin{definition}\label{df:Center}
An element $c\in E$ is \emph{central} iff for every $a, b\in E$, the
following hold:
\begin{itemize}
\item[(C1)] There exist $a_1, a_2\in E$ such that $a_1\leq c$, $a_2\oplus c$
 exists and $a=a_1\oplus a_2$.
\item[(C2)] If $a\leq c$ and if $b\oplus c$ exists, then $a\perp b$.
\item[(C3)] If $a,b\leq c$ and $a\oplus b$ exists, then $a\oplus b\leq c$.
\item[(C4)] If $a\oplus c$, $b\oplus c$ and $a\oplus b$ exist, then $a
 \oplus b\oplus c$ exists.
\end{itemize}
We denote the set of all central elements of the GPEA $E$ by
$\Gamma(E)$.
\end{definition}

\begin{lemma}\label{le:CentProp}
Let $a,x,y\in E$ and let $c\in\Gamma(E)$. Then{\rm:}
\begin{enumerate}
\item The elements $a_1$ and $a_2$ in {\rm(}C1{\rm)} of Definition
 \ref{df:Center} are unique and $a_1\perp a_2$.
\item $\forall\, a\in E$, $a\oplus c$ exists iff $a\perp c$ iff
 $c\oplus a$ exists.
\item If $x\oplus y$ exists in $E$ and at least one of the elements
 $x$, $y$ is central, then $x\perp y$.
\end{enumerate}
\end{lemma}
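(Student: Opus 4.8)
The plan is to show that a central element $c$ splits $E$ as a direct sum of the two ideals
$$S:=E[0,c]=\{x\in E: x\le c\}\quad\text{and}\quad S':=\{y\in E: y\oplus c\ \text{exists}\},$$
and to read off (i)--(iii) from this splitting together with the fact that, by Proposition \ref{pr:interval}, $S=E[0,c]$ is a PEA whose top element is $c$. First I would record that $S$ is an ideal: it is a down-set by transitivity of $\le$, and it is closed under existing orthosums by (C3). Likewise $S'$ is an ideal: it is a down-set by Lemma \ref{le:SlashProps}(iii) (if $y\oplus c$ exists and $z\le y$, then $z\oplus c$ exists), and it is closed under existing orthosums by (C4). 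A key preliminary is that $S\cap S'=\{0\}$: if $z\le c$ and $z\oplus c$ exists, then $z,c\le c$ and (C3) give $z\oplus c\le c$, so $z\oplus_{c}c$ is defined in the PEA $S$, whence $z=0$ by (PEA4). Finally, the orthogonality in (i) is immediate: given $a=a_1\oplus a_2$ as in (C1) we have $a_1\le c$ and $a_2\oplus c$ exists, so (C2) (with roles $a:=a_1$, $b:=a_2$) yields $a_1\perp a_2$; the same instance of (C2) shows that every $s\in S$ is orthogonal to every $t\in S'$.

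Next I would prove (ii), which is the technical core and from which (iii) is immediate. The easy half is direct from (C2): if $a\oplus c$ exists, take the element $\le c$ to be $c$ itself, so (C2) gives $c\perp a$. For the hard half suppose $c\oplus a$ exists; I claim the $S$-part of $a$ vanishes. Decompose $a=a_1\oplus a_2$ by (C1) with $a_1\le c$ and $a_2\oplus c$ existing. Since $c\oplus(a_1\oplus a_2)$ exists, associativity (GPEA1) gives that $c\oplus a_1$ exists; as $a_1\le c$ and $c\le c$, (C3) forces $c\oplus a_1\le c$, so $c\oplus_{c}a_1$ is defined in the PEA $S$ with top $c$, and (PEA4) yields $a_1=0$. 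Hence $a=a_2\in S'$, so $a\oplus c$ exists and $a\perp c$. This shows the three conditions in (ii) are equivalent. Part (iii) then follows by applying (ii) to whichever of $x,y$ is central (the argument just given uses only that $c$ satisfies (C1)--(C4), so it applies to any member of $\Gamma(E)$): if $y$ is central and $x\oplus y$ exists, the easy half gives $x\perp y$; if $x$ is central and $x\oplus y$ exists, the hard half (central element on the left) gives $x\perp y$.

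For the uniqueness in (i) I would argue that the two coordinates are determined by $a$. Suppose $a=a_1\oplus a_2=a_1'\oplus a_2'$ with $a_1,a_1'\in S$ and $a_2,a_2'\in S'$. From the cross-orthogonalities $a_1\perp a_2'$ and $a_1'\perp a_2$, together with the (extended) cancellation laws and Lemma \ref{le:SlashProps}(ii), the statements $a_2\le a_2'$, $a_1'\le a_1$ and $a_1'\oplus a_2\le a$ are all equivalent (and symmetrically with primes interchanged), so everything reduces to a single comparison, say $a_1\le a_1'$. To settle it one uses that $S\cap S'=\{0\}$, so the $S$-part and $S'$-part of any element are disjoint, and then the direct-summand interpolation property — an element of $S$ lying below $s\oplus t$ with $s\in S$, $t\in S'$ must lie below $s$ — pins the coordinates down; this is the GPEA analogue of the coordinate-uniqueness argument in \cite[Lemma 4.3]{CenGEA}.

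The step I expect to be the main obstacle is exactly this last interpolation/uniqueness step. Every naive attempt to prove $a_1\le a_1'$ by left- or right-cancellation is circular, since by the equivalences above it is logically equivalent to the desired conclusion; so the argument must genuinely exploit that $c$ is central — concretely, that $S=E[0,c]$ carries a PEA structure with top $c$ and that (PEA4) kills any element that is simultaneously below $c$ and orthogonal to $c$. By contrast, the orthogonality in (i) and both implications of (ii) are short once the interval PEA $S$ and the axiom (PEA4) are brought to bear.
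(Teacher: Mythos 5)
Your proofs of part (ii), part (iii), and the orthogonality assertion in part (i) are correct, and your hard half of (ii) takes a genuinely different route from the paper's: the paper applies (C1) to the element $c\oplus a$ itself, writing $c\oplus a=d_1\oplus d_2$ with $d_1\leq c$ and $d_2\oplus c$ existing, gets $d_1\perp d_2$ and $c\perp d_2$ from (C2), computes $c\oplus a=d_2\oplus d_1\leq d_2\oplus c=c\oplus d_2$, and cancels to obtain $a\leq d_2$, so that $a\oplus c$ exists by Lemma \ref{le:SlashProps}(iii). You instead decompose $a$ by (C1), extract the existence of $c\oplus a_1$ from (GPEA1), and kill $a_1$ inside the interval PEA $E[0,c]$ of Proposition \ref{pr:interval} via (C3) and (PEA4). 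Both are sound; yours has the merit of isolating the principle $S\cap S'=\{0\}$, which the paper only makes explicit later, inside the proof of Theorem \ref{th:centr}.

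The genuine gap is the uniqueness claim in part (i), exactly where you flagged it. The ``direct-summand interpolation property'' you invoke is never proved, and it is not a formal consequence of the facts you have on the table (that $S$ and $S'$ are ideals, cross-orthogonality, $S\cap S'=\{0\}$, and existence of decompositions). Indeed, consider the eight-element generalized effect algebra $\{0,s,s',t,t',u,v,w\}$, with $\oplus$ commutative, whose only nontrivial orthosums are $s\oplus t=s'\oplus t'=u$, $s\oplus t'=v$, and $s'\oplus t=w$: then $S=\{0,s,s'\}$ and $S'=\{0,t,t'\}$ satisfy all four of those facts, yet $u=s\oplus t=s'\oplus t'$ has two distinct decompositions, and interpolation fails since $s\leq s'\oplus t'$ but $s\not\leq s'$. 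So any correct argument must exploit precisely what this example lacks, namely that $S$ has the top element $c$. Appealing instead to the direct-summand structure of $S$ is circular, because Definition \ref{df:CentralIdeal} builds coordinate uniqueness into the very notion of direct summand, and the paper derives Theorem \ref{th:centr} from Lemma \ref{le:CentProp}, not conversely; likewise \cite[Lemma 4.3]{CenGEA} concerns uniqueness of the complementary summand, not of the coordinates.

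Contrary to your expectation, the gap is closed by a cancellation argument --- just not a ``naive'' one: the trick is to complete $a_1$ to $c$. With $a=a_1\oplus a_2=b_1\oplus b_2$ as in (C1), the orthogonality already proved gives $a=a_2\oplus a_1=b_2\oplus b_1$. Put $d:=a_1/c$, so $a_1\oplus d=c$. Then $a_2\oplus c=a_2\oplus(a_1\oplus d)=(a_2\oplus a_1)\oplus d=(b_2\oplus b_1)\oplus d=b_2\oplus(b_1\oplus d)$; since $b_1,d\leq c$ and $b_1\oplus d$ exists, (C3) yields $b_1\oplus d\leq c$, hence $a_2\oplus c\leq b_2\oplus c$ by Lemma \ref{le:SlashProps}(iii), and the extended right cancellation law gives $a_2\leq b_2$. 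By symmetry $a_2=b_2$, and then $a_1=b_1$ by (GPEA3). The auxiliary element $d$ is the missing idea: it converts the comparison of the incomparable pair $a_1,b_1$ into a comparison against the fixed element $c$, where (C3) --- the sum-closedness of $E[0,c]$, i.e.\ centrality --- does the real work.
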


\begin{proof}
(i) Suppose that $a\sb{1},a\sb{2},b\sb{1},b\sb{2}\in E$ with
$a=a\sb{1}\oplus a\sb{2}=b\sb{1}\oplus b\sb{2}$, where $a_1,b_1
\leq c$ and both $a_2\oplus c$ and $b_2\oplus c$ exist. Then by
(C2), we have $a_1\perp a_2$ and $b_1\perp b_2$, whence $a=a_1
\oplus a_2=a_2\oplus a_1=b_1\oplus b_2=b_2\oplus b_1$. As $a\sb{1}\leq
c$, there exists $d\in E$ such that $a\sb{1}\oplus d=c$ and we have
$a\sb{2}\oplus c=a\sb{2}\oplus a\sb{1}\oplus d=b\sb{2}\oplus b
\sb{1}\oplus d$. Since $b\sb{1},d\leq c$, (C3) implies that $b
\sb{1}\oplus d\leq c$, whence $a\sb{2}\oplus c\leq b\sb{2}\oplus c$,
and it follows by cancellation that $a\sb{2}\leq b\sb{2}$. By symmetry,
$b\sb{2}\leq a\sb{2}$, so $a\sb{2}=b\sb{2}$, and therefore $a\sb{1}
=b\sb{1}$ by cancellation.

(ii) If $a\oplus c$ exists, then as $c\leq c$, we have $a\perp c$ by
(C2). As $a\perp c$, then $c\oplus a$ exists. Finally, suppose that
$c\oplus a$ exists. Then by (C1), there exist $d\sb{1},d\sb{2}\in E$
with $c\oplus a=d\sb{1}\oplus d\sb{2}$, where $d\sb{1}\leq c$ and
$d\sb{2}\oplus c$ exists. As $c\leq c$ and $d\sb{2}\oplus c$ exists,
(C2) implies that $c\perp d\sb{2}$. Also, by part (i), $d\sb{1}\perp d
\sb{2}$, and since $d\sb{1}\leq c$, we have $c\oplus a=d\sb{1}\oplus d
\sb{2}=d\sb{2}\oplus d\sb{1}\leq d\sb{2}\oplus c=c\oplus d\sb{2}$, whence
$a\leq d\sb{2}$ by cancellation. Thus, $a\leq d\sb{2}$ and $d\sb{2}\perp c$,
so $a\oplus c$ exists by Lemma \ref{le:SlashProps} (iii). Part (iii) follows
immediately from (ii).
\end{proof}

\begin{theorem}\label{th:centr}
If $c\in E$, then the following are equivalent{\rm}:
\begin{itemize}
\item[(i)] $c$ is central, i.e., $c\in\Gamma(E)$.
\item[(ii)] $E[0,c]$ is a central ideal {\rm(}direct summand{\rm)}
 of $E$.
\item[(iii)] $E$ decomposes as a direct sum $E=E[0,c]\oplus\{f\in E:
 f\perp c\}$.
\end{itemize}
\end{theorem}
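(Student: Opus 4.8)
The plan is to prove the three-way equivalence by establishing (i) $\Rightarrow$ (ii) $\Rightarrow$ (iii) $\Rightarrow$ (i), using the characterization of central ideals as images of exocentral elements (Theorem \ref{th:CentId=piE}) as the main bridge. The cleanest route is to show that when $c$ is central, $E[0,c]$ is a direct summand, whence by Corollary \ref{co:CIposet} its complementary summand is $\pi\,'(E)$ for the corresponding $\pi\in\GEX(E)$; then I must identify that complementary summand with $\{f\in E:f\perp c\}$, using Lemma \ref{le:CentProp}(ii) which says $f\perp c$ iff $f\oplus c$ exists.

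First, for (i) $\Rightarrow$ (ii), I would verify that $S:=E[0,c]$ is a central ideal by constructing its complement directly and checking Definition \ref{df:CentralIdeal}. The natural candidate for $S'$ is $\{f\in E: f\perp c\}=\{f\in E: f\oplus c \text{ exists}\}$ (the two descriptions agree by Lemma \ref{le:CentProp}(ii)). Axiom (C1) gives, for each $a\in E$, a decomposition $a=a_1\oplus a_2$ with $a_1\leq c$ (so $a_1\in S$) and $a_2\oplus c$ existing (so $a_2\in S'$); Lemma \ref{le:CentProp}(i) supplies uniqueness of this decomposition and $a_1\perp a_2$. I still need that $S$ and $S'$ are genuinely ideals: $S=E[0,c]$ is downward closed by definition and closed under existing orthosums by (C3), so it is an ideal. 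For $S'$, downward closure follows from Lemma \ref{le:SlashProps}(iii) (if $f\oplus c$ exists and $g\leq f$ then $g\oplus c$ exists), and closure under orthosums is exactly (C4) together with Lemma \ref{le:CentProp}(ii). Finally the orthogonality requirement $a\in S, b\in S'\Rightarrow a\perp b$ of Definition \ref{df:CentralIdeal}(1) is precisely (C2). This gives $E=S\oplus S'$, establishing (ii), and in fact already exhibits the complement as $\{f\in E: f\perp c\}$.

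For (ii) $\Rightarrow$ (iii), I invoke Theorem \ref{th:CentId=piE} and Corollary \ref{co:CIposet}: if $E[0,c]=\pi(E)$ is a direct summand, its complement is uniquely $\pi\,'(E)$. By Theorem \ref{th:EXCprop}(x), $\pi\,'(E)=\{f\in E: f\wedge e=0 \text{ for all } e\in\pi(E)\}$, and I must argue this equals $\{f\in E: f\perp c\}$. The containment $\{f: f\perp c\}\subseteq\pi\,'(E)$ is immediate from the computation in the previous paragraph (elements orthogonal to $c$ are exactly the second coordinates); for the reverse, any $f\in\pi\,'(E)$ satisfies $f\wedge c=0$ since $c\in\pi(E)$, and combining with the decomposition $f=\pi f\oplus\pi\,'f$ forces $\pi f=0$, i.e. $f=\pi\,'f$ has a coordinate decomposition whose $S$-part is $0$, which by the explicit complement above means $f\oplus c$ exists, hence $f\perp c$. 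This yields (iii).

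For (iii) $\Rightarrow$ (i), I would assume $E=E[0,c]\oplus\{f:f\perp c\}$ and verify the four central-element axioms (C1)--(C4) by reading them off the direct-sum structure, conducting all computations coordinatewise as permitted by the remark following Definition \ref{df:CentralIdeal}. Axiom (C1) is the existence half of the unique decomposition; (C2) is the orthogonality clause (1) of Definition \ref{df:CentralIdeal} applied with $a\in E[0,c]$ and $b\in\{f:f\perp c\}$, noting $b\oplus c$ existing means $b\perp c$ so $b$ lies in the complement; (C3) is closure of the ideal $E[0,c]$ under orthosums (I2); and (C4) follows because if $a\perp c$, $b\perp c$, and $a\oplus b$ exists, then $a,b$ both lie in the complement $\{f:f\perp c\}$, which is closed under $\oplus$, so $(a\oplus b)\perp c$ and thus $a\oplus b\oplus c$ exists. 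The main obstacle I anticipate is the careful bookkeeping in the (ii) $\Rightarrow$ (iii) step of matching the abstract complement $\pi\,'(E)$ with the concrete set $\{f:f\perp c\}$ — making sure the identification is watertight rather than merely plausible — though the explicit construction of the complement already in the (i) $\Rightarrow$ (ii) step largely disposes of this.
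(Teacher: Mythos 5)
Your cycle (i) $\Rightarrow$ (ii) $\Rightarrow$ (iii) $\Rightarrow$ (i) is exactly the paper's, and your (i) $\Rightarrow$ (ii) step is correct and essentially identical to the paper's proof (same complement $\{f\in E:f\perp c\}$, same use of (C1)--(C4), Lemma \ref{le:CentProp} and Lemma \ref{le:SlashProps}(iii)). The first genuine gap is in (ii) $\Rightarrow$ (iii): this implication must be proved assuming \emph{only} that $E[0,c]$ is a central ideal, but you justify the key containment $\{f\in E:f\perp c\}\subseteq\pi\,'(E)$ by appealing to ``the computation in the previous paragraph'' and ``the explicit complement above'' --- facts derived under hypothesis (i) using the centrality axioms and Lemma \ref{le:CentProp}, none of which are available when only (ii) is assumed (your closing remark that the (i) $\Rightarrow$ (ii) construction ``largely disposes of'' this step makes the circularity explicit). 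With that step void, what you actually have is (i) $\Rightarrow$ (ii) and (i) $\Rightarrow$ (iii), and no route from (ii) back into the cycle, so the equivalence of (ii) with the other two conditions is not established. The repair is the paper's direct argument from (ii) alone: if $f\perp c$, write $f=s\oplus t$ with $s\in E[0,c]$ and $t$ in the complementary ideal; then $s\leq f$, so $s\oplus c$ exists by Lemma \ref{le:SlashProps}(iii); since $E[0,c]$ is an ideal containing both $s$ and $c$, we get $s\oplus c\leq c$, forcing $s=0$ by cancellation, hence $f=t$. (The reverse containment $\pi\,'(E)\subseteq\{f\in E:f\perp c\}$ is indeed immediate, but from clause (1) of Definition \ref{df:CentralIdeal} or Theorem \ref{th:EXCprop}(vii), not from the (i)-dependent computation.)

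The second gap is in (iii) $\Rightarrow$ (i), in your verification of (C2): you assert that ``$b\oplus c$ existing means $b\perp c$ so $b$ lies in the complement.'' Under hypothesis (iii) this is true, but it is precisely the substantive point and not a triviality --- mere existence of $b\oplus c$ does not by itself give $b\oplus c=c\oplus b$. The paper proves it: decompose $b\oplus c=e_1\oplus f_1$ with $e_1\in E[0,c]$ and $f_1\perp c$; then $e_1\perp f_1$ and $e_1\leq c$ give $b\oplus c=f_1\oplus e_1\leq f_1\oplus c$, so $b\leq f_1$ by cancellation, and $b\perp c$ follows because the complement is downward closed. Your (C4) verification hides the same unproved conversion (from existence of $a\oplus c$ and $b\oplus c$ to $a\perp c$ and $b\perp c$), but there it can be repaired simply by applying the already-established (C2) with $c\leq c$, which is what the paper does; as your (C2) itself rests on the unproved claim, the gap is real. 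The rest of your (iii) $\Rightarrow$ (i), namely (C1), (C3) and the coordinatewise bookkeeping, is fine.
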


\begin{proof}
(i) $\Rightarrow$ (ii): If $c$ is central, then by (C3), $E[0,c]$ is an
ideal. We prove that it is moreover a central ideal; that is, there
exists another ideal, namely $E[0,c]':=\{e\in E: e\perp c\}$, such that
$E=E[0,c]\oplus E[0,c]'$. By Definition \ref{df:Center} and Lemma \ref
{le:CentProp} (ii), for every $e\in E$ there exist $e_1,e_2\in E$ such
that $e=e_1\oplus e_2$, where $e_1\in E[0,c]$ and $e_2\in E[0,c]'$. It
will be sufficient to show that $E[0,c]'$ is an ideal in $E$. If $d\leq e$
and $e\in E[0,c]'$, then by Lemma \ref{le:SlashProps} (iii), $d\oplus c$
exists; whence, as $c\in\Gamma(E)$, we have $d\in E[0,c]'$. Finally,
suppose that $e,f\in E[0,c]'$ and $e\oplus f$ exists. Then by (C4), $e
\oplus f\oplus c$ exists, and again, as $c\in\Gamma(E)$, it follows that
$e\oplus f\in E[0,c]'$.

(ii) $\Rightarrow$ (iii): If $E[0,c]$ is a central ideal in $E$, then there
is an ideal $E[0,c]'$ such that $E=E[0,c]\oplus E[0,c]'$. Evidently, if $f
\in E[0,c]'$, then $f\perp c$. Conversely, if $f\in E$ with $f\perp c$, then
$f=s\oplus t$, where $s\leq c$ and $t\in E[0,c]'$. As $s\leq f$ and $f
\perp c$, we get $s\perp c$ and since $E[0,c]$ is an ideal, $s\oplus c
\leq c$, which entails $s=0$. Thus $f=t\in E[0,c]'$ and $E[0,c]'=\{f\in E:
f\perp c\}$.

(iii) $\Rightarrow$ (i): Let $E=E[0,c]\oplus\{f\in E: f\perp c\}$. We prove
(C1)--(C4). (C1) follows directly from the fact, that every $e\in E$ can be
written as $e=e_1\oplus e_2$, where $e_1\in E[0,c]$ and $e_2\in\{f\in E:
f\perp c\}$. To prove (C2), suppose that $a\leq c$ and $b\oplus c$ exists.
Then, we can write $b\oplus c=e\sb{1}\oplus f\sb{1}$ where $e\sb{1}\in
E[0,c]$, $f\sb{1}\in\{f\in E: f\perp c\}$, and $e\sb{1}\perp f\sb{1}$.
Therefore, $b\oplus c=f\sb{1}\oplus e\sb{1}\leq f\sb{1}\oplus c$, so
$b\leq f\sb{1}$ by cancellation; hence, since $\{f\in E: f\perp c\}$ is
an ideal, it follows that $b\in\{f\in E: f\perp c\}$. Now we have $a\in
E[0,c]$ and $b\in\{f\in E:f\perp c\}$, whence $a\perp b$, proving (C2).
Because $E[0,c]$ is an ideal, (C3) follows immediately. For (C4), suppose
$a\oplus c$, $b\oplus c$ and $a\oplus b$ all exist. As a consequence of (C2)
and the fact that $c\leq c$, we have $a\perp c$ and $b\perp c$, i.e.,
$a,b\in\{f\in E:f\perp c\}$.  Again, since $\{f\in E:f\perp c\}$ is an ideal,
we infer that $a\oplus b\perp c$, so $a\oplus b\oplus c$ exists, proving
(C4).
\end{proof}

\begin{definition} \label{df:pisbc}
If $c\in \Gamma(E)$, then by Theorems \ref{th:centr} and \ref{th:CentId=piE},
there exists uniquely determined mapping in $\GEX(E)$, henceforth denoted by
$\pi_c$, such that $\pi_c(E)=E[0,c]$.
\end{definition}

\begin{corollary}\label{co:pic}
Let $\pi\in\GEX(E)$. Then the following statements are equivalent:
\begin{enumerate}
\item There exists a largest element $c\in\pi(E)$.
\item $\pi(E)=E[0,c]$.
\item $c\in\Gamma(E)$, $\pi=\pi_c$, and $\pi\,'(E)=\{f\in E:f\perp c\}$.
\end{enumerate}
\end{corollary}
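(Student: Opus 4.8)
The goal is to prove Corollary~\ref{co:pic}, establishing the equivalence of three statements about $\pi\in\GEX(E)$: that $\pi(E)$ has a largest element $c$, that $\pi(E)=E[0,c]$, and that $c\in\Gamma(E)$ with $\pi=\pi_c$ and $\pi\,'(E)=\{f\in E:f\perp c\}$.

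My plan is to prove the cyclic chain of implications (i) $\Rightarrow$ (ii) $\Rightarrow$ (iii) $\Rightarrow$ (i), drawing on the machinery already developed. First, for (i) $\Rightarrow$ (ii): suppose $c$ is the largest element of $\pi(E)$. By Theorem~\ref{th:EXCprop}(v), $\pi(E)=\{e\in E:e=\pi e\}$ is an ideal, so every element of $\pi(E)$ lies below $c$, giving $\pi(E)\subseteq E[0,c]$. For the reverse inclusion, if $a\leq c$, then since $\pi(E)$ is an ideal containing $c$ and closed downward by (I1), we get $a\in\pi(E)$; hence $\pi(E)=E[0,c]$.

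For (ii) $\Rightarrow$ (iii): assuming $\pi(E)=E[0,c]$, I first note that $c$ is automatically the largest element of $E[0,c]=\pi(E)$, so $c=\pi c\in\pi(E)$. By Theorem~\ref{th:CentId=piE}, $\pi(E)=E[0,c]$ is a central ideal, and then Theorem~\ref{th:centr} (the equivalence (ii) $\Leftrightarrow$ (i) there) gives $c\in\Gamma(E)$. The uniqueness clause in Definition~\ref{df:pisbc} forces $\pi=\pi_c$. Finally, $\pi\,'(E)=\{f\in E:f\perp c\}$ follows by combining the decomposition $E=\pi(E)\oplus\pi\,'(E)$ with Theorem~\ref{th:centr}(iii), which identifies the complementary summand of $E[0,c]$ as precisely $\{f\in E:f\perp c\}$; alternatively one invokes the uniqueness of complementary direct summands noted after Definition~\ref{df:CentralIdeal} together with Corollary~\ref{co:CIposet}.

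For (iii) $\Rightarrow$ (i): if $c\in\Gamma(E)$ and $\pi=\pi_c$, then by Definition~\ref{df:pisbc} we have $\pi(E)=E[0,c]$, which evidently has $c$ as its largest element. This closes the loop. I anticipate the implications to be largely bookkeeping, since the substantive content—that central elements correspond to central ideals and hence to exocentral projections—is already carried by Theorems~\ref{th:centr}, \ref{th:CentId=piE}, and Definition~\ref{df:pisbc}. The only point requiring mild care is the identification of $\pi\,'(E)$ with $\{f\in E:f\perp c\}$ in (ii) $\Rightarrow$ (iii): one must make sure to route through Theorem~\ref{th:centr}(iii) rather than attempting a direct computation, since the uniqueness of the complementary summand is what pins down the concrete description of $\pi\,'(E)$.
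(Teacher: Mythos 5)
Your proof is correct and follows essentially the same route as the paper: the equivalence of (i) and (ii) comes from $\pi(E)$ being an ideal (downward closed with largest element $c$), and the rest is assembled from Theorem~\ref{th:CentId=piE}, Theorem~\ref{th:centr}, and the uniqueness in Definition~\ref{df:pisbc}, exactly the ingredients the paper cites. Your version merely spells out the bookkeeping (including the identification of $\pi\,'(E)$ via uniqueness of complementary summands) that the paper's two-line proof leaves implicit.
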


\begin{proof}
Since $\pi(E)$ is an ideal in $E$, $\pi(E)=E[0,c]$ iff $c$ is the
largest element in $\pi(E)$. The rest follows by Theorem \ref
{th:CentId=piE}, Theorem \ref{th:centr}, and Definition \ref{df:pisbc}.
\end{proof}

If $c\in\Gamma(E)$ and $d\in E$ with $c\leq d$, then there exists
$x:=d\backslash c\in E$ with $x\oplus c=d$, and since $c\in\Gamma(E)$,
it follows from Lemma \ref{le:CentProp} (iii) that $x\perp c$, whence
$c\oplus x=d$ also holds, i.e., $x=c/d$. Consequently, $d\ominus c=
d\backslash c=c/d$ exists (Definition \ref{df:leqetc}). In particular,
$d\ominus c$ is defined for $c,d\in\Gamma(E)$ iff $c\leq d$, and if
$c\leq d$, then by part (x) of the next theorem, $d\ominus c\in\Gamma(E)$
and we have $d=c\oplus(d\ominus c)=(d\ominus c)\oplus c$.

We omit the proofs of the following two theorems as they can be obtained
by easy modifications of the proofs of \cite[Lemma 4.5, Theorem 4.6]{ExoCen}.

\begin{theorem}\label{th:ceprop}
Let $c,d\in\Gamma(E)$, $e\in E$. Then{\rm:}
\begin{enumerate}
\item $\pi_c e=e\wedge c$.
\item $\pi_c d=\pi_d c=c\wedge d$.
\item $e\wedge c=0\,\Leftrightarrow\, e\in(\pi_c)'(E)\,\Leftrightarrow\,
 e\perp c$.
\item $c\wedge d\in\Gamma(E)$ and $\pi_{c\wedge d}=\pi_c\wedge\pi_d$.
\item $c\wedge d=0\,\Leftrightarrow\,\pi_c\wedge\pi_d=0\,\Leftrightarrow\,
 c\perp d$.
\item If $c\perp d$, then $c\oplus d=c\vee d\in\Gamma(E)$ and $\pi_{c
 \oplus d}=\pi_{c\vee d}=\pi_c\vee\pi_d$.
\item $\pi_c$ is the smallest $\pi\in\GEX(E)$ such that $\pi c=c$.
\item If $\pi\in\GEX(E)$ and $h\in E$, then $h\in\Gamma(E)$ iff $\pi e
 =e\wedge h$ for all $e\in E$, and in this case, $\pi=\pi_h$.
\item $c\leq d\,\Leftrightarrow\,\pi_c\leq\pi_d$.
\item If $c\leq d$, then $d\ominus c$ exists, $d\ominus c\in\Gamma(E)$
 and $\pi_{d\ominus c}=\pi_d\wedge(\pi_c)'$.
\item $c\vee d$ exists in $E$, $c\vee d\in\Gamma(E)$ and $\pi_{c\vee d}
 =\pi_c\vee\pi_d$.
\end{enumerate}
\end{theorem}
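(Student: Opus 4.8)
The plan is to transport every assertion about central elements into the boolean algebra $\GEX(E)$ via the correspondence $c\mapsto\pi_c$ of Definition \ref{df:pisbc}, using the characterization in Corollary \ref{co:pic}: a $\pi\in\GEX(E)$ equals $\pi_c$ for some $c\in\Gamma(E)$ exactly when its range $\pi(E)$ possesses a largest element $c$, in which case $\pi(E)=E[0,c]$. This dictionary reduces ``$h$ is central'' to ``$h$ is the top of some range,'' and the whole theorem becomes a sequence of translations, with (xi) the only genuinely new construction.

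First I would dispatch the base facts (i), (ii), (iii), (vii), (ix) directly. For (i), $\pi_c e\leq e$ by (EXC3) and $\pi_c e\leq c$ since $\pi_c e\in\pi_c(E)=E[0,c]$; if $d\leq e,c$ then $d\in E[0,c]$ forces $d=\pi_c d\leq\pi_c e$, so $\pi_c e=e\wedge c$ (hence $e\wedge c$ always exists). Then (ii) is (i) with $e=d$ read two ways. For (iii), $e\wedge c=\pi_c e$ by (i), and $\pi_c e=0$ is equivalent to $e=(\pi_c)'e\in(\pi_c)'(E)$; the orthogonality clause follows from Theorem \ref{th:EXCprop}(vii) one way, and for the converse I note that if $e\perp c$ then $\pi_c e\oplus c$ exists (Lemma \ref{le:SlashProps}(iii)) and lies in the ideal $E[0,c]$, so $c\leq\pi_c e\oplus c\leq c$ and cancellation gives $\pi_c e=0$. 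Facts (vii) and (ix) are then immediate from Theorem \ref{th:boolalg}: $\pi_c c=c$ since $c$ tops $\pi_c(E)$, any $\pi$ with $\pi c=c$ has $c\in\pi(E)$ whence $E[0,c]\subseteq\pi(E)$ and $\pi_c\leq\pi$; and $c\leq d\Leftrightarrow E[0,c]\subseteq E[0,d]\Leftrightarrow\pi_c\leq\pi_d$.

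Next come the lattice/orthogonality translations (iv)--(vi) and (viii). For (iv), $(\pi_c\wedge\pi_d)(E)=\pi_c(E)\cap\pi_d(E)=E[0,c]\cap E[0,d]$ is the set of common lower bounds of $c,d$, whose top is $c\wedge d$ (existing by (i)), so Corollary \ref{co:pic} gives $c\wedge d\in\Gamma(E)$ and $\pi_{c\wedge d}=\pi_c\wedge\pi_d$. Statement (v) reads off (iv) with (iii): $c\wedge d=0\Leftrightarrow(\pi_c\wedge\pi_d)(E)=\{0\}\Leftrightarrow\pi_c\wedge\pi_d=0$, and $c\wedge d=0\Leftrightarrow c\perp d$ is (iii) at $e=d$. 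For (vi), disjointness of $\pi_c,\pi_d$ (from (v)) lets me invoke Lemma \ref{le:DisjointPiXi}(i) to obtain $c\oplus d=c\vee d\in(\pi_c\vee\pi_d)(E)$ and $c\wedge d=0$; since every element of $(\pi_c\vee\pi_d)(E)$ is $\leq c\oplus d$ by monotonicity (Lemma \ref{le:SlashProps}(iii)), $c\oplus d$ is the top of that range, and Corollary \ref{co:pic} yields centrality with $\pi_{c\oplus d}=\pi_c\vee\pi_d$. For (viii): if $\pi e=e\wedge h$ for all $e$ then $\pi(E)=E[0,h]$, so Corollary \ref{co:pic} gives $h\in\Gamma(E)$ and $\pi=\pi_h$; conversely (i) supplies $\pi_h e=e\wedge h$.

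The substantive work is (x) and (xi), and I expect (xi)---existence and centrality of $c\vee d$---to be the main obstacle, since in a GPEA suprema need not exist and an orthogonal sum of bounded elements need not have a common upper bound. For (x), with $c\leq d$ the difference $g:=d\ominus c$ exists (as recorded before the theorem); setting $\rho:=\pi_d\wedge(\pi_c)'$ and using Theorem \ref{th:finitepointwisesup/inf}(i) together with (i), I compute $\rho d=\pi_d d\wedge(\pi_c)'d=d\wedge(\pi_c)'d=(\pi_c)'d=(\pi_c d)/d=c/d=g$; any $f\in\rho(E)=E[0,d]\cap(\pi_c)'(E)$ satisfies $f=(\pi_c)'f\leq(\pi_c)'d=g$, so $g$ tops $\rho(E)$ and Corollary \ref{co:pic} gives $g\in\Gamma(E)$, $\pi_g=\pi_d\wedge(\pi_c)'$. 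For (xi) I would decompose via (iv) and (x): put $c_1:=c\ominus(c\wedge d)$, $d_1:=d\ominus(c\wedge d)$, which are central with $\pi_{c_1}=\pi_c\wedge(\pi_d)'$, $\pi_{d_1}=\pi_d\wedge(\pi_c)'$, $\pi_{c\wedge d}=\pi_c\wedge\pi_d$ pairwise disjoint and joining (by de Morgan in $\GEX(E)$) to $\pi_c\vee\pi_d$. By (v) the central elements $c\wedge d,c_1,d_1$ are pairwise orthogonal, so Theorem \ref{th:FinitePwiseDisjointPi} makes $s:=(c\wedge d)\oplus c_1\oplus d_1$ exist, and iterating (vi) gives $s\in\Gamma(E)$ with $\pi_s=\pi_c\vee\pi_d$. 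Since $c=(c\wedge d)\oplus c_1\leq s$ and $d=(c\wedge d)\oplus d_1\leq s$, $s$ bounds $c,d$; applying Theorem \ref{th:finitepointwisesup/inf}(ii) at $e=s$ then yields $s=\pi_s s=(\pi_c\vee\pi_d)s=\pi_c s\vee\pi_d s=c\vee d$, so $c\vee d$ exists, equals $s\in\Gamma(E)$, and $\pi_{c\vee d}=\pi_c\vee\pi_d$. The crux throughout is Corollary \ref{co:pic}, and the one step with real content is manufacturing the top element of the join in (xi) through the orthogonal central decomposition.
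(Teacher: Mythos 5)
Your proof is correct, but it cannot be compared line by line with the paper's own argument, because the paper gives none: immediately before Theorem \ref{th:ceprop} the authors write that its proof (and that of Theorem \ref{th:centgea}) ``can be obtained by easy modifications of the proofs of'' \cite[Lemma 4.5, Theorem 4.6]{ExoCen}, the corresponding results for commutative generalized effect algebras. What you have supplied is precisely the self-contained argument the authors leave to the reader, organized around Corollary \ref{co:pic} as the dictionary between central elements and those exocentral maps whose range has a top. The individual steps all check out against the paper's internal machinery: (i)--(iii), (vii)--(ix) are correct translations; in (vi), the terse claim that every element of $(\pi_c\vee\pi_d)(E)$ lies below $c\oplus d$ is legitimate, since for $x$ in that range Lemma \ref{le:DisjointPiXi}(ii) gives $x=\pi_c x\oplus\pi_d x$ with $\pi_c x\leq c$ and $\pi_d x\leq d$, after which Lemma \ref{le:SlashProps}(iii) applies twice; and in (xi) --- the only place where a new element must actually be manufactured --- your orthogonal central decomposition $s=(c\wedge d)\oplus c_1\oplus d_1$, with centrality propagated by (vi) and the supremum identified via Theorem \ref{th:finitepointwisesup/inf}(ii) and part (i), is sound. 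This parallels what one expects the omitted adaptation of \cite{ExoCen} to look like, so the route is not genuinely different; its value is that it is verifiable entirely within this paper.

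One implicit step should be made explicit, since it is used in (iv) and (x) but stated nowhere in the paper: the identity $(\pi\wedge\xi)(E)=\pi(E)\cap\xi(E)$. It is true and takes one line. If $x=(\pi\circ\xi)x$, then $\pi x=\pi\pi\xi x=\pi\xi x=x$ and, using commutativity from Lemma \ref{le:circ}(i), $\xi x=\xi\pi\xi x=\pi\xi\xi x=\pi\xi x=x$, so $x\in\pi(E)\cap\xi(E)$; conversely, if $\pi x=\xi x=x$ then $(\pi\circ\xi)x=\pi x=x$. With that line inserted, your proof is complete.
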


\begin{theorem}\label{th:centgea}
\rm{(i)} $\{\pi_c: c\in\Gamma(E)\}$ is a sublattice of the boolean algebra
$\GEX(E)$, and as such, it is a generalized boolean algebra. \rm{(ii)}
$\Gamma(E)$ is a commutative lattice-ordered sub-GPEA (hence sub-GEA) of $E$.
\rm{(iii)} The mapping $c\mapsto\pi_c$ from $\Gamma(E)$ onto $\{\pi_c: c\in
\Gamma(E)\}$ is a lattice isomorphism. \rm{(iv)} $\Gamma(E)$ is a generalized
boolean algebra, i.e., a distributive and relatively complemented lattice with
smallest element $0$. \rm{(v)} $E$ is a PEA iff $\{\pi_c: c\in\Gamma(E)\}=\GEX(E)$.
\end{theorem}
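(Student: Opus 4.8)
The plan is to prove the five parts of Theorem~\ref{th:centgea} by leaning heavily on Theorem~\ref{th:ceprop}, which already establishes the arithmetic of the map $c\mapsto\pi_c$. The organizing principle is that $c\mapsto\pi_c$ is an order isomorphism onto its image, so structural facts about $\{\pi_c:c\in\Gamma(E)\}$ inside the boolean algebra $\GEX(E)$ transfer back to $\Gamma(E)$, and conversely. Accordingly, I would prove (i), (iii) and (iv) essentially together, since they are three views of the same bijection, and then extract (ii) and (v) as consequences.

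For (i), the first step is to observe that by Theorem~\ref{th:ceprop}(iv) and (vi), $\{\pi_c:c\in\Gamma(E)\}$ is closed under the boolean meet $\pi_c\wedge\pi_d=\pi_{c\wedge d}$ and under the join $\pi_c\vee\pi_d=\pi_{c\vee d}$ (using part (xi) to guarantee $c\vee d\in\Gamma(E)$). Hence it is a sublattice of $\GEX(E)$. To see it is a \emph{generalized} boolean algebra, I would use the standard characterization: a sublattice of a boolean algebra that contains $0$ and is closed under relative complementation is generalized boolean. The relative complement of $\pi_c$ below $\pi_d$ (for $c\leq d$) is $\pi_d\wedge(\pi_c)'$, which by Theorem~\ref{th:ceprop}(x) equals $\pi_{d\ominus c}$ and so lies in the set; since $\GEX(E)$ is distributive (Theorem~\ref{th:boolalg}), every sublattice inherits distributivity, giving the generalized boolean structure.

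For (iii) I would verify that $c\mapsto\pi_c$ is a bijection preserving $\vee$ and $\wedge$: injectivity and order-preservation in both directions come from Theorem~\ref{th:ceprop}(ix), $c\leq d\Leftrightarrow\pi_c\leq\pi_d$, and the lattice-homomorphism property is exactly (iv) and (xi) of that theorem. This immediately yields (iv): $\Gamma(E)$ is lattice-isomorphic to the generalized boolean algebra of (i), hence is itself a distributive, relatively complemented lattice with smallest element $0$ (note $0\in\Gamma(E)$ since $\pi_0$ is the zero map). For (ii), I would show $\Gamma(E)$ is closed under $\oplus$ and differences: if $c,d\in\Gamma(E)$ and $c\oplus d$ exists, Lemma~\ref{le:CentProp}(iii) forces $c\perp d$, and then Theorem~\ref{th:ceprop}(vi) gives $c\oplus d=c\vee d\in\Gamma(E)$; commutativity of the sub-GPEA follows from the same $c\perp d$; closure under $\ominus$ uses part (x). The lattice-ordering is the one from (iv). Finally, for (v), the equivalence $E$ is a PEA $\Leftrightarrow\{\pi_c:c\in\Gamma(E)\}=\GEX(E)$ should be handled by Corollary~\ref{co:pic}: $E$ is a PEA exactly when it has a top element $1$, which is central and satisfies $\pi_1=\mathbf 1$ (the identity); more precisely $\pi\in\GEX(E)$ lies in the image iff $\pi(E)$ has a largest element, so all of $\GEX(E)$ is captured iff every $\pi(E)$ has a top, and taking $\pi=\mathbf 1$ this forces $E=E[0,1]$ to have a top, i.e. $E$ is a PEA; conversely a top element makes every ideal $\pi(E)$ bounded above.

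The step I expect to be the main obstacle is the forward direction of (v): turning ``$\{\pi_c\}=\GEX(E)$'' into the existence of a top element of $E$. The cleanest route is to apply the hypothesis to $\pi=\mathbf 1\in\GEX(E)$, conclude $\mathbf 1=\pi_c$ for some central $c$, and then use $\pi_c(E)=E[0,c]=E$ (from Definition~\ref{df:pisbc} / Corollary~\ref{co:pic}) to identify $c$ as a greatest element, so that $E$ is a GPEA with top, i.e. a PEA by the remark following Definition~\ref{def:pea}. Everything else is a transcription of the already-proved identities in Theorem~\ref{th:ceprop}.
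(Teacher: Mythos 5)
Your proof is correct, and it is worth noting that the paper offers no argument of its own here: the authors explicitly omit the proofs of Theorem \ref{th:ceprop} and of this theorem, referring the reader to \cite[Lemma 4.5, Theorem 4.6]{ExoCen}. Your derivation --- pushing everything through the order isomorphism $c\mapsto\pi_c$ via parts (iv), (ix), (x), (xi) of Theorem \ref{th:ceprop}, using Lemma \ref{le:CentProp}(iii) together with Theorem \ref{th:ceprop}(vi) for closure of $\Gamma(E)$ under $\oplus$ and for commutativity, and Corollary \ref{co:pic} for part (v) --- is exactly the intended reconstruction, so in substance you have supplied the omitted proof rather than found a different route. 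Three small points should be tightened. In (i) you exhibit only sectional complements ($\pi_d\wedge(\pi_c)'=\pi_{d\ominus c}$ in the interval from $0$ to $\pi_d$); that suffices, but you should say why: in a distributive lattice with $0$, sectional complementation yields relative complementation, since the relative complement of $y$ in $[x,z]$ is $x\vee w$ where $w$ is the complement of $y$ in $[0,z]$. In the backward direction of (v), the phrase ``a top element makes every ideal $\pi(E)$ bounded above'' is too weak --- an ideal bounded above in $E$ need not contain a largest element; what makes Corollary \ref{co:pic} applicable is that $\pi 1$ is itself the largest element of $\pi(E)$, because $e=\pi e$ and $e\leq 1$ give $e=\pi e\leq\pi 1$. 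Finally, $0\in\Gamma(E)$ should be verified directly from (C1)--(C4) (which is immediate), not inferred from ``$\pi_0$ is the zero map,'' since $\pi_0$ is only defined once $0$ is known to be central. None of these is a genuine gap; each has a one-line fix.
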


If $\phi$ is a mapping defined on $E$ and $S\subseteq E$, then $\phi|\sb{S}$
denotes the restriction of $\phi$ to $S$. The proofs of parts (i)--(iv) of
the next theorem are easy modifications of the proofs of \cite[Theorem 4.13,
(i)--(iv)]{CenGEA}; part (v) follows as in the proof of \cite[Lemma 4.5 (iii)]
{CenGEA}; and with the aid of part (v), part (vi) follows as in the proof of
\cite[Theorem 4.13 (v)]{CenGEA}.

\begin{theorem}\label{th:mis}
Let $\xi, \pi \in \GEX(E)$. Then{\rm:}
\begin{enumerate}
\item $\xi|\sb{\pi(E)}\in \GEX(\pi(E))$.
\item If $\tau \in \GEX(\pi(E))$, then $\tau\circ\pi\in\GEX(E)$.
\item $\xi\mapsto\xi|\sb{\pi(E)}$ is a surjective boolean homomorphism of
 $\GEX(E)$ onto $\GEX(\pi(E))$.
\item If $p\in \pi(E)$, then $\pi(E)[0,p]$ and $E[0,p]$ coincide both
 as sets and as pseudoeffect algebras.
\item If $p\in E$, then $\pi(E[0,p])=E[0,\pi p]=\pi(E)[0,\pi p]$.
\item $\Gamma(\pi(E))=\Gamma(E)\cap\pi(E)$.
\end{enumerate}
\end{theorem}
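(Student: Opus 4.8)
The plan is to prove the six statements of Theorem~\ref{th:mis} in the order listed, since each tends to rely on the previous ones. Throughout I will use freely the properties of $\GEX(E)$ established in Theorem~\ref{th:EXCprop}, the direct-sum decomposition $E=\pi(E)\oplus\pi\,'(E)$, and the characterization of $\Gamma(E)$ via largest elements of $\pi(E)$ from Corollary~\ref{co:pic} and Theorem~\ref{th:ceprop}.

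For part (i), I would check the four axioms (EXC1)--(EXC4) for $\xi|_{\pi(E)}$ as a self-map of $\pi(E)$. The first point to verify is that $\xi$ maps $\pi(E)$ into itself: if $e=\pi e$, then by Lemma~\ref{le:circ}(i), $\xi e=\xi(\pi e)=\pi(\xi e)\in\pi(E)$, so the restriction is well defined. Since $\pi(E)$ is sup/inf-closed and an ideal (Theorem~\ref{th:EXCprop}(v),(vi)) and carries the same orthosummation, (EXC1)--(EXC4) for the restriction are inherited directly from those for $\xi$ on $E$. For part (ii), given $\tau\in\GEX(\pi(E))$ I would show $\tau\circ\pi$ satisfies the exocenter axioms on $E$; the decreasing property and idempotence follow from $\tau d\leq d$ and $\tau(\tau d)=\tau d$ for $d\in\pi(E)$ together with $\pi$ being decreasing and idempotent, while (EXC4) on $E$ will be deduced by splitting elements into their $\pi(E)$ and $\pi\,'(E)$ coordinates and using that $\pi\,'(E)$-parts are automatically orthogonal to $\pi(E)$-parts by Theorem~\ref{th:EXCprop}(vii).

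For part (iii), the map $\xi\mapsto\xi|_{\pi(E)}$ is a homomorphism because composition, and hence the lattice operations $\wedge=\circ$ and complementation from Theorem~\ref{th:boolalg}, restrict compatibly; surjectivity is exactly the content of part (ii), since any $\tau\in\GEX(\pi(E))$ is the restriction of $\tau\circ\pi\in\GEX(E)$. Parts (iv) and (v) are direct order-theoretic computations: for $p\in\pi(E)$ the order intervals coincide as sets because $\pi(E)$ is an ideal and $E[0,p]\subseteq\pi(E)$ whenever $p\in\pi(E)$, and the induced partial orthosummations agree by Proposition~\ref{pr:interval}; part (v) then follows from applying $\pi$ to the interval $E[0,p]$ and invoking Theorem~\ref{th:PtwisePi}(i),(ii) to commute $\pi$ with existing suprema and infima.

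\textbf{The main obstacle} will be part (vi), the identity $\Gamma(\pi(E))=\Gamma(E)\cap\pi(E)$, because centrality is an intrinsic property of the ambient GPEA and must be shown to transfer correctly between $E$ and the summand $\pi(E)$. For the inclusion $\Gamma(E)\cap\pi(E)\subseteq\Gamma(\pi(E))$, I would take $c\in\Gamma(E)$ with $c\in\pi(E)$ and use Theorem~\ref{th:ceprop}(viii) to get $\pi_c e=e\wedge c$ on $E$; restricting to $\pi(E)$ and using part (v) together with the sup/inf-closure of $\pi(E)$ shows $\pi_c|_{\pi(E)}$ realizes $d\mapsto d\wedge c$ on $\pi(E)$, so $c$ is central in $\pi(E)$ by the same characterization applied inside $\pi(E)$. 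For the reverse inclusion, given $c\in\Gamma(\pi(E))$ I would produce $\xi\in\GEX(\pi(E))$ with largest element $c$ (Corollary~\ref{co:pic} inside $\pi(E)$), lift it to $\xi\circ\pi\in\GEX(E)$ by part (ii), and argue that $c$ remains the largest element of $(\xi\circ\pi)(E)=\xi(\pi(E))$, so that $c\in\Gamma(E)$ again by Corollary~\ref{co:pic}; the delicate point is confirming that the largest element is preserved under the lift, which rests on $\xi(\pi(E))\subseteq\pi(E)$ and the fact that no new elements above $c$ are introduced outside $\pi(E)$.
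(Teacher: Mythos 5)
Your proposal is correct in all essentials, but note that the paper does not actually write out a proof of Theorem~\ref{th:mis}: it states that parts (i)--(iv) are easy modifications of the proofs of Theorem 4.13 (i)--(iv) of \cite{CenGEA}, that (v) follows as in Lemma 4.5 (iii) of \cite{CenGEA}, and that (vi) follows, with the aid of (v), as in Theorem 4.13 (v) of \cite{CenGEA}. So your blind reconstruction is in effect a self-contained substitute for an omitted argument, and its structure --- axiom-checking for the restriction $\xi|_{\pi(E)}$ and for the lift $\tau\circ\pi$, with (EXC4) handled by splitting into $\pi(E)$- and $\pi\,'(E)$-coordinates via Theorem~\ref{th:EXCprop}(vii); surjectivity of restriction coming from part (ii); the interval identifications; and the two inclusions of (vi) obtained by restricting $\pi_c$ and applying the characterization of central elements inside $\pi(E)$ in one direction, and by lifting $\xi\in\GEX(\pi(E))$ to $\xi\circ\pi$ and invoking Corollary~\ref{co:pic} together with $\pi(E)[0,c]=E[0,c]$ in the other --- is exactly the kind of argument the citation alludes to. Your handling of the ``delicate point'' in (vi) is right: since $\pi(E)$ is an ideal, $E[0,c]\subseteq\pi(E)$, so $(\xi\circ\pi)(E)=\xi(\pi(E))=\pi(E)[0,c]=E[0,c]$ and $c$ is indeed its largest element.

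Two local repairs are needed, both slips of citation rather than of strategy. In part (v), Theorem~\ref{th:PtwisePi} is the wrong tool: commuting $\pi$ with suprema and infima of families gives nothing toward the inclusion $E[0,\pi p]\subseteq\pi(E[0,p])$. The correct argument is: $\pi(E[0,p])\subseteq E[0,\pi p]$ by monotonicity of $\pi$; conversely, if $b\leq\pi p$ then $b=\pi b$ by Theorem~\ref{th:EXCprop}(iii), and $b\leq\pi p\leq p$, so $b=\pi b\in\pi(E[0,p])$; the remaining equality $E[0,\pi p]=\pi(E)[0,\pi p]$ is part (iv) applied to $\pi p\in\pi(E)$. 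Similarly, in (vi) the identity $\pi_c e=e\wedge c$ is Theorem~\ref{th:ceprop}(i), not (viii); part (viii) is the converse characterization that you then (correctly) apply inside the GPEA $\pi(E)$, after observing that for $d,c\in\pi(E)$ the infimum $d\wedge c$ computed in $E$ lies in $\pi(E)$ (ideal property) and is also the infimum computed in $\pi(E)$, since lower bounds taken in $\pi(E)$ are lower bounds in $E$. With these substitutions your outline goes through.
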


\begin{lemma}\label{le:nova}
If $\pi\in\GEX(E)$ and $k\in E$, then $\pi|\sb{E[0,k]}\in\GEX(E[0,k])$.
\end{lemma}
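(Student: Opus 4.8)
The plan is to verify that the restriction $\pi|_{E[0,k]}$ satisfies the four axioms (EXC1)--(EXC4) relative to the pseudoeffect algebra $E[0,k]$, keeping in mind that by Proposition \ref{pr:interval} the interval $E[0,k]$ carries the induced partial operation $\oplus_k$, where $a\oplus_k b$ is defined iff $a\oplus b$ exists in $E$ \emph{and} $a\oplus b\leq k$. The first thing I must check is that $\pi$ actually maps $E[0,k]$ into itself: if $a\leq k$, then by (EXC3) $\pi a\leq a\leq k$, so $\pi a\in E[0,k]$. Thus $\pi|_{E[0,k]}$ is a well-defined self-map of $E[0,k]$.

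For (EXC1), suppose $a\oplus_k b$ exists in $E[0,k]$, i.e., $a\oplus b$ exists in $E$ and $a\oplus b\leq k$. Since $\pi$ is a GPEA-endomorphism of $E$, $\pi a\oplus\pi b$ exists and equals $\pi(a\oplus b)$. By (EXC3), $\pi(a\oplus b)\leq a\oplus b\leq k$, so $\pi a\oplus\pi b\leq k$, which means $\pi a\oplus_k\pi b$ exists in $E[0,k]$ and equals $\pi(a\oplus_k b)$; this gives (EXC1). Axioms (EXC2) and (EXC3) are immediate, since idempotence and the decreasing property are inherited verbatim from the corresponding properties of $\pi$ on $E$ (the order on $E[0,k]$ is the restriction of the order on $E$, so $\pi a\leq a$ continues to hold).

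The main point requiring care is (EXC4), the orthogonality condition, because here ``orthogonal'' must be interpreted inside $E[0,k]$. Suppose $a,b\in E[0,k]$ with $(\pi|_{E[0,k]})a=a$ and $(\pi|_{E[0,k]})b=0$; that is, $\pi a=a$ and $\pi b=0$ as elements of $E$. Since $\pi\in\GEX(E)$ satisfies (EXC4), we have $a\perp b$ in $E$, meaning $a\oplus b$ and $b\oplus a$ both exist in $E$ and are equal. I then need to promote this to orthogonality in $E[0,k]$, i.e., to conclude that the common value $a\oplus b=b\oplus a$ lies below $k$, so that $a\oplus_k b$ and $b\oplus_k a$ both exist and coincide. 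This is the one step that does not follow purely formally, and it is exactly where I expect the subtlety to lie: orthogonality in $E[0,k]$ requires the orthosum to stay inside the interval. The cleanest resolution I foresee is to observe that $\pi(E)$ is a central ideal (Corollary \ref{co:piEnormal}, Theorem \ref{th:CentId=piE}), so $a=\pi a\in\pi(E)$ and $b\in\pi\,'(E)$; by Theorem \ref{th:EXCprop}(vii) the orthosum $a\oplus b=a\vee b$ in $E$. Since $a\leq k$ and $b\leq k$ both hold, their supremum $a\vee b$ is $\leq k$, so $a\oplus b=a\vee b\leq k$, and the orthosum indeed remains in $E[0,k]$. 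Hence $a\perp b$ in $E[0,k]$ as well, and (EXC4) holds. This establishes $\pi|_{E[0,k]}\in\GEX(E[0,k])$.
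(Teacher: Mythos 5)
Your proof is correct and follows essentially the same route as the paper: the only nontrivial point is (EXC4), and both arguments resolve it by noting $a\in\pi(E)$, $b\in\pi\,'(E)$, so that the orthosum equals the supremum $a\vee b\leq k$ and hence stays in the interval. (The paper cites Lemma \ref{le:DisjointPiXi}(i) with $\xi:=\pi\,'$ where you cite Theorem \ref{th:EXCprop}(vii) directly, but the former is just a repackaging of the latter, so the arguments coincide.)
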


\begin{proof}
We prove that $\pi|\sb{E[0,k]}$ satisfies (EXC1)--(EXC4) for the PEA
$E[0,k]$. Let $a,b\in E[0,k]$. We have $\pi|\sb{E[0,k]}a=\pi a\leq a
\leq k$, so $\pi|\sb{E[0,k]}\colon E[0,k]\to E[0,k]$. To prove (EXC1),
suppose that $a\oplus\sb{k}b=a\oplus b\leq k$. Then $\pi|\sb{E[0,k]}
(a\oplus\sb{k}b)=\pi(a\oplus b)=\pi(a)\oplus\pi(b)\leq a\oplus b\leq k$,
so $\pi|\sb{E[0,k]}$ is a GPEA-endomorphism of $E[0,k]$. Conditions
(EXC2) and (EXC3) hold trivially. To prove (EXC4), suppose that $\pi|
\sb{E[0,k]}a=\pi a=a$ and $\pi|\sb{E[0,k]}b=\pi b=0$. Then $a\perp b$,
so $a\oplus b=b\oplus a$. Also $\pi\,'b=b$, and by Lemma
\ref{le:DisjointPiXi} (i) with $\xi:=\pi\,'$, $a\oplus b=b\oplus a=
a\vee b\leq k$. Therefore, $a\oplus\sb{k}b=a\oplus b=b\oplus a=b
\oplus\sb{k}a$, i.e., $a$ is orthogonal to $b$ in $E[0,k]$, proving
(EXC4).
\end{proof}

\section{Central orthocompleteness} \label{sc:CO} 

\begin{definition} \label{df:GammaexOrghogonal}
We say that elements $e,f\in E$ are \emph{$\GEX$-orthogonal} iff there
are $\pi,\xi\in\GEX(E)$ such that $\pi\wedge\xi=0$, $\pi e=e$ and
$\xi f=f$. More generally, an arbitrary family $(e_i)_{i\in I}$ in $E$
is $\GEX$-orthogonal iff there is a pairwise disjoint family $(\pi_i)
_{i\in I}$ in $\GEX(E)$ such that $\pi_i e_i=e_i$ for all $i\in I$.
\end{definition}

As is easily seen, elements $e,f\in E$ are $\GEX$-orthogonal iff there
is a direct sum decomposition $E=S\oplus S'$ such that $e\in S$ and
$f\in S'$.

\begin{lemma}\label{le:co}
\rm{(i)} A finite family $(e_i)_{i=1}^n$ in $E$ is pairwise
$\GEX$-orthogonal iff it is $\GEX$-orthogonal and then it is orthogonal
with $\oplus_{i=1}^n e_i=\bigvee_{i=1}^n e_i$. \rm{(ii)} If an arbitrary
family $(e_i)_{i\in I}\in E$ is $\GEX$-orthogonal, then it is orthogonal
and it is orthosummable iff its supremum exists in $E$, in which case
$\oplus_{i\in I} e_i=\bigvee_{i\in I} e_i$.
\end{lemma}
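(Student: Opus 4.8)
The plan is to reduce everything to the finite pairwise-disjoint case already settled in Theorem~\ref{th:FinitePwiseDisjointPi}, and then handle arbitrary families by passing to finite subfamilies. For part (i), I would first show the equivalence of ``pairwise $\GEX$-orthogonal'' and ``$\GEX$-orthogonal'' for a finite family $(e_i)_{i=1}^n$. The forward direction (genuinely $\GEX$-orthogonal $\Rightarrow$ pairwise so) is immediate: if $(\pi_i)_{i=1}^n$ is a pairwise disjoint family in $\GEX(E)$ with $\pi_i e_i = e_i$, then any pair $e_j, e_k$ ($j\neq k$) is witnessed $\GEX$-orthogonal by the disjoint pair $\pi_j,\pi_k$. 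The substantive direction is the converse: from a witness $\pi_{jk}, \xi_{jk}$ of disjointness for each \emph{pair} I must manufacture a single globally pairwise disjoint family $(\sigma_i)_{i=1}^n$ in $\GEX(E)$ with $\sigma_i e_i = e_i$. Here I would exploit that $\GEX(E)$ is a boolean algebra (Theorem~\ref{th:boolalg}): given pairwise disjointness data, define $\sigma_i$ by intersecting the appropriate exocentral projections fixing $e_i$ and excluding the others, using meets $\wedge$ and complements $\pi\mapsto\pi'$ in $\GEX(E)$, and verify the $\sigma_i$ are pairwise disjoint while still fixing each $e_i$. Once pairwise $\GEX$-orthogonality yields an honest pairwise disjoint $(\sigma_i)$, Theorem~\ref{th:FinitePwiseDisjointPi}(i) directly gives orthogonality of $(e_i)$ together with $\oplus_{i=1}^n e_i = \bigvee_{i=1}^n e_i$.

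For part (ii), suppose $(e_i)_{i\in I}$ is $\GEX$-orthogonal via a pairwise disjoint family $(\pi_i)_{i\in I}$ in $\GEX(E)$ with $\pi_i e_i = e_i$. Orthogonality of the whole family is, by the definition of orthogonality for arbitrary families given in the text, just orthogonality of every finite subfamily; but any finite subfamily is itself $\GEX$-orthogonal (restrict the witnessing family to the finite index set), so part (i) applies and each finite subfamily is orthogonal. Hence $(e_i)_{i\in I}$ is orthogonal. By the definition of orthosummability, $(e_i)_{i\in I}$ is orthosummable iff the supremum $\bigvee_{F\subseteq I}(\oplus_{i\in F} e_i)$ over finite subsets $F$ exists, and in that case its orthosum equals that supremum. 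The key point is to replace this ``supremum of finite orthosums'' by the plain supremum $\bigvee_{i\in I} e_i$.

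The bridge is exactly part (i): for each finite $F\subseteq I$ we have $\oplus_{i\in F} e_i = \bigvee_{i\in F} e_i$, so the two directed families $\{\oplus_{i\in F} e_i : F\subseteq I \text{ finite}\}$ and $\{e_i : i\in I\}$ have the same set of upper bounds, and therefore one supremum exists iff the other does and they coincide. Concretely, $\bigvee_{F}(\oplus_{i\in F} e_i) = \bigvee_F(\bigvee_{i\in F} e_i) = \bigvee_{i\in I} e_i$, with existence of either side forcing existence of the other. Thus $(e_i)_{i\in I}$ is orthosummable iff $\bigvee_{i\in I} e_i$ exists, and when it does, $\oplus_{i\in I} e_i = \bigvee_{F\subseteq I}(\oplus_{i\in F} e_i) = \bigvee_{i\in I} e_i$, as claimed. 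The main obstacle is the combinatorial step in part~(i) of promoting pairwise-disjointness witnesses to a single globally pairwise disjoint family in the boolean algebra $\GEX(E)$; once that is in place, both parts follow cleanly from Theorem~\ref{th:FinitePwiseDisjointPi} and the definitions of orthogonality and orthosummability.
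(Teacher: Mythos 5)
Your proposal is correct, and it rests on the same foundation as the paper's proof; the divergence is organizational rather than conceptual. For the substantive direction of (i), the paper runs an induction on $n$: at each step it combines the inductively obtained witnesses $\xi_i$ with the new pairwise witnesses $\alpha_i,\beta_i$ via meets (setting $\pi_i:=\xi_i\wedge\alpha_i$ and $\pi_n:=\bigwedge_{i<n}\beta_i$) and establishes orthogonality and the orthosum identity en route from Lemma~\ref{le:DisjointPiXi}(i). You instead manufacture all the witnesses in one shot and then quote Theorem~\ref{th:FinitePwiseDisjointPi}(i); since that theorem is itself the inductive closure of Lemma~\ref{le:DisjointPiXi}, the two routes prove the same things, but yours cleanly separates the combinatorial step (pairwise witnesses to a global disjoint family) from the orthosum computation, which is arguably tidier. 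The step you flag as the main obstacle does work exactly as you sketch, and more simply than you suggest: writing $\alpha_{ij}$ for the witness fixing $e_i$ in the pair $\{i,j\}$ (so that $\alpha_{ij}\wedge\alpha_{ji}=0$), set $\sigma_i:=\bigwedge_{j\neq i}\alpha_{ij}$; then $\sigma_i e_i=e_i$ because meets in $\GEX(E)$ are compositions (Theorem~\ref{th:boolalg}) and each $\alpha_{ij}$ fixes $e_i$, while $\sigma_i\wedge\sigma_j\leq\alpha_{ij}\wedge\alpha_{ji}=0$ gives pairwise disjointness --- no complements are needed, so the $\pi\mapsto\pi\,'$ part of your toolkit is superfluous. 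Your part (ii) is exactly the paper's argument (finite subfamilies plus part (i)), and is in fact slightly more careful: the paper simply writes the chain $\bigvee_{i\in I}e_i=\bigvee_F(\bigvee_{i\in F}e_i)=\bigvee_F(\oplus_{i\in F}e_i)=\oplus_{i\in I}e_i$, whereas your observation that the two families have the same set of upper bounds is precisely what justifies the ``orthosummable iff the supremum exists'' clause of the statement, not just the equality of the two values when both exist.
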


\begin{proof}
(i) Clearly, a subfamily of a $\GEX$-orthogonal family is $\GEX$-orthogonal.
It is also clear from the definition, that every $\GEX$-orthogonal family is
pairwise $\GEX$-orthogonal. We prove both the converse and orthogonality by
induction on $n$. For $n=1$ the assertion obviously holds. Suppose now the
statement holds for $(n-1)$ elements, $n>1$ and assume that $(e_i)_{i=1}^n$
is a pairwise $\GEX$-orthogonal family. Then by the induction hypotheses,
$(e_i)_{i=1}^{n-1}$ is orthogonal, $\oplus_{i=1}^{n-1} e_i=\bigvee_{i=1}
^{n-1}e_i$, and there exist pairwise disjoint mappings $\xi_i\in\GEX(E)$
with $\xi_i e_i=e_i$ for $i=1,2,\ldots ,n-1$. Moreover, $e_i$ and $e_n$
are $\GEX$-orthogonal for $i=1,2,\ldots,n-1$; hence there exist $\alpha_i,
\beta_i\in\GEX(E)$ with $\alpha_i\wedge\beta_i=0$, $\alpha_i e_i=e_i$, and
$\beta_i e_n=e_n$. For $i=1,2,\ldots ,n-1$, put $\pi_i:=\xi_i\wedge\alpha_i$
and put $\pi_n:=\bigwedge_{i=1}^{n-1}\beta_i$. Then $\pi_i\in\GEX(E)$ are
pairwise disjoint and $\pi_i e_i=e_i$ for $i=1,2,\ldots ,n$, so the family
$(e_i)_{i=1}^n$ is $\GEX$-orthogonal. We now put $\pi:=\bigvee_{i=1}^{n-1}
\pi_i$ to get $\pi\wedge\pi_n=0$, $\pi(\oplus_{i=1}^{n-1} e_i)=\oplus_
{i=1}^{n-1}\pi e_i=\oplus_{i=1}^{n-1} e_i$, and $\pi_n e_n=e_n$; hence
by Lemma \ref{le:DisjointPiXi} (i), $(\oplus_{i=1}^{n-1} e_i)\perp e_n$
and $\oplus_{i=1}^n e_i=(\oplus_{i=1}^{n-1} e_i)\oplus e_n=(\vee_{i=1}^{n-1}
e_i)\vee e_n=\vee_{i=1}^n e_i$.

(ii) If $(e_i)_{i\in I}$ is $\GEX$-orthogonal, then every finite subfamily
is $\GEX$-orthogonal and by (i), $\oplus_{i\in F}e_i=\bigvee_{i\in F}e_i$,
where $F$ is any finite subset of $I$. Therefore $\bigvee_{i\in I}e_i=
\bigvee_F(\bigvee_{i\in F}e_i)=\bigvee_F(\oplus_{i\in F} e_i)=\oplus_
{i\in I}e_i$.
\end{proof}

\begin{lemma}\label{le:coce}
\rm{(i)} $c,d\in\Gamma (E)$ are $\GEX$-orthogonal iff $\pi_c\wedge\pi_d =0$
iff $c\perp d$ iff $c\wedge d=0$.
\rm{(ii)} A family of central elements is $\GEX$-orthogonal iff it is
orthogonal iff it is pairwise orthogonal iff it is pairwise disjoint.
\end{lemma}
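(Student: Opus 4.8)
The plan is to reduce both parts to results already established for central elements, chiefly Theorem~\ref{th:ceprop}. For part~(i), note that Theorem~\ref{th:ceprop}~(v) already supplies the chain $\pi_c\wedge\pi_d=0\Leftrightarrow c\perp d\Leftrightarrow c\wedge d=0$, so the only thing left is to splice in the equivalence with $\GEX$-orthogonality. First I would observe that $c\in E[0,c]=\pi_c(E)$, so $\pi_c c=c$ by Theorem~\ref{th:EXCprop}~(v), and likewise $\pi_d d=d$. If $\pi_c\wedge\pi_d=0$, then the pair $\pi:=\pi_c$, $\xi:=\pi_d$ witnesses $\GEX$-orthogonality of $c,d$ directly from Definition~\ref{df:GammaexOrghogonal}. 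Conversely, if $c,d$ are $\GEX$-orthogonal, say via $\pi\wedge\xi=0$ with $\pi c=c$ and $\xi d=d$, then Theorem~\ref{th:ceprop}~(vii) (the minimality of $\pi_c$ among those $\pi$ with $\pi c=c$) gives $\pi_c\leq\pi$ and $\pi_d\leq\xi$, whence $\pi_c\wedge\pi_d\leq\pi\wedge\xi=0$. This closes part~(i).

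For part~(ii) the plan is to prove the four conditions equivalent by running a single cycle of implications around a family $(c_i)_{i\in I}$ of central elements. The implication ``$\GEX$-orthogonal $\Rightarrow$ orthogonal'' is immediate from Lemma~\ref{le:co}~(ii). ``Orthogonal $\Rightarrow$ pairwise orthogonal'' follows from the definition of orthogonality of a family, since then each two-element subfamily $\{c_i,c_j\}$ is orthogonal, i.e. $c_i\perp c_j$. ``Pairwise orthogonal $\Rightarrow$ pairwise disjoint'' is exactly part~(i) (equivalently Theorem~\ref{th:ceprop}~(v)) applied pairwise: $c_i\perp c_j\Leftrightarrow c_i\wedge c_j=0$. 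Finally, for ``pairwise disjoint $\Rightarrow$ $\GEX$-orthogonal'', I would take $\pi_i:=\pi_{c_i}$; part~(i) turns the pairwise disjointness $c_i\wedge c_j=0$ into pairwise disjointness $\pi_{c_i}\wedge\pi_{c_j}=0$ in the boolean algebra $\GEX(E)$, and since $\pi_{c_i}c_i=c_i$ this family witnesses $\GEX$-orthogonality of $(c_i)_{i\in I}$ in the sense of Definition~\ref{df:GammaexOrghogonal}. The cycle then yields all the asserted equivalences.

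I do not expect any genuine obstacle here: the statement is essentially a repackaging of Theorem~\ref{th:ceprop}~(v),(vii) together with Lemma~\ref{le:co}, and every step is a one-line invocation of an earlier result. The only point requiring a little care is the passage from a pairwise disjoint family of \emph{central elements} to a pairwise disjoint family in $\GEX(E)$; this is precisely where part~(i) does the work, converting the order-theoretic condition $c_i\wedge c_j=0$ in $E$ into the boolean condition $\pi_{c_i}\wedge\pi_{c_j}=0$, so that the defining condition of $\GEX$-orthogonality can be met with the canonical choice $\pi_i=\pi_{c_i}$.
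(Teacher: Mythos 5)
Your proof is correct and takes essentially the same route as the paper's: part (i) is the definition of $\GEX$-orthogonality in one direction and the minimality of $\pi_c$, $\pi_d$ in the other, with Theorem~\ref{th:ceprop}~(v) supplying the remaining equivalences, and part (ii) is the same cycle of implications via Lemma~\ref{le:co}~(ii), the definition of orthogonality of a family, and the canonical witnesses $\pi_i=\pi_{c_i}$. Incidentally, your citation of Theorem~\ref{th:ceprop}~(vii) for the minimality of $\pi_c$ is more accurate than the paper's own proof, which misattributes this fact to Theorem~\ref{th:centgea}~(vii), a part that does not exist.
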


\begin{proof}
(i) If $\pi_c\wedge\pi_d=0$, then $c$ and $d$ are $\GEX$-orthogonal by definition.
If $c,d$ are $\GEX$-orthogonal, then there exist $\pi,\xi\in\GEX(E)$ such that
$\pi c=c$, $\xi d=d$ and $\pi\wedge\xi=0$. But $\pi_c\leq\pi$ and $\pi_d\leq\xi$
by Theorem \ref{th:centgea} (vii), thus $\pi_c$ and $\pi_d$ are disjoint too.
The remaining equivalences follow from Theorem \ref{th:ceprop} (v).

(ii) If the family $(c_i)_{i\in I}$ of central elements in $E$ is
$\GEX$-orthogonal, then by Lemma \ref{le:co} (ii) it is orthogonal. If it is
orthogonal, then by the definition of orthogonality it is pairwise orthogonal.
If it is pairwise orthogonal, then by Theorem \ref{th:ceprop} (v) it is
pairwise disjoint. Finally, suppose that $(c_i)_{i\in I}$ is pairwise
disjoint. Then by Theorem \ref{th:ceprop} (v) again, $(\pi\sb{c\sb{i}})
\sb{i\in I}$ is a pairwise disjoint family in $\GEX(E)$ such that
$\pi\sb{c\sb{i}}c\sb{i}=c\sb{i}$ for all $i\in I$, so $(c_i)_{i\in I}$
is $\GEX$-orthogonal.
\end{proof}

\begin{definition}\label{de:COGPEA}
The generalized pseudo-effect algebra $E$ is \emph{centrally orthocomplete}
(COGPEA) iff it satisfies the following conditions:
\begin{itemize}
\item[(CO1)] Every $\GEX$-orthogonal family in $E$ is orthosummable, i.e.
 (Lemma \ref{le:co} (ii)), it has a supremum) in $E$.
\item[(CO2)] If $e\in E$ is such that $e\oplus e_i$ (resp. $e_i\oplus e)$
 exists for every element of a $\GEX$-orthogonal family $(e_i)_{i\in I}
 \subset E$, then $e\oplus(\oplus_{i\in I} e_i)$ (resp. $(\oplus_{i
 \in I}e_i)\oplus e$) exists in $E$.
\end{itemize}
\end{definition}

\begin{theorem}\label{th:COGPEAp'wisedisj}
Let $E$ be a COGPEA and $(\pi_i)_{i\in I}$ a pairwise disjoint family in
$\GEX(E)$. Let $(e_i)_{i\in I}$, $(f_i)_{i\in I}$ be families of elements
in $E$ such that $e_i\oplus f_i$ exists for all $i\in I$ and $e_i,f_i\in
\pi_i(E)$. Then{\rm:}
\begin{enumerate}
\item $(e_i)_{i\in I}$, $(f_i)_{i\in I}$, and $(e_i\oplus f_i)_{i\in I}$
 are $\GEX$-orthogonal, hence orthosummable.
\item $\oplus_{i\in I} e_i=\bigvee_{i\in I} e_i$, $\oplus_{i\in I} f_i=
 \bigvee_{i\in I} f_i$ and $\oplus_{i\in I}(e_i \oplus f_i)=\bigvee_{i\in I}
 (e_i\oplus f_i)$.
\item $(\oplus_{i\in I} e_i)\oplus(\oplus_{i\in I} f_i)$ exists.
\item $(\oplus_{i\in I} e_i)\oplus(\oplus_{i\in I} f_i)=\oplus_{i\in I}
 (e_i\oplus f_i)=\bigvee_{i\in I}(e_i\oplus f_i)$.
\end{enumerate}
\end{theorem}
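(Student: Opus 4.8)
The plan is to establish the four parts in sequence, using the pairwise disjointness of $(\pi_i)_{i\in I}$ as the engine that converts everything into $\GEX$-orthogonality, so that Lemma \ref{le:co} and the central orthocompleteness axioms (CO1), (CO2) do the heavy lifting. For part (i), I would observe that since $e_i,f_i\in\pi_i(E)$ and the $\pi_i$ are pairwise disjoint with $\pi_ie_i=e_i$ and $\pi_if_i=f_i$, the families $(e_i)_{i\in I}$ and $(f_i)_{i\in I}$ are immediately $\GEX$-orthogonal by definition. For $(e_i\oplus f_i)_{i\in I}$, the key point is that $e_i\oplus f_i\in\pi_i(E)$ as well: since $\pi_i(E)$ is an ideal (Theorem \ref{th:EXCprop} (v)) closed under existing orthosums, $\pi_i(e_i\oplus f_i)=\pi_ie_i\oplus\pi_if_i=e_i\oplus f_i$, so the same pairwise disjoint family $(\pi_i)_{i\in I}$ witnesses $\GEX$-orthogonality. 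Then orthosummability of all three families is exactly (CO1).

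For part (ii), each of the three equalities $\oplus=\bigvee$ is now a direct application of Lemma \ref{le:co} (ii), which asserts precisely that a $\GEX$-orthogonal orthosummable family has its orthosum equal to its supremum. So part (ii) is essentially a restatement once part (i) is in hand.

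The substantive work is in parts (iii) and (iv). For (iii) I would argue as follows. Set $f:=\oplus_{i\in I}f_i$; I want to show $(\oplus_{i\in I}e_i)\oplus f$ exists. The natural route is to reduce to the hypothesis of (CO2). I would first show that $e:=\oplus_{i\in I}e_i$ lies in an appropriate relationship with each $f_j$: namely I expect $e\oplus f_j$ to exist for each $j$, and that $e$ can be combined with the $\GEX$-orthogonal family $(f_i)_{i\in I}$. The cleanest approach is probably to work coordinatewise. Since $e_i\oplus f_i$ exists and $e_i,f_i\in\pi_i(E)$, and the $\pi_i$ are pairwise disjoint, for a fixed $j$ the element $f_j\in\pi_j(E)$ is $\GEX$-orthogonal to every $e_i$ with $i\neq j$ (because $e_i\in\pi_i(E)$ and $\pi_i\wedge\pi_j=0$), while $e_j\oplus f_j$ exists by hypothesis. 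Using Theorem \ref{th:FinitePwiseDisjointPi} and Lemma \ref{le:DisjointPiXi} to handle the finite interactions, together with (CO2) to pass to the infinite orthosum $e=\oplus_i e_i$, I expect to conclude that $e\oplus f_j$ exists for every $j\in I$. Once $e\oplus f_j$ exists for all $j$ and $(f_j)_{j\in I}$ is $\GEX$-orthogonal, axiom (CO2) yields the existence of $e\oplus(\oplus_j f_j)=e\oplus f$, which is exactly part (iii).

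Part (iv) is then the identification of this sum with $\oplus_{i\in I}(e_i\oplus f_i)$. I would use Lemma \ref{le:oplusdist} (distributivity of $\oplus$ over existing suprema) to write $e\oplus f=e\oplus\bigvee_i f_i=\bigvee_i(e\oplus f_i)$, and symmetrically expand each $e\oplus f_i$ using $e=\bigvee_j e_j$. The coordinatewise disjointness forces $e_j$ to interact with $f_i$ only when $i=j$ (since otherwise they sit in disjoint summands and their orthosum collapses to the join, contributing nothing new), so the double supremum telescopes down to $\bigvee_i(e_i\oplus f_i)$, which by part (ii) equals $\oplus_i(e_i\oplus f_i)$. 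The main obstacle I anticipate is the bookkeeping in part (iii): making the reduction to (CO2) rigorous requires carefully showing $e\oplus f_j$ exists using only finitely-supported disjointness interactions plus one application of central orthocompleteness, and ensuring the associativity rearrangements (GPEA1) are legitimate at each stage. The commutativity of orthogonal summands across disjoint summands, guaranteed by (EXC4) and Theorem \ref{th:EXCprop} (vii), is what makes the rearrangements valid, but it must be invoked explicitly rather than assumed.
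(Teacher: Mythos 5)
Your proposal is correct and follows essentially the same route as the paper's proof: (i) via $e_i\oplus f_i\in\pi_i(E)$ and (CO1), (ii) via Lemma \ref{le:co}(ii), (iii) via Lemma \ref{le:DisjointPiXi}(i) for the cross terms $e_i\oplus f_j$ ($i\neq j$) followed by two applications of (CO2), and (iv) via a double application of Lemma \ref{le:oplusdist} with the off-diagonal terms $e_i\oplus f_j=e_i\vee f_j$ absorbed by the diagonal ones. The only difference is the immaterial one of which index you sum over first in (iii).
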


\begin{proof}
Since $e_i$, $f_i$ belong to $\pi_i(E)$ for every $i\in I$, so does $e_i
\oplus f_i$. Thus (i) follows directly from (CO1) and the definition of
$\GEX$-orthogonality, and (ii) is implied by Lemma \ref{le:co} (ii).

(iii) Put $e:=\oplus_{i\in I} e_i=\bigvee_{i\in I} e_i$ and $f:=
\oplus_{i\in I}f_i=\bigvee_{i\in I} f_i$. By hypotheses $e_i\oplus f_i$
exists for every $i\in I$, and for $i\not=j$, $e_i\oplus f_j$ also exists
by Lemma \ref{le:DisjointPiXi} (i). Applying (CO2) we find that $e_i\oplus f$
exists for all $i\in I$, and applying (CO2) once more we conclude that
$e\oplus f$ exists too.

(iv) As $e\oplus f$ exists, so does $e\sb{i}\oplus f$ for every $i\in I$,
and therefore by Lemma \ref{le:oplusdist}, $e\sb{i}\oplus f=\bigvee\sb{j\in I}
(e\sb{i}\oplus f\sb{j})$. Therefore a second application of Lemma
\ref{le:oplusdist} yields
\setcounter{equation}{0}
\begin{equation} \label{eq:p'wisedisj01}
e\oplus f=(\bigvee\sb{i\in I}e_i)\oplus f=\bigvee\sb{i\in I}(e\sb{i}
 \oplus f)=\bigvee\sb{i\in I}\bigvee\sb{j\in I}(e\sb{i}\oplus f\sb{j})
 =\bigvee\sb{i,j\in I}(e\sb{i}\oplus f\sb{j}).
\end{equation}
Also, by Lemma \ref{le:DisjointPiXi} (i), for all $i,j\in I$,
\begin{equation} \label{eq:p'wisedisj02}
i\not=j\Rightarrow e\sb{i}\oplus f\sb{j}=e\sb{i}\vee f\sb{j}
 \leq(e\sb{i}\oplus f\sb{i})\vee(e\sb{j}\oplus f\sb{j})\leq
 \bigvee\sb{i\in I}(e\sb{i}\oplus f\sb{i}).
\end{equation}
Combining (\ref{eq:p'wisedisj01}) and (\ref{eq:p'wisedisj02}), and using
(ii) above, we conclude that $e\oplus f=\bigvee\sb{i\in I}(e\sb{i}\oplus f
\sb{i})=\oplus\sb{i\in I}(e\sb{i}\oplus f\sb{i})$.
\end{proof}

\begin{theorem}\label{th:DisjSup}
If $E$ is a COGPEA and $(\pi_i)_{i\in I}$ is a pairwise disjoint family
of elements in $\GEX(E)$, then the supremum $\bigvee_{i\in I} \pi_i$ exists
in the boolean algebra $\GEX(E)$ and for every $e\in E$, \ $(\bigvee
\sb{i\in I}\pi\sb{i})e=\bigvee\sb{i\in I}\pi\sb{i}e=\oplus\sb{i\in I}
\pi\sb{i}e$.
\end{theorem}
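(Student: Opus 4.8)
The plan is to construct the supremum explicitly as a pointwise orthosum and then verify it lands in $\GEX(E)$. First I would fix $e\in E$ and observe that, because each $\pi_i e\in\pi_i(E)$ and the family $(\pi_i)_{i\in I}$ is pairwise disjoint, the family $(\pi_i e)_{i\in I}$ is $\GEX$-orthogonal in the sense of Definition \ref{df:GammaexOrghogonal}. By (CO1) together with Lemma \ref{le:co}(ii) it is therefore orthosummable and $\oplus_{i\in I}\pi_i e=\bigvee_{i\in I}\pi_i e$. This lets me define a map $\sigma\colon E\to E$ by $\sigma e:=\oplus_{i\in I}\pi_i e$; the two asserted equalities will be immediate once I show that $\sigma=\bigvee_{i\in I}\pi_i$ in the boolean algebra $\GEX(E)$.

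Next I would check that $\sigma\in\GEX(E)$. Condition (EXC3) is clear, since $\pi_i e\le e$ for each $i$ forces $\sigma e=\bigvee_i\pi_i e\le e$. For (EXC2), I would compute $\pi_i(\sigma e)$ using Theorem \ref{th:PtwisePi}(i) and the identity $\pi_i\circ\pi_j=\pi_i\wedge\pi_j$ from Theorem \ref{th:boolalg}: since the $\pi_j$ are pairwise disjoint, $\pi_i\circ\pi_j=0$ for $j\ne i$, so only the $j=i$ term survives and $\pi_i(\sigma e)=\pi_i e$; hence $\sigma(\sigma e)=\oplus_i\pi_i(\sigma e)=\oplus_i\pi_i e=\sigma e$.

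The main work is (EXC1) and (EXC4). For (EXC1), assuming $e\oplus f$ exists, each $\pi_i$ being a GPEA-endomorphism gives that $\pi_i e\oplus\pi_i f=\pi_i(e\oplus f)$ exists with $\pi_i e,\pi_i f\in\pi_i(E)$; applying Theorem \ref{th:COGPEAp'wisedisj}(iii),(iv) with $e_i:=\pi_i e$ and $f_i:=\pi_i f$ then yields that $\sigma e\oplus\sigma f=(\oplus_i\pi_i e)\oplus(\oplus_i\pi_i f)$ exists and equals $\oplus_i(\pi_i e\oplus\pi_i f)=\oplus_i\pi_i(e\oplus f)=\sigma(e\oplus f)$. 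For (EXC4), suppose $\sigma e=e$ and $\sigma f=0$; positivity forces $\pi_i f=0$ for every $i$, and since $\pi_i(\pi_i e)=\pi_i e$, condition (EXC4) for each $\pi_i$ gives $\pi_i e\perp f$. Because $(\pi_i e)_{i\in I}$ is $\GEX$-orthogonal and each $f\oplus\pi_i e$ (resp.\ $\pi_i e\oplus f$) exists, (CO2) guarantees that $f\oplus e$ and $e\oplus f$ both exist, where $e=\oplus_i\pi_i e$. Two applications of Lemma \ref{le:oplusdist} then express $e\oplus f=\bigvee_i(\pi_i e\oplus f)$ and $f\oplus e=\bigvee_i(f\oplus\pi_i e)$; as $\pi_i e\perp f$ makes these suprema agree term by term, I conclude $e\perp f$. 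This is the step I expect to be the most delicate, since it is where central orthocompleteness (both CO1 and CO2) and the distributivity of Lemma \ref{le:oplusdist} are genuinely needed to pass from the pointwise orthogonality $\pi_i e\perp f$ to $e\perp f$.

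Finally I would identify $\sigma$ as the supremum. Since $\sigma e=\bigvee_i\pi_i e\ge\pi_i e$ for all $e$, the order characterization in Theorem \ref{th:boolalg} gives $\pi_i\le\sigma$, so $\sigma$ is an upper bound; and if $\tau\in\GEX(E)$ satisfies $\pi_i\le\tau$ for all $i$, then $\pi_i e\le\tau e$ for all $e$, whence $\sigma e=\bigvee_i\pi_i e\le\tau e$ and $\sigma\le\tau$. Thus $\bigvee_{i\in I}\pi_i$ exists in $\GEX(E)$ and equals $\sigma$, and for every $e\in E$ we have $(\bigvee_i\pi_i)e=\sigma e=\bigvee_i\pi_i e=\oplus_i\pi_i e$, as required.
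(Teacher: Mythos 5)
Your proposal is correct and follows essentially the same route as the paper's own proof: define $\sigma e:=\oplus_{i\in I}\pi_i e=\bigvee_{i\in I}\pi_i e$ via (CO1) and Lemma \ref{le:co}(ii), verify (EXC1) through Theorem \ref{th:COGPEAp'wisedisj}, (EXC2) through Theorem \ref{th:PtwisePi}(i) and pairwise disjointness, (EXC4) through (CO2), and then identify $\sigma$ as the supremum by the pointwise order characterization. Your treatment of (EXC4) is in fact slightly more complete than the paper's, which passes from $\pi_i e\perp f$ to $e\perp f$ by citing (CO2) alone, whereas you correctly note that (CO2) only yields existence of both orthosums and that Lemma \ref{le:oplusdist} is needed to see they coincide.
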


\begin{proof}
Let $e,f\in E$ and $i,j\in I$. The family $(\pi_i)_{i\in I}$ is pairwise
disjoint and $\pi_i(\pi_i e)=\pi_i e$ for every $i\in I$, whence $(\pi_i e)
_{i\in I}$ is a $\GEX$-orthogonal family in $E$. Thus by (CO1) $(\pi_i e)
_{i\in I}$ is orthosummable with $\oplus_{i\in I}\pi_i e=\bigvee_{i\in I}
\pi_i e$ (Lemma \ref{le:co} (ii)). We define $\pi\colon E\to E$ by $\pi e
:=\bigvee_{i\in I}\pi_i e=\oplus_{i\in I}\pi_i e$. It will be sufficient to
prove that $\pi$ is in $\GEX(E)$ and that it is the supremum of $(\pi_i)_{i\in I}$
in $\GEX(E)$.

Suppose $e\oplus f$ exists, so that $\pi\sb{i}(e\oplus f)=\pi\sb{i}e
\oplus\pi\sb{i}f$ for all $i\in I$. In Theorem \ref{th:COGPEAp'wisedisj},
put $e\sb{i}:=\pi\sb{i}e$ and $f\sb{i}:=\pi\sb{i}f$ for all $i\in I$ to
infer that $\pi e\oplus \pi f$ exists and
\[
\pi e\oplus \pi f=(\oplus\sb{i}\pi\sb{i}e)\oplus(\oplus\sb{i}\pi\sb{i}f)
 =\oplus\sb{i}(\pi\sb{i}e\oplus\pi\sb{i}f)=\oplus\sb{i}(\pi\sb{i}(e\oplus f))
 =\pi(e\oplus f),
\]
which proves that $\pi$ satisfies (EXC1). We also have $\pi_i(\pi e)=\pi_i
\bigvee_{j\in I}\pi_j e=\bigvee_{j\in I}\pi_i\pi_j e=\pi_i e$ by Theorem
\ref{th:PtwisePi} (i), whence $\pi(\pi e)=\bigvee_{i\in I}\pi_i(\pi e)=
\bigvee_{i\in I}\pi_i e=\pi e$, proving (EXC2). Moreover, as $\pi_i e
\leq e$ for all $i\in I$, it follows that $\pi e=\bigvee_{i\in I}\pi_i e
\leq e$ and therefore (EXC3) holds. To prove (EXC4), suppose that $\pi e
=e$ and $\pi f=0$. Then $\bigvee_{i\in I}\pi_i f=0$, so $\pi_i f=0$ for
all $i\in I$. As $\pi_i(\pi_i e)=\pi_i e$, (EXC4) implies that $\pi_i e
\perp f$ for every $i\in I$. But then, by (CO2), $e=\pi e\perp f$, and
(EXC4) holds for $\pi$ too.

Evidently, $\pi_i e\leq\pi e$ for every $e\in E$, whence $\pi_i\leq\pi$
for all $i\in I$. Also, if $\pi_i\leq\xi\in\GEX(E)$ for all $i\in I$,
then $\pi_i e\leq\xi e$, so $\pi e=\bigvee_{i\in I}\pi_i e\leq\xi e$ for
all $e\in E$ and thus $\pi\leq\xi$. So $\pi=\bigvee_{i\in I}\pi_i$.
\end{proof}

Since a boolean algebra is complete iff every pairwise disjoint subset
has a supremum, Theorem \ref{th:DisjSup} has the following corollary.

\begin{corollary} \label{co:completeboo}
The exocenter $\GEX(E)$ of a COGPEA $E$ is a complete boolean algebra.
\end{corollary}

We may now extend Theorem \ref{th:DisjSup} in the same way as in
\cite[Theorem 6.9]{ExoCen} for an arbitrary family $(\pi_i)_{i\in I}$
in the complete boolean algebra $\GEX(E)$.

\begin{theorem}\label{th:arbp'wisesup}
Suppose that $E$ is a COGPEA, let $(\pi_i)_{i\in I}$ be a family in $\GEX(E)$,
and let $e\in E$. Then{\rm:} {\rm (i)} $\bigvee_{i\in I}\pi_i e$ exists in $E$
and $(\bigvee_{i\in I}\pi_i)e=\bigvee_{i\in I}\pi_i e$.
{\rm (ii)} If $I\not=\emptyset$, then $\bigwedge_{i\in I}\pi_i e$ exists in
$E$ and $(\bigwedge_{i\in I}\pi_i)e=\bigwedge_{i\in I}\pi_i e$.
\end{theorem}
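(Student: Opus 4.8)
The plan is to reduce the general-family statement to the pairwise-disjoint case already handled by Theorem~\ref{th:DisjSup}, using the standard disjointification trick available in the complete boolean algebra $\GEX(E)$ (Corollary~\ref{co:completeboo}). For part~(i), I would first dispose of the trivial case $I=\emptyset$, where both sides are $0$. Assuming $I\neq\emptyset$, fix a well-ordering of $I$ and set, for each $i\in I$, the element $\sigma_i:=\pi_i\wedge\bigl(\bigvee_{j<i}\pi_j\bigr)'$ in the complete boolean algebra $\GEX(E)$; these exist because arbitrary suprema and complements exist there. The family $(\sigma_i)_{i\in I}$ is then pairwise disjoint, and a routine boolean computation gives $\bigvee_{i\in I}\sigma_i=\bigvee_{i\in I}\pi_i=:\pi$, with $\sigma_i\leq\pi_i$ for each $i$. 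Applying Theorem~\ref{th:DisjSup} to $(\sigma_i)_{i\in I}$ yields that $\pi e=\bigvee_{i\in I}\sigma_i e$ exists in $E$.

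The remaining work for~(i) is to replace the $\sigma_i e$ by the $\pi_i e$. On one hand $\sigma_i\leq\pi_i$ gives $\sigma_i e\leq\pi_i e\leq\pi e$ for all $i$ (using Theorem~\ref{th:boolalg} and the order characterization from Lemma~\ref{le:circ}), so $\bigvee_{i\in I}\pi_i e\leq\pi e$, while the reverse inequality $\pi e=\bigvee_{i\in I}\sigma_i e\leq\bigvee_{i\in I}\pi_i e$ is immediate once the right-hand supremum is known to exist. To guarantee existence of $\bigvee_{i\in I}\pi_i e$ and to pin it down, I would note that each $\pi_i e\leq\pi e$ and that $\pi e$ is already an upper bound lying in $\pi(E)$; since the $\sigma_i e$ have supremum $\pi e$ and $\sigma_i e\leq\pi_i e\leq\pi e$, the family $(\pi_i e)$ is squeezed and shares the supremum $\pi e$. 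This gives $(\bigvee_{i\in I}\pi_i)e=\bigvee_{i\in I}\pi_i e$, as required.

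For part~(ii), with $I\neq\emptyset$, the natural move is to dualize via complementation in $\GEX(E)$: write $\bigwedge_{i\in I}\pi_i=\bigl(\bigvee_{i\in I}\pi_i{}'\bigr)'$. Applying part~(i) to the family $(\pi_i{}')_{i\in I}$ gives $\bigl(\bigvee_{i\in I}\pi_i{}'\bigr)e=\bigvee_{i\in I}\pi_i{}'e$, and the challenge is to translate this back through the prime operation using the pointwise relation $\pi e\oplus\pi\,'e=e$ from Lemma~\ref{le:piprime} together with the interaction of $\wedge$ with suprema. Here I would exploit the finite pointwise compatibility already proved in Theorem~\ref{th:finitepointwisesup/inf} as a model, extending the De~Morgan passage to the infinite setting by showing directly that the element $\bigwedge_{i\in I}\pi_i e$ exists and is fixed by $\bigwedge_{i\in I}\pi_i$. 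The cleanest route is probably to verify that $f:=\bigl(\bigwedge_{i\in I}\pi_i\bigr)e$ satisfies $f\leq\pi_i e$ for all $i$ (from $\bigwedge_j\pi_j\leq\pi_i$) and that any common lower bound $g\leq\pi_i e$ for all $i$ obeys $g=\pi_i g$ (Theorem~\ref{th:EXCprop}~(iii)), forcing $g\leq f$.

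The main obstacle I anticipate is part~(ii): the complement operation in $\GEX(E)$ does not obviously commute with the \emph{pointwise} evaluation $\pi\mapsto\pi e$, since $(\pi\,')e=\pi\,'e=e\ominus\pi e$ is an orthosupplement in $E$ rather than a lattice complement, and infinite suprema in $E$ need not distribute over these differences without invoking central orthocompleteness. The key leverage will be the nonemptiness hypothesis, which (via Theorem~\ref{th:EXCprop}~(iii)) lets me assert that every lower bound of $(\pi_i e)_{i\in I}$ is automatically fixed by each $\pi_i$ and hence by their meet, converting the infimum computation in $E$ into one controlled entirely by the boolean structure of $\GEX(E)$.
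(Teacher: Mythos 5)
Your part (i) is correct and is essentially the paper's own route: the paper gives no explicit proof here but defers to \cite[Theorem 6.9]{ExoCen}, i.e., to extending Theorem \ref{th:DisjSup} by exactly your disjointification-and-squeeze device, carried out in the complete boolean algebra $\GEX(E)$ supplied by Corollary \ref{co:completeboo}.

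Part (ii), however, has a genuine gap at its pivotal step. Writing $\mu:=\bigwedge_{i\in I}\pi_i$ and $f:=\mu e$, you correctly note that $f\leq\pi_i e$ for all $i$, and that any lower bound $g$ of $(\pi_i e)_{i\in I}$ satisfies $g=\pi_i g$ for every $i$ by Theorem \ref{th:EXCprop} (iii); but you then assert that this ``forces'' $g\leq f$, i.e., that an element fixed by every $\pi_i$ is fixed by their meet. For finite families this is trivial, since $\pi\wedge\xi=\pi\circ\xi$; for an infinite family, though, $\mu$ exists only by the abstract completeness of $\GEX(E)$ (Corollary \ref{co:completeboo}), and nothing proved so far relates the pointwise action of $\mu$ to the pointwise actions of the $\pi_i$ --- that relation is precisely what part (ii) claims, so invoking it at this point is circular. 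The step can be closed with the tools you already listed, but they must actually be assembled: from $g=\pi_i g$ one gets $(\pi_i)'g=(\pi_i g)/g=0$ for every $i$; applying your part (i) to the family $((\pi_i)')_{i\in I}$ \emph{at the element $g$} (not at $e$) gives $\bigl(\bigvee_{i\in I}(\pi_i)'\bigr)g=\bigvee_{i\in I}(\pi_i)'g=0$; since $\mu\,'=\bigvee_{i\in I}(\pi_i)'$ by De Morgan in $\GEX(E)$, Lemma \ref{le:piprime} yields $g=\mu g\oplus\mu\,'g=\mu g$; finally, $I\neq\emptyset$ gives $g\leq\pi_{i_0}e\leq e$, and $\mu$ is a GPEA-endomorphism, hence order preserving, so $g=\mu g\leq\mu e=f$. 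Your first suggested route for (ii) --- applying part (i) to the complements at $e$ and ``translating back through the primes'' --- stalls at the same kind of point: it needs the order-reversing identity $e\backslash\bigvee_{i\in I}(\pi_i)'e=\bigwedge_{i\in I}\bigl(e\backslash(\pi_i)'e\bigr)$, which is not automatic either (it can be extracted from Lemma \ref{le:SlashProps} (ii) and (iv)). In either form, this connecting argument is the real content of part (ii), and it is the piece missing from your proposal.
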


The proof of the next theorem, which extends Theorem \ref{th:finitecartesianprod}
to arbitrary direct sums, is analogous to the proof of \cite[Theorem 6.10]{ExoCen}.

\begin{theorem}\label{th:arbCartProd}
Suppose that $E$ is a COGPEA, let $(\pi_i)_{i\in I}$ be a pairwise disjoint family
in the complete boolean algebra $\GEX(E)$ with $\pi:=\bigvee_{i\in I}\pi_i$, and
consider the cartesian product $X:=${\huge$\times$}$_{i\in I}\pi_i(E)$. Then each
element in $X$ is a $\GEX$-orthogonal {\rm(}hence orthosummable{\rm)} family
$(e_i)_{i\in I}$ and $\oplus_{i\in I} e_i=\bigvee_{i\in I} e_i$. Define the
mapping $\Phi:X\rightarrow\pi(E)$ by $\Phi((e_i)_{i\in I}):=\oplus_{i\in I} e_i$.
Then $\Phi$ is a GPEA-isomorphism of $X$ onto $\pi(E)$ and if $e\in\pi(E)$,
then $\Phi^{-1} e=(\pi_i e)_{i\in I}\in X$.
\end{theorem}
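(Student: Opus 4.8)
The plan is to establish Theorem~\ref{th:arbCartProd} by first verifying the asserted structural facts about the product $X$, then constructing the map $\Phi$ and its inverse and checking they are mutually inverse GPEA-morphisms. Each element of $X$ is a family $(e_i)_{i\in I}$ with $e_i\in\pi_i(E)$, hence $\pi_i e_i=e_i$; since the $\pi_i$ are pairwise disjoint, this is precisely a $\GEX$-orthogonal family in the sense of Definition~\ref{df:GammaexOrghogonal}. By (CO1) together with Lemma~\ref{le:co}~(ii), such a family is orthosummable with $\oplus_{i\in I}e_i=\bigvee_{i\in I}e_i$, which is the first claim. Moreover, since each $e_i=\pi_i e_i\le\pi e$ (as $\pi_i\le\pi$), the orthosum lies below $\pi e$, so in fact $\oplus_{i\in I}e_i\in\pi(E)$ because $\pi(E)$ is sup/inf-closed (Theorem~\ref{th:EXCprop}~(vi)); this shows $\Phi$ does land in $\pi(E)$.

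Next I would verify that $\Phi$ is a GPEA-morphism. Given two families $(e_i)_{i\in I}$, $(f_i)_{i\in I}$ in $X$ with $e_i\oplus f_i$ existing pointwise, Theorem~\ref{th:COGPEAp'wisedisj} is exactly the tool that applies: its parts~(iii) and~(iv) give that $(\oplus_{i}e_i)\oplus(\oplus_{i}f_i)$ exists and equals $\oplus_{i}(e_i\oplus f_i)$. Translating, whenever the pointwise sum exists in $X$, $\Phi$ of the sum equals the sum of the $\Phi$-values, so $\Phi$ respects $\oplus$. For the inverse direction I would define $\Psi\colon\pi(E)\to X$ by $\Psi(e):=(\pi_i e)_{i\in I}$; this is well-defined into $X$ since $\pi_i e\in\pi_i(E)$, and it is a GPEA-morphism because each $\pi_i$ is an endomorphism (EXC1). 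Verifying $\Psi$ carries $\oplus$ to the pointwise operation on $X$ is routine from (EXC1).

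The remaining task is to confirm $\Phi$ and $\Psi$ are mutually inverse. For $\Phi\circ\Psi=\mathrm{id}_{\pi(E)}$ I would take $e\in\pi(E)$, so $\pi e=e$, and compute $\Phi(\Psi(e))=\oplus_{i\in I}\pi_i e$; by Theorem~\ref{th:arbp'wisesup}~(i) this equals $(\bigvee_{i\in I}\pi_i)e=\pi e=e$. For $\Psi\circ\Phi=\mathrm{id}_X$ I would take $(e_j)_{j\in I}\in X$ and compute the $i$-th coordinate of $\Psi(\Phi((e_j)_j))$, namely $\pi_i(\oplus_{j\in I}e_j)$. By Theorem~\ref{th:PtwisePi}~(iii) this equals $\oplus_{j\in I}\pi_i e_j$, and since $e_j\in\pi_j(E)$ we have $\pi_i e_j=\pi_i(\pi_j e_j)=(\pi_i\wedge\pi_j)e_j$, which is $0$ for $j\ne i$ (disjointness) and $e_i$ for $j=i$; hence the orthosum collapses to $e_i$. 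Thus both composites are identities, $\Psi=\Phi^{-1}$, and the final formula $\Phi^{-1}e=(\pi_i e)_{i\in I}$ is just the definition of $\Psi$.

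The step I expect to be the main obstacle is the existence portion of the morphism property---showing $(\oplus_i e_i)\oplus(\oplus_i f_i)$ exists and equals the orthosum of the componentwise sums---since that is where the full strength of central orthocompleteness (both (CO1) and the sum-compatibility condition (CO2)) is genuinely needed. Fortunately Theorem~\ref{th:COGPEAp'wisedisj} has already packaged this delicate argument, so the proof here reduces to invoking it correctly; the rest is bookkeeping with the pointwise identities $\pi_i e_j=(\pi_i\wedge\pi_j)e_j$ and the infinite-supremum version of pairwise-disjoint additivity from Theorem~\ref{th:arbp'wisesup}.
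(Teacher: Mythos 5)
Your proof is correct and follows essentially the approach the paper intends: the paper omits this proof, noting only that it extends the finite case (Theorem \ref{th:finitecartesianprod}) and is analogous to \cite[Theorem 6.10]{ExoCen}, and your argument is precisely that infinitary extension --- constructing $\Psi(e):=(\pi_i e)_{i\in I}$, invoking Theorem \ref{th:COGPEAp'wisedisj} for the morphism property of $\Phi$, Theorem \ref{th:DisjSup} (equivalently Theorem \ref{th:arbp'wisesup}(i)) for $\Phi\circ\Psi=\mathrm{id}_{\pi(E)}$, and Theorem \ref{th:PtwisePi}(iii) together with $\pi_i e_j=(\pi_i\wedge\pi_j)e_j$ for $\Psi\circ\Phi=\mathrm{id}_X$, just as the paper does for finitely many summands. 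The only blemish is the garbled phrase ``$e_i=\pi_i e_i\le\pi e$'' (no element $e$ has been introduced at that point); what you actually need, and in effect use, is $e_i\in\pi_i(E)\subseteq\pi(E)$ by Lemma \ref{le:circ}(ii), after which sup-closedness of $\pi(E)$ (Theorem \ref{th:EXCprop}(vi)) yields $\oplus_{i\in I}e_i\in\pi(E)$.
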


\begin{corollary}\label{cor:cop}
Let $E$ be a COGPEA, let $(p_i)_{i\in I}$ be a nonempty $\GEX$-orthogonal family
in $E$ with $p:=\bigvee_{i\in I}p_i$, let $(\pi_i)_{i\in I}$ be a corresponding
family of pairwise disjoint mappings in $\GEX(E)$ such that $p_i=\pi_i p_i$
for all $i\in I$, and let $X$ be the cartesian product $X:=${\huge$\times$}$
_{i\in I} E[0,p_i]$. Then{\rm: (i)} If $(e_i)_{i\in I}\in X$, then $e_i=\pi_i e
_i$ for all $i\in I$, so $(e_i)_{i\in I}$ is a $\GEX$-orthogonal, hence
orthosummable family in $E$. {\rm(ii)} If $(e_i)_{i\in I}\in X$ with $e:=
\oplus_{i\in I} e_i$, then $\pi_i e=e_i$ for all $i\in I$. In particular,
$\pi_i p=p_i$ for all $i\in I$. {\rm(iii)} If $e\in E[0,p]$, then $\pi_i e
=e\wedge p_i$ for all $i\in I$, $(\pi_i e)_{i\in I}\in X$ and $\bigvee_
{i\in I}\pi_i e=e$. {\rm(iv)} The mapping $\Phi: X\rightarrow E[0,p]$ defined
by $\Phi((e_i)_{i\in I}):=\oplus_{i\in I} e_i=\bigvee_{i\in I} e_i$ is a
PEA-isomorphism of $X$ onto $E[0,p]$ and $\Phi^{-1} (e)=(\pi_i e)_i\in I\in X$
for all $e\in E[0,p]$.
\end{corollary}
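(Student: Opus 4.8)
The plan is to derive the corollary from Theorem~\ref{th:arbCartProd}, using parts (i)--(iii) to set up the two mutually inverse maps and then identifying $\Phi$ with a restriction of the isomorphism furnished by that theorem. Throughout, write $\Psi(e):=(\pi_i e)_{i\in I}$ for the candidate inverse. Before starting, observe that since $(p_i)_{i\in I}$ is $\GEX$-orthogonal and $p=\bigvee_{i\in I}p_i$ is assumed to exist, Lemma~\ref{le:co}(ii) (together with (CO1)) gives that the family is orthosummable with $\oplus_{i\in I}p_i=\bigvee_{i\in I}p_i=p$; in particular $(p_i)_{i\in I}\in X$ and $\Phi((p_i)_{i\in I})=p$.

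For (i), if $(e_i)_{i\in I}\in X$ then $e_i\le p_i$, and since $p_i=\pi_i p_i\in\pi_i(E)$ and $\pi_i(E)$ is an ideal (Theorem~\ref{th:EXCprop}(v)), we get $e_i\in\pi_i(E)$, i.e. $e_i=\pi_i e_i$. As the $\pi_i$ are pairwise disjoint, Definition~\ref{df:GammaexOrghogonal} shows $(e_i)_{i\in I}$ is $\GEX$-orthogonal, hence orthosummable by (CO1). For (ii), let $e:=\oplus_{i\in I}e_i$. Since $\pi_i$ preserves orthosums (Theorem~\ref{th:PtwisePi}(iii)), $\pi_i e=\oplus_{j\in I}\pi_i e_j$, and by Theorem~\ref{th:boolalg} and $e_j=\pi_j e_j$ (part (i)), $\pi_i e_j=\pi_i(\pi_j e_j)=(\pi_i\wedge\pi_j)e_j$, which is $0$ for $j\ne i$ and $e_i$ for $j=i$; thus $\pi_i e=e_i$. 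Taking $e_i:=p_i$ (so $e=p$) yields $\pi_i p=p_i$. This already shows $\Psi\circ\Phi=\mathrm{id}_X$.

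For (iii), let $e\le p$. By Theorem~\ref{th:EXCprop}(iv) and (ii), $\pi_i e=e\wedge\pi_i p=e\wedge p_i\le p_i$, so $(\pi_i e)_{i\in I}\in X$. Writing $\pi:=\bigvee_{i\in I}\pi_i$ (which exists by Theorem~\ref{th:DisjSup}), we have $\pi p=\bigvee_{i\in I}\pi_i p=\bigvee_{i\in I}p_i=p$, so $p\in\pi(E)$; since $\pi(E)$ is an ideal and $e\le p$, this gives $\pi e=e$. Then Theorem~\ref{th:arbp'wisesup}(i) yields $e=\pi e=(\bigvee_{i\in I}\pi_i)e=\bigvee_{i\in I}\pi_i e$. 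In particular $\Phi(\Psi(e))=\bigvee_{i\in I}\pi_i e=e$, so $\Phi\circ\Psi=\mathrm{id}_{E[0,p]}$.

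Parts (i)--(iii) show that $\Phi:X\to E[0,p]$ and $\Psi:E[0,p]\to X$ are mutually inverse bijections with $\Phi^{-1}=\Psi$, so it remains to check (iv), that $\Phi$ is a PEA-isomorphism. One route verifies the morphism properties directly: if $(e_i)\oplus(f_i)$ exists in the product PEA $X$ (equivalently $e_i\oplus f_i\le p_i$ for all $i$), then Theorem~\ref{th:COGPEAp'wisedisj}(iv) gives $\Phi((e_i))\oplus\Phi((f_i))=\oplus_i(e_i\oplus f_i)=\Phi((e_i\oplus f_i))$, so $\Phi$ preserves orthosums; conversely, if $e\oplus g$ exists in $E[0,p]$ then (EXC1) gives $\pi_i(e\oplus g)=\pi_i e\oplus\pi_i g$ with $\pi_i e\oplus\pi_i g=\pi_i(e\oplus g)\le\pi_i p=p_i$, so $\Psi=\Phi^{-1}$ preserves orthosums as well; since $\Phi$ also carries the top $(p_i)$ of $X$ to the top $p$ of $E[0,p]$, $\Phi$ is a PEA-isomorphism. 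Alternatively, and more in the spirit of a corollary, one identifies $X$ with the interval $X'[0,(p_i)]$ in $X':=\times_{i\in I}\pi_i(E)$ and $E[0,p]$ with $\pi(E)[0,p]$ (Theorem~\ref{th:mis}(iv), Proposition~\ref{pr:interval}); the GPEA-isomorphism $\Phi':X'\to\pi(E)$ of Theorem~\ref{th:arbCartProd} sends $(p_i)$ to $p$, so its restriction to the principal interval below $(p_i)$ is a PEA-isomorphism onto $E[0,p]$, and this restriction is exactly $\Phi$. The main obstacle is the second (``reflecting'') half of the morphism check---that an orthosum existing in $E[0,p]$ forces the coordinatewise orthosum to exist in $X$---which is precisely what the boundedness estimate $\pi_i e\oplus\pi_i g\le p_i$ (resting on $\pi_i p=p_i$ from (ii)) secures.
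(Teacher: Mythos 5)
Your proposal is correct, and it follows what is evidently the paper's intended route: the paper states this corollary without proof, immediately after Theorem \ref{th:arbCartProd}, as a direct consequence of that theorem, which is precisely your identification of $\Phi$ with the restriction of $\Phi'\colon \times_{i\in I}\pi_i(E)\to\pi(E)$ to the interval below $(p_i)_{i\in I}$ (using Theorem \ref{th:mis}(iv) to identify $\pi(E)[0,p]$ with $E[0,p]$). Your direct verification of (i)--(iv), including the boundedness estimate $\pi_i e\oplus\pi_i g\le \pi_i p=p_i$ that makes $\Phi^{-1}$ a PEA-morphism, correctly fills in the details the paper leaves implicit.
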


The following theorem can also be proved using the same arguments as in the proof
of \cite[Theorem 6.11]{ExoCen}

\begin{theorem}\label{th:COGPEAcenter}
Suppose that $E$ is a COGPEA and $(c_i)_{i\in I}$ is a family of elements in
the center $\Gamma(E)$ of $E$. Then{\rm:} {\rm(i)} If $I\not=\emptyset$, then
$c:=\bigwedge_{i\in I} c_i$ exists in $E$, $c\in\Gamma(E)$, $\pi_c=\bigwedge
_{i\in I}\pi_{c_i}$ and $c$ is the infimum of $(c_i)_{i\in I}$ as calculated
in $\Gamma(E)$. {\rm(ii)} If $(c_i)_{i\in I}$ is bounded above in $E$, then
$d:=\bigvee_{i\in I} c_i$ exists in $E$, $d\in\Gamma(E)$, $\pi_d=\bigvee
_{i\in I}\pi_{c_i}$ and $d$ is the supremum of $(c_i)_{i\in I}$ as calculated
in $\Gamma(E)$.
\end{theorem}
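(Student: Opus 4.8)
The plan is to push the entire problem into the exocenter and exploit that, for a COGPEA, $\GEX(E)$ is a complete boolean algebra (Corollary \ref{co:completeboo}), so that both $\bigwedge_{i\in I}\pi_{c_i}$ and $\bigvee_{i\in I}\pi_{c_i}$ exist there. The three tools I will lean on are the pointwise formulas $(\bigvee_i\pi_i)e=\bigvee_i\pi_i e$ and $(\bigwedge_i\pi_i)e=\bigwedge_i\pi_i e$ of Theorem \ref{th:arbp'wisesup}; the identity $\pi_{c_i}e=e\wedge c_i$ of Theorem \ref{th:ceprop}(i); and the recognition criterion of Theorem \ref{th:ceprop}(viii), namely that an $h\in E$ is central with $\pi_h=\pi$ as soon as $\pi e=e\wedge h$ for every $e\in E$. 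Once a candidate element is produced and this last identity verified, centrality, the formula for $\pi_h$, and the claim that the meet/join is computed inside $\Gamma(E)$ all follow at once; the last point holds because an infimum (resp.\ supremum) taken in $E$ that happens to land in $\Gamma(E)$ is automatically the infimum (resp.\ supremum) in the sub-poset $\Gamma(E)$.

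For part (i), put $\pi:=\bigwedge_{i\in I}\pi_{c_i}$, so that $\pi e=\bigwedge_{i\in I}(e\wedge c_i)$ for every $e$ by Theorem \ref{th:arbp'wisesup}(ii) and Theorem \ref{th:ceprop}(i). Fixing any $i_0\in I$ (this is where $I\neq\emptyset$ enters) and evaluating at $e=c_{i_0}$ gives $\pi c_{i_0}=\bigwedge_i(c_{i_0}\wedge c_i)$; since the families $(c_{i_0}\wedge c_i)_i$ and $(c_i)_i$ have exactly the same lower bounds, this infimum equals $\bigwedge_i c_i$, which therefore exists in $E$ and which I call $c$. A routine comparison of lower bounds then gives $\bigwedge_i(e\wedge c_i)=e\wedge c$, i.e.\ $\pi e=e\wedge c$, and Theorem \ref{th:ceprop}(viii) yields $c\in\Gamma(E)$ and $\pi_c=\pi=\bigwedge_i\pi_{c_i}$, with $c$ the infimum of $(c_i)$ in $\Gamma(E)$. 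Part (i) costs nothing beyond manipulations of infima, since meets distribute over meets automatically.

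For part (ii), put $\pi:=\bigvee_{i\in I}\pi_{c_i}$ and let $b$ be an upper bound of $(c_i)$. Evaluating the join pointwise (Theorem \ref{th:arbp'wisesup}(i)) gives $\pi b=\bigvee_i(b\wedge c_i)=\bigvee_i c_i$, since $c_i\le b$; hence $d:=\bigvee_i c_i$ exists and, being $\pi b$, lies in $\pi(E)$. For arbitrary $e$ one again has $\pi e=\bigvee_i(e\wedge c_i)\le e\wedge d$. The reverse inequality is the crux and the one genuinely non-formal step: it is precisely the join-infinite distributive law $e\wedge\bigvee_i c_i=\bigvee_i(e\wedge c_i)$, which fails in a general lattice. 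I expect to clear it using the endomorphism $\pi$ itself: because $d=\pi d$ (as $d\in\pi(E)$, by Theorem \ref{th:EXCprop}(v)) and $e\wedge d\le d=\pi d$, Theorem \ref{th:EXCprop}(iii) forces $e\wedge d=\pi(e\wedge d)$, whence $e\wedge d=\pi(e\wedge d)\le\pi e$ since $\pi$ is order-preserving and $e\wedge d\le e$. Thus $\pi e=e\wedge d$, and Theorem \ref{th:ceprop}(viii) delivers $d\in\Gamma(E)$ and $\pi_d=\bigvee_i\pi_{c_i}$, with $d$ the supremum of $(c_i)$ in $\Gamma(E)$.

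So the only real obstacle is the join-infinite distributivity in (ii); everything else is bookkeeping inside the complete boolean algebra $\GEX(E)$ together with its pointwise action on $E$. The decisive observation is that $\pi(E)$ is a down-set on which $\pi$ restricts to the identity, so every element below $d\in\pi(E)$ is already fixed by $\pi$ — and that is exactly what upgrades the trivial inequality $\pi e\le e\wedge d$ to the desired equality.
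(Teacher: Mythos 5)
Your proposal is correct, and it follows the route the paper itself intends: the paper gives no in-text proof of this theorem, deferring instead to the analogous result for centrally orthocomplete GEAs (\cite[Theorem 6.11]{ExoCen}), and the argument there is exactly the one you reconstruct --- completeness of the boolean algebra $\GEX(E)$ (Corollary \ref{co:completeboo}), the pointwise sup/inf formulas of Theorem \ref{th:arbp'wisesup}, the identity $\pi_{c_i}e=e\wedge c_i$, and the recognition criterion of Theorem \ref{th:ceprop}(viii).

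One point of rigor needs attention in part (ii): you write $\pi e\le e\wedge d$ and then reason about $\pi(e\wedge d)$ before the existence of the infimum $e\wedge d$ in $E$ has been established --- and that existence is part of what Theorem \ref{th:ceprop}(viii) requires, since ``$\pi e=e\wedge d$ for all $e$'' asserts in particular that these meets exist. The repair is already contained in your own decisive observation, phrased for arbitrary lower bounds instead of the not-yet-existing meet: first, $\pi e$ is a lower bound of $\{e,d\}$, since $\pi e\le e$ by (EXC3) and $\pi e\le d$ because $d$ is an upper bound of the family $(e\wedge c_i)_{i\in I}$ of which $\pi e$ is the least upper bound; second, any lower bound $x$ of $\{e,d\}$ satisfies $x\le d=\pi d$, hence $x=\pi x$ by Theorem \ref{th:EXCprop}(iii), hence $x=\pi x\le\pi e$ because $x\le e$ and $\pi$ preserves order. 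Together these show that $e\wedge d$ exists and equals $\pi e$, which is what Theorem \ref{th:ceprop}(viii) needs. Part (i) as you wrote it does not suffer from this issue, since there you correctly derive existence from the fact that $\bigwedge_{i\in I}(e\wedge c_i)$ exists and has the same lower bounds as the pair $\{e,c\}$.
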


The next theorem extends the results obtained for centrally orthocomplete
GEAs in \cite[Lemma 7.5, Theorem 7.6]{CenGEA}. Here we give a simplified
proof.

\begin{theorem}\label{th:largestandboo}
Let $E$ be a COGPEA. Then{\rm:}
{\rm(i)} There exists a largest element $u\in\Gamma(E)$ and
 $\Gamma(E)\subseteq \pi_u(E)=E[0,u]$.
{\rm(ii)} The center $\Gamma(E)$ is a complete boolean algebra.
\end{theorem}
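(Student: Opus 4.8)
The plan is to prove both parts using the central orthocompleteness of $E$ together with the lattice-theoretic facts already established for the center $\Gamma(E)$ and its isomorphic copy $\{\pi_c:c\in\Gamma(E)\}$ inside $\GEX(E)$.

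For part (i), the natural strategy is to produce $u$ as a supremum of \emph{all} central elements, realized through the exocenter where suprema are guaranteed to exist. First I would invoke Corollary \ref{co:completeboo} to know that $\GEX(E)$ is a complete boolean algebra, and then form $\varpi:=\bigvee\{\pi_c:c\in\Gamma(E)\}$ in $\GEX(E)$. The key step is to show that this join is actually attained by a single central element, i.e.\ that $\varpi(E)$ has a largest element which is central. To do this I would apply Theorem \ref{th:arbp'wisesup}(i) to the family $(\pi_c)_{c\in\Gamma(E)}$, which gives $\varpi e=\bigvee_{c}\pi_c e$ pointwise for each $e\in E$; in particular, using Theorem \ref{th:ceprop}(i) that $\pi_c e=e\wedge c$, I can identify $\varpi e$ explicitly. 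The candidate for $u$ is $\varpi$ applied to a suitable element, but since there need be no top element of $E$, the cleaner route is to recognize $\varpi=\pi_u$ for some $u\in\Gamma(E)$ via Corollary \ref{co:pic}: it suffices to exhibit a largest element of $\varpi(E)$. Here I would use Theorem \ref{th:COGPEAcenter}(ii): the family of all central elements, \emph{if bounded above}, has a central supremum; the obstacle is precisely the boundedness, so instead I would work internally to $\varpi(E)$. Concretely, $\Gamma(E)$ is $\GEX$-orthogonal-friendly, and by Lemma \ref{le:coce}(ii) any pairwise-disjoint subfamily of $\Gamma(E)$ is orthosummable; choosing a maximal pairwise-disjoint family $(c_i)_{i\in I}$ in $\Gamma(E)$ (Zorn) and setting $u:=\bigoplus_{i\in I}c_i=\bigvee_{i\in I}c_i$, which exists by (CO1), gives via Theorem \ref{th:COGPEAcenter} a central element $u$ whose associated $\pi_u=\bigvee_i\pi_{c_i}$ equals $\varpi$ by maximality. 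Then $\Gamma(E)\subseteq\pi_u(E)=E[0,u]$ follows because every $c\in\Gamma(E)$ satisfies $\pi_c\leq\pi_u$, hence $c\leq u$ by Theorem \ref{th:ceprop}(ix).

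For part (ii), once $u$ is in hand the center coincides with $\Gamma(\pi_u(E))=\Gamma(E)\cap\pi_u(E)=\Gamma(E)$ by Theorem \ref{th:mis}(vi), and $\pi_u(E)=E[0,u]$ is a PEA with top element $u$ by Proposition \ref{pr:interval}. The plan is then to transport the problem into the complete boolean algebra $\GEX(E)$ via the lattice isomorphism $c\mapsto\pi_c$ of Theorem \ref{th:centgea}(iii). By Theorem \ref{th:centgea}(iv), $\Gamma(E)$ is already a generalized boolean algebra; to upgrade it to a \emph{complete boolean} algebra I must supply a top element and arbitrary joins and meets. The top is $u$ from part (i), and completeness of meets and of bounded joins comes directly from Theorem \ref{th:COGPEAcenter}: since every family in $\Gamma(E)$ is now bounded above by $u$, both $\bigvee_{i\in I}c_i$ and $\bigwedge_{i\in I}c_i$ exist in $\Gamma(E)$ with $\pi_{\bigvee c_i}=\bigvee\pi_{c_i}$ and $\pi_{\bigwedge c_i}=\bigwedge\pi_{c_i}$.

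The remaining verification is that $\Gamma(E)$ with these operations is boolean, i.e.\ complemented and distributive. Distributivity is inherited from the generalized-boolean-algebra structure of Theorem \ref{th:centgea}(iv), and complementation of $c$ relative to the new top $u$ is given by $u\ominus c$: by Theorem \ref{th:ceprop}(x), $u\ominus c\in\Gamma(E)$ with $\pi_{u\ominus c}=\pi_u\wedge(\pi_c)'=\pi_u\wedge\pi_c{}'$, and since $\pi_c\leq\pi_u$ one checks $\pi_c\wedge\pi_{u\ominus c}=0$ and $\pi_c\vee\pi_{u\ominus c}=\pi_u$, i.e.\ $c\wedge(u\ominus c)=0$ and $c\vee(u\ominus c)=u$. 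I expect the main obstacle to be part (i), specifically establishing that the exocentral join $\varpi$ of all the $\pi_c$ is represented by an honest central element $u$ rather than merely an abstract element of $\GEX(E)$; the maximal-orthogonal-family argument, combined with Theorem \ref{th:COGPEAcenter}, is the crux, after which part (ii) is essentially a translation through the isomorphism $c\mapsto\pi_c$ into the already-known complete boolean algebra $\GEX(E)$.
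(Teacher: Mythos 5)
Your proposal is correct and follows essentially the same route as the paper: Zorn's lemma yields a maximal pairwise disjoint family $(c_i)_{i\in I}$ of nonzero central elements, Lemma \ref{le:coce} and (CO1) give $u:=\bigvee_{i\in I}c_i=\oplus_{i\in I}c_i$, Theorem \ref{th:COGPEAcenter}(ii) shows $u\in\Gamma(E)$, maximality forces $c\leq u$ for every $c\in\Gamma(E)$, and part (ii) reduces to the observation that a generalized boolean algebra with a largest element is a boolean algebra, complete by Theorem \ref{th:COGPEAcenter}. The only step you leave compressed is the maximality argument itself (your ``equals $\varpi$ by maximality''): the paper fills it by writing $c=(c\wedge u)\vee d$ with $d:=c\ominus(c\wedge u)\in\Gamma(E)$, noting $d\wedge u=0$ hence $d\wedge c_i=0$ for all $i$, so $d=0$ by maximality and $c=c\wedge u\leq u$ --- exactly the relative-complement device you already invoke in part (ii) --- while your preliminary detour through $\varpi:=\bigvee\{\pi_c:c\in\Gamma(E)\}$ in the complete boolean algebra $\GEX(E)$ is harmless but unnecessary.
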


\begin{proof}
(i) We apply Zorn's lemma to obtain a maximal pairwise disjoint family
of nonzero elements $(c_i)_{i\in I}\subseteq\Gamma(E)$. (Note that
$(c_i)_{i\in I}$ could be the empty family.) By Lemma \ref{le:coce},
$(c_i)_{i\in I}$ is $\GEX$-orthogonal, and since $E$ is a COGPEA,
$u:=\bigvee_{i\in I}c_i=\oplus_{i\in I}c_i$ exists in $E$. Thus
the family $(c_i)_{i\in I}$ is bounded above by $u$ in $E$, and we
infer from Theorem \ref{th:COGPEAcenter} (ii) that $u\in\Gamma(E)$.
Let $c\in \Gamma(E)$. Working in the generalized boolean algebra
$\Gamma(E)$ (Theorem \ref{th:centgea} (iv)), we have $c=(c\wedge u)
\vee d$, where $d:=c\ominus(c\wedge u)\in\Gamma(E)$. As $d\wedge u=0$
and $c\sb{i}\leq u$, it follows that $d\wedge c\sb{i}=0$ for all
$i\in I$, whence $d=0$ by the maximality of $(c_i)_{i\in I}$, and it
follows that $c=c\wedge u\leq u$. Consequently, $\pi\sb{c}\leq\pi\sb{u}$,
and therefore $c\in E[0,c]=\pi_c(E)\subseteq \pi_u(E)=E[0,u]$.

(ii) Since the generalized boolean algebra $\Gamma(E)$ has a unit
(largest element), it is a boolean algebra, and it is complete by
Theorem \ref{th:COGPEAcenter}.
\end{proof}

\begin{theorem}\label{th:centerless}
Let $u$ be the unit {\rm(}largest element{\rm)} in the complete boolean
algebra $\Gamma(E)$ of the COGPEA $E$. Then{\rm:}
\begin{enumerate}
\item The PEA $E[0,u]=\pi_u(E)$ is a direct summand of $E$ and the
 complementary direct summand is $(\pi_u)'(E)=\{f\in E:f\perp u\}=
 \{e\ominus(u\wedge e):e\in E\}$.
\item The center of $E[0,u]$ is $\Gamma(E)$, the complementary direct
 summand $(\pi_u)'(E)$ is centerless {\rm(}i.e., its center is $\{ 0\}${\rm)},
 and no nonzero direct summand of $(\pi_u)'(E)$ is a PEA.
\item If $E=H\oplus K$ where the direct summand $H$ is a PEA and $K$ is
 centerless, then $H=E[0,u]$ and $K=\{ f\in E:f\perp u\}$.
\end{enumerate}
\end{theorem}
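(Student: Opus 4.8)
The plan is to read off parts (i) and the two center computations in (ii) from the machinery already assembled, and to reserve real work for the ``no PEA summand'' assertion in (ii) and the uniqueness in (iii), both of which hinge on recognizing a PEA direct summand by a \emph{central} top element (Corollary \ref{co:pic}) and on transporting direct summands along the exocenter (Theorem \ref{th:mis}).

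For (i), since $u\in\Gamma(E)$ by Theorem \ref{th:largestandboo}(i), Theorem \ref{th:centr} gives at once that $\pi_u(E)=E[0,u]$ is a direct summand with complement $\{f\in E:f\perp u\}$, and Corollary \ref{co:pic} identifies this complement as $(\pi_u)'(E)$. To close the stated identity I would compute, for arbitrary $e\in E$, that $(\pi_u)'e=e\ominus\pi_u e$ by Lemma \ref{le:piprime} and $\pi_u e=e\wedge u$ by Theorem \ref{th:ceprop}(i), so $(\pi_u)'e=e\ominus(u\wedge e)$; ranging over $e$ and invoking $(\pi_u)'(E)=\{(\pi_u)'e:e\in E\}$ from Theorem \ref{th:EXCprop}(v) yields $(\pi_u)'(E)=\{e\ominus(u\wedge e):e\in E\}$. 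For the two center computations in (ii) I would apply Theorem \ref{th:mis}(vi): with $\pi=\pi_u$ it gives $\Gamma(E[0,u])=\Gamma(E)\cap\pi_u(E)=\Gamma(E)$, the last equality because $\Gamma(E)\subseteq\pi_u(E)$ (Theorem \ref{th:largestandboo}(i)); with $\pi=(\pi_u)'$ it gives $\Gamma((\pi_u)'(E))=\Gamma(E)\cap(\pi_u)'(E)\subseteq\pi_u(E)\cap(\pi_u)'(E)=\{0\}$, using $\Gamma(E)\subseteq\pi_u(E)$ and the disjointness $\pi_u(E)\cap(\pi_u)'(E)=\{0\}$ from Theorem \ref{th:EXCprop}(vii). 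Thus $(\pi_u)'(E)$ is centerless.

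The first substantive step is showing that no nonzero direct summand $L$ of $K:=(\pi_u)'(E)$ is a PEA. I would argue by contradiction. Writing $L=\rho(K)$ for some $\rho\in\GEX(K)$ (Theorem \ref{th:CentId=piE}), Theorem \ref{th:mis}(ii) shows $\rho\circ(\pi_u)'\in\GEX(E)$ with image $L$, so $L$ is in fact a direct summand of $E$. If $L$ were a nonzero PEA, its top element $v$ would be the largest element of the ideal $L=(\rho\circ(\pi_u)')(E)$, whence Corollary \ref{co:pic} forces $v\in\Gamma(E)\subseteq\pi_u(E)$; but also $v\in L\subseteq K=(\pi_u)'(E)$, so $v\in\pi_u(E)\cap(\pi_u)'(E)=\{0\}$ and therefore $L=E[0,v]=\{0\}$, a contradiction. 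This establishes (ii).

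For (iii), given $E=H\oplus K$ with $H$ a PEA and $K$ centerless, I would write $H=\pi(E)$ (Theorem \ref{th:CentId=piE}) and let $v$ be the top of $H$; Corollary \ref{co:pic} then yields $v\in\Gamma(E)$, $\pi=\pi_v$, $H=E[0,v]$, and $K=(\pi_v)'(E)=\{f\in E:f\perp v\}$. Since $u$ is the largest central element, $v\leq u$, and the remaining task is $u\leq v$. The decisive observation is that $u\ominus v$, which exists and is central by Theorem \ref{th:ceprop}(x), coincides with $(\pi_v)'u=u\ominus(u\wedge v)$ (as $u\wedge v=\pi_v u=v$ by Theorem \ref{th:ceprop}(i)), and hence lies in $\Gamma(E)\cap(\pi_v)'(E)=\Gamma(K)=\{0\}$ by Theorem \ref{th:mis}(vi). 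Thus $u\ominus v=0$, so $u=v$, giving $H=E[0,u]$ and $K=\{f\in E:f\perp u\}$. The hardest parts are these last two arguments: both depend essentially on the fact, packaged in Corollary \ref{co:pic}, that a PEA direct summand is detected by a central top element, so that its presence inside a centerless summand is immediately contradictory.
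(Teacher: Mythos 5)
Your proof is correct, and for part (i), the two center computations in (ii), and part (iii) it follows the paper's own argument essentially step for step: Theorem \ref{th:centr} and Corollary \ref{co:pic} plus Lemma \ref{le:piprime} and Theorem \ref{th:ceprop}(i) for (i); Theorem \ref{th:largestandboo}(i) combined with Theorem \ref{th:mis}(vi) for both centers in (ii); and the $c\leq u$, $u\ominus c\in\Gamma(E)\cap K=\Gamma(K)=\{0\}$ computation for (iii). The one genuine difference concerns the assertion in (ii) that no nonzero direct summand of $(\pi_u)'(E)$ is a PEA: the paper's printed proof never addresses that claim, whereas you supply a complete argument, lifting the summand $L=\rho(K)$ to a direct summand of $E$ via Theorem \ref{th:mis}(ii) and then using Corollary \ref{co:pic} to force its top element $v$ into $\Gamma(E)\subseteq\pi_u(E)$, so that $v\in\pi_u(E)\cap(\pi_u)'(E)=\{0\}$. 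That argument is sound (and it is legitimate to feed the largest element of $L$ into Corollary \ref{co:pic}, since the order on a direct summand is the restriction of the order on $E$). For comparison, there is a shorter route the paper probably regarded as implicit: by Theorem \ref{th:mis}(vi) applied inside $K$, every direct summand of the centerless GPEA $K$ is itself centerless, while a nonzero PEA cannot be centerless because its unit is a central element (immediate from Theorem \ref{th:centr}, since $E=E[0,1]\oplus\{0\}$). Either way, your proposal establishes everything the theorem states, which is strictly more than the printed proof spells out.
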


\begin{proof} As $u\in\Gamma(E)$, we have $\pi\sb{u}\in\GEX(E)$ as per
Definition \ref{df:pisbc}, by Theorem \ref{th:centr}, the PEA $E[0,u]=
\pi\sb{u}(E)$ is a direct summand of $E$, and its complementary direct
summand is $(\pi_u)'(E)=\{f\in E:f\perp u\}$. If $e\in E$, then
by Theorem \ref{th:ceprop} (i), $\pi\sb{u}e=u\wedge e$, whence $(\pi
\sb{u})'e=\pi\sb{u}e/e=e\backslash\pi\sb{u}e=e\ominus\pi\sb{u}e
=e\ominus(u\wedge e)$, and it follows that $(\pi_u)'(E)=\{e\ominus
(u\wedge e):e\in E\}$.

(ii) As a consequence of Theorem \ref{th:largestandboo} (i), we have
$\Gamma(E)\subseteq \pi_u(E)=E[0,u]$. Therefore, by Theorem \ref{th:mis}
(vi), $\Gamma(E[0,u])=\Gamma(\pi_u(E))=\Gamma(E)\cap\pi_u(E)=\Gamma(E)$.
Also by Theorem \ref{th:mis} (vi), $\Gamma((\pi_u)'(E))=\Gamma(E)\cap
(\pi_u)'(E)\subseteq \pi_u(E)\cap(\pi_u)'(E)=\{0\}$.

(iii) Assume the hypotheses of (iii). By Theorem \ref{th:CentId=piE},
there exists $\pi\in\GEX(E)$ with $\pi(E)=H$, so $K=\pi\,'(E)$. Since
$H$ is a PEA, there is a largest element $c\in H=\pi(E)$; hence by
Corollary \ref{co:pic}, $H=\pi(E)=E[0,c]$, $c\in\Gamma(E)$, $\pi=
\pi\sb{c}$, and $K=(\pi\sb{c})'(E)=\{f\in E:f\perp c\}$. Also, since
$u$ is the largest element in $\Gamma(E)$, we have $c\leq u$, whence
$u\ominus c\in\Gamma(E)$ by Theorem \ref{th:ceprop} (x). Furthermore,
$(u\ominus c)\perp c$, therefore $u\ominus c\in K$, and by Theorem
\ref{th:mis} (vi) we have $u\ominus c\in\Gamma(E)\cap K=\Gamma(E)
\cap(\pi_c)'(E)=\Gamma((\pi_c)'(E))=\Gamma(K)$. Consequently, as
$K$ is centerless, $u\ominus c=0$, so $c=u$, $H=E[0,u]$, and $K=
\{f\in E:f\perp u\}$.
\end{proof}

\section{The exocentral cover} \label{sc:ExoCenCover} 

\begin{definition} \label{df:ExoCenCover}
If $e\in E$, and if there is the smallest mapping in the set $\{\pi\in\GEX(E):
\pi e=e\}$, we will refer to it as \emph{exocentral cover} of $e$ and denote
it by $\gamma_e$. If every element of $E$ has an exocentral cover, we say that
the family $(\gamma_e)_{e\in E}$ is the \emph{exocentral cover system} for $E$,
and in this case, we also denote the set of all mappings in the exocentral
cover system by $\Theta\sb{\gamma}:=\{\gamma\sb{e}:e\in E\}$. (We note that
it is quite possible to have $\gamma\sb{e}=\gamma\sb{f}$ with $e\not=f$.)
\end{definition}

\begin{theorem}
If $E$ is a COGPEA, then the exocentral cover $\gamma\sb{e}$ exists for every
$e\in E$ and $\gamma_e=\bigwedge\{\pi\in\GEX(E):\pi e=e\}\in\GEX(E)$.
\end{theorem}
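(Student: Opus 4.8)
The claim is that in a COGPEA $E$, the set $\{\pi\in\GEX(E):\pi e=e\}$ admits a smallest element, which moreover equals its own infimum taken in $\GEX(E)$. My plan is to let $\gamma_e:=\bigwedge\{\pi\in\GEX(E):\pi e=e\}$, where the infimum is computed in the complete boolean algebra $\GEX(E)$ (which exists by Corollary \ref{co:completeboo}), and then to verify that $\gamma_e$ itself belongs to the set, i.e. that $\gamma_e e=e$. Once this is established, $\gamma_e$ is automatically the smallest element of the set, since any infimum that happens to lie in the set it indexes is that set's minimum.

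\textbf{The main step.}
First I would observe that the set $D_e:=\{\pi\in\GEX(E):\pi e=e\}$ is nonempty, since the identity mapping $1\in\GEX(E)$ satisfies $1\,e=e$ by Theorem \ref{th:boolalg}. Thus $\gamma_e=\bigwedge D_e$ is a well-defined element of $\GEX(E)$. The heart of the proof is to show $\gamma_e e=e$, i.e. $e\leq\gamma_e e$ (the reverse inequality is (EXC3)). The tool for this is Theorem \ref{th:arbp'wisesup} (ii): since $D_e\neq\emptyset$, for the family indexed by $D_e$ we have that $\bigwedge_{\pi\in D_e}\pi e$ exists in $E$ and
\[
\gamma_e e=\Bigl(\bigwedge_{\pi\in D_e}\pi\Bigr)e=\bigwedge_{\pi\in D_e}\pi e.
\]
But for every $\pi\in D_e$ we have $\pi e=e$ by definition of $D_e$, so the pointwise infimum on the right is simply $\bigwedge_{\pi\in D_e}e=e$. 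Hence $\gamma_e e=e$, which places $\gamma_e\in D_e$.

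\textbf{Conclusion and the likely obstacle.}
With $\gamma_e\in D_e$ and $\gamma_e\leq\pi$ for all $\pi\in D_e$ (being their infimum in $\GEX(E)$), $\gamma_e$ is by definition the smallest mapping in $\{\pi\in\GEX(E):\pi e=e\}$, which is precisely the exocentral cover. I expect the only real subtlety to be the interchange of the lattice infimum in $\GEX(E)$ with pointwise evaluation at $e$: the infimum in the complete boolean algebra $\GEX(E)$ need not, a priori, be computed pointwise, and this is exactly what Theorem \ref{th:arbp'wisesup} (ii) supplies, crucially using central orthocompleteness and the nonemptiness hypothesis $D_e\neq\emptyset$. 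Everything else is a formal consequence of the definitions and of $\GEX(E)$ being a complete boolean algebra.
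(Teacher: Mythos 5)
Your proof is correct and follows essentially the same route as the paper's: both establish nonemptiness of $\{\pi\in\GEX(E):\pi e=e\}$ via the identity mapping and then invoke Theorem \ref{th:arbp'wisesup} (ii) to compute $\gamma_e e=\bigwedge_{\pi}\pi e=e$, so that the infimum lies in the set and is therefore its minimum. Your closing remark correctly identifies the one genuine subtlety (that the infimum in $\GEX(E)$ is computed pointwise only because of central orthocompleteness), which is exactly where the paper also places the weight of the argument.
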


\begin{proof}
Let $e\in E$ and put $\gamma:=\bigwedge\{\pi:\pi\in\GEX(E),\pi e=e\}$. As
the identity mapping $1$ is in the set $\{\pi\in\GEX(E):\pi e=e\}$, it is
nonempty, and by Theorem \ref{th:arbp'wisesup} (ii),
$$
\gamma e=(\bigwedge\{\pi:\pi\in\GEX(E),\pi e=e\})e=\bigwedge\{\pi e:\pi\in
 \GEX(E),\pi e=e\}=e.
$$
Therefore, $\gamma$ is the smallest mapping in the set $\{\pi\in\GEX(E):
\pi e=e\}$, so $\gamma\sb{e}=\gamma$.
\end{proof}

\begin{theorem}\label{th:ExCovProp}
Let $E$ be a COGPEA and $e,f\in E$. Then{\rm:} {\rm(i)} $\gamma\sb{0}
=0$. {\rm(ii)} $\gamma\sb{e}e=e$. {\rm(iii)} $e\leq f\Rightarrow \gamma_e
\leq\gamma_f$. {\rm(iv)} If $e\oplus f$ exists, then $\gamma_{e\oplus f}
=\gamma_e\vee\gamma_f$. {\rm(v)} $\gamma_{\gamma_e f}=\gamma_e\circ
\gamma_f=\gamma_e\wedge\gamma_f$. {\rm(vi)} $\gamma_{(\gamma_e)'f}=
(\gamma_e)'\circ\gamma_f=(\gamma_e)'\wedge\gamma_f$. {\rm(vii)}
$\gamma\sb{e}\wedge\gamma\sb{f}\in\Theta\sb{\gamma}$. {\rm(viii)}
$(\gamma_e)'\wedge\gamma_f\in\Theta\sb{\gamma}$.
\end{theorem}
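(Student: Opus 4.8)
The plan is to use throughout the characterization from the preceding theorem, namely that $\gamma_e$ is the \emph{smallest} element of the set $\{\pi\in\GEX(E):\pi e=e\}$. Parts (i)--(iv) are then quick. For (i), the zero mapping satisfies $0\cdot 0=0$ and is the least element of $\GEX(E)$, so $\gamma_0=0$; (ii) is immediate from membership of $\gamma_e$ in its own defining set. For (iii), note $f\in\gamma_f(E)$ by (ii); since $\gamma_f(E)$ is an ideal (Theorem \ref{th:EXCprop} (v)), $e\leq f$ forces $e\in\gamma_f(E)$, i.e. $\gamma_f e=e$, and minimality of $\gamma_e$ gives $\gamma_e\leq\gamma_f$. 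For (iv), $e,f\leq e\oplus f$ yields $\gamma_e\vee\gamma_f\leq\gamma_{e\oplus f}$ by (iii); conversely, with $\pi:=\gamma_e\vee\gamma_f$ we have $e\in\gamma_e(E)\subseteq\pi(E)$ and $f\in\gamma_f(E)\subseteq\pi(E)$ (Lemma \ref{le:circ} (ii)), so $\pi(e\oplus f)=\pi e\oplus\pi f=e\oplus f$ by (EXC1), and minimality gives $\gamma_{e\oplus f}\leq\pi$.

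The heart of the theorem is (v) and (vi), which I would prove simultaneously through the single lemma: \emph{for every $\sigma\in\GEX(E)$ and every $f\in E$, $\gamma_{\sigma f}=\sigma\wedge\gamma_f=\sigma\circ\gamma_f$} (recall $\wedge=\circ$ in $\GEX(E)$ by Theorem \ref{th:boolalg}). Then (v) is the instance $\sigma=\gamma_e$ and (vi) is the instance $\sigma=(\gamma_e)'$, which lies in $\GEX(E)$ by Theorem \ref{th:EXCprop} (ii). The easy inequality $\gamma_{\sigma f}\leq\sigma\wedge\gamma_f$ mirrors (iii): $\sigma f\leq f$ gives $\gamma_{\sigma f}\leq\gamma_f$, while $\sigma f\in\sigma(E)$ gives $\sigma(\sigma f)=\sigma f$, hence $\gamma_{\sigma f}\leq\sigma$ by minimality.

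The main obstacle is the reverse inequality $\sigma\wedge\gamma_f\leq\gamma_{\sigma f}$, which I would attack via the boolean criterion $a\leq b\Leftrightarrow a\wedge b'=0$. Put $\pi:=\gamma_{\sigma f}$; since $\pi\leq\sigma$ we have $\pi=\pi\circ\sigma$ (Lemma \ref{le:circ} (ii)), so $\pi f=\pi(\sigma f)=\sigma f$, and comparing the decompositions $f=\pi f\oplus\pi\,'f=\sigma f\oplus\sigma\,'f$ (Lemma \ref{le:piprime}) yields $\pi\,'f=\sigma\,'f$ by cancellation. Now set $\rho:=\sigma\wedge\gamma_f\wedge\pi\,'$; using $\rho\leq\pi\,'$ and $\rho\leq\sigma$ (so $\rho=\rho\circ\pi\,'=\rho\circ\sigma$) I compute
\[
\rho f=\rho(\pi\,'f)=\rho(\sigma\,'f)=\rho(\sigma(\sigma\,'f))=\rho(0)=0,
\]
the decisive cancellation being $\sigma\circ\sigma\,'=\sigma\wedge\sigma\,'=0$. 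From $\rho f=0$ and $f=\rho f\oplus\rho\,'f$ we obtain $\rho\,'f=f$, whence $\gamma_f\leq\rho\,'$ by minimality of $\gamma_f$; but then $\rho\leq\gamma_f\leq\rho\,'$ forces $\rho=\rho\wedge\rho\,'=0$, i.e. $\sigma\wedge\gamma_f\leq\pi$. This settles the lemma and hence (v) and (vi). I expect this final step --- squeezing $\rho$ between $\gamma_f$ and its own complement --- to be the one genuinely delicate point.

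Finally, (vii) and (viii) are immediate corollaries: (v) exhibits $\gamma_e\wedge\gamma_f=\gamma_{\gamma_e f}$ and (vi) exhibits $(\gamma_e)'\wedge\gamma_f=\gamma_{(\gamma_e)'f}$ as genuine exocentral covers of elements of $E$, so both belong to $\Theta\sb{\gamma}=\{\gamma\sb{g}:g\in E\}$.
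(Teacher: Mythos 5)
Your proposal is correct; every step checks out against the machinery available at this point in the paper (Lemma \ref{le:piprime}, Theorem \ref{th:EXCprop}, Theorem \ref{th:boolalg}, Lemma \ref{le:circ}). Parts (i)--(iv) and (vii)--(viii) follow the same lines as the paper's proof. The difference is in the core parts (v)--(vi). The paper proves (v) by decomposing $f=\gamma_e f\oplus(\gamma_e)'f$, invoking (iv) to split $\gamma_f=\gamma_{\gamma_e f}\vee\gamma_{(\gamma_e)'f}$, observing $\gamma_{(\gamma_e)'f}\leq(\gamma_e)'$, and then distributing $\gamma_e\wedge(\cdot)$ over that join; (vi) is then derived by a parallel argument using (v). You instead prove a single unified lemma, $\gamma_{\sigma f}=\sigma\wedge\gamma_f$ for \emph{every} $\sigma\in\GEX(E)$, and obtain (v) and (vi) as the instances $\sigma=\gamma_e$ and $\sigma=(\gamma_e)'$. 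Your reverse inequality avoids (iv) entirely: you show that the residual $\rho:=\sigma\wedge\gamma_f\wedge(\gamma_{\sigma f})'$ annihilates $f$ via the composition identities $\rho=\rho\circ(\gamma_{\sigma f})'=\rho\circ\sigma$ and the cancellation $\pi'f=\sigma'f$, and then kill $\rho$ by the squeeze $\rho\leq\gamma_f\leq\rho'$. What each approach buys: the paper's argument is shorter once (iv) is in hand and leans on distributivity of the boolean algebra $\GEX(E)$; yours is logically lighter (independent of (iv)), treats (v) and (vi) uniformly, and records the genuinely more general fact that the hull-type identity $\gamma_{\sigma f}=\sigma\wedge\gamma_f$ holds for arbitrary exocenter elements $\sigma$, not just exocentral covers and their complements --- a statement worth having, e.g., when manipulating expressions like $\gamma_{\pi e}$ for general $\pi\in\GEX(E)$ later in the type-decomposition arguments.
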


\begin{proof}
Parts (i) and (ii) are obvious from Definition \ref{df:ExoCenCover}.

(iii) If $e\leq f=\gamma_f f$, then by Theorem \ref{th:EXCprop} (iii),
$\gamma_f e=e$. But since $\gamma_e$ is the smallest mapping in $\GEX(E)$
that fixes $e$, it follows that $\gamma_e\leq\gamma_f$.

(iv) Suppose that $e\oplus f$ exists. We have $(\gamma_e\vee\gamma_f)e
=\gamma_e e\vee\gamma_f e=e\vee\gamma_f e=e$ because $\gamma_f e\leq e$.
Similarly $(\gamma_e\vee\gamma_f)f=f$. Thus $(\gamma_e\vee\gamma_f)
(e\oplus f)=(\gamma_e\vee\gamma_f)e\oplus(\gamma_e\vee\gamma_f)f=e\oplus f$,
and so $\gamma_{e\oplus f}\leq\gamma_e\vee\gamma_f$. On the other hand,
$e,f\leq e\oplus f$, so by (iii), $\gamma_e,\gamma_f\leq\gamma_{e\oplus f}$
and thus $\gamma_e\vee\gamma_f\leq\gamma_{e\oplus f}$.

(v) Since $\gamma_e\in\GEX(E)$, $\gamma_e(\gamma_e f)=\gamma_e f$ and
$\gamma_f(\gamma_e f)=\gamma_e(\gamma_f f)=\gamma_e f$. Therefore
$\gamma_{\gamma_e f}\leq\gamma_e\wedge\gamma_f=\gamma_e\circ\gamma_f$.
To prove the reverse inequality, consider $f=\gamma_e f\oplus(\gamma_e)' f$
and (iv) to obtain $\gamma_f=\gamma_{\gamma_e f}\vee\gamma_{(\gamma_e)' f}$.
Also $(\gamma_e)'((\gamma_e)'f)=(\gamma_e)'f$, and as $\gamma
\sb{(\gamma\sb{e})'f}$ is the smallest mapping in $\GEX(E)$ that fixes
$(\gamma\sb{e})'f$, we have $\gamma_{(\gamma_e)'f}\leq (\gamma_e)'$. But
then $\gamma_e\wedge\gamma_{(\gamma_e)'f}=0$ and thus $\gamma_e\circ
\gamma_f=\gamma_e\wedge\gamma_f=(\gamma_e\wedge\gamma_{\gamma_e f})
\vee(\gamma_e\wedge\gamma_{(\gamma_e)' f})=\gamma_e\wedge\gamma_{\gamma_e f}
\leq\gamma_{\gamma_e f}$.

(vi) By (v), $(\gamma_e)'\wedge\gamma_{\gamma_e f}=(\gamma_e)'\wedge
\gamma_e\wedge\gamma_f=0$. Also, as in the proof of (v), we have
$\gamma_f=\gamma_{(\gamma_e)'f}\vee\gamma_{\gamma_e f}$ and $\gamma
\sb{(\gamma\sb{e})'f}\leq(\gamma\sb{e})'$. Therefore, $(\gamma_e)'
\wedge\gamma_f=[(\gamma_e)'\wedge\gamma_{(\gamma_e)'f}]\vee[(\gamma_e)'
\wedge\gamma_{\gamma_e f}]=\gamma_{(\gamma_e)'f}\vee 0=\gamma_
{(\gamma_e)'f}$.

Parts (vii) and (viii) follow immediately from parts (v) and (vi).
\end{proof}

\begin{corollary} \label{co:ThetasbgammaGBA}
With the partial order inherited from $\GEX(E)$, $\Theta\sb{\gamma}=
\{\gamma\sb{e}:e\in E\}$ is a generalized boolean algebra.
\end{corollary}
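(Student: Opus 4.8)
The plan is to exhibit $\Theta_\gamma$, ordered by the restriction of the order on the boolean algebra $\GEX(E)$, as a sublattice of $\GEX(E)$ that possesses a least element and is closed under the relative complementation inherited from $\GEX(E)$. Since any sublattice of a boolean algebra is automatically distributive, this presents $\Theta_\gamma$ as a distributive, relatively complemented lattice with smallest element, i.e., as a generalized boolean algebra in the sense used in Theorem \ref{th:centgea}. The easy structural features come first. By Theorem \ref{th:ExCovProp} (i) we have $\gamma_0=0$, so the zero mapping, the least element of $\GEX(E)$, belongs to $\Theta_\gamma$ and is therefore its smallest element. Closure under meets is immediate from Theorem \ref{th:ExCovProp} (vii), which gives $\gamma_e\wedge\gamma_f\in\Theta_\gamma$; thus the meet of two members of $\Theta_\gamma$, computed in $\GEX(E)$, already lies in $\Theta_\gamma$.

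The substantive step is closure under joins. Fix $e,f\in E$ and set $h:=(\gamma_e)'f$. By Theorem \ref{th:ExCovProp} (vi), $\gamma_h=(\gamma_e)'\wedge\gamma_f$, so $\gamma_e\wedge\gamma_h=\gamma_e\wedge(\gamma_e)'\wedge\gamma_f=0$. Since $e=\gamma_e e\in\gamma_e(E)$ and $h=\gamma_h h\in\gamma_h(E)$ by Theorem \ref{th:ExCovProp} (ii), and $\gamma_e\wedge\gamma_h=0$, Lemma \ref{le:DisjointPiXi} (i) guarantees $e\perp h$, so that $e\oplus h$ exists. Now Theorem \ref{th:ExCovProp} (iv) yields $\gamma_{e\oplus h}=\gamma_e\vee\gamma_h$, and the boolean identity $a\vee b=a\vee(a'\wedge b)$ in $\GEX(E)$ gives $\gamma_e\vee\gamma_h=\gamma_e\vee\bigl((\gamma_e)'\wedge\gamma_f\bigr)=\gamma_e\vee\gamma_f$. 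Hence $\gamma_e\vee\gamma_f=\gamma_{e\oplus h}\in\Theta_\gamma$, so $\Theta_\gamma$ is closed under the join of $\GEX(E)$. Together with meet closure this makes $\Theta_\gamma$ a sublattice of the boolean algebra $\GEX(E)$, and therefore distributive.

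Finally I would check relative complementation. Given $\gamma_e\leq\gamma_f$ in $\Theta_\gamma$, put $c:=(\gamma_e)'\wedge\gamma_f$, which lies in $\Theta_\gamma$ by Theorem \ref{th:ExCovProp} (viii). Then $\gamma_e\wedge c=\gamma_e\wedge(\gamma_e)'\wedge\gamma_f=0$, while distributivity in $\GEX(E)$ together with $\gamma_e\leq\gamma_f$ gives $\gamma_e\vee c=\bigl(\gamma_e\vee(\gamma_e)'\bigr)\wedge\bigl(\gamma_e\vee\gamma_f\bigr)=1\wedge\gamma_f=\gamma_f$. Thus $c$ is the complement of $\gamma_e$ in the interval $[0,\gamma_f]$ of $\Theta_\gamma$, so $\Theta_\gamma$ is relatively complemented, completing its identification as a generalized boolean algebra. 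The only genuinely non-formal point is the join-closure argument: the least element, meet closure, and relative complements are read off directly from parts (i), (vii), and (viii) of Theorem \ref{th:ExCovProp}, whereas realizing $\gamma_e\vee\gamma_f$ as an exocentral cover requires producing the concrete witness $e\oplus h\in E$, which is exactly where Lemma \ref{le:DisjointPiXi} (i) and the existence of the orthosum $e\oplus h$ are indispensable.
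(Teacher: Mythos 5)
Your proof is correct, but it follows a different route from the paper. The paper does not verify the lattice axioms directly: it invokes an external criterion (\cite[Theorem 3.2]{HDandTD}, applied with $B:=\GEX(E)$ and $L:=\Theta_{\gamma}$), which reduces the claim to three checks: $\Theta_{\gamma}\neq\emptyset$, closure under the operation $(\gamma_e)'\wedge\gamma_f$ (Theorem \ref{th:ExCovProp} (viii)), and closure under \emph{disjoint} joins, the last being exactly your observation that $\gamma_e\wedge\gamma_f=0$ forces $e\perp f$ via Lemma \ref{le:DisjointPiXi} (i), whence $\gamma_e\vee\gamma_f=\gamma_{e\oplus f}$ by Theorem \ref{th:ExCovProp} (iv). You instead give a self-contained argument, and your extra ingredient --- the disjointification $h:=(\gamma_e)'f$ with $\gamma_h=(\gamma_e)'\wedge\gamma_f$, which realizes an \emph{arbitrary} join as $\gamma_e\vee\gamma_f=\gamma_{e\oplus h}$ --- is precisely what lets you avoid the external theorem: it upgrades closure under disjoint joins to closure under all joins, so $\Theta_{\gamma}$ is literally a sublattice of $\GEX(E)$, hence distributive, with $0=\gamma_0$ and with complements in each interval $[0,\gamma_f]$ given by $(\gamma_e)'\wedge\gamma_f$. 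That sublattice property is a slightly stronger structural conclusion than the paper records at this point (and is the form actually exploited later, e.g.\ in Theorem \ref{th:kstar}). One small point to make explicit: what you verify is sectional complementation (complements in intervals of the form $[0,\gamma_f]$), whereas the definition used in Theorem \ref{th:centgea} asks for relative complementation; the bridge is the standard fact that in a distributive lattice with $0$, if $y$ complements $x$ in $[0,b]$ then $y\vee a$ complements $x$ in $[a,b]$ --- and since you have shown $\Theta_{\gamma}$ is join-closed, the element $y\vee a$ stays inside $\Theta_{\gamma}$, so your argument does close up.
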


\begin{proof}
By \cite[Theorem 3.2]{HDandTD} with $B:=\GEX(E)$ and $L:=\Theta
\sb{\gamma}$, it will be sufficient to prove that, for all $e,f
\in E$, (i) $\Theta\sb{\gamma}\not=\emptyset$, (ii) $e,f\in E\Rightarrow
(\gamma\sb{e})'\wedge\gamma\sb{f}\in\Theta\sb{\gamma}$, and
(iii) $\gamma\sb{e}\wedge\gamma\sb{f}=0\Rightarrow\gamma\sb{e}
\vee\gamma\sb{f}\in\Theta\sb{\gamma}$. Condition (i) is obvious
and (ii) follows from Theorem \ref{th:ExCovProp} (viii). To prove
(iii), suppose that $\gamma\sb{e}\wedge\gamma\sb{f}=0$. Then, as
$e=\gamma\sb{e}e$ and $f=\gamma\sb{f}f$, Lemma \ref{le:DisjointPiXi}
(i) implies that $e\perp f$; hence by Theorem \ref{th:ExCovProp} (iv),
$\gamma\sb{e}\vee\gamma\sb{f}=\gamma\sb{e\oplus f}\in\Theta\sb{\gamma}$,
proving (iii).
\end{proof}

The following definition, originally formulated for a generalized effect
algebra (GEA) \cite[Definition 7.1]{ExoCen} as a generalization of the
notion of a hull mapping on an effect algebra \cite[Definition 3.1]{HandD},
extends to the GPEA $E$ the notion of a so-called hull system.

\begin{definition} A family $(\eta\sb{e})\sb{e\in E}$ is a \emph{hull
system} for $E$ iff (1) $\eta_0=0$, (2) $e\in E\,\Rightarrow\,\eta_e e=e$,
and (3) $e,f\in E\,\Rightarrow\,\eta_{\eta_e f}=\eta_e\circ\eta_f$. If
$(\eta\sb{e})\sb{e\in E}$ is a hull system for $E$, then an element $e\in E$
is \emph{$\eta$-invariant} iff $\eta\sb{e}f=e\wedge f$ for all $f\in E$.
\end{definition}

\begin{theorem} \label{th:gammahullsys}
If $E$ is a COGPEA, then $(\gamma\sb{e})\sb{e\in E}$ is a hull system for
$E$, the center $\Gamma(E)$ is precisely the set of $\gamma$-invariant
elements in $E$, and for $c\in\Gamma(E)$, $\gamma\sb{c}=\pi\sb{c}$.
\end{theorem}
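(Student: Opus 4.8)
The plan is to verify the three defining properties of a hull system for the family $(\gamma_e)_{e\in E}$, then characterize the $\gamma$-invariant elements as exactly $\Gamma(E)$, and finally check that $\gamma_c = \pi_c$ for central $c$. The first part is essentially a restatement of already-proved facts: property (1) $\gamma_0 = 0$ is Theorem \ref{th:ExCovProp}(i), property (2) $\gamma_e e = e$ is Theorem \ref{th:ExCovProp}(ii), and property (3) $\gamma_{\gamma_e f} = \gamma_e \circ \gamma_f$ is precisely the first equality in Theorem \ref{th:ExCovProp}(v). So the hull-system claim requires almost no new work; I would simply cite those three parts.

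For the characterization of $\gamma$-invariant elements, recall that $e$ is $\gamma$-invariant iff $\gamma_e f = e \wedge f$ for all $f \in E$. First I would show $c \in \Gamma(E) \Rightarrow c$ is $\gamma$-invariant. Given $c \in \Gamma(E)$, the natural candidate is that $\gamma_c = \pi_c$, since by Theorem \ref{th:ceprop}(i) we have $\pi_c f = f \wedge c$ for all $f$, which is exactly the $\gamma$-invariance condition once we know $\gamma_c = \pi_c$. So I would prove $\gamma_c = \pi_c$ directly: by Theorem \ref{th:ceprop}(vii), $\pi_c$ is the \emph{smallest} $\pi \in \GEX(E)$ with $\pi c = c$, and by Definition \ref{df:ExoCenCover} $\gamma_c$ is by definition the smallest such mapping; hence $\gamma_c = \pi_c$. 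This simultaneously establishes the last assertion $\gamma_c = \pi_c$ and, via Theorem \ref{th:ceprop}(i), shows $\gamma_c f = \pi_c f = c \wedge f$, so $c$ is $\gamma$-invariant.

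For the reverse inclusion, suppose $e \in E$ is $\gamma$-invariant, so $\gamma_e f = e \wedge f$ for all $f$. Taking $f = e$ gives $\gamma_e e = e \wedge e = e$, consistent with (ii). The key is to show $e \in \Gamma(E)$, and I expect this to be the main obstacle, since it requires producing the direct-sum decomposition characterizing centrality. The cleanest route is via Theorem \ref{th:ceprop}(viii): if $\pi \in \GEX(E)$ and $h \in E$ satisfy $\pi f = f \wedge h$ for all $f \in E$, then $h \in \Gamma(E)$ and $\pi = \pi_h$. Here $\gamma_e \in \GEX(E)$ and the $\gamma$-invariance hypothesis says exactly $\gamma_e f = e \wedge f$ for all $f$; applying Theorem \ref{th:ceprop}(viii) with $\pi := \gamma_e$ and $h := e$ yields $e \in \Gamma(E)$ (and as a bonus $\gamma_e = \pi_e$). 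This closes the equivalence.

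In summary, the proof is largely an assembly of prior results: the hull-system axioms come free from Theorem \ref{th:ExCovProp}, the identity $\gamma_c = \pi_c$ follows from comparing the two "smallest mapping fixing $c$" characterizations (Definition \ref{df:ExoCenCover} versus Theorem \ref{th:ceprop}(vii)), and the hard direction --- that $\gamma$-invariance forces centrality --- is delivered by the converse direction built into Theorem \ref{th:ceprop}(viii). The only genuine subtlety is recognizing that the $\gamma$-invariance equation $\gamma_e f = e \wedge f$ is in exactly the form demanded by that part of Theorem \ref{th:ceprop}, so that centrality of $e$ is immediate rather than requiring a direct verification of (C1)--(C4).
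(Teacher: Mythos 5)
Your proposal is correct and follows exactly the paper's own route: the paper likewise obtains the hull-system axioms from parts (i), (ii), (v) of Theorem \ref{th:ExCovProp}, and derives the characterization of $\Gamma(E)$ and the identity $\gamma_c=\pi_c$ from parts (i), (vii), (viii) of Theorem \ref{th:ceprop}. You have merely spelled out the details that the paper's terse citation leaves implicit, and done so accurately.
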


\begin{proof}
That $(\gamma\sb{e})\sb{e\in E}$ is a hull system for $E$ follows from
parts (i), (ii), and (v) of Theorem \ref{th:ExCovProp}, and the remainder
of the theorem follows from parts (i), (vii), and (viii) of Theorem
\ref{th:ceprop}.
\end{proof}

\begin{theorem} \label{th:disjointgammasbei}
Let $E$ be a COGPEA and $(e_i)_{i\in I}\subseteq E$. Then the family
$(e_i)_{i\in I}$ is $\GEX$-orthogonal iff $\gamma_{e_i}\wedge\gamma_
{e_j}=0$ for all $i,j\in I$, $i\not =j$.
\end{theorem}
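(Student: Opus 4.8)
The statement is a biconditional characterizing $\GEX$-orthogonality of an arbitrary family $(e_i)_{i\in I}$ in terms of pairwise disjointness of the associated exocentral covers. The plan is to prove each direction separately, exploiting the fact that in a COGPEA every element has an exocentral cover $\gamma_e$ (the smallest $\pi\in\GEX(E)$ with $\pi e=e$), and that $\GEX(E)$ is a complete boolean algebra (Corollary \ref{co:completeboo}).

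For the forward direction, assume $(e_i)_{i\in I}$ is $\GEX$-orthogonal. By Definition \ref{df:GammaexOrghogonal} there is a pairwise disjoint family $(\pi_i)_{i\in I}$ in $\GEX(E)$ with $\pi_i e_i=e_i$ for all $i$. Since $\gamma_{e_i}$ is the \emph{smallest} exocentral element fixing $e_i$ and $\pi_i$ is one such element, we get $\gamma_{e_i}\leq\pi_i$ for every $i$ (this is the definitional minimality of the cover). Then for $i\neq j$, monotonicity of $\wedge$ in the boolean algebra $\GEX(E)$ gives $\gamma_{e_i}\wedge\gamma_{e_j}\leq\pi_i\wedge\pi_j=0$, so $\gamma_{e_i}\wedge\gamma_{e_j}=0$. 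This direction is essentially immediate once the minimality property of $\gamma_e$ is invoked.

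For the converse, suppose $\gamma_{e_i}\wedge\gamma_{e_j}=0$ for all $i\neq j$. The natural candidate for the required pairwise disjoint family is $(\gamma_{e_i})_{i\in I}$ itself: it is pairwise disjoint by hypothesis, and $\gamma_{e_i}e_i=e_i$ by Theorem \ref{th:ExCovProp} (ii). Checking Definition \ref{df:GammaexOrghogonal} directly against this family shows that $(e_i)_{i\in I}$ is $\GEX$-orthogonal. The only subtlety is that the definition of $\GEX$-orthogonality for an arbitrary family requires a pairwise disjoint family $(\pi_i)$ with $\pi_i e_i=e_i$ — and $(\gamma_{e_i})_{i\in I}$ satisfies exactly these two conditions, so no auxiliary construction (unlike the inductive juggling in Lemma \ref{le:co} (i)) is needed here.

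**Where the difficulty lies.**
The real content is concentrated entirely in the forward direction's use of the \emph{minimality} of $\gamma_{e_i}$, i.e.\ the inequality $\gamma_{e_i}\leq\pi_i$, together with monotonicity of meets in $\GEX(E)$; everything else is a direct appeal to definitions and to Theorem \ref{th:ExCovProp} (ii). Thus I do not expect a genuine obstacle, but the one point to state carefully is that $\GEX$-orthogonality is witnessed by a \emph{single} pairwise disjoint family simultaneously for all indices, and that the cover system furnishes the canonical such witness. The completeness of $\GEX(E)$ guarantees the exocentral covers exist for every element (so the statement is even well-posed for arbitrary $I$), but completeness is not otherwise needed in the argument beyond the existence of the $\gamma_{e_i}$.
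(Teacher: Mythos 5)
Your proposal is correct and matches the paper's own proof essentially verbatim: one direction uses the minimality of the covers ($\gamma_{e_i}\leq\pi_i$ for any witnessing disjoint family, hence $\gamma_{e_i}\wedge\gamma_{e_j}\leq\pi_i\wedge\pi_j=0$), and the other direction uses the family $(\gamma_{e_i})_{i\in I}$ itself as the witness, since $\gamma_{e_i}e_i=e_i$ by Theorem \ref{th:ExCovProp} (ii). The only difference is cosmetic — you treat the directions in the opposite order from the paper.
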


\begin{proof}
If $(\gamma\sb{e\sb{i}})\sb{i\in I}$ is pairwise disjoint, then since
$\gamma\sb{e\sb{i}}e\sb{i}=e\sb{i}$, it follows that  $(e\sb{i})\sb{i\in I}$
is $\GEX$-orthogonal. Conversely, suppose that $(e\sb{i})\sb{i\in I}$ is
$\GEX$-orthogonal. Then there exists a pairwise disjoint family $(\pi_i)_
{i\in I}\in\GEX(E)$ such that $\pi_i e_i=e_i$ for all $i\in I$. But then
$\gamma_{e_i}\leq\pi_i$ for all $i\in I$, and therefore the family
$(\gamma_{e_i})_{i\in I}$ is also pairwise disjoint.
\end{proof}

In view of Theorem \ref{th:disjointgammasbei}, a $\GEX$-orthogonal family
of elements of the COGPEA $E$ will also be called \emph{$\gamma$-orthogonal}.

\section{Type determining sets} \label{sc:TDsets} 

\begin{definition} \label{df:fourclosures}
Let $E$ be a COGPEA and $Q,K\subseteq E$. Then we consider four closure operators on the set of all subsets $Q$ of $E$:
\begin{itemize}
\item[(1)] $[Q]_{\gamma}$ is the set of all orthosums (suprema) of $\gamma$-orthogonal families in $Q$, with the understanding that $[\emptyset]_{\gamma}=\{0\}$.
\item[(2)] $Q^{\gamma}:=\{\gamma_e q: e\in E, q\in Q\}$.
\item[(3)] $Q^{\downarrow}:=\bigcup_{q\in Q}E[0,q]$.
\item[(4)] $Q'':=(Q')'$, where $Q':=\{e\in E: q\wedge e=0$ for all $q\in Q\}$.
\end{itemize}
We say that
\begin{itemize}
\item[(5)] $K$ is \emph{type-determining} (TD) set iff $K=[K]_{\gamma}=
 K^{\gamma}$.
\item[(6)] $K$ is \emph{strongly type-determining} (STD) set iff $K=[K]_
 {\gamma}=K^{\downarrow}$.
\end{itemize}
\end{definition}

We note that $Q\subseteq Q''$, $P\subseteq Q\Rightarrow Q'\subseteq P'$,
and $Q'=Q'''$.

\begin{theorem} \label{th:QK}
Let $E$ be a COGPEA and let $Q,K\subseteq E$. Then{\rm:} {\rm(i)} If
$q\in [Q]_{\gamma}$, then there is a $\gamma$-orthogonal family $(q_i)
_{i\in I}$ in $Q$ such that $q=\oplus_{i\in I}q_i=\bigvee_{i\in I}q_i$;
moreover, if $e\leq q$, then $(e\wedge q_i)_{i\in I}$ is a $\gamma$-orthogonal
family in $Q^{\downarrow}$ and $e=$\linebreak$\oplus_{i\in I}(e\wedge q_i)=\bigvee_{i\in I}
(e\wedge q_i)$. {\rm(ii)} $[K^{\gamma}]_{\gamma}$ is the smallest TD subset of
$E$ containing $K$. {\rm(iii)} $[K^{\downarrow}]_{\gamma}$ is the smallest
STD subset of $E$ containing $K$. {\rm(iv)} $K'=(K')^{\downarrow}=
(K^{\downarrow})'$ is STD. {\rm(v)} $K'=([K^{\gamma}]_{\gamma})'=
([K^{\downarrow}]_{\gamma})'$.
\end{theorem}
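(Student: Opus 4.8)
The plan is to treat part~(i) as the engine and to organize (ii)--(v) around the fact that $[\,\cdot\,]_{\gamma}$, $(\cdot)^{\gamma}$ and $(\cdot)^{\downarrow}$ are monotone idempotent closure operators that contain $K$, together with one ``polar-preservation'' lemma for $(\cdot)'$. For part~(i), the first assertion is immediate from Definition~\ref{df:fourclosures}(1) and Lemma~\ref{le:co}(ii): a $\gamma$-orthogonal (= $\GEX$-orthogonal) family is orthogonal and, in a COGPEA, orthosummable with orthosum equal to supremum. For the ``moreover'' clause, write $q=\bigvee_{i}q_i$ with $(q_i)$ $\gamma$-orthogonal (the case $q=0$ being trivial) and choose pairwise disjoint $\pi_i\in\GEX(E)$ with $\pi_i q_i=q_i$ (e.g. $\pi_i=\gamma_{q_i}$, pairwise disjoint by Theorem~\ref{th:disjointgammasbei}). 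For $e\le q$, Corollary~\ref{cor:cop}(iii) gives $\pi_i e=e\wedge q_i$ and $\bigvee_i(e\wedge q_i)=e$. Since $\pi_i(e\wedge q_i)=e\wedge q_i$ with the $\pi_i$ pairwise disjoint, $(e\wedge q_i)$ is $\gamma$-orthogonal, and $e\wedge q_i\le q_i\in Q$ places it in $Q^{\downarrow}$; Lemma~\ref{le:co}(ii) then upgrades the supremum to an orthosum.

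Next I would assemble the closure toolkit. Monotonicity of all four operators, and the inclusions $K\subseteq K^{\gamma}$, $K\subseteq K^{\downarrow}$ (via $\gamma_q q=q$ from Theorem~\ref{th:ExCovProp}(ii) and $q\in E[0,q]$), are routine. The facts I would isolate are: (a) $[\,\cdot\,]_{\gamma}$ is idempotent; (b) $(K^{\gamma})^{\gamma}=K^{\gamma}$; (c) $([K^{\gamma}]_{\gamma})^{\gamma}=[K^{\gamma}]_{\gamma}$ and $([K^{\downarrow}]_{\gamma})^{\downarrow}=[K^{\downarrow}]_{\gamma}$. For (a), writing an element of $[[Q]_{\gamma}]_{\gamma}$ as $\oplus_a W_a$ with $W_a=\oplus_i q_{a,i}$, I would refine the two witnessing systems by meets, setting $\tau_{a,i}:=\rho_a\wedge\sigma_{a,i}$; then $\tau_{a,i}q_{a,i}=\rho_a q_{a,i}=q_{a,i}$ (the last step by Theorem~\ref{th:EXCprop}(iii), since $q_{a,i}\le\rho_a W_a=W_a$), the $\tau_{a,i}$ are pairwise disjoint over the disjoint union of indices, and associativity of suprema gives $\oplus_a W_a=\bigvee_{a,i}q_{a,i}=\oplus_{a,i}q_{a,i}\in[Q]_{\gamma}$. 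Fact (b) is Theorem~\ref{th:ExCovProp}(v): $\gamma_f(\gamma_e q)=(\gamma_f\wedge\gamma_e)q=\gamma_{\gamma_e f}(q)\in K^{\gamma}$. For (c), pushing $\gamma_e$ through an orthosum by Theorem~\ref{th:PtwisePi}(iii) and using that $\gamma_e$ commutes with the witnessing $\pi_i$ (Lemma~\ref{le:circ}(i)) keeps the result in $[K^{\gamma}]_{\gamma}$, while $\downarrow$-closedness of $[K^{\downarrow}]_{\gamma}$ is exactly part~(i) with $Q=K^{\downarrow}$. With these, (ii) and (iii) follow: $[K^{\gamma}]_{\gamma}$ is TD and $[K^{\downarrow}]_{\gamma}$ is STD by (a)--(c), both contain $K$, and minimality is monotonicity (if $K\subseteq M$ with $M$ TD, then $K^{\gamma}\subseteq M^{\gamma}=M$, so $[K^{\gamma}]_{\gamma}\subseteq[M]_{\gamma}=M$; similarly with $\downarrow$ for STD).

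Finally I would dispatch (iv) and (v) through one polar-preservation lemma: for every $Q\subseteq E$ one has $([Q]_{\gamma})'=Q'$, $(Q^{\gamma})'=Q'$, $(Q^{\downarrow})'=Q'$, and $Q'$ is both downward-closed and closed under $[\,\cdot\,]_{\gamma}$. In each polar identity the inclusion $\subseteq$ is antitonicity applied to $Q\subseteq[Q]_{\gamma}$, $Q\subseteq Q^{\gamma}$, $Q\subseteq Q^{\downarrow}$. For the reverse inclusions, let $e\in Q'$: in the $(\cdot)^{\gamma}$ case each generator satisfies $\gamma_f q\le q$ by (EXC3), so $e\wedge\gamma_f q\le e\wedge q=0$; in the $(\cdot)^{\downarrow}$ case each $w\in Q^{\downarrow}$ has $w\le q$ for some $q\in Q$, so $e\wedge w\le e\wedge q=0$; and in the $[\,\cdot\,]_{\gamma}$ case, for $v=\bigvee_i v_i$ with $(v_i)$ $\gamma$-orthogonal in $Q$, part~(i) applied to $e\wedge v\le v$ gives $e\wedge v=\bigvee_i\bigl((e\wedge v)\wedge v_i\bigr)=0$, since $(e\wedge v)\wedge v_i\le e\wedge v_i=0$. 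The last computation, run with $Q'$ in place of $Q$, also shows $[Q']_{\gamma}=Q'$; together with the evident downward-closedness of $Q'$ this makes $Q'$ STD. Part~(iv) then reads off as $K'=(K^{\downarrow})'=(K')^{\downarrow}$ with $K'$ STD, and part~(v) as $([K^{\gamma}]_{\gamma})'=(K^{\gamma})'=K'$ and $([K^{\downarrow}]_{\gamma})'=(K^{\downarrow})'=K'$.

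The main obstacle I expect is the ``moreover'' clause of part~(i) together with the idempotence~(a): both amount to reindexing an orthosum of $\gamma$-orthogonal families as a single $\gamma$-orthogonal family and verifying that the witnessing exocentral projections can be chosen pairwise disjoint, which is precisely where Corollary~\ref{cor:cop}, Theorem~\ref{th:EXCprop}(iii), and Theorem~\ref{th:disjointgammasbei} carry the load. Once part~(i) is in hand, everything in (ii)--(v) is monotonicity/antitonicity bookkeeping plus the single coordinatewise-meet computation above.
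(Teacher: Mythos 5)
Your proposal is correct, and while it runs on the same engine as the paper's proof --- part (i) together with the hull-system identity $\gamma_e\circ\gamma_f=\gamma_{\gamma_e f}$ of Theorem \ref{th:ExCovProp}(v) --- it organizes (ii)--(v) differently, and two sub-arguments are genuinely different. For (i) you invoke Corollary \ref{cor:cop}(iii), whereas the paper recomputes the same facts from scratch ($\gamma_{q_i}q=q_i$ via Theorem \ref{th:PtwisePi}(i), $\gamma_{q_i}e=e\wedge q_i$ via Theorem \ref{th:EXCprop}(iv), and $e=\bigvee_i(e\wedge q_i)$ via Theorem \ref{th:DisjSup}); both are legitimate, yours being shorter at the cost of leaning on a corollary the paper states without proof. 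For (ii), the paper proves $([K^{\gamma}]_{\gamma})^{\gamma}\subseteq[K^{\gamma}]_{\gamma}$ by applying part (i) to $e=\gamma_h q\leq q$ and identifying each $e\wedge q_i$ as $\gamma_h q_i=\gamma_{\gamma_h h_i}k_i$; you instead push $\gamma_h$ through the orthosum by Theorem \ref{th:PtwisePi}(iii) and retain the witnessing projections by commutativity (Lemma \ref{le:circ}(i)), a more direct argument that bypasses part (i) for this step. You also supply an actual proof (the refinement $\tau_{a,i}=\rho_a\wedge\sigma_{a,i}$) of the idempotence $[[\,\cdot\,]_{\gamma}]_{\gamma}=[\,\cdot\,]_{\gamma}$, which the paper dismisses as ``easily seen.'' Finally, for (v) the paper goes through the bipolar: $K''=(K')'$ is STD by (iv), hence TD, so $[K^{\gamma}]_{\gamma}\subseteq K''$ by minimality in (ii), and antitonicity finishes; your polar-preservation lemma $([Q]_{\gamma})'=(Q^{\gamma})'=(Q^{\downarrow})'=Q'$ delivers (iv) and (v) in one stroke and is arguably the cleaner organization.

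One point of rigor you should repair: in the polar computations you write, e.g., ``part (i) applied to $e\wedge v\leq v$ gives $e\wedge v=\bigvee_i\bigl((e\wedge v)\wedge v_i\bigr)=0$,'' which presupposes that the infimum $e\wedge v$ exists --- but its existence is part of what must be proved, since $x\wedge y=0$ in this paper means the infimum exists and equals $0$. The fix is exactly the paper's formulation in (iv): take an arbitrary common lower bound $d\leq e,v$, apply part (i) to $d\leq v$ to get $d=\bigvee_i(d\wedge v_i)$, observe that each $d\wedge v_i$ is a common lower bound of $e$ and $v_i$ and hence equals $0$ because $e\wedge v_i=0$, so $d=0$; since every common lower bound vanishes, $e\wedge v$ exists and equals $0$. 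The same rephrasing handles your $(Q^{\gamma})'$ and $(Q^{\downarrow})'$ cases. With that adjustment your proof is complete.
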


\begin{proof}
(i) By the definition of $[Q]_{\gamma}$, there exists a family $(q_i)
_{i\in I}$ in $Q$ such that $(\gamma_{q_i})_{i\in I}$ is a pairwise
disjoint family in $\GEX(E)$ and $q=\oplus_{i\in I}q_i=\bigvee_{i\in I}
q_i$. By Theorem \ref{th:PtwisePi} (i), for each $i\in I$, $\gamma_
{q_i}q=\gamma_{q_i}(\bigvee_{j\in I}q_j)=\bigvee_{j\in I}\gamma_{q_i}q_j
=\bigvee_{j\in I}\gamma_{q_i}(\gamma_{q_j}q_j)=q_i$. Therefore, as
$e\leq q$,  we can apply Theorem \ref{th:EXCprop} (iv) to obtain
$\gamma_{q_i} e=e\wedge\gamma\sb{q\sb{i}}q=e\wedge q_i\in Q\sp{\downarrow}$.
By Theorem \ref{th:ExCovProp} (iii), $\gamma_{e\wedge q_i}\leq\gamma_{q_i}$, so
the family $(e\wedge q_i)_{i\in I}$ is $\gamma$-orthogonal. Let us define
$\pi:=\bigvee_{i\in I}\gamma_{q_i}$ in the complete boolean algebra $\GEX(E)$.
Then by Theorem \ref{th:DisjSup}, $\pi q=\bigvee_{i\in I}\gamma_{q_i}q=
\bigvee_{i\in I}q_i=q$, hence, as $e\leq q\in\pi(E)$, it follows by Theorems
\ref{th:EXCprop} (iii) and \ref{th:DisjSup} that $e=\pi e=\bigvee_{i\in I}
\gamma_{q_i}e=\bigvee_{i\in I}(e\wedge q_i)$.

(ii) From the definition it is clear that $[K^{\gamma}]_{\gamma}$ is
contained in every TD set containing $K$. It is also easily seen that
$K\subseteq[K^{\gamma}]_{\gamma}$ and $[[K^{\gamma}]_{\gamma}]_{\gamma}
\subseteq [K^{\gamma}]_{\gamma}$. To prove that $([K^{\gamma}]_{\gamma})^
{\gamma}\subseteq [K^{\gamma}]_{\gamma}$, let $e\in ([K^{\gamma}]_
{\gamma})^{\gamma}$. Then there exists $h\in E$ and $q\in [K^{\gamma}]_
{\gamma}$ with $e=\gamma_h q\leq q$. By (i) with $Q:=K^{\gamma}$, we find
that there exists a $\gamma$-orthogonal family $(q_i)_{i\in I}$ in $K^
{\gamma}$ such that $q=\bigvee_{i\in I} q_i$ and $e=\bigvee_{i\in I}
(e\wedge q_i)$. Thus, as $q_i\leq q$ for all $i\in I$, Theorem \ref
{th:EXCprop} (iv) implies that $\gamma\sb{h}q\sb{i}=q\sb{i}\wedge
\gamma\sb{h}q=q\sb{i}\wedge e$ for all $i\in I$. Also,
as $q_i\in K^{\gamma}$ for every $i\in I$, there exist $h_i\in E$ and
$k_i\in K$ such that $q_i=\gamma_{h_i}k_i$, and we have $e\wedge q\sb{i}=\gamma_h q_i
=\gamma_h\gamma_{h_i}k_i=(\gamma_h\wedge\gamma_{h_i})k_i=\gamma_
{\gamma_h h_i}k_i\in K\sp{\gamma}$. Therefore the elements of the
$\gamma$-orthogonal family $(e\wedge q_i)_{i\in I}$ all belong to
$K^{\gamma}$ and so $e\in [K^{\gamma}]_{\gamma}$.

We omit the proof of (iii) as it is similar to the proof of (ii).

(iv) Evidently, $K'=(K')^{\downarrow}=(K^{\downarrow})'$. It remains to
prove that $[K']\sb{\gamma}\subseteq K'$. Let $q\in [K']_{\gamma}$,
$k\in K$, and $e\in E$ with $e\leq q,k$. By (i) with $Q:=K'$, there are
$\gamma$-orthogonal families $(q_i)_{i\in I}\subseteq K'$ and $(e
\wedge q_i)_{i\in I}$ such that $q=\bigvee_{i\in I} q_i$ and $e=
\bigvee_{i\in I}(e\wedge q_i)$. Since $e\leq k$ and $k\wedge q_i=0$, it
follows that $e\wedge q_i=0$ for all $i\in I$, so $e=0$. Thus $q
\wedge k=0$, whence $q\in K'$.

(v) We have $K\subseteq K''$ and as $K''=(K')'$, it is STD by (iv),
hence it is TD. But then by (ii), $[K^{\gamma}]_{\gamma}\subseteq K''$,
therefore $K'\subseteq ([K^{\gamma}]_{\gamma})'$. We also get
$([K^{\gamma}]_{\gamma})'\subseteq K'$ because $K\subseteq [K^{\gamma}]
_{\gamma}$. Similarly, $K\subseteq [K^{\downarrow}]_{\gamma}$, whence
$([K^{\downarrow}]_{\gamma})'\subseteq K'$ and by (iv) and (iii),
$[K^{\downarrow}]_{\gamma}\subseteq K''$; hence $K'=K'''\subseteq
([K^{\downarrow}]_{\gamma})'$.
\end{proof}

\begin{corollary}
If $A$ {\rm(}which may be empty{\rm)} is the set of all atoms in $E$,
then the STD set $A'$ is the set of all elements in $E$ that dominate
no atom in $E$, and the STD set $A''$ is the set of all elements $p\in E$
such that either $p=0$ or the PEA $E[0,p]$ is atomic.
\end{corollary}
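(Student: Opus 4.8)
The plan is to read both descriptions off directly from the definition of the operator $'$ together with the defining feature of atoms, and to invoke Theorem \ref{th:QK} (iv) only to certify that $A'$ and $A''$ are STD. Throughout I interpret $x\wedge e=0$ in its literal poset sense, namely that $0$ is the only common lower bound of $x$ and $e$; this keeps the arguments independent of the existence of general infima. I would also record at the outset the elementary fact that, since $E[0,p]$ is a down-set in $E$, the atoms of the PEA $E[0,p]$ are precisely the atoms of $E$ that lie below $p$.

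First I would identify $A'$. For an atom $a$ and an arbitrary $e\in E$, the only possible nonzero element below $a$ is $a$ itself, so any nonzero common lower bound of $a$ and $e$ must equal $a$; hence $a\wedge e=0$ iff $a\not\leq e$. Consequently $e\in A'$ iff $a\not\leq e$ for every atom $a$, i.e. iff $e$ dominates no atom, which is the asserted description of $A'$. That $A'$ is STD is immediate from Theorem \ref{th:QK} (iv) applied to $K:=A$, and then $A''=(A')'$ is STD by the same theorem applied to $K:=A'$.

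Next I would characterize $A''=(A')'=\{p\in E: p\wedge e=0\ \text{for all}\ e\in A'\}$. For the easy inclusion, suppose $p=0$ or $E[0,p]$ is atomic, and let $e\in A'$. If some $d>0$ were a common lower bound of $p$ and $e$, then $d\leq p$ would force, by atomicity, an atom $a\leq d$; but then $a\leq d\leq e$ would exhibit an atom dominated by $e$, contradicting $e\in A'$. Hence $p\wedge e=0$ for every $e\in A'$, so $p\in A''$. For the converse, suppose $p\in A''$ with $p\neq 0$, and let $d$ be any nonzero element of $E[0,p]$. If $d$ dominated no atom then $d\in A'$ by the description of $A'$, and since $p\in(A')'$ we would need $p\wedge d=0$; but $d\leq p$ gives $p\wedge d=d\neq 0$, a contradiction. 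Thus every nonzero $d\leq p$ dominates an atom of $E$ lying below $p$, that is, an atom of $E[0,p]$, so $E[0,p]$ is atomic. This gives the description of $A''$.

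There is no real obstacle here; the only points requiring care are the literal reading of $\wedge=0$ as ``no nonzero common lower bound'' and the identification of the atoms of $E[0,p]$ with the atoms of $E$ below $p$, both settled before the two inclusions are argued. Each inclusion then reduces to a short disjointness argument, turning on the observation that a witnessing nonzero common lower bound below $p$ that dominates no atom is itself a member of $A'$, contradicting $p\in A''$.
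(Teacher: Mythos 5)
Your proof is correct and is essentially the argument the paper intends: the corollary is stated there without proof, as an immediate consequence of Theorem \ref{th:QK} (iv) and the definition of the operation $'$, which is exactly what you supply. The two preparatory observations you single out --- that for an atom $a$ one has $a\wedge e=0$ iff $a\not\leq e$, and that the atoms of the PEA $E[0,p]$ are precisely the atoms of $E$ lying below $p$ --- are the right points of care, and your disjointness arguments for both inclusions of the characterization of $A''$ go through without gaps.
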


\begin{theorem}\label{centrTD}
The set $\Gamma(E)$ of central elements of a COGPEA $E$ is a TD
subset of $E$.
\end{theorem}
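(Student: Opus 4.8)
The plan is to verify the two defining closure conditions for a TD set, namely $\Gamma(E)=[\Gamma(E)]_{\gamma}$ and $\Gamma(E)=\Gamma(E)^{\gamma}$. Since a one-element family is trivially $\gamma$-orthogonal and $q=\gamma_q q$ by Theorem \ref{th:ExCovProp} (ii), the inclusions $\Gamma(E)\subseteq[\Gamma(E)]_{\gamma}$ and $\Gamma(E)\subseteq\Gamma(E)^{\gamma}$ hold automatically; thus it suffices to establish the reverse inclusions $[\Gamma(E)]_{\gamma}\subseteq\Gamma(E)$ and $\Gamma(E)^{\gamma}\subseteq\Gamma(E)$.

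For $\Gamma(E)^{\gamma}\subseteq\Gamma(E)$, I would fix $e\in E$ and $c\in\Gamma(E)$ and put $h:=\gamma_e c$; the aim is to meet the centrality criterion of Theorem \ref{th:ceprop} (viii), i.e.\ to exhibit a $\pi\in\GEX(E)$ with $\pi f=f\wedge h$ for all $f\in E$. The natural candidate is $\pi:=\gamma_h$. By Theorem \ref{th:gammahullsys} we have $\gamma_c=\pi_c$, so Theorem \ref{th:ExCovProp} (v) gives $\gamma_h=\gamma_{\gamma_e c}=\gamma_e\circ\gamma_c=\gamma_e\circ\pi_c$. Then for $f\in E$, using Theorem \ref{th:ceprop} (i) in the form $\pi_c f=c\wedge f$, I compute $\gamma_h f=\gamma_e(\pi_c f)=\gamma_e(c\wedge f)$. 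Since $c\wedge f\leq c$, Theorem \ref{th:EXCprop} (iv) applied to the endomorphism $\gamma_e$ yields $\gamma_e(c\wedge f)=(c\wedge f)\wedge\gamma_e c$, and because $\gamma_e c\leq c$ by (EXC3) this simplifies to $f\wedge\gamma_e c=f\wedge h$. Hence $\gamma_h f=f\wedge h$ for all $f$, and Theorem \ref{th:ceprop} (viii) gives $h=\gamma_e c\in\Gamma(E)$.

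For $[\Gamma(E)]_{\gamma}\subseteq\Gamma(E)$, let $q\in[\Gamma(E)]_{\gamma}$. If $q$ arises from the empty family then $q=0\in\Gamma(E)$; otherwise there is a $\gamma$-orthogonal family $(c_i)_{i\in I}$ in $\Gamma(E)$ with $q=\oplus_{i\in I}c_i=\bigvee_{i\in I}c_i$, this supremum existing by (CO1) together with Lemma \ref{le:co} (ii). In particular $(c_i)_{i\in I}$ is bounded above in $E$ by $q$, so Theorem \ref{th:COGPEAcenter} (ii) immediately yields $q\in\Gamma(E)$. Combining the two inclusions completes the argument.

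I expect this second inclusion to be essentially immediate from the central-orthocompleteness machinery already assembled (Theorem \ref{th:COGPEAcenter}); the only real work lies in the first inclusion, whose crux is recognizing that Theorem \ref{th:ceprop} (viii) is the right characterization of centrality and that the identity $\gamma_e(c\wedge f)=f\wedge\gamma_e c$, driven by Theorem \ref{th:EXCprop} (iv), delivers exactly the equation $\gamma_h f=f\wedge h$ needed to apply it.
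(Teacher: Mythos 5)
Your proof is correct and follows essentially the same route as the paper's: both inclusions are settled with the same machinery, the orthosum case by the central-orthocompleteness results (you invoke Theorem \ref{th:COGPEAcenter} (ii) directly, while the paper cites Theorem \ref{th:largestandboo} (ii), which rests on it), and the key computation $\gamma_{\gamma_e c}f=\gamma_e(\gamma_c f)=\gamma_e(c\wedge f)$ being identical in both. The only cosmetic difference is the finishing move: the paper uses the inequality $\gamma_e(c\wedge f)\leq\gamma_e c$ to show that $\gamma_e c$ is the largest element of $\gamma_{\gamma_e c}(E)$ and concludes via Corollary \ref{co:pic}, whereas you push one step further to the exact identity $\gamma_h f=f\wedge h$ and conclude via Theorem \ref{th:ceprop} (viii); both finishes are valid.
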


\begin{proof}
Obviously $\Gamma(E)\subseteq [\Gamma(E)]_{\gamma}$ and by theorem
\ref{th:largestandboo} (ii), $[\Gamma(E)]_{\gamma}\subseteq\Gamma(E)$.
To prove that $\Gamma(E)^{\gamma}\subseteq\Gamma(E)$, let $c_1\in
\Gamma(E)^{\gamma}$, so that $c_1:=\gamma_e c$ for some $e\in E$
and $c\in\Gamma(E)$. We claim that $c_1$ is the greatest element of
$\gamma_{c_1}(E)$; hence by Corollary \ref{co:pic}, it is a central
element of $E$. Indeed, if $f\in\gamma\sb{c\sb{1}}(E)$, then $f=\gamma_
{c_1} f=\gamma_{\gamma_e c} f=\gamma_e (\gamma_c f)=\gamma_e(c\wedge f)
\leq\gamma_e c=c_1$ by Theorem \ref{th:ExCovProp} (v) and Theorem \ref{th:ceprop} (i).
\end{proof}

\begin{definition}\label{df:TypeClass}
A nonempty class $\mathcal{K}$ of PEAs is called a \emph{type class} iff
the following conditions are satisfied: (1) $\mathcal{K}$ is closed under
the passage to direct summands. (2) $\mathcal{K}$ is closed under the
formation of arbitrary nonempty direct products. (3) If $E_1$ and $E_2$
are isomorphic PEAs and $E_1$ is in $\mathcal{K}$, then $E_2\in\mathcal{K}$.
If, in addition to (2) and (3), $\mathcal{K}$ satisfies (1$'$) $H\in
\mathcal{K}, h\in H\,\Rightarrow\, H[0,h]\in\mathcal{K}$, then $\mathcal{K}$
is called a \emph{strong type class}.
\end{definition}

\begin{theorem} \label{th:TypeClass}
Let $\mathcal{K}$ be a type class of PEAs and define $K:=\{k\in E:
E[0,k]\in\mathcal{K}\}$. Then $K$ is a TD subset of $E$, and if
$\mathcal{K}$ is a strong type class, then $K$ is STD.
\end{theorem}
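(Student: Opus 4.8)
The plan is to verify the two defining equalities for a TD set, namely $K=[K]_{\gamma}$ and $K=K^{\gamma}$, and in the strong case the additional equality $K=K^{\downarrow}$. The inclusions $K\subseteq[K]_{\gamma}$, $K\subseteq K^{\gamma}$, and $K\subseteq K^{\downarrow}$ are immediate: a singleton $(k)$ is a $\gamma$-orthogonal family with orthosum $k$; by Theorem \ref{th:ExCovProp} (ii) we have $k=\gamma_{k}k\in K^{\gamma}$; and $k\in E[0,k]\subseteq K^{\downarrow}$. So the work lies in the reverse inclusions.

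For $[K]_{\gamma}\subseteq K$, I take $q\in[K]_{\gamma}$ and write $q=\bigvee_{i\in I}q_{i}=\oplus_{i\in I}q_{i}$ for a $\gamma$-orthogonal family $(q_{i})_{i\in I}$ in $K$. If $I\neq\emptyset$, then $(\gamma_{q_{i}})_{i\in I}$ is pairwise disjoint and $\gamma_{q_{i}}q_{i}=q_{i}$, so Corollary \ref{cor:cop} (iv) (with $p_{i}:=q_{i}$ and $\pi_{i}:=\gamma_{q_{i}}$) provides a PEA-isomorphism of the cartesian product {\huge$\times$}$_{i\in I}E[0,q_{i}]$ onto $E[0,q]$. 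Each factor $E[0,q_{i}]$ lies in $\mathcal{K}$ because $q_{i}\in K$, so the product lies in $\mathcal{K}$ by condition (2) of Definition \ref{df:TypeClass}, whence $E[0,q]\in\mathcal{K}$ by condition (3); thus $q\in K$. If $I=\emptyset$, then $q=0$ and $E[0,0]=\{0\}$ is a direct summand of any member of the nonempty class $\mathcal{K}$, so $\{0\}\in\mathcal{K}$ by condition (1) and again $q\in K$. This already establishes $K=[K]_{\gamma}$, which is needed for both the TD and the STD conclusions.

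For $K^{\gamma}\subseteq K$, I take $c=\gamma_{e}k$ with $e\in E$ and $k\in K$, and I must show $E[0,c]\in\mathcal{K}$; the idea is to exhibit $E[0,c]$ as a direct summand of $E[0,k]$. By Lemma \ref{le:nova}, the restriction $\gamma_{e}|_{E[0,k]}$ belongs to $\GEX(E[0,k])$, and by Theorem \ref{th:mis} (v) its image is $\gamma_{e}(E[0,k])=E[0,\gamma_{e}k]=E[0,c]$. Since $E[0,k]$ is a PEA, hence a GPEA, Theorem \ref{th:CentId=piE} shows that the image of any element of $\GEX(E[0,k])$ is a central ideal (direct summand) of $E[0,k]$; thus $E[0,c]$ is a direct summand of $E[0,k]\in\mathcal{K}$, and condition (1) gives $E[0,c]\in\mathcal{K}$, i.e.\ $c\in K$. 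This step, threading together Lemma \ref{le:nova}, Theorem \ref{th:mis} (v), and Theorem \ref{th:CentId=piE} so as to realize $E[0,\gamma_{e}k]$ as a genuine direct summand of $E[0,k]$, is the part I expect to demand the most care. Together with the previous paragraph it yields $K=[K]_{\gamma}=K^{\gamma}$, so $K$ is TD.

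Finally, assume $\mathcal{K}$ is a strong type class; it remains to prove $K^{\downarrow}\subseteq K$. Given $e\in K^{\downarrow}$, choose $k\in K$ with $e\leq k$. Since $e\leq k$, the order interval $E[0,e]$ coincides, both as a set and as a PEA, with the interval of $e$ taken inside $E[0,k]$, namely $E[0,k][0,e]$. As $E[0,k]\in\mathcal{K}$ and $e\in E[0,k]$, condition (1$'$) of Definition \ref{df:TypeClass} gives $E[0,k][0,e]\in\mathcal{K}$, hence $E[0,e]\in\mathcal{K}$ and $e\in K$. Combined with $K=[K]_{\gamma}$ established above, this shows $K=[K]_{\gamma}=K^{\downarrow}$, so $K$ is STD.
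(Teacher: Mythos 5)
Your proof is correct and takes essentially the same route as the paper's: Lemma \ref{le:nova} plus Theorem \ref{th:mis} (v) for $K^{\gamma}\subseteq K$, Corollary \ref{cor:cop} together with conditions (2)--(3) of Definition \ref{df:TypeClass} for $[K]_{\gamma}\subseteq K$, and condition (1$'$) for the STD case. Your explicit handling of the empty family (via $\{0\}$ being a direct summand of any member of $\mathcal{K}$) and your explicit appeal to Theorem \ref{th:CentId=piE} to realize $E[0,\gamma_e k]$ as a direct summand of $E[0,k]$ merely fill in details the paper leaves implicit.
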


\begin{proof}
Suppose $k\in K$ and $e\in E$. Then $E[0,k]\in\mathcal{K}$,
$\gamma_e\in\GEX(E)$, and by Lemma \ref{le:nova}, $\gamma_e|
_{E[0,k]}\in\GEX(E[0,k])$. Thus by Theorem \ref{th:mis} (v)
and Definition \ref{df:TypeClass} (1), $E[0,\gamma_e k]=\gamma_e
(E[0,k])=\gamma_e|_{E[0,k]}(E[0,k])\in\mathcal{K}$, so $K^{\gamma}
\subseteq K$. If $\mathcal{K}$ is a strong type class, it is clear,
that $K^{\downarrow}\subseteq K$. Finally, suppose that $k\in[K]_
{\gamma}$. Then there exists a $\gamma$-orthogonal family $(k_i)_
{i\in I}$ in $K$ such that $k=\bigvee_{i\in I} k_i$. Thus by
Definition \ref{df:TypeClass} (2), $X:=${\huge$\times$}$_{i\in I}
E[0,k_i]\in\mathcal{K}$ and by Corollary \ref{cor:cop}, $X$ is
PEA-isomorphic to $E[0,k]$, whence by Definition \ref{df:TypeClass}
(3), $E[0,k]\in\mathcal{K}$, and therefore $k\in K$.
\end{proof}

\begin{example}
The class $\mathcal K$ of all EAs is a strong type class of PEAs; hence
by Theorem \ref{th:TypeClass}, the set $K$ of all elements $k\in E$ such
that $E[0,k]$ is an EA is an STD subset of $E$.
\end{example}

\begin{assumption}
From now on we will assume that $K$ is a TD subset of the COGPEA $E$.
\end{assumption}

\begin{definition}
${\widetilde K}:=K\cap\Gamma(E)$.
\end{definition}

\begin{theorem}\label{th:kstar}
There exists $k^*\in K$ such that $\gamma\sb{k^*}$ is the largest
mapping in\linebreak $\{\gamma_k:k\in K\}=\{\gamma_e:e\in E, e
\leq k^*\}=\Theta\sb{\gamma}[0,\gamma\sb{k^*}]$, which is a
sublattice of $\GEX(E)$, and as such, it is a boolean algebra.
Moreover, ${\widetilde K}$ is a TD subset of $E$, there exists
${\widetilde k}\in{\widetilde K}$ such that $\gamma\sb{\widetilde k}$
is the largest mapping in $\{\gamma_k:k\in{\widetilde K}\}=\{\gamma_e:e
\in E,e\leq{\widetilde k}\}=\Theta\sb{\gamma}[0,\gamma\sb{\widetilde k}]$,
which is a sublattice of $\GEX(E)$, and as such, it is a boolean algebra.
\end{theorem}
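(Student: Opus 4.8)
The plan is to obtain $k^{*}$ as the orthosum of a maximal $\gamma$-orthogonal family inside $K$, and then to exploit maximality together with the hypothesis $K=K^{\gamma}$ to force $\gamma_{k^{*}}$ to dominate every $\gamma_{k}$, $k\in K$. First I would invoke Zorn's lemma to select a maximal $\gamma$-orthogonal family $(k_{i})_{i\in I}$ of nonzero elements of $K$ (equivalently, by Theorem \ref{th:disjointgammasbei}, a family with $(\gamma_{k_{i}})_{i\in I}$ pairwise disjoint). Since $E$ is a COGPEA, (CO1) makes this family orthosummable, so $k^{*}:=\oplus_{i\in I}k_{i}=\bigvee_{i\in I}k_{i}$ exists; because $K=[K]_{\gamma}$ (as $K$ is TD), we get $k^{*}\in K$ (the empty family gives $k^{*}=0\in K$). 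By Theorem \ref{th:ExCovProp}(iii), each $\gamma_{k_{i}}\leq\gamma_{k^{*}}$.

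The heart of the argument is to show $\gamma_{k}\leq\gamma_{k^{*}}$ for every $k\in K$. Given such a $k$, I would set $h:=(\gamma_{k^{*}})'k$. By Theorem \ref{th:ExCovProp}(vi), $\gamma_{h}=(\gamma_{k^{*}})'\wedge\gamma_{k}$, and hence $\gamma_{h}k=[(\gamma_{k^{*}})'\wedge\gamma_{k}]k=(\gamma_{k^{*}})'(\gamma_{k}k)=(\gamma_{k^{*}})'k=h$, so that $h=\gamma_{h}k\in K^{\gamma}=K$. Since $\gamma_{h}\leq(\gamma_{k^{*}})'$ while every $\gamma_{k_{i}}\leq\gamma_{k^{*}}$, the cover $\gamma_{h}$ is disjoint from each $\gamma_{k_{i}}$, so if $h\neq0$ then $(k_{i})_{i\in I}\cup\{h\}$ would be a strictly larger $\gamma$-orthogonal family in $K$, contradicting maximality. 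Thus $h=0$, i.e.\ $(\gamma_{k^{*}})'\wedge\gamma_{k}=\gamma_{h}=0$, which is exactly $\gamma_{k}\leq\gamma_{k^{*}}$. This realizes $\gamma_{k^{*}}$ as the largest element of $\{\gamma_{k}:k\in K\}$, and it is the step I expect to be the main obstacle, since it hinges on the slightly unexpected observation that the (non-cover) operator $(\gamma_{k^{*}})'$ nevertheless maps $k$ back into $K$.

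Next I would identify the three displayed sets by a short cycle of inclusions. Each $\gamma_{k}$ ($k\in K$) lies in $\Theta_{\gamma}$ with $\gamma_{k}\leq\gamma_{k^{*}}$ by the previous paragraph; each $\gamma_{g}\leq\gamma_{k^{*}}$ is recovered as $\gamma_{g}=\gamma_{\gamma_{g}k^{*}}$ with $\gamma_{g}k^{*}\leq k^{*}$, using Theorem \ref{th:ExCovProp}(v); and if $e\leq k^{*}$ then $\gamma_{e}k^{*}\in K^{\gamma}=K$ with $\gamma_{\gamma_{e}k^{*}}=\gamma_{e}\wedge\gamma_{k^{*}}=\gamma_{e}$ (by (v) and $\gamma_{e}\leq\gamma_{k^{*}}$). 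For the lattice claim, closure of $\Theta_{\gamma}[0,\gamma_{k^{*}}]$ under $\wedge$ in $\GEX(E)$ is immediate from Theorem \ref{th:ExCovProp}(vii); for $\vee$ I would write $\gamma_{e}\vee\gamma_{f}=\gamma_{e}\vee((\gamma_{e})'\wedge\gamma_{f})$, observe that $(\gamma_{e})'\wedge\gamma_{f}\in\Theta_{\gamma}$ by (viii) and is disjoint from $\gamma_{e}$, and apply the disjoint-join case of Corollary \ref{co:ThetasbgammaGBA} to return to $\Theta_{\gamma}$. Both operations stay below $\gamma_{k^{*}}$, so the set is a sublattice of $\GEX(E)$; it is bounded by $0$ and $\gamma_{k^{*}}$, distributive, and the relative complement $(\gamma_{e})'\wedge\gamma_{k^{*}}$ of $\gamma_{e}$ lies in it by (viii), whence it is a boolean algebra.

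Finally, for the ``moreover'' clause I would first verify that $\widetilde K=K\cap\Gamma(E)$ is TD. An intersection of TD sets is TD: a $\gamma$-orthogonal family drawn from $K\cap\Gamma(E)$ has its orthosum in $[K]_{\gamma}=K$ and in $[\Gamma(E)]_{\gamma}=\Gamma(E)$, and any $\gamma_{e}q$ with $q\in K\cap\Gamma(E)$ lies in $K^{\gamma}=K$ and in $\Gamma(E)^{\gamma}=\Gamma(E)$; since $\Gamma(E)$ is TD by Theorem \ref{centrTD}, so is $\widetilde K$. Applying the already-established first half of the theorem verbatim to the TD set $\widetilde K$ in place of $K$ then produces $\widetilde k\in\widetilde K$ with all the asserted properties.
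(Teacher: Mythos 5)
Your proposal is correct and takes essentially the same route as the paper's own proof: a maximal $\gamma$-orthogonal family in $K$ whose orthosum $k^*$ lies in $K$ by TD-ness, the key step that $h:=(\gamma_{k^*})'k=\gamma_h k\in K^{\gamma}=K$ (via Theorem \ref{th:ExCovProp}(vi)) so that maximality forces $h=0$ and hence $\gamma_k\leq\gamma_{k^*}$, followed by the same cycle of inclusions using Theorem \ref{th:ExCovProp}(v) and $K^{\gamma}=K$. The only differences are cosmetic: you compute with $\wedge=\circ$ where the paper uses pointwise meets, you verify the sublattice/Boolean-algebra claim and the TD-ness of $\widetilde K=K\cap\Gamma(E)$ explicitly where the paper cites Corollary \ref{co:ThetasbgammaGBA} and Theorem \ref{centrTD}, and the reduction of the $\widetilde k$ statement to the first part is identical.
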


\begin{proof}
Let us take a maximal $\gamma$-orthogonal family $(k_i)_{i\in I}\subseteq
K$ and set $k^*:=\bigvee_{i\in I} k_i$. Then $k^*\in K$, because $K$ is TD
subset of $E$. Let $k\in K$. As $\gamma\sb{k}k=k$ and $(\gamma\sb{k^*})'
k\leq k$, we have $(\gamma\sb{k}\wedge(\gamma\sb{k^*})')k=\gamma\sb{k}k
\wedge(\gamma\sb{k^*})'k=k\wedge(\gamma\sb{k^*})'k=(\gamma\sb{k^*})'k$.
Also, by Theorem \ref{th:ExCovProp} (vi), $\gamma_k\wedge(\gamma_{k^*})'
=\gamma\sb{d}$, where $d:=(\gamma_{k^*})'k$, and since $K\sp{\gamma}
\subset K$, it follows that ${\widehat k}:=(\gamma\sb{k^*})'k=\gamma\sb{d}k
\in K$ with $\gamma\sb{k^*}{\widehat k}=\gamma\sb{k^*}((\gamma\sb{k^*})'k)=0$.
Therefore, by Theorem \ref{th:ExCovProp} (v), $\gamma\sb{\widehat k}
\wedge\gamma\sb{k\sp{\ast}}=\gamma\sb{\gamma\sb{k\sp{\ast}}{\widehat k}}=0$,
and since $k\sb{i}\leq k\sp{\ast}$, it follows that $\gamma\sb{\widehat k}
\wedge\gamma\sb{k\sb{i}}=0$ for all $i\in I$. Consequently, $(\gamma
\sb{k^*})'k={\widehat k}=0$ by the maximality of $(k_i)_{i\in I}$, therefore
$k=\gamma\sb{k^*}k$, whence $\gamma\sb{k}\leq\gamma\sb{k^*}$.

Suppose $k\in K$ and put $e:=\gamma\sb{k}k^*$. Then $e\leq k^*$ with
$\gamma\sb{k}=\gamma\sb{k}\wedge\gamma\sb{k^*}=\gamma\sb{\gamma\sb{k}k^*}
=\gamma\sb{e}$, whence $\{\gamma\sb{k}:k\in K\}\subseteq\{\gamma\sb{e}
:e\in E, e\leq k^*\}$. If $e\in E$ and $e\leq k^*$, then $\gamma\sb{e}
\leq\gamma\sb{k^*}$, so $\{\gamma\sb{e}:e\in E, e\leq k^*\}\subseteq
\{\gamma\sb{e}:e\in E, \gamma\sb{e}\leq\gamma\sb{k^*}\}=\Theta\sb{\gamma}
[0,\gamma\sb{k^*}]$. Finally, suppose $e\in E$ with $\gamma\sb{e}\leq
\gamma\sb{k^*}$, and put $k:=\gamma\sb{e}k^*$. Since $K$ is TD, we have
$k\in K$; moreover, $\gamma\sb{e}=\gamma\sb{e}\wedge\gamma\sb{k^*}=
\gamma\sb{k}$, so $\Theta\sb{\gamma}[0,\gamma\sb{k^*}]\subseteq
\{\gamma\sb{k}:k\in K\}$.

By Corollary \ref{co:ThetasbgammaGBA}, $\Theta\sb{\gamma}$ is a
generalized boolean algebra; hence the interval $\Theta\sb{\gamma}
[0,\gamma\sb{k^*}]=\{\pi\in\Theta\sb{\gamma}:0\leq\pi\leq\gamma
\sb{k^*}\}$ is a boolean algebra with unit $\gamma\sb{k^*}$.

That ${\widetilde K}$ is a TD subset, follows from Theorem \ref{centrTD}
and the fact that ${\widetilde K}=K\cap\Gamma(E)$. Thus we obtain the second
part of the theorem by applying the first part to ${\widetilde K}$.
\end{proof}

Since $\gamma_{k^*}\in\GEX(E)$ is the largest element in $\{\gamma_k: k
\in K\}$, it is uniquely determined by the TD set $K$. Likewise, ${\widetilde k}$
is uniquely determined by ${\widetilde K}=K\cap\Gamma(E)$, hence it also
is uniquely determined by $K$, and we may formulate the following definition.

\begin{definition} \label{df:gammasbK}
With the notation of Theorem \ref{th:kstar}, (1) $\gamma_K:=\gamma_{k^*}$
and (2) $\gamma_{\widetilde K}:=\gamma_{\widetilde k}$\,.
\end{definition}

\begin{corollary}\label{co:gammasbK}
$\Theta\sb{\gamma}[0,\gamma\sb{K}]$ is a boolean algebra and we have{\rm:}
\begin{enumerate}
\item $\gamma_{\widetilde K}\leq\gamma_K\in\Theta\sb{\gamma}[0,\gamma\sb{K}]
 \subseteq\GEX(E)$.
\item $\gamma_K=\bigvee_{k\in K}\gamma_k$.
\item $\gamma_K$ is the smallest mapping $\pi\in\GEX(E)$
 such that $K\subseteq\pi(E)$.
\item $\gamma_{\widetilde K}=\bigvee_{k\in{\widetilde K}}
 \gamma_k\in\GEX(E)$.
\item $\gamma_{\widetilde K}$ is the smallest mapping $\pi\in
 \Theta_{\gamma}$ such that ${\widetilde K}\subseteq\pi(E)$.
\end{enumerate}
\end{corollary}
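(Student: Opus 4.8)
The plan is to derive every assertion directly from Theorem~\ref{th:kstar} and Definition~\ref{df:gammasbK}, using the identities $\gamma_K=\gamma_{k^*}$ and $\gamma_{\widetilde K}=\gamma_{\widetilde k}$. The claim that $\Theta_\gamma[0,\gamma_K]$ is a boolean algebra is then literally the conclusion of Theorem~\ref{th:kstar}, read through $\gamma_K=\gamma_{k^*}$. For (i), the containment $\Theta_\gamma\subseteq\GEX(E)$ is immediate from Definition~\ref{df:ExoCenCover}, since each $\gamma_e$ is by construction a member of $\GEX(E)$, and $\gamma_K\in\Theta_\gamma[0,\gamma_K]$ because $\gamma_K$ is the top element of that interval. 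For $\gamma_{\widetilde K}\leq\gamma_K$ I would use $\widetilde K=K\cap\Gamma(E)\subseteq K$, so $\widetilde k\in K$ and hence $\gamma_{\widetilde K}=\gamma_{\widetilde k}\leq\gamma_{k^*}=\gamma_K$ by the maximality of $\gamma_{k^*}$ in $\{\gamma_k:k\in K\}$ established in Theorem~\ref{th:kstar}.

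For (ii) and (iv) the key observation is that $\gamma_{k^*}$ (respectively $\gamma_{\widetilde k}$) is not merely an upper bound of $\{\gamma_k:k\in K\}$ (respectively $\{\gamma_k:k\in\widetilde K\}$) but is itself a member of that set, because $k^*\in K$ (respectively $\widetilde k\in\widetilde K$). A largest element of a set is automatically its supremum, so $\gamma_K=\bigvee_{k\in K}\gamma_k$ and $\gamma_{\widetilde K}=\bigvee_{k\in\widetilde K}\gamma_k$; the membership $\gamma_{\widetilde K}\in\GEX(E)$ reuses $\Theta_\gamma\subseteq\GEX(E)$.

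The substantive parts are (iii) and (v), and the tool I would use throughout is the chain of equivalences $\gamma_k\leq\pi\Leftrightarrow\pi k=k\Leftrightarrow k\in\pi(E)$, whose first equivalence combines the defining property of $\gamma_k$ as the smallest element of $\GEX(E)$ fixing $k$ (Definition~\ref{df:ExoCenCover}) with the image characterization in Lemma~\ref{le:circ}(ii), and whose second equivalence is Theorem~\ref{th:EXCprop}(v). First I would check $K\subseteq\gamma_K(E)$: for $k\in K$ we have $\gamma_k\leq\gamma_K$, hence $\gamma_k(E)\subseteq\gamma_K(E)$ by Lemma~\ref{le:circ}(ii), and $k=\gamma_k k\in\gamma_k(E)$. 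For minimality, if $\pi\in\GEX(E)$ satisfies $K\subseteq\pi(E)$, then $\pi k=k$ for every $k\in K$, so $\gamma_k\leq\pi$ for every $k\in K$, and taking the supremum via (ii) gives $\gamma_K\leq\pi$. Part (v) is the same argument applied to $\widetilde K$ and $\gamma_{\widetilde K}=\bigvee_{k\in\widetilde K}\gamma_k$; the minimality in fact holds over all of $\GEX(E)$, hence a fortiori over $\Theta_\gamma$, and $\gamma_{\widetilde K}\in\Theta_\gamma$ guarantees the asserted witness lies in the stated poset.

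I do not expect a genuine obstacle, since the corollary is essentially a repackaging of Theorem~\ref{th:kstar}. The one point requiring care is bookkeeping of the ambient poset ($\GEX(E)$ for (iii) versus $\Theta_\gamma$ for (v)) in each minimality claim, and confirming that the supremum $\bigvee_{k\in\widetilde K}\gamma_k$ taken in the complete boolean algebra $\GEX(E)$ coincides with $\gamma_{\widetilde k}\in\Theta_\gamma$ rather than escaping $\Theta_\gamma$---but this is precisely what the second half of Theorem~\ref{th:kstar} provides.
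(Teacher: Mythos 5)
Your proof is correct and follows essentially the same route as the paper: everything is read off from Theorem \ref{th:kstar} via $\gamma_K=\gamma_{k^*}$ and $\gamma_{\widetilde K}=\gamma_{\widetilde k}$, with (iii) and (v) handled by $k=\gamma_k k$, the characterization $\pi(E)=\{e:\pi e=e\}$ from Theorem \ref{th:EXCprop}(v), and the minimality of the exocentral cover. The only cosmetic difference is that for minimality in (iii) the paper applies the defining property of $\gamma_{k^*}$ to the single element $k^*\in K\subseteq\pi(E)$, whereas you bound $\gamma_k\leq\pi$ for every $k\in K$ and pass to the supremum; both are immediate.
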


\begin{proof}
(i) This is clear by Theorem \ref{th:kstar}, because $\{\gamma_k :k\in
\tilde{K}\}\subseteq\{\gamma_k :k\in K\}$.

(ii) By Theorem \ref{th:kstar}, $\gamma_K$ is the largest mapping in
$\{\gamma\sb{k}:k\in K\}$, from which (ii) follows immediately.

(iii) First we show that $K\subseteq\gamma_K(E)$. Indeed, if $k\in K$,
then $\gamma_k\leq\gamma_K$, so $k=\gamma_k k\leq\gamma_K k\leq k$,
and therefore $k=\gamma_K k\in\gamma_K(E)$. Suppose $K\subseteq\pi(E)$
for some $\pi\in\GEX(E)$. Then, $k^*\in K\subseteq\pi(E)$, so $k^*=
\pi k^*$. But $\gamma_{k^*}$ is the smallest mapping in $\GEX(E)$ with
the latter property, whence $\gamma_{k^*}\leq\pi$.

Proofs of (iv) and (v) are similar to (ii) and (iii) with $\tilde{K}$
instead of $K$.
\end{proof}

\begin{definition} \label{df:Type}
Let $\pi\in\GEX(E)$. Then{\rm:}
\begin{enumerate}
\item[(1)] $\pi$ is \emph{type-$K$} iff there exists $k\in{\widetilde K}$
 such that $\pi=\gamma_k$.
\item[(2)] $\pi$ is \emph{locally type-$K$} iff there exists $k\in K$
 such that $\pi=\gamma_k$.
\item[(3)] $\pi$ is \emph{purely non-$K$} iff $\pi\wedge\gamma_K=0$, i.e.,
 iff $\pi\leq(\gamma\sb{K})'$.
\item[(4)] $\pi$ is \emph{properly non-$K$} iff $\pi\wedge\gamma_
 {\widetilde K}=0$, i.e., iff $\pi\leq(\gamma\sb{\widetilde K})'$.
\end{enumerate}
\end{definition}

\begin{remark}\label{rm:Type}
Directly from Definition \ref{df:Type} and Corollary \ref{co:gammasbK},
we have the following for all $\pi,\xi\in\GEX(E)$:
\begin{enumerate}
\item If $\pi$ is type-$K$, then $\pi$ is locally type-$K$.
\item If $\pi$ is purely non-$K$, then $\pi$ is properly non-$K$.
\item If $\pi$ is both type-$K$ and properly non-$K$, then $\pi=0$.
\item If $\pi$ is both locally type-$K$ and purely non-$K$, then $\pi=0$.
\item If $\xi\in\Theta_{\gamma}$ and $\pi$ is type-$K$ or locally type-$K$
 then so is $\pi\wedge\xi$.
\item If $\pi$ is purely non-$K$ or properly non-$K$, then so is $\pi
 \wedge\xi$.
\item If both $\pi$ and $\xi$ are type $K$, locally type $K$, purely
 non-$K$, or properly non-$K$, then so is $\pi\vee\xi$.
\end{enumerate}
\end{remark}

\begin{theorem} \label{th:Type}
Let $\pi\in\GEX(E)$. Then{\rm:}
\begin{enumerate}
\item $\pi$ is type-$K$ iff $\pi\in\Theta_{\gamma}$ and
 $\pi\leq\gamma_{\widetilde K}$.
\item  If $K$ is STD and $\pi$ is type-$K$, then $\pi(E)\subseteq K$.
\item $\pi$ is locally type-$K$ iff $\pi\in\Theta_{\gamma}$ and $\pi
 \leq\gamma_K$.
\item  If $\pi$ is purely non-$K$, then $K\cap\pi(E)=\{0\}$
\item if $\pi$ is properly non-$K$, then ${\widetilde K}\cap\pi(E)
 =\{0\}$.
\end{enumerate}
\end{theorem}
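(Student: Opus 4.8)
The plan is to obtain all five parts directly from the structural facts already in hand, chiefly the characterization in Theorem \ref{th:kstar} of $\{\gamma_k:k\in K\}$ and $\{\gamma_k:k\in{\widetilde K}\}$ as the boolean intervals $\Theta_{\gamma}[0,\gamma_K]$ and $\Theta_{\gamma}[0,\gamma_{\widetilde K}]$, together with the minimality property of the exocentral cover from Definition \ref{df:ExoCenCover}. Parts (i) and (iii) are then almost pure bookkeeping. For (i), Definition \ref{df:Type} says $\pi$ is type-$K$ exactly when $\pi\in\{\gamma_k:k\in{\widetilde K}\}$; by Theorem \ref{th:kstar} this set is precisely $\Theta_{\gamma}[0,\gamma_{\widetilde K}]=\{\pi\in\Theta_{\gamma}:\pi\leq\gamma_{\widetilde K}\}$, which is the assertion. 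Part (iii) is the identical argument run with $K$ in place of ${\widetilde K}$, using $\{\gamma_k:k\in K\}=\Theta_{\gamma}[0,\gamma_K]$.

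For part (ii) I would exploit that a type-$K$ map has a \emph{central} witness. If $\pi=\gamma_k$ with $k\in{\widetilde K}=K\cap\Gamma(E)$, then $k\in\Gamma(E)$, so Theorem \ref{th:gammahullsys} gives $\gamma_k=\pi_k$, whence $\pi(E)=\gamma_k(E)=\pi_k(E)=E[0,k]$ by Definition \ref{df:pisbc}. Since $k\in K$ and $K$ is STD, Definition \ref{df:fourclosures} gives $K=K^{\downarrow}=\bigcup_{q\in K}E[0,q]\supseteq E[0,k]$, so $\pi(E)\subseteq K$. This is the one step that is not merely formal: the STD hypothesis must be turned into the containment $E[0,k]\subseteq K$, and this is exactly what the replacement of $\gamma_k$ by the central projection $\pi_k$ (legitimate only because the witness lies in $\Gamma(E)$) followed by $K^{\downarrow}=K$ accomplishes. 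I expect this to be the main obstacle, though a mild one.

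For parts (iv) and (v) the idea is to bound the exocentral cover of any candidate element of the intersection from above in two ways and force it to vanish. Suppose $k\in K\cap\pi(E)$ (resp. $k\in{\widetilde K}\cap\pi(E)$). Membership $k\in K$ (resp. ${\widetilde K}$) yields $\gamma_k\leq\gamma_K$ (resp. $\gamma_k\leq\gamma_{\widetilde K}$) since $\gamma_K$ (resp. $\gamma_{\widetilde K}$) is the largest element of $\{\gamma_k:k\in K\}$ (resp. $\{\gamma_k:k\in{\widetilde K}\}$) by Theorem \ref{th:kstar} and Corollary \ref{co:gammasbK}; meanwhile $k=\pi k\in\pi(E)$ together with the minimality of the exocentral cover (Definition \ref{df:ExoCenCover}) gives $\gamma_k\leq\pi$. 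Hence $\gamma_k\leq\pi\wedge\gamma_K=0$ (resp. $\gamma_k\leq\pi\wedge\gamma_{\widetilde K}=0$), so $\gamma_k=0$ and therefore $k=\gamma_k k=0$. Finally, because $K$ is TD we have $0\in[K]_{\gamma}=K$, and $0\in\Gamma(E)$ gives $0\in{\widetilde K}$; thus the intersection is exactly $\{0\}$ in both cases, which completes (iv) and (v).

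In summary, once Theorem \ref{th:kstar} and Corollary \ref{co:gammasbK} are available, (i) and (iii) are restatements, (iv) and (v) reduce to the two-sided bound $\gamma_k\leq\pi\wedge\gamma_K=0$ via cover minimality, and the only substantive point is (ii), where the centrality of the witness converts $\gamma_k$ into $\pi_k$ and the STD condition supplies $E[0,k]\subseteq K$.
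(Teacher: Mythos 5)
Your proposal is correct and follows essentially the same route as the paper's proof: (i) and (iii) read off from Theorem \ref{th:kstar}, (ii) via converting the central witness's cover $\gamma_k$ into $\pi_k$ (Theorem \ref{th:gammahullsys}) and then invoking $K=K^{\downarrow}$, and (iv)--(v) by forcing any element of the intersection to vanish using $\gamma_k\leq\gamma_K$ and disjointness. The only cosmetic difference is in (iv)--(v), where the paper computes $k=\pi k\wedge\gamma_k k=(\pi\wedge\gamma_k)k=0$ pointwise, while you pass through $\gamma_k\leq\pi\wedge\gamma_K=0$ via minimality of the exocentral cover and then use $k=\gamma_k k$ --- the same ingredients in a slightly different order, with the added (and welcome) care of checking $0\in K$ so that the intersection equals $\{0\}$ rather than being merely contained in it.
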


\begin{proof}
(i) By Theorem \ref{th:kstar} and Definition \ref{df:gammasbK}, $\{\gamma
\sb{k}:k\in{\widetilde K}\}=\Theta\sb{\gamma}[0,\gamma\sb{\widetilde K}]
=\{\gamma\sb{e}:e\in E, \gamma\sb{e}\leq\gamma\sb{\widetilde K}\}$, from
which (i) follows immediately.

(ii) If $\pi$ is type-$K$, then $\pi=\gamma_k$ for some $k\in K\cap\Gamma(E)$,
whence by Theorem \ref{th:gammahullsys}, $\pi=\gamma\sb{k}=\pi\sb{k}$, and
therefore, since $K$ is STD, $\pi(E)=E[0,k]\subseteq K$.

(iv) Suppose that $\pi$ is purely non-$K$, i.e., $\pi\wedge\gamma\sb{K}=0$.
Thus if $k\in K$, then $\gamma\sb{k}\leq\gamma\sb{K}$, whence $\pi\wedge
\gamma\sb{k}=0$. Therefore, if $k\in K\cap\pi(E)$, then $k=k\wedge k=
\pi k\wedge\gamma\sb{k}k=(\pi\wedge\gamma\sb{k})k=0$.

The proofs of (iii) and (v) are analogous to those of (i) and (iv).
\end{proof}

\begin{definition} \label{df:faithful}
An element $f\in E$ is \emph{faithful} iff $\gamma\sb{f}=1$.
\end{definition}

As is easily seen, if $\pi\in\GEX(E)$, then an element $f\in\pi(E)$
is faithful in the GPEA $\pi(E)$ iff $\gamma\sb{f}=\pi$.

\begin{theorem}\label{ksharp}
Let $\pi\in\Theta_{\gamma}$ and put $k^{\sharp}:=\pi k^*$, where
$k^*\in K$ is the element in Theorem \ref{th:kstar}. Then $k^{\sharp}
\in K\cap\pi(E)$ and the following conditions are mutually equivalent{\rm:}
\begin{enumerate}
\item $\pi$ is locally type-$K$.
\item $k^{\sharp}$ is faithful in the direct summand $\pi(E)$ of $E$ {\rm(}i.e.,
 $\gamma_{k^{\sharp}}=\pi${\rm)}.
\item If $\xi\in\Theta_{\gamma}$ with $\xi\wedge\pi\not =0$, then $k^{\sharp}$
 has a nonzero component $0\not=\xi k^{\sharp}$ in the direct summand
 $\xi(\pi(E))$ of the GPEA $\pi(E)$, and $\xi k^{\sharp}\in K$.
\end{enumerate}
\end{theorem}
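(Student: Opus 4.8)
The plan is to reduce all three equivalences to the single identity
\[
\gamma\sb{k^{\sharp}}=\pi\wedge\gamma\sb{K},
\]
together with the characterization of ``locally type-$K$'' in Theorem \ref{th:Type} (iii). First I would dispose of the membership claim. Since $\pi\in\Theta\sb{\gamma}$, we may write $\pi=\gamma\sb{h}$ for some $h\in E$, so that $k^{\sharp}=\pi k^{*}=\gamma\sb{h}k^{*}\in K^{\gamma}\subseteq K$ (the inclusion because $k^{*}\in K$ and $K$ is TD), while $\pi k^{\sharp}=\pi\pi k^{*}=\pi k^{*}=k^{\sharp}$ gives $k^{\sharp}\in\pi(E)$; hence $k^{\sharp}\in K\cap\pi(E)$. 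The same presentation $\pi=\gamma\sb{h}$ fed into Theorem \ref{th:ExCovProp} (v) yields $\gamma\sb{k^{\sharp}}=\gamma\sb{\gamma\sb{h}k^{*}}=\gamma\sb{h}\wedge\gamma\sb{k^{*}}=\pi\wedge\gamma\sb{K}$, recalling $\gamma\sb{k^{*}}=\gamma\sb{K}$ from Definition \ref{df:gammasbK}. This identity is the engine of the whole proof.

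For (i)$\Leftrightarrow$(ii): by the identity, condition (ii), namely $\gamma\sb{k^{\sharp}}=\pi$, is equivalent to $\pi\wedge\gamma\sb{K}=\pi$, i.e.\ to $\pi\leq\gamma\sb{K}$. Since $\pi\in\Theta\sb{\gamma}$ by hypothesis, Theorem \ref{th:Type} (iii) says exactly that $\pi\leq\gamma\sb{K}$ holds iff $\pi$ is locally type-$K$, which is (i).

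For (ii)$\Leftrightarrow$(iii) I would first observe that $\xi k^{\sharp}\in K$ holds for \emph{every} $\xi\in\Theta\sb{\gamma}$: writing $\xi=\gamma\sb{g}$ we have $\xi k^{\sharp}=\gamma\sb{g}k^{\sharp}\in K^{\gamma}\subseteq K$ since $k^{\sharp}\in K$ (and $\xi k^{\sharp}$ is genuinely the component of $k^{\sharp}$ in the direct summand $\xi(\pi(E))$ of $\pi(E)$, via $\xi|\sb{\pi(E)}\in\GEX(\pi(E))$ from Theorem \ref{th:mis} (i), because $k^{\sharp}\in\pi(E)$). Thus the only real content of (iii) is the nonvanishing $\xi k^{\sharp}\neq 0$. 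To prove (ii)$\Rightarrow$(iii), take $\xi\in\Theta\sb{\gamma}$ with $\xi\wedge\pi\neq 0$ and compute, again by Theorem \ref{th:ExCovProp} (v), $\gamma\sb{\xi k^{\sharp}}=\xi\wedge\gamma\sb{k^{\sharp}}=\xi\wedge\pi$ using (ii); were $\xi k^{\sharp}=0$ we would get $\xi\wedge\pi=\gamma\sb{0}=0$, contradicting the hypothesis on $\xi$. For the converse I would argue contrapositively: since $\pi$ fixes $k^{\sharp}$ we always have $\gamma\sb{k^{\sharp}}\leq\pi$, so if $\gamma\sb{k^{\sharp}}\neq\pi$ then $\gamma\sb{k^{\sharp}}<\pi$ in the boolean algebra $\GEX(E)$ and the element $\xi:=\pi\wedge(\gamma\sb{k^{\sharp}})'$ is nonzero. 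By Theorem \ref{th:ExCovProp} (viii) this $\xi$ lies in $\Theta\sb{\gamma}$, it satisfies $\xi\wedge\pi=\xi\neq 0$, and yet, using $\pi\wedge(\gamma\sb{k^{\sharp}})'=\pi\circ(\gamma\sb{k^{\sharp}})'$ (Theorem \ref{th:boolalg}) and $(\gamma\sb{k^{\sharp}})'k^{\sharp}=0$ (as $\gamma\sb{k^{\sharp}}$ fixes $k^{\sharp}$), we get $\xi k^{\sharp}=\pi((\gamma\sb{k^{\sharp}})'k^{\sharp})=\pi(0)=0$. This violates (iii), so (iii) forces $\gamma\sb{k^{\sharp}}=\pi$, which is (ii).

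The argument involves no deep step; the work is entirely bookkeeping inside the boolean algebra $\GEX(E)$ and its generalized boolean subalgebra $\Theta\sb{\gamma}$. The point I would watch most carefully is that the witness $\xi=\pi\wedge(\gamma\sb{k^{\sharp}})'$ used in (iii)$\Rightarrow$(ii) really is an exocentral cover, so that (iii) is applicable to it — this is precisely what Theorem \ref{th:ExCovProp} (viii) supplies — together with the repeated, but routine, use of composition-equals-meet and of $\gamma\sb{\gamma\sb{e}f}=\gamma\sb{e}\wedge\gamma\sb{f}$ to convert covers of components back into meets.
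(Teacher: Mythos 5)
Your proposal is correct and uses essentially the same ingredients as the paper's proof: the membership argument is identical, the cover computation $\gamma\sb{\gamma\sb{e}f}=\gamma\sb{e}\wedge\gamma\sb{f}$ (Theorem \ref{th:ExCovProp} (v)) together with Theorem \ref{th:Type} (iii) drives the type-$K$ characterization, and the contradiction witness $\xi=\pi\wedge(\gamma\sb{k\sp{\sharp}})'$ (legitimized as a member of $\Theta\sb{\gamma}$ via Theorem \ref{th:ExCovProp} (vi)/(viii)) is exactly the paper's. The only difference is organizational: you extract the identity $\gamma\sb{k\sp{\sharp}}=\pi\wedge\gamma\sb{K}$ up front and prove the two biconditionals (i)$\Leftrightarrow$(ii) and (ii)$\Leftrightarrow$(iii), whereas the paper runs the cycle (i)$\Rightarrow$(ii)$\Rightarrow$(iii)$\Rightarrow$(i); this is a harmless, arguably cleaner, repackaging of the same argument.
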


\begin{proof}
As $\pi\in\Theta\sb{\gamma}$, there exists $d\in E$ with $\pi=\gamma_d$. Since
$K$ is TD and $k^*\in K$, we have $k\sp{\sharp}=\pi k^*=\gamma\sb{d}k^*\in K$.
Also, $k\sp{\sharp}=\pi k^*\in\pi(E)$, whence $k^{\sharp}\in K\cap\pi(E)$.

(i) $\Rightarrow$ (ii): If $\pi=\gamma_d$ is locally type-$K$, then $\gamma_d
\leq\gamma_K=\gamma_{k^*}$ so $\gamma_{k^{\sharp}}=\gamma_{\gamma_d k^*}=
\gamma_d\wedge\gamma_{k^*}=\gamma\sb{d}=\pi$.

(ii) $\Rightarrow$ (iii): Assume (ii) and the hypotheses of (iii). Then
$\xi k^{\sharp}=\xi\pi k^*\in\xi(\pi(E))$, $\gamma\sb{k\sp{\sharp}}=\pi$,
there exists $e\in E$ with $\xi=\gamma\sb{e}$, and $0\not=\xi\wedge\pi=
\gamma\sb{e}\wedge\gamma\sb{k\sp{\sharp}}=\gamma\sb{\gamma\sb{e}k\sp{\#}}
=\gamma\sb{\xi k\sp{\#}}$, so $\xi k\sp{\#}\not=0$. Also, since $K$ is TD
and $k\sp{\#}\in K$, we have $\xi k\sp{\#}=\gamma\sb{e}k\sp{\#}\in K$.

(iii) $\Rightarrow$ (i): Assume (iii). We have $\pi=\gamma\sb{d}$, and
since $k\sp{\#}\in K$, we also have $\gamma\sb{k\sp{\#}}\leq\gamma\sb{K}$;
hence, by Theorem \ref{th:Type} (iii), it will be sufficient to show that
$\gamma\sb{d}\leq\gamma\sb{k\sp{\#}}$. Aiming for a contradiction, we
assume that $\gamma\sb{d}\not\leq\gamma\sb{k\sp{\#}}$, i.e., by Theorem
\ref{th:ExCovProp} (vi),  $\xi:=\gamma\sb{e}=(\gamma_{k^{\sharp}})'\wedge
\gamma_d\not=0$, where $e:=\gamma\sb{k^{\#}}d$. Then $\xi\leq\gamma\sb{d}
=\pi$, so $\xi\wedge\pi=\xi\not=0$. But $\xi\leq(\gamma_{k^{\sharp}})'$
implies $\xi k^{\sharp}=0$, contradicting (iii).
\end{proof}

\begin{corollary}
If $\pi\in\GEX(E)$ is locally type-$K$ and $\xi\in\Theta_{\gamma}$ with
$\xi\wedge\pi\not =0$, then the direct summand $\xi(\pi(E))$ of $\pi(E)$
contains a nonzero element of $K$.
\end{corollary}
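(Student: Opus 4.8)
The plan is to deduce this corollary directly from Theorem \ref{ksharp}, specifically from the equivalence of its conditions (i) and (iii). The first thing to check is that the hypothesis ``$\pi$ is locally type-$K$'' already places $\pi$ in $\Theta_{\gamma}$, which is the standing hypothesis under which Theorem \ref{ksharp} is stated. Indeed, by Definition \ref{df:Type} (2), locally type-$K$ means that $\pi=\gamma_k$ for some $k\in K$; since $\Theta_{\gamma}=\{\gamma_e:e\in E\}$ and $K\subseteq E$, we have $\pi\in\Theta_{\gamma}$. Hence Theorem \ref{ksharp} applies to our $\pi$.

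Next I would invoke the theorem with $k^{\sharp}:=\pi k^*$, where $k^*\in K$ is the distinguished element furnished by Theorem \ref{th:kstar}. Because $\pi$ is locally type-$K$, condition (i) of Theorem \ref{ksharp} holds, and therefore so does the equivalent condition (iii). Applying (iii) to the given $\xi\in\Theta_{\gamma}$ with $\xi\wedge\pi\neq 0$ yields immediately that $\xi k^{\sharp}$ is a nonzero element of the direct summand $\xi(\pi(E))$ of $\pi(E)$ and that $\xi k^{\sharp}\in K$. Thus $\xi k^{\sharp}$ is the desired nonzero element of $K$ lying in $\xi(\pi(E))$.

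There is no substantive obstacle here, since the corollary is essentially a repackaging of the implication (i) $\Rightarrow$ (iii) in Theorem \ref{ksharp}. The only point requiring any care is the preliminary verification that ``locally type-$K$'' forces $\pi\in\Theta_{\gamma}$, so that Theorem \ref{ksharp} is genuinely applicable; once this is observed, the conclusion is read off directly from condition (iii).
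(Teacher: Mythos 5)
Your proposal is correct and is essentially identical to the paper's own proof: both obtain the desired element as $\xi k^{\sharp}$ via the implication (i) $\Rightarrow$ (iii) of Theorem \ref{ksharp}, your only addition being the (valid and worthwhile) explicit check that locally type-$K$ forces $\pi\in\Theta_{\gamma}$ so the theorem applies.
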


\begin{proof}
The nonzero element $\xi k^{\sharp}\in K$ in Theorem \ref{ksharp} belongs to
$\xi(\pi(E))$ .
\end{proof}

\begin{lemma}\label{le:type}
\rm{(i)} There exists a unique mapping $\pi\in\Theta_{\gamma}$, namely
$\pi=\gamma_K$, such that $\pi$ is locally type-$K$ and $\pi\,'$ is purely
non-$K$. \rm{(ii)} There exists a unique mapping $\xi\in\Theta_{\gamma}$,
namely $\xi=\gamma_{\widetilde K}$, such that $\xi$ is type-$K$ and $\xi\,'$
is properly non-$K$.
\end{lemma}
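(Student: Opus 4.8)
The plan is to prove both parts by the same two-step template---existence followed by uniqueness---since (ii) is obtained from (i) by systematically replacing $\gamma_K$, ``locally type-$K$'', and ``purely non-$K$'' with $\gamma_{\widetilde K}$, ``type-$K$'', and ``properly non-$K$'', respectively, and by invoking Theorem \ref{th:Type} (i) in place of (iii) and Definition \ref{df:Type} (4) in place of (3). Throughout I would lean on the characterizations in Theorem \ref{th:Type}: a mapping is locally type-$K$ exactly when it lies in $\Theta_{\gamma}$ and is $\leq\gamma_K$, and is type-$K$ exactly when it lies in $\Theta_{\gamma}$ and is $\leq\gamma_{\widetilde K}$.

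For the existence half of (i), I would first note that $\gamma_K=\gamma_{k^*}$ with $k^*\in K$ by Definition \ref{df:gammasbK}, so $\gamma_K\in\Theta_{\gamma}$; since trivially $\gamma_K\leq\gamma_K$, Theorem \ref{th:Type} (iii) shows $\gamma_K$ is locally type-$K$. For its complement, $(\gamma_K)'\in\GEX(E)$ because $\GEX(E)$ is a boolean algebra (Theorem \ref{th:boolalg}), and $(\gamma_K)'\wedge\gamma_K=0$ by the complement law; hence $(\gamma_K)'$ is purely non-$K$ by Definition \ref{df:Type} (3). The existence half of (ii) is identical: $\gamma_{\widetilde K}=\gamma_{\widetilde k}\in\Theta_{\gamma}$ with $\widetilde k\in\widetilde K$, so $\gamma_{\widetilde K}$ is type-$K$ by Theorem \ref{th:Type} (i), and $(\gamma_{\widetilde K})'\wedge\gamma_{\widetilde K}=0$ makes $(\gamma_{\widetilde K})'$ properly non-$K$.

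For uniqueness in (i), suppose $\pi\in\Theta_{\gamma}$ with $\pi$ locally type-$K$ and $\pi\,'$ purely non-$K$. Then Theorem \ref{th:Type} (iii) gives $\pi\leq\gamma_K$, while Definition \ref{df:Type} (3) applied to $\pi\,'$ gives $\pi\,'\leq(\gamma_K)'$. The essential step is to convert the latter into a lower bound on $\pi$: since complementation reverses order in the boolean algebra $\GEX(E)$, the inequality $\pi\,'\leq(\gamma_K)'$ is equivalent to $\gamma_K\leq\pi$, and antisymmetry then forces $\pi=\gamma_K$. Uniqueness in (ii) runs verbatim with $\gamma_{\widetilde K}$, Theorem \ref{th:Type} (i), and Definition \ref{df:Type} (4).

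The argument is essentially a bookkeeping exercise, so I expect no serious obstacle; the only points requiring care are confirming that $\gamma_K$ and $\gamma_{\widetilde K}$ genuinely belong to $\Theta_{\gamma}$ (so that the biconditionals of Theorem \ref{th:Type} apply, not merely their forward directions) and correctly exploiting the order-reversing behaviour of complementation to pass from ``$\pi\,'$ is (purely/properly) non-$K$'' to the reverse inequality $\gamma_K\leq\pi$ (resp. $\gamma_{\widetilde K}\leq\pi$) that closes the uniqueness argument.
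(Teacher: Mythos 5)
Your proposal is correct and follows essentially the same route as the paper: both rest on the characterizations in Theorem \ref{th:Type} (i), (iii) together with Definition \ref{df:Type} (3), (4), and on the boolean-algebra fact that $\pi\,'\wedge\gamma_K=0$ (equivalently $\pi\,'\leq(\gamma_K)'$) is the same as $\gamma_K\leq\pi$. The paper merely compresses your existence-plus-uniqueness split into a single chain of equivalences, so the two arguments coincide in substance.
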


\begin{proof}
By Theorem \ref{th:Type} (iii), $\pi$ is locally type-$K$ iff $\pi
\leq\gamma_K$ and by Definition \ref{df:Type} (3), $\pi\,'$ is purely
non-$K$ iff $\pi\,'\wedge\gamma_K=0$, i.e., iff $\gamma_K\leq\pi$, from
which (i) follows. Similarly, (ii) follows from Theorem \ref{th:Type} (i)
and Definition \ref{df:Type} (4).
\end{proof}

\section{Type-decomposition of COGPEA} \label{sc:TypeDecomp} 

\noindent\emph{ We maintain our standing hypothesis that $K$ is a TD subset of
the COGPEA $E$.} According to Lemma \ref{le:type}, we have two bipartite direct
decompositions $E=\pi(E)\oplus\pi\,'(E)$ and $E=\xi(E)\oplus\xi\,'(E)$,
corresponding to $\pi=\gamma_K$ and $\xi=\gamma_{\widetilde K}$. Thus we
may decompose $E$ into four direct summands:
\[
E=(\pi\wedge\xi)(E)\oplus(\pi\wedge\xi\,')(E)\oplus(\pi\,'\wedge\xi)(E)
 \oplus(\pi\,'\wedge\xi\,')(E)
\]
one of which, namely $(\pi\,'\wedge\xi)(E)$ is necessarily $\{0\}$,
because by Corollary \ref{co:gammasbK} (i), $\xi\leq\pi$. Therefore we
have the following \emph{fundamental direct decomposition theorem for
a COGPEA $E$ with a TD set $K\subseteq E$}.

\begin{theorem}\label{th:decompos}
There exist unique pairwise disjoint mappings $\pi_1, \pi_2, \pi_3\in\GEX(E)$,
namely $\pi_1=\gamma_{\widetilde K}, \pi_2=\gamma_K\wedge(\gamma_
{\widetilde K})'$, and $\pi_3=(\gamma_K)'$, such that{\rm}:
\begin{enumerate}
\item $\pi_1\vee\pi_2\vee\pi_3=1$ so that $E=\pi_1(E)\oplus\pi_2(E)\oplus
 \pi_3(E)$, and
\item $\pi_1$ is type-$K$, $\pi_2$ is locally type-$K$ but properly non-$K$,
 and $\pi_3$ is purely non-$K$.
\end{enumerate}
\end{theorem}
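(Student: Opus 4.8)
The plan is to verify that the three explicitly given mappings $\pi_1=\gamma_{\widetilde K}$, $\pi_2=\gamma_K\wedge(\gamma_{\widetilde K})'$, and $\pi_3=(\gamma_K)'$ are pairwise disjoint, sum to $1$, carry the asserted type properties, and are uniquely determined. Since by Corollary \ref{co:gammasbK} (i) we have $\xi:=\gamma_{\widetilde K}\leq\gamma_K=:\pi$, the whole argument lives inside the complete boolean algebra $\GEX(E)$ of Corollary \ref{co:completeboo}, and most of the work is a routine boolean computation built on top of the structural results already established in Section \ref{sc:TDsets}.

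First I would establish (i). Working in the boolean algebra $\GEX(E)$ and using $\xi\leq\pi$, I compute $\pi_1\vee\pi_2=\xi\vee(\pi\wedge\xi\,')=(\xi\vee\pi)\wedge(\xi\vee\xi\,')=\pi\wedge 1=\pi=\gamma_K$, whence $\pi_1\vee\pi_2\vee\pi_3=\gamma_K\vee(\gamma_K)'=1$. Pairwise disjointness is equally direct: $\pi_1\wedge\pi_2=\gamma_{\widetilde K}\wedge\gamma_K\wedge(\gamma_{\widetilde K})'=0$ because $\gamma_{\widetilde K}\wedge(\gamma_{\widetilde K})'=0$; $\pi_1\wedge\pi_3=\gamma_{\widetilde K}\wedge(\gamma_K)'=0$ since $\gamma_{\widetilde K}\leq\gamma_K$; and $\pi_2\wedge\pi_3=\gamma_K\wedge(\gamma_{\widetilde K})'\wedge(\gamma_K)'=0$. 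The direct-sum decomposition $E=\pi_1(E)\oplus\pi_2(E)\oplus\pi_3(E)$ then follows from Theorem \ref{th:finitecartesianprod} (applied to pairwise disjoint elements of $\GEX(E)$ whose join is $1$) together with the remark following that theorem.

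Next I would read off the type assignments in (ii) from the characterizations in Theorem \ref{th:Type} and Lemma \ref{le:type}. That $\pi_1=\gamma_{\widetilde K}$ is type-$K$ is immediate from Theorem \ref{th:Type} (i) (or Lemma \ref{le:type} (ii)). For $\pi_3=(\gamma_K)'$, purely non-$K$ means $\pi_3\wedge\gamma_K=0$, which is just $(\gamma_K)'\wedge\gamma_K=0$; this is exactly Lemma \ref{le:type} (i). For $\pi_2=\gamma_K\wedge(\gamma_{\widetilde K})'$ I must check two things. Since $\pi_2\leq\gamma_K$ and $\pi_2\in\Theta_{\gamma}$ (because $\gamma_K,\gamma_{\widetilde K}\in\Theta_{\gamma}$ and $\Theta_\gamma$ is a generalized boolean algebra closed under the relevant meets by Corollary \ref{co:ThetasbgammaGBA} and Theorem \ref{th:ExCovProp}), Theorem \ref{th:Type} (iii) gives that $\pi_2$ is locally type-$K$. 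That $\pi_2$ is properly non-$K$ means $\pi_2\wedge\gamma_{\widetilde K}=0$, which holds because $\pi_2\leq(\gamma_{\widetilde K})'$ by construction; this is Definition \ref{df:Type} (4).

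The remaining point, and the only place requiring a genuine argument rather than boolean bookkeeping, is uniqueness, so that is where I expect the main obstacle to lie. Suppose $\rho_1,\rho_2,\rho_3\in\GEX(E)$ are pairwise disjoint with join $1$ and with $\rho_1$ type-$K$, $\rho_2$ locally type-$K$ but properly non-$K$, and $\rho_3$ purely non-$K$. Using Lemma \ref{le:type} (ii), the unique element of $\Theta_\gamma$ that is type-$K$ with properly non-$K$ complement is $\gamma_{\widetilde K}$; I would argue that $\rho_1$ has these properties (its complement $\rho_2\vee\rho_3$ is properly non-$K$ by Remark \ref{rm:Type} (vii), since both $\rho_2$ and $\rho_3$ are properly non-$K$ by Remark \ref{rm:Type} (ii)), forcing $\rho_1=\gamma_{\widetilde K}=\pi_1$. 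Similarly, $\rho_1\vee\rho_2$ is locally type-$K$ (Remark \ref{rm:Type} (vii) together with (i)) while its complement $\rho_3$ is purely non-$K$, so by Lemma \ref{le:type} (i) we get $\rho_1\vee\rho_2=\gamma_K$; then $\rho_2=(\rho_1\vee\rho_2)\wedge\rho_1{}'=\gamma_K\wedge(\gamma_{\widetilde K})'=\pi_2$, and finally $\rho_3=(\rho_1\vee\rho_2)'=(\gamma_K)'=\pi_3$. The delicate part is ensuring the hypotheses of the two uniqueness clauses of Lemma \ref{le:type} are met exactly, i.e. that the relevant joins and complements inherit the type properties; this is where I would lean carefully on Remark \ref{rm:Type}.
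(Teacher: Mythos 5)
Your proposal is correct, and its existence half coincides with the paper's: the same boolean computations give pairwise disjointness and join $1$, membership $\pi_2\in\Theta_{\gamma}$ comes from Theorem \ref{th:ExCovProp} (viii), and the type assignments are read off from Theorem \ref{th:Type} (i), (iii) and Definition \ref{df:Type} (3), (4). The difference lies in uniqueness. The paper argues by bounding: if $\pi_{1a},\pi_{2a},\pi_{3a}$ are pairwise disjoint with join $1$ and satisfy (ii), then $\pi_{1a}\leq\gamma_{\widetilde K}$ by Theorem \ref{th:Type} (i), $\pi_{2a}\leq\gamma_K\wedge(\gamma_{\widetilde K})'$ by Theorem \ref{th:Type} (iii) and Definition \ref{df:Type} (4), and $\pi_{3a}\leq(\gamma_K)'$ by Definition \ref{df:Type} (3); since the upper bounds are pairwise disjoint and the lower family joins to $1$, an elementary boolean computation promotes the three inequalities to equalities. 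You instead pin down $\rho_1$ and $\rho_1\vee\rho_2$ exactly, via the uniqueness clauses of Lemma \ref{le:type} (ii) and (i) respectively, which obliges you to check that the complements $\rho_2\vee\rho_3$ and $\rho_3$ carry the appropriate non-$K$ properties; that is where Remark \ref{rm:Type} (i), (ii), (vii) does real work. Both routes are sound and rest on the same characterizations of the types as intervals in the complete boolean algebra $\GEX(E)$. The paper's version is slightly leaner: it needs no closure-under-joins facts, whereas your appeal to Remark \ref{rm:Type} (vii) for locally type-$K$ mappings tacitly uses that $\{\gamma_k:k\in K\}=\Theta_{\gamma}[0,\gamma_K]$ is a sublattice of $\GEX(E)$ (Theorem \ref{th:kstar}). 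What your version buys is conceptual packaging: uniqueness of the tripartite decomposition is exhibited as a formal consequence of the two bipartite uniqueness statements already established in Lemma \ref{le:type}, rather than being recomputed from scratch.
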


\begin{proof}
For the existence part of the theorem, put $\pi_1=\gamma_{\widetilde K},
\pi_2=\gamma_K\wedge(\gamma_{\widetilde K})'$, and $\pi_3=(\gamma_K)'$.
Obviously, $\pi\sb{1}$, $\pi\sb{2}$, and $\pi\sb{3}$ are pairwise
disjoint, and since $\gamma\sb{\widetilde K}\leq\gamma\sb{K}$, it is
clear that $\pi\sb{1}\vee\pi\sb{2}\vee\pi\sb{3}=1$. Evidently, $\pi
\sb{1}\in\Theta\sb{\gamma}$, and by Theorem \ref{th:ExCovProp} (viii),
$\pi\sb{2}\in\Theta\sb{\gamma}$. Thus, by Theorem \ref{th:Type} (i),
$\pi\sb{1}$ is type $K$, and by Theorem \ref{th:Type} (iii) $\pi\sb{2}$
is locally type $K$. Also, by parts (3) and (4) of Definition \ref{df:Type},
$\pi\sb{3}$ is purely non-$K$ and $\pi\sb{2}$ is properly non-$K$.

To prove uniqueness, suppose that $\pi_{1a},\pi_{2a},\pi_{3a}$ are pairwise
disjoint mappings in the boolean algebra $\GEX(E)$ satisfying (i) and (ii).
Then $\pi_{1a}\leq\gamma_{\tilde{K}}$ by Theorem \ref {th:Type} (i),
$\pi_{2a}\leq\gamma_K\wedge(\gamma_{\tilde{K}})'$ by Theorem \ref{th:Type}
(iii) and Definition \ref{df:Type} (4), and $\pi_{3a}\leq(\gamma_K)'$ by
Definition \ref{df:Type} (3). Thus after an elementary boolean computation,
we finally get $\pi_1=\pi_{1a}$, $\pi_2=\pi_{2a}$ and $\pi_3=\pi_{3a}$.
\end{proof}

In what follows we will obtain a decomposition of the COGPEA $E$ into types
I, II and III analogous to the type decomposition of a von Neumann algebra.
We shall be dealing with two TD subsets $K$ and $F$ of $E$ such that $K\subseteq F$.
For the case in which $E$ is the projection lattice of a von Neumann algebra,
one takes $K$ to be the set of abelian elements and $F$ to be the set of
finite elements in $E$.

Thus, in what follows, assume that $K$ and $F$ are TD subsets of the COGPEA
$E$ such that $K\subseteq F$. By Theorem \ref{th:decompos}, we decompose $E$ as

$E=\pi_1(E)\oplus \pi_2(e)\oplus\pi_3(E)$ and also as $E=\xi_1(E)\oplus
 \xi_2(E)\oplus\xi_3(E)$ where
\begin{eqnarray*}
&\pi_1=\gamma_{\widetilde K},\  \pi_2=\gamma_K\wedge(\gamma_{\widetilde K})',\
 \pi_3=(\gamma_K)',\\
&\xi_1=\gamma_{\widetilde F},\  \xi_2=\gamma_F\wedge(\gamma_{\widetilde F})',\
 \xi_3=(\gamma_F)'.
\end{eqnarray*}
As $K\subseteq F$, it is clear that $\gamma_K\leq\gamma_F$, $\gamma_
{\widetilde K}\leq\gamma_{\widetilde F}$, $(\gamma_F)'\leq(\gamma_K)'$,
and $(\gamma_{\widetilde F})'\leq(\gamma_{\widetilde K})'$.

Applying Theorem \ref{th:decompos}, we obtain a direct sum decomposition
\[
E=\tau_{11}(E)\oplus\tau_{21}(E)\oplus\tau_{22}(E)\oplus\tau_{31}(E)
 \oplus\tau_{32}(E)\oplus\tau_{33}(E),
\]
where $\tau_{ij}=\pi_i\wedge \xi_j$, for $i,j=1,2,3$. Evidently, $\tau_{11}
=\pi_1$, $\tau_{33}=\xi_3$ and $\tau_{12}=\tau_{13}=\tau_{23}=0$.

\begin{definition}\label{de:I,II,III} (\cite[Definition 6.3]{HandD},
\cite[Definition 13.3]{HDandTD})
Let $\pi\in\GEX(E)$. For the TD sets $K$ and $F$ with $K\subseteq F$:
\begin{itemize}
\item  $\pi$ is \emph{type-I} iff it is locally type-$K$, i.e., iff
 $\pi \in \Theta_{\gamma}$ and $\pi \leq\gamma_K$.
\item $\pi$ is \emph{type-II} iff it is locally type-$F$, but purely
 non-$K$, i.e., iff $\pi\in \Theta_{\gamma}$ and $\pi \leq \gamma_F
 \wedge(\gamma_K)'$.
\item $\pi$ is \emph{type-III} if it is purely non-$F$, i.e., iff
 $\pi \leq (\gamma_F)'$.
\item $\pi$ is \emph{type-I$_F$} (respectively, \emph{type-II$_F$})
 iff it is type-I (respectively, type-II) and also type-$F$, i.e.,
 iff $\pi\in\Theta_{\gamma}$ and $\pi\leq\gamma_K\wedge\gamma_{\tilde{F}}$
 (respectively, $\pi\in\Theta_{\gamma}$ and $\pi\leq\gamma_F\wedge
 (\gamma_K)'\wedge\gamma_{\tilde{F}}$).
\item $\pi$ is \emph{type-I$_{\neg F}$} (respectively, \emph{type-II$_
 {\neg F}$}) iff it is type-I (respectively, type-II) and also properly
 non-$F$, i.e., iff $\pi\in\Theta_{\gamma}$ and $\pi\leq\gamma_K\wedge
 (\gamma_{\tilde{F}})'$ (respectively, iff $\pi \in \Theta_{\gamma}$ and
 $\pi\leq \gamma_F\wedge (\gamma_K)'\wedge (\gamma_{\tilde{F}})'$).
\end{itemize}
If $\pi$ is type-I, type-II, etc. we also say that the direct summand
$\pi(E)$ is type-I, type-II, etc.
\end{definition}

The following theorem is the I/II/III - decomposition theorem for COGPEAs.

\begin{theorem}\label{th:I-II-III} Let $E$ be COGPEA and let $K$ and $F$ be
TD sets in $E$ with $K\subseteq F$. Then there are pairwise disjoint mappings
$\pi_I, \pi_{II}, \pi_{III}\in \GEX(E)$ of types I, II and III, respectively,
such that $E$ decomposes as a direct sum
$$
E=\pi_I(E)\oplus \pi_{II}(E)\oplus \pi_{III}(E).
$$
Such a direct sum decomposition is unique and
$$
\pi_I=\gamma_K,\ \pi_{II}=\gamma_F\wedge(\gamma_K)',\  \pi_{III}=(\gamma_F)'.
$$
Moreover, there are further decompositions
$$
\pi_I(E)=\pi_{I_F}(E)\oplus \pi_{I_{\neg F}}(E), \ \pi_{II}(E)=
 \pi_{II_F}(E)\oplus \pi_{II_{\neg F}}(E),
$$
where $\pi_{I_F}, \pi_{I_{\neg F}}, \pi_{II_F},\pi_{II_{\neg F}}$ are of
types $I_F, I_{\neg F}, II_F, II_{\neg F}$, respectively. These decompositions
are also unique and
\[
\pi_{I_F}=\gamma_K\wedge \gamma_{\widetilde F},\  \pi_{I_{\neg F}}=
 \gamma_K\wedge (\gamma_{\widetilde F})',
\]
\[
\hspace{1.6 cm}\pi_{II_F}=\gamma_{\widetilde F}\wedge (\gamma_K)',\
 \pi_{II_{\neg F}}=\gamma_F\wedge(\gamma_{\widetilde F})'\wedge (\gamma_K)'.
\]
\end{theorem}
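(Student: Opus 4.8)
The plan is to reduce the entire statement to boolean computations in the complete boolean algebra $\GEX(E)$, reading off the ``type'' conditions as inequalities via Definition \ref{de:I,II,III}: $\pi$ is type-I iff $\pi\in\Theta_{\gamma}$ and $\pi\leq\gamma_K$; type-II iff $\pi\in\Theta_{\gamma}$ and $\pi\leq\gamma_F\wedge(\gamma_K)'$; and type-III iff $\pi\leq(\gamma_F)'$. Since $K\subseteq F\subseteq\gamma_F(E)$, Corollary \ref{co:gammasbK} (iii) yields the single inequality $\gamma_K\leq\gamma_F$ that governs all the computations that follow (together with the analogous $\gamma_{\widetilde F}\leq\gamma_F$).

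For existence I would put $\pi_I:=\gamma_K$, $\pi_{II}:=\gamma_F\wedge(\gamma_K)'$, and $\pi_{III}:=(\gamma_F)'$. Here $\pi_I=\gamma_{k^*}\in\Theta_{\gamma}$, and $\pi_{II}=\gamma_{f^*}\wedge(\gamma_{k^*})'\in\Theta_{\gamma}$ by Theorem \ref{th:ExCovProp} (viii); each visibly realizes the relevant inequality with equality, so the three types are correct. Pairwise disjointness is clear, using $\gamma_K\wedge(\gamma_K)'=0$ and $\gamma_K\leq\gamma_F$ (the latter killing $\pi_I\wedge\pi_{III}$), and a short distributive computation gives $\pi_I\vee\pi_{II}\vee\pi_{III}=\gamma_F\vee(\gamma_F)'=1$; hence $E=\pi_I(E)\oplus\pi_{II}(E)\oplus\pi_{III}(E)$ by Theorem \ref{th:finitecartesianprod}.

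Uniqueness would proceed exactly as in Theorem \ref{th:decompos}: if $\rho_I,\rho_{II},\rho_{III}$ are pairwise disjoint with join $1$ and of the respective types, then Definition \ref{de:I,II,III} supplies $\rho_I\leq\gamma_K$, $\rho_{II}\leq\gamma_F\wedge(\gamma_K)'$, and $\rho_{III}\leq(\gamma_F)'$. As these three upper bounds are themselves pairwise disjoint with join $1$, meeting each bound with $1=\rho_I\vee\rho_{II}\vee\rho_{III}$ forces the equalities $\rho_I=\gamma_K$, $\rho_{II}=\gamma_F\wedge(\gamma_K)'$, and $\rho_{III}=(\gamma_F)'$.

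The two refinements are handled the same way, relativized to the intervals $[0,\gamma_K]$ and $[0,\gamma_F\wedge(\gamma_K)']$ of $\GEX(E)$. For the first, set $\pi_{I_F}:=\gamma_K\wedge\gamma_{\widetilde F}$ and $\pi_{I_{\neg F}}:=\gamma_K\wedge(\gamma_{\widetilde F})'$; these are disjoint with join $\gamma_K\wedge(\gamma_{\widetilde F}\vee(\gamma_{\widetilde F})')=\gamma_K=\pi_I$, so $E=\pi_{I_F}(E)\oplus\pi_{I_{\neg F}}(E)\oplus(\pi_I)'(E)$ and thus $\pi_I(E)=\pi_{I_F}(E)\oplus\pi_{I_{\neg F}}(E)$, both summands lying in $\Theta_{\gamma}$ by Theorem \ref{th:ExCovProp} (vii),(viii) and realizing the defining inequalities of Definition \ref{de:I,II,III} with equality. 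For the second, set $\pi_{II_F}:=\gamma_{\widetilde F}\wedge(\gamma_K)'$ and $\pi_{II_{\neg F}}:=\gamma_F\wedge(\gamma_{\widetilde F})'\wedge(\gamma_K)'$; using $\gamma_{\widetilde F}\leq\gamma_F$ one checks they are disjoint with join $\gamma_F\wedge(\gamma_K)'=\pi_{II}$, and the boolean lemma above again gives uniqueness. I expect the only genuine friction to be the bookkeeping that certifies $\Theta_{\gamma}$-membership of these finer mappings — notably $\pi_{II_{\neg F}}$, for which one first notes $\gamma_F\wedge(\gamma_{\widetilde F})'\in\Theta_{\gamma}$ and then meets with $(\gamma_K)'$ by a second application of Theorem \ref{th:ExCovProp} (viii) — together with the repeated use of $\gamma_K\leq\gamma_F$ and $\gamma_{\widetilde F}\leq\gamma_F$ to match the explicit formulas with the inequalities of Definition \ref{de:I,II,III}.
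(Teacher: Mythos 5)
Your proposal is correct and is essentially the paper's own argument: the paper organizes existence by applying Theorem \ref{th:decompos} twice and coarsening the resulting six-fold decomposition $E=\bigoplus\tau_{ij}(E)$ with $\tau_{ij}=\pi_i\wedge\xi_j$, but the joins $\tau_{11}\vee\tau_{21}\vee\tau_{22}$, $\tau_{31}\vee\tau_{32}$, $\tau_{33}$, etc.\ reduce (via $\gamma_K\leq\gamma_F$ and $\gamma_{\widetilde K}\leq\gamma_{\widetilde F}$) to exactly the explicit boolean formulas you verify directly in $\GEX(E)$. Your uniqueness argument via componentwise domination of pairwise disjoint decompositions of $1$ is the same ``elementary boolean computation'' the paper invokes (there for Theorem \ref{th:decompos}, and declared ``straightforward'' here), and your $\Theta_{\gamma}$-bookkeeping through Theorem \ref{th:ExCovProp} (vii)--(viii) supplies the details the paper leaves as ``evident.''
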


\begin{proof} For the existence part of the theorem, we put $\pi_I:=
\tau_{11}\vee \tau_{21}\vee \tau_{22}$, $\pi_{II}:=\tau_{31}\vee
\tau_{32}$, $\pi_{III}:=\tau_{33}$, $\pi_{I_F}:=\tau_{11}\vee\tau_{21}$,
$\pi_{I_{\neg F}}:=\tau_{22}$, $\pi_{II_F}:=\tau_{31}$, and $\pi_
{II_{\neg F}}:=\tau_{32}$. Evidently, all the required conditions are
satisfied. The proof of uniqueness is also straightforward.
\end{proof}


\begin{thebibliography}{wwww}

\bibitem{BLM} Bush., P., Lahti, P., Mittelstaedt, P., \emph{The Quantum Theory of Measurement}, Lecture Notes in Physics, Springer, Berlin-Heidelberg-New York, 1991.

\bibitem{CChM} Carrega, J.C., Chevalier, G, Mayet, R., Direct decomposition of orthomodular lattices, \emph{Alg. Univers.} {\bf 27} (1990) 480--496.

\bibitem{D} Dvure\v censkij, A., Central elements and Cantor-Bernstein's theorem for pseudo-effect algebras, \emph{J. Aust. Math. Soc.} {\bf 74} (2003) 121--143.

\bibitem{DvPuTrends} Dvure\v{c}enskij, A. and Pulmannov\'{a}, S., \emph{New
Trends in Quantum Structures}, Kluwer, Dordrecht, 2000.

\bibitem{DV1} Dvure\v censkij, A. and Vetterlein, T., Pseudoeffect algebras I. Basic properties, \emph{Int. J. Theor. Phys.} {\bf 40} (2001) 685--701.

\bibitem{DV2} Dvure\v censkij, A. and Vetterlein, T., Pseudoeffect algebras II. Group representations, \emph{Int. J. Theor. Phys.} {\bf 40} (2001), 703--726.

\bibitem{DvVepo} Dvure\v censkij, A. and Vetterlein, T., Algebras in the positive cone of po-groups, \emph{Order} {\bf 19}, no. 2 (2002) 127--146.

\bibitem{DvVegen} Dvure\v{c}enskij, A. and Vetterlein, T., Generalized pseudo-effect algebras, \emph{Lectures on soft computing and fuzzy logic; Adv. Soft Comput},
89--111, Physica, Heidelberg, 2001.

\bibitem{FandB} Foulis, D.J. and Bennett, M.K., Effect algebras and
unsharp quantum logics, \emph{Found. Phys.} {\bf 24} (1994) 1331--1352.

\bibitem{FPType} Foulis, D.J. and Pulmannov\'{a}, S., Type-decomposition
of an effect algebra, \emph{Found. Phys. {\rm(}Mittelstaedt Festschrift{\rm)}}
{\bf 40}, no. 9 (2010) 1543--1565.

\bibitem{COEA} Foulis, D.J. and Pulmannov\'{a}, S., Centrally orthocomplete
effect algebras, \emph{Algebra Univ.} {\bf 64} (2010) 283--307,
DOI 10.1007/s00012-010-0100-5.

\bibitem{HandD} Foulis, D.J. and Pulmannov\'{a}, S., Hull mappings and
dimension effect algebras, \emph{Math. Slovaca} {\bf 61}, no. 3 (2011)
485--522, DOI 10.2478/s12175-011-0025-2.

\bibitem{ExoCen} Foulis, D.J. and Pulmannov\'{a}, S., The exocenter of
a generalized effect algebra, \emph{Rep. Math. Phys.} {\bf 61} (2011, 347--371.

\bibitem{CenGEA} Foulis, D.J. and Pulmannov\'{a}, S., The center of a
generalized effect algebra, to appear in \emph{Demonstratio Math.}

\bibitem{HDandTD} Foulis, D.J. and Pulmannov\'a, S., Hull determination and type decomposition for a generalized effect algebra, to appear in \emph{Algebra Univ.}.

\bibitem{TDPA} Foulis, D.J., Pulmannov\'a, S., Vincekov\'a, E., Type decomposition
of a pseudoeffect algebra, \emph{J. Aust. Math. Soc.} {\bf 89} (2010) 335--358.

\bibitem{GW} Goodearl, K.R. and Wehrung, F., \emph{The Complete Dimension
Theory of Partially Ordered Systems with Equivalence and Orthogonality},
Mem. Amer. Math. Soc. {\bf 831}, 2005.

\bibitem{GFP} Greechie, R.J., Foulis, D.J., Pulmannov\'{a}, S., The
center of an effect algebra, \emph{Order} {\bf 12} (1995) 91--106.

\bibitem{GrGLT} Gr\"{a}tzer, G., \emph{General Lattice Theory}, Academic
Press, New York, 1978.

\bibitem{HedPu} Hedl\'{i}kov\'{a}, J. and Pulmannov\'{a}, S.J., Generalized
difference posets and orthoalgebras, \emph{Acta Math. Univ. Comenianae}
{\bf 45} (1996) 247--279.

\bibitem{Jan} Janowitz, M.F., A note on generalized orthomodular lattices, \emph{J. Natur. Sci and Math.} {\bf 8} (1968), 89--94.

\bibitem{Je00} Jen\v{c}a, G., Subcentral ideals in generalized effect algebras,
\emph{Int. J. Theor. Phys.} {\bf 39} (2000) 745--755.

\bibitem{K} Kalmbach, G., \emph{Measures and Hilbert Lattices}, World Scientific Publishing Co., Singapore 1986.

 \bibitem{KR} Kalmbach, G. and Rie\v canov\'a, Z., An axiomatization for abelian relative inverses, \emph{Demonstratio Math.} {\bf 27} (1994) 535--537.

\bibitem{KCh} K\^opka, F. and Chovanec, F., D-posets, \emph{Math. Slovaca} {\bf 44} (1994) 21--34.

\bibitem{L} Loomis, L. H., \emph{The Lattice-Theoretic Background of the
Dimension Theory of Operator Algebras}, Mem. Amer. Math. Soc. No. 18, 1955.

\bibitem{M} Maeda, S., Dimension functions on certain general lattices, \emph{ J. Sci. Hiroshima Univ} {\bf A 19} (1955) 211--237.

\bibitem{MI} Mayet-Ippolito, A., Generalized orthomodular posets,
\emph{Demonstratio Math.} {\bf 24} (1991) 263--274.

\bibitem{MvN} Murray, F.J. and von Neumann, J., \emph{On Rings of Operators},
J. von Neumann colected works, vol III, 6--321, Pergamon Press, Oxford, 1961.

\bibitem{PV07} Pulmannov\'a, S. and Vincekov\'a, E., Riesz ideals in
generalized effect algebras and in their unitizations, {\it Algebra Univ.}
{\bf 57} (2007), 393--417.

\bibitem{PVext} Pulmannov\'a, S. and Vincekov\'a, E., Abelian extensions
of partially ordered partial monoids, Soft Comput., DOI 10.1007/s00500-012-0814-8.

\bibitem{R} Ramsay, A., Dimension theory in complete weakly modular orthocomplemented lattices, \emph{Trans. Amer. Math. Soc.} {\bf 116} (1965) 9--31.

\bibitem{Zdenka99} Rie\v{c}anov\'{a}, Z., Subalgebras, intervals, and central
elements of generalized effect algebras, \emph{Int. J. Theor. Phys.} {\bf 38},
no. 12 (1999) 3209--3220.

\bibitem{Stone} Stone, M.H., Postulates for Boolean algebras and generalized Boolean algebras, Amer. J. Math. {\bf 57} (1935) 703-732.

\bibitem{W} Wilce, A., Perspectivity and congruence in partial Abelian semigroups, \emph{Math. Slovaca} {\bf 48} (1998)117--135.

\bibitem{XieLi} Xie, Y., Li, Y., Riesz ideals in generalized pseudo effect algebras and in their unitizations, \emph{Soft Comput.} {\bf 14} (2010), 387--398.

\end{thebibliography}
\end{document}